\newsavebox{\bigimage}
\newtheorem{definition}{Definition}
\newtheorem{fact}{Fact}
\newtheorem{theorem}{Theorem}
\newtheorem{corollary}{Corollary}
\newtheorem{assumption}{Assumption}
\newtheorem{remark}{Remark}
\newtheorem{example}{Example}
\newtheorem{proposition}{Proposition}
\newcommand{\R}{\mathbb{R}}
\newcommand{\XinfM}{X_{M}^{\mathrm{inf}} }
\newcommand{\XinfMT}{X_{M}^{\mathrm{inf}T}}
\newcommand{\Tau}{\mathcal{T}}
\newcommand{\I}{\mathcal{I}}
\newcommand\independent{\protect\mathpalette{\protect\independenT}{\perp}}
\def\independenT#1#2{\mathrel{\rlap{$#1#2$}\mkern2mu{#1#2}}}
\newcommand{\XinfMstar}{X_{M^{I}}}
\newcommand{\Var}{\mathrm{Var}}
\newcommand{\norm}[1]{\left\lVert#1\right\rVert}
\newcommand{\distras}[1]{%
  \savebox{\mybox}{\hbox{\kern3pt$\scriptstyle#1$\kern3pt}}%
  \savebox{\mysim}{\hbox{$\sim$}}%
  \mathbin{\overset{#1}{\kern\z@\resizebox{\wd\mybox}{\ht\mysim}{$\sim$}}}%
}
\DeclareMathOperator*{\argmin}{argmin}
\DeclareMathOperator*{\argmax}{argmax}
\newlength{\widebarargwidth}
\newlength{\widebarargheight}
\newlength{\widebarargdepth}
\DeclareRobustCommand{\widebar}[1]{%
  \settowidth{\widebarargwidth}{\ensuremath{#1}}%
  \settoheight{\widebarargheight}{\ensuremath{#1}}%
  \settodepth{\widebarargdepth}{\ensuremath{#1}}%
  \addtolength{\widebarargwidth}{-0.3\widebarargheight}%
  \addtolength{\widebarargwidth}{-0.3\widebarargdepth}%
  \makebox[0pt][l]{\hspace{0.3\widebarargheight}%
    \hspace{0.3\widebarargdepth}%
    \addtolength{\widebarargheight}{0.3ex}%
    \rule[\widebarargheight]{0.95\widebarargwidth}{0.1ex}}%
  {#1}}
\long\def\@makecaption#1#2{
        \vskip 0.8ex
        \setbox\@tempboxa\hbox{\small {\bf #1:} #2}
        \parindent 1.5em  
        \dimen0=\hsize
        \advance\dimen0 by -3em
        \ifdim \wd\@tempboxa >\dimen0
                \hbox to \hsize{
                        \parindent 0em
                        \hfil 
                        \parbox{\dimen0}{\def\baselinestretch{0.96}\small
                                {\bf #1.} #2
                                } 
                        \hfil}
        \else \hbox to \hsize{\hfil \box\@tempboxa \hfil}
        \fi
        }
\begin{document}

\begin{center}

{\bf{\LARGE{Data fission: splitting a single data point}}}

\vspace*{.2in}

{\large{
\begin{tabular}{cccc}
James Leiner \textsuperscript{1} & Boyan Duan\textsuperscript{2} &  Larry Wasserman\textsuperscript{1} & Aaditya Ramdas\textsuperscript{1} \\
\end{tabular}
\texttt{\{jleiner,larry,aramdas\}@stat.cmu.edu}\\
\texttt{boyand@google.com}
}}

\vspace*{.2in}

\begin{tabular}{c}
\textsuperscript{1}Department of Statistics and Data Science, Carnegie Mellon University \\
\textsuperscript{2}Google
\end{tabular}

\vspace*{.2in}

\today
\end{center}

\begin{abstract}

    Suppose we observe a random vector $X$ from some distribution in a known family with unknown parameters. We ask the following question: when is it possible to split $X$ into two pieces $f(X)$ and $g(X)$ such that neither part is sufficient to reconstruct X by itself, but both together can recover X fully, and their joint distribution is tractable? One common solution to this problem when multiple samples of X are observed is data splitting, but \cite{rasines2021splitting} offers an alternative approach that uses additive Gaussian noise --- this enables post-selection inference in finite samples for Gaussian distributed data and asymptotically when errors are non-Gaussian. In this paper, we offer a more general methodology for achieving such a split in finite samples by borrowing ideas from Bayesian inference to yield a (frequentist) solution that can be viewed as a continuous analog of data splitting. We call our method data fission, as an alternative to data splitting, data carving and p-value masking. We exemplify the method on several prototypical applications, such as post-selection inference for trend filtering and other regression problems, and effect size estimation after interactive multiple testing.
\end{abstract}

\tableofcontents

\section{Introduction} \label{sec:introduction}

One of the most common practices in applied statistics is data splitting. Given a dataset $X=(X_1,\dots,X_n)$ containing $n$ independent samples, suppose an analyst wishes to divide the data into two smaller independent datasets in order to complete an analysis. The typical method for accomplishing this would be to choose $m$ such that $1\le  m < n$ and then form two new datasets: $f(X)=(X_1,\dots,X_m)$ and $g(X)=(X_{m+1},\dots,X_n)$. However, there are alternative approaches for accomplishing this goal that may be preferable. As a simplified example, consider the setting where we only observe a single data point $X \sim N(0,1)$, and we would like to ``split'' $X$ into two parts such that each part contains some information about $X$, $X$ can be reconstructed from both parts taken together, but neither part is sufficient by itself to reconstruct $X$, and yet the joint distribution of these two parts is known. The constraints mentioned in the previous sentence avoid trivial solutions like outputting $f(X) = X$ and $g(X) = 0$, or $f(X)=X/3$ and $g(X) = 2X/3$, and so on. 

Luckily, this example has a simple solution with external randomization. Generate an independent $Z \sim N(0,1)$, and set $f(X)= X+Z$ and $g(X)= X-Z$. One can then reconstruct $X$ by addition (and $Z$ by subtraction, but we care less about $Z$, which has no utility of its own). Just knowing one out of $f(X)$ or $g(X)$ does not allow one to reconstruct $X$, but both parts have nontrivial information about $X$ because their mutual information with $X$ is nonzero. 
We also know that $f(X)$ and $g(X)$ are actually independent, and their marginal distributions are also Gaussian, so their joint distribution is tractable and known. 

More generally, we seek to construct a family of pairs of functions $(f_\tau(X), g_\tau(X))_{\tau \in \Tau}$, for some totally ordered set $\Tau$ (typically a subset of the real line), such that we can smoothly trade off the amount of information that each part contains about $X$. When $\tau$ approaches $\tau^+ := \sup\{\tau: \tau \in \Tau\}$, $f(X)$ will approach independence from $X$, while $g(X)$ will essentially equal $X$, but when $\tau$ approaches $\tau^- := \inf\{\tau: \tau \in \Tau\}$, the opposite will happen. To see how to do this, simply choose $Z$ as before, and define $f(X) = X - \tau Z$ and $g(X) = X + \frac{Z}{\tau}$ and let $\Tau := (0,\infty)$. We call this procedure ``data fission'' because we divide $X$ into two parts, each of which provides an independent yet complementary view of the original data.

Data fission is similar in spirit to data splitting. However, data fission manages to achieve the same effect from just a single sample $X$ and not an $n$-dimensional vector. Nevertheless, the connection to data splitting is more than a mere analogy, and the exactly relationship between $\tau$ and $m$ can be quantified such that data fission is viewable as a continuous analog of data splitting in the Gaussian case; we do this in the next section.

Now, can the above ideas be generalized to other distributions? In other words, can we employ external randomization to ``split'' a single data point into two nontrivial parts when the distribution $P$ is not Gaussian? This is the topic of study for the rest of this paper. We provide a positive answer when $P$ is conjugate (in the standard Bayesian sense) to some other distribution $Q$, where the latter will be used (along with $X$) to determine the external randomization. In most cases, $f(X)$ and $g(X)$ will not simply be the sum/difference of $X$ with some $Z$; such a form was achieved only in the Gaussian case. Similarly, $f(X)$ and $g(X)$ will typically not be independent. Nevertheless, they will satisfy the conditions set out in the first paragraph of the paper and can be treated for inferential purposes as single sample variants of data splitting, justifying the title of the paper. 

\subsection{An application: data fission for post-selection inference}

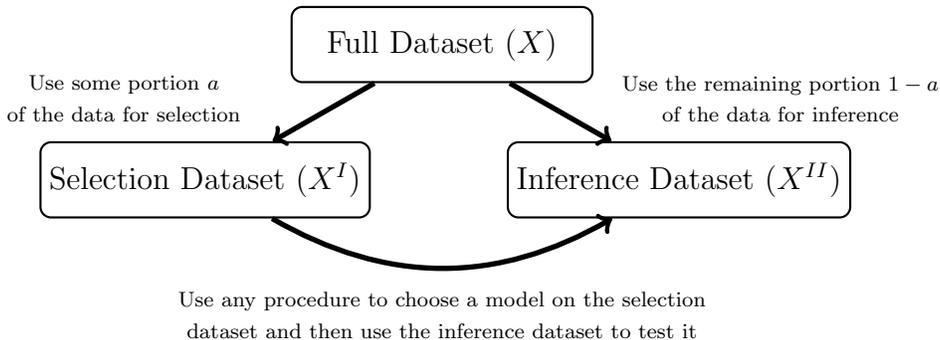
\begin{figure}
\centering
\begin{tikzpicture}[scale=0.9]
\begin{scope}
    \node[rectangle, rounded corners,thick,draw,minimum height=1cm,minimum width=4cm] (A) at (1.5,0) {\large Full Dataset ($X$)};
    \node[rectangle, rounded corners,thick,draw,minimum height=1cm,minimum width=4cm] (B) at (-2,-2) {\large Selection Dataset ($X^{I}$)};
    \node[rectangle, rounded corners,thick,draw,minimum height=1cm,minimum width=4cm] (C) at (5,-2) {\large Inference Dataset ($X^{II}$)};
    \node[align=center] (D) at (-3.2,-0.8)  {\footnotesize Use some portion $a$ \\ \footnotesize of the data for selection};
    \node[align=center] (D) at (6.5,-0.8)  {\footnotesize Use the remaining portion $1 -a$ \\ \footnotesize of the data for inference};
    \node[align=center] (E) at (1.5,-4)  {\footnotesize Use any procedure to choose a model on the selection \\ \footnotesize dataset and then use the inference dataset to test it};
    \draw[->,line width=0.7mm, to path={-- + (\tikztotarget)}]
  (A) edge (B) (A) edge (C) (B) edge[bend right] (C);
\end{scope}
\end{tikzpicture}
\caption{Illustration of typical \textbf{data splitting procedures} for post-selection inference. Splitting the data has the advantage of allowing the user to choose any selection strategy for model selection, but at the cost of decreased power during the inference stage.}
\label{fig:sample_splitting_illustration}
\end{figure}

We primarily focus on demonstrating the applicability of these ideas in the context of (potentially high-dimensional) model selection and post-selection inference. With data splitting, the analyst picks some fraction $a \in \{\tfrac{1}{n}, \ldots, \tfrac{n-1}{n}, 1\}$ of the data to use for model selection and the remaining $1-a$ fraction to use for inference as illustrated in \cref{fig:sample_splitting_illustration}. Data fission is similar in spirit to this idea but instead uses randomization so that part of the information contained in every data point is used for both selection and inference. The procedure broadly works in three stages. 
\begin{enumerate}
    \item Split $X$ into $f(X)$ and $g(X)$ such that $g(X) | f(X)$ is tractable to compute. The parameter $\tau$ controls the proportion of the information to be used for model selection. 
    \item Use $f(X)$ to select a model and/or or hypotheses to test using any procedure available.
    \item Use $g(X) | f(X)$ to test hypotheses and/or perform inference. 
\end{enumerate}

See \cref{fig:data_fission_illustration} for a graphical representation of the above steps. This approach can be viewed as a generalization of the methodologies discussed in \cite{tian2018selective} and \cite{rasines2021splitting}. In the above framework, these approaches amount to letting $f(X) = X + \gamma Z$  and $g(X) = X$ with $Z \sim N(0,\sigma^{2})$ for some fixed constant $\gamma > 0 $. In the case where $X \sim N(\mu, \sigma^{2})$ with known $\sigma^{2}$, the authors show that $g(X) | f(X)$ has a tractable finite sample distribution. In cases where $X$ is non-Gaussian, however, $g(X) | f(X)$ is only described asymptotically. In the next section, we will explore alternative ways  to construct $g(X)$ that result in tractable finite sample distributions for non-Gaussian data.

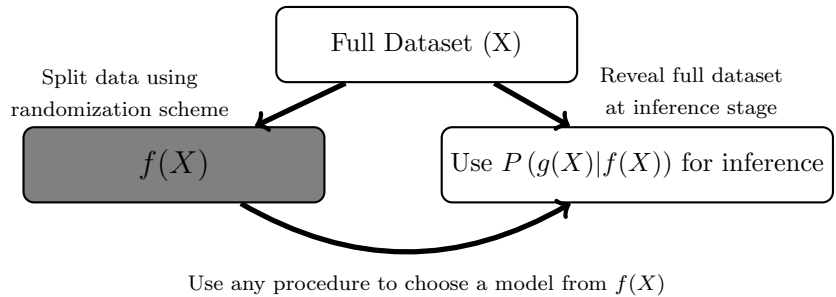
\begin{figure}
\centering
\begin{tikzpicture}[scale=0.8]
	\begin{scope}
		\node[rectangle, rounded corners,thick,draw,minimum height=1cm,minimum width=4cm] (A) at (1.5,0) { Full Dataset (X)};
		\node[rectangle, rounded corners,thick,draw,minimum height=1cm,minimum width=4cm,fill=gray] (B) at (-2.7,-2) {\large $f(X)$};
		\node[rectangle, rounded corners,thick,draw,minimum height=1cm,minimum width=4cm] (C) at (5,-2) { Use $P\left(g(X) | f(X) \right) $ for inference};
		\node[align=center] (D) at (-3.6,-0.8)  {\footnotesize Split data using \\ \footnotesize randomization scheme};
		\node[align=center] (E) at (5.9,-0.8)  {\footnotesize Reveal full dataset \\ \footnotesize at inference stage};
		\node[align=center] (F) at (1.5,-4)  {\footnotesize Use any procedure to choose a model from $f(X)$};
		\draw[->,line width=0.7mm, to path={-- + (\tikztotarget)}]
		(A) edge (B) (A) edge (C) (B) edge[bend right] (C);
	\end{scope}
\end{tikzpicture}
\caption{Illustration of the proposed \textbf{data fission} procedure. Similar to data splitting, it allows for any selection procedure for choosing the model. However, it achieves this through randomization rather than a direct splitting of the data.}
\label{fig:data_fission_illustration}
\end{figure}

In some ways, these methodologies can be seen as a compromise between data splitting and the approach of \emph{data carving} as introduced in \cite{fithian2014optimal}. Data carving, as illustrated in \cref{fig:data_carving_illustration}, improves on data splitting in cases where the conditional distribution of the data given a selection event is known by including the leftover portion of Fisher information that was not used to inform the model choice in the inference procedure. A key limitation of this approach, however, is that it confines the analyst to model selection techniques with tractable post-selective distributions, such as the LASSO as described by \cite{lasso_posi} or more general sequential regression procedures such as those discussed in \cite{sequential_posi}. In many settings, ad-hoc exploratory data analyses such as the plotting of data or removal of outliers are ubiquitous and make data carving intractable.

Although data fission conditions on $f(X)$ rather than the selection event itself, it retains some similarities to data carving insofar as it uses a portion of every data point to inform both selection and inference. This has advantages relative to data splitting in at least two distinct ways. First, certain settings that involve sparse or rarely occurring features may result in a handful of data points having a disproportionate amount of influence --- data fission allows for the analyst to ``hedge their bets'' by including these points in both the selection and inference steps. Second, in settings where the selected model is defined relative to a set of \emph{fixed} covariates, the theoretical justification for data splitting becomes less clear conceptually. (In a fixed-X setup, how can a model that has been selected based on its ability to estimate the conditional distribution $Y \mid X^{\mathrm{first half}}$ also be understood to model the distribution that conditions on the other half of the split dataset  $Y \mid X^{\mathrm{second half}}$?) For example, consider a time series dataset, where splitting a sample may require the analyst to have a selection and inference dataset that span entirely different time periods, or graph data where there is only one observation available for each location on a graph. On the other hand, similar to data splitting, data fission affords the analyst complete flexibility in how they choose their model based on the information revealed in the selection stage. In particular, the procedure can accommodate a model selection process that relies on qualitative or heuristic methods such as visual examination of residual plots or the consultation of domain experts to determine plausibility of the discovered relationships.

Although we anticipate that these ideas may have other downstream applications beyond selective inference such as data privacy (including differential privacy), creation of fake (synthetic) datasets, and comparing machine learning algorithms, these require new techniques that are out of the scope of the current work.

\begin{figure}
\centering
\begin{tikzpicture}[scale=0.8]
	\begin{scope}
		\node[rectangle, rounded corners,thick,draw,minimum height=1cm,minimum width=4cm] (A) at (1.5,0) {Full Dataset ($X$)};
		\node[rectangle, rounded corners,thick,draw,minimum height=1cm,minimum width=4cm] (B) at (-2,-2) {Selection Event $S(X)$};
		\node[rectangle, rounded corners,thick,draw,minimum height=1cm,minimum width=4cm] (C) at (5.6,-2) { Use $P\left(X|S(X)\right)$ for inference};
		\node[align=center] (D) at (-3,-0.8)  {\footnotesize Choose some fixed \\ \footnotesize selection procedure $S$};
		\node[align=center] (E) at (1.5,-4.2)  {\footnotesize Conditioning on S(X) needs to be tractable \\ \footnotesize (e.g. LASSO, forward selection)};
		\draw[->, line width=0.7mm, to path={-- + (\tikztotarget)}]
		(A) edge (B) (A) edge (C) (B) edge[bend right] (C);
	\end{scope}
\end{tikzpicture}
\caption{Illustration of \textbf{data carving procedure} as discussed in \cite{fithian2014optimal}. Data carving has the advantage of using all unused information for inference, but requires the selection procedure to be \emph{fixed} at the onset of investigation. Moreover, computing the conditional distribution needs to be tractable, either in closed form (e.g. LASSO as described in \cite{lasso_posi}) or through numerical simulation. Thus data carving and fission have complementary benefits and tradeoffs.}
\label{fig:data_carving_illustration}
\end{figure}
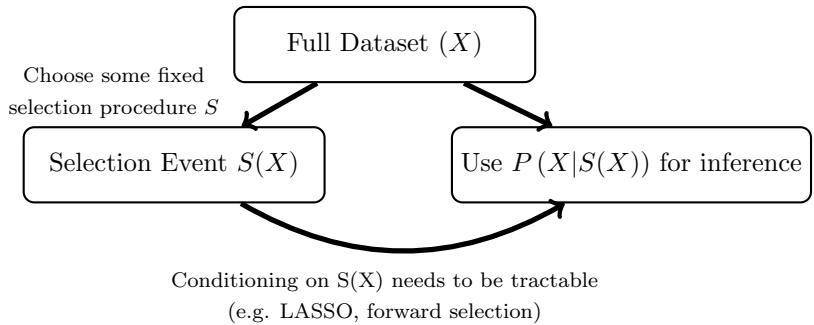

\subsection{Related work on data splitting and carving} 
Our work is influenced by the existing literature on procedures for selective inference after model selection. Although data splitting is perhaps the oldest and most commonly used method for ensuring valid coverage guarantees after model selection, rigorous examination of data splitting has only recently emerged in the literature. \cite{samplesplit_bootstrap} is one such example where the authors examine data splitting in an assumption-lean context. 

Previously, the idea of adding and subtracting Gaussian noise has been employed for various goals in the literature. \cite{tian2018selective} use this method for selective inference by introducing a noise variable that is independent of the original data, and involve it at various stages of the selection procedure (e.g., introducing perturbations to the gradient when using LASSO to select features). An alternative route for selective inference in Gaussian regression that is explored in \cite{gaussian_mirrors} is to leave the response data unperturbed but create randomized versions of the \emph{covariates} (termed ``Gaussian mirrors'') by adding and subtracting Gaussian noise. This allows the authors to create test statistics with symmetric distributions under the null, enabling FDR control for the LASSO in high-dimensional settings. In contrast to both of these approaches, we focus on creating a perturbed version of the response that allows the analyst to conduct hypothesis selection in an arbitrary and heuristic fashion, as opposed to relying on a specific selection procedure such as the LASSO. 

Randomization was also used in \cite{li2021whiteout} to recast the knockoff procedure of \cite{knockoffs} as a selective inference procedure for the linear Gaussian model that adds noise to the OLS estimates ($\widehat{\beta}$) to create a ``whitened'' version of $\widehat{\beta}$ to use for hypothesis selection. The work of \cite{sarkar2021adjusting} explores similar ways of using knockoffs to split $\widehat{\beta}$ into independent pieces for hypothesis selection but uses a deterministic splitting procedure. Although conceptually quite similar to our approach, these methods focus on splitting the coefficient estimates for a \emph{fixed model} into two independent pieces and then adaptively choosing the hypotheses to test during the inference stage. As such, it is most naturally applied in low dimensional settings where model selection is less necessary. Similar randomization schemes for Gaussian distributed data are also explored in \cite{empirical_bayes} but are used for empirical Bayes procedures and not selective inference.



\paragraph{Outline of the paper.} 
The general methodology for data fission is introduced in \cref{sec:fission}. We illustrate how the procedure is used in the context of Gaussian data but then generalize this for the broader class of distributions where the data has a known conjugate prior. Examples are given across a variety of distributions commonly used for regression and data analysis such as Gaussian, Poisson, and Binomial. The remainder of the paper explores the use of data fission for four different applications: selective CIs after interactive multiple testing in \cref{sec:interactive_testing}, fixed-design linear regression in \cref{sec:linreg}, fixed-design generalized linear models in \cref{sec:qmle}, and trend filtering in \cref{sec:trendfilter}. Proofs for all theoretical results are omitted from the main body but included in Appendix~\ref{sec:appendix_theory}. 

A key limitation within all of these applications --- shared by much of classical statistics and conditional inference --- is that theoretical guarantees can only be given under correct specification of the distribution of errors due to the need to ensure that the post-selective distribution is known and tractable. However, these guarantees are still assumption-lean in the sense of assuming an unknown form for the conditional mean $\mu$. Situations where the variance is unknown and is estimated before data fission in the Gaussian case are discussed in low dimensional settings, but we leave theoretical guarantees in higher dimensions as an open avenue for future investigation. We provide some concluding remarks in \cref{sec:conclude}.

\section{Techniques to accomplish data fission} \label{sec:fission}
With statistical inference in mind, we explore decompositions of $X$ to $f(X)$ and $g(X)$ such that both parts contain information about a parameter $\theta$ of interest, and there exists some function $h$ such that $X = h(f(X), g(X))$, and with either of the following two properties:
\begin{itemize}
    \item[\textbf{(P1)}.] $f(X)$ and $g(X)$ are independent with known distributions (up to the unknown $\theta$); or
    \item[\textbf{(P2)}.] $f(X)$ has a known marginal distribution and $g(X)$ has a known conditional distribution given $f(X)$ (up to knowledge of $\theta$).
\end{itemize}
The former property implies the latter, but the latter is generally more tractable. We explore alternative formulations that require weaker assumptions in \cref{sec:qmle}. 




\subsection{Achieving (P2) using ``conjugate prior reversal''} \label{sec:conjugate}
Suppose $X$ follows a distribution that is a conjugate prior distribution of the parameter in some likelihood. We then construct a new random variable $Z$ following that likelihood (with the parameter being $X$). Letting $f(X) = Z$ and $g(X) = X$, the conditional distribution of $g(X) \mid f(X)$ will have the same form as $X$ (with a different parameter depending on the value of $f(X)$). For example, suppose $X \sim \mathrm{Exp}(\theta)$, which is conjugate prior to the Poisson distribution. Thus, we draw $ Z = (Z_1, \ldots, Z_B)$ where each element is i.i.d.\ $Z_i \sim \mathrm{Poi}(X)$ and $B \in \{1, 2, \ldots\}$ is a tuning parameter. Let $f(X) =  Z$ and $g(X) = X$. Then, the conditional distribution of $g(X) \mid f(X)$ is $\mathrm{Gamma}(1 + \sum_{i=1}^B f_i(X), \theta + B)$ 
On the other hand, $f(X) \sim \mathrm{Geo}(\tfrac{\theta}{\theta+B})$. Larger $B$ results in a more informative $f(X)$. More examples of such decompositions, which we term ``conjugate prior reversal'', are in Appendix~\ref{sec:appendix_list_decomp}. One drawback of this approach, however, is that it will often result in a distribution that is no longer straightforwardly related to the original parameter of interest and so extra care needs to be taken when performing inference---we explore this implication in greater detail in \cref{sec:qmle}.

For exponential family distributions, we can construct $f(X)$ and $g(X)$ as follows. 
\begin{theorem} \label{thm:conjugate_reversal}
Suppose that for some $A(\cdot,\cdot),\theta_1,  \theta_2, H(\cdot,\cdot)$, the density of $X$ is given by \begin{align}
    p( x \mid \theta_1,  \theta_2) = H(\theta_1, \theta_2) \exp\{ \theta_1^T  x - \theta_2^T A(x)\}.
\end{align}
Suppose also that we can find   $h(\cdot), T(\cdot)$ and $\theta_3$  such that
\begin{align} \label{eq:ZgivenX}
    p(z \mid x, \theta_{3}) = h( z) \exp\{ x^T  T(z) -  \theta_3^T  A(x)\}
\end{align}
is a well-defined distribution. First, draw $Z \sim p(z|X)$, and let $f(X):= Z$ and $g(X):= X$. Then, $(f(X),g(X))$ satisfy the data fission property (P2). Specifically, note that $f(X)$ has a known marginal distribution $p( z |  \theta_1, \theta_2,\theta_3) = h( z)\frac{H(\theta_1, \theta_2)}{H( \theta_1 +  T( z),  \theta_2 + \theta_3)}$, while $g(X)$ has a known conditional distribution given $f(X)$, which is $p( x \mid  z, \theta_1, \theta_2,  \theta_3) = p( x \mid  \theta_1 +  T(z),  \theta_2 +  \theta_3)$. 
\end{theorem}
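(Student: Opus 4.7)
The plan is to verify (P2) by direct computation of the joint density $p(x,z)$ and then marginalizing to extract both the marginal of $f(X)=Z$ and the conditional of $g(X)=X$ given $f(X)$. The key observation that makes everything work is that the exponent of $p(z \mid x, \theta_3)$ is affine in $x$ and $A(x)$ in exactly the same way as the exponent of $p(x \mid \theta_1, \theta_2)$, so multiplying the two densities regenerates the original exponential family form in $x$ with shifted natural parameters.

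Concretely, I would first write the joint density as
\begin{equation*}
p(x,z \mid \theta_1,\theta_2,\theta_3) = H(\theta_1,\theta_2)\,h(z)\,\exp\!\bigl\{(\theta_1 + T(z))^T x - (\theta_2 + \theta_3)^T A(x)\bigr\},
\end{equation*}
by combining the exponents. The bracketed exponential is precisely the unnormalized kernel of $p(\cdot \mid \theta_1 + T(z), \theta_2 + \theta_3)$, so I multiply and divide by $H(\theta_1 + T(z), \theta_2 + \theta_3)$ to obtain
\begin{equation*}
p(x,z) = h(z)\,\frac{H(\theta_1,\theta_2)}{H(\theta_1+T(z), \theta_2+\theta_3)}\, p\bigl(x \mid \theta_1+T(z), \theta_2+\theta_3\bigr).
\end{equation*}

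Integrating $x$ out, the last factor integrates to $1$, yielding the claimed marginal for $Z$, namely $p(z \mid \theta_1,\theta_2,\theta_3) = h(z)\,H(\theta_1,\theta_2)/H(\theta_1+T(z), \theta_2+\theta_3)$. Dividing the joint by this marginal then leaves $p(x \mid z,\theta_1,\theta_2,\theta_3) = p(x \mid \theta_1+T(z), \theta_2+\theta_3)$, establishing the conditional claim. Since $h(\cdot)$ and $T(\cdot)$ and $\theta_3$ are known by construction (the analyst chose them to make \eqref{eq:ZgivenX} a valid density), both distributions are known up to the original unknowns $(\theta_1,\theta_2)$, so (P2) holds with the reconstruction map $h(f(X),g(X)) = g(X) = X$.

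There is no real obstacle here once the conjugacy viewpoint is adopted: the entire content of the theorem is the algebraic identity that combining an exponential family density in $x$ with a second exponential family density in $x$ whose sufficient statistic is $(x, A(x))$ yields another density of the same exponential family form in $x$. The only minor care needed is to confirm that the normalization $H(\theta_1+T(z), \theta_2+\theta_3)$ is well-defined for (almost) every $z$ in the support of the marginal, which follows from the assumption that \eqref{eq:ZgivenX} defines a proper distribution together with Fubini's theorem applied to the joint integral.
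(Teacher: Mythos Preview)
Your proof is correct and follows essentially the same approach as the paper: both compute the joint density $p(x,z)$ by multiplying the two exponential-family factors, recognize that the result has the same exponential-family form in $x$ with shifted parameters $(\theta_1+T(z),\theta_2+\theta_3)$, and then use the normalizing-constant interpretation of $H$ to read off the marginal of $Z$ and the conditional of $X$ given $Z$. The only cosmetic difference is that you explicitly multiply and divide by $H(\theta_1+T(z),\theta_2+\theta_3)$ to factor the joint before integrating, whereas the paper integrates first and then divides; the content is identical.
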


Recall that all proofs are in Appendix~\ref{sec:appendix_theory}.


\begin{remark}[Trading off information in $f$ and $g$] \label{rmk:tradeoff}
As an extension to the above result, we can draw $B$ samples independently from~\eqref{eq:ZgivenX} denoted as $ Z_i$ for $i \in [B]$, in which case $f(X) = Z \equiv \{Z_1, \ldots, Z_B\}$ has marginal distribution $p(z|\theta_{1},\theta_{2},\theta_{3}) = \left[\prod_{i=1}^B h( z_i)\right]\frac{H(\theta_1, \theta_2)}{H(\theta_1 + \sum_{i=1}^nT(z_i), \theta_2 + B \theta_3)}\text{,}$ and $g(X) =  X$ has conditional distribution $p( x \mid  z, \theta_1, \theta_2, \theta_3) = p(x \mid  \theta_1 + \sum_{i=1}^B T( z_i),  \theta_2 + B\theta_3)\text{.}$ Larger $B$ indicates more information in $f(X)$, and thus less left over in $g(X)$.
\end{remark}

\subsection{Example decompositions} \label{sec:list_decomp}
 Below, we draw attention to techniques used in Sections~\ref{sec:interactive_testing}, \ref{sec:linreg}, \ref{sec:qmle}, and \ref{sec:trendfilter}.  
\begin{itemize}
    \item \textbf{Gaussian.}  Suppose $X \sim N( \mu,  \Sigma)$ is $d$-dimensional ($d \geq 1$). Draw $ Z \sim N(0,  \Sigma)$. Then $f( X) =  X + \tau  Z$, where $\tau \in (0, \infty)$ is a tuning parameter, has  distribution $N( \mu, (1 + \tau^{2})  \Sigma)$; and $g( X) =  X - \tfrac{1}{\tau}  Z$ has distribution $N(\mu,  (1 + \tau^{-2})  \Sigma)$; and $f( X) \independent g( X)$. Larger $\tau$ indicates less informative $f( X)$ (and more informative $g( X) \mid f( X)$). 
    \item \textbf{Bernoulli (P2).} Suppose $X \sim \mathrm{Ber}(\theta)$. Draw $Z \sim \mathrm{Ber}(p)$ where $p \in (0,1)$ is a tuning parameter. Then
     $f(X) = X(1 - Z) + (1 - X)Z$ has marginal distribution $\mathrm{Ber}(\theta + p - 2p\theta)$; and $g(X) = X$ has conditional distribution (given $f(X)$) as $\mathrm{Ber}\left(\frac{\theta}{\theta + (1-\theta) [p/(1-p)]^{2f(X) - 1}}\right)$.  
     Smaller $p$ indicates more information in $f(X)$.
    \item \textbf{Poisson.} Suppose $X \sim \mathrm{Poi}(\mu)$. Fix parameter $p\in(0,1)$ and draw $Z~\sim~\mathrm{Bin}(X, p)$. Then $f(X) = Z$ has marginal distribution $\mathrm{Poi}(p\mu)$; and $g(X) = X - Z$ is independent of $f(X)$ and has distribution  $\mathrm{Poi}((1 - p)\mu)$. Larger $p$ indicates more informative $f(X)$.
\end{itemize}

A larger (but still incomplete) list of decomposition strategies for specific distributions is available in Appendix~\ref{sec:appendix_list_decomp}. We encourage the reader to consult this list in order to get a full appreciation for the applicability of this approach to disparate problems in statistics.

\subsection{Relationship between data splitting and data fission} \label{sec:split_fission}
We explore the conditions under which data fission yields estimators that are comparable to data splitting. Suppose we are given $n$ i.i.d.\ observations $ X =(X_1, \ldots, X_n)$, where $X_i \sim p(\theta)$. The data splitting approach chooses $S$ as a random subset of $[n]$ of size $a$ where $a \in \{\tfrac{1}{n}, \ldots, \tfrac{n-1}{n}, 1\}$ is a tuning parameter. Letting $\mathcal{I}_{X}(\theta)$ denote the Fisher information for the complete sample and  $\mathcal{I}_{1}(\theta)$ denote the Fisher information for a single observation, we then have
\[\mathcal{I}_{X}(\theta) = \underbrace{an\mathcal{I}_{1}(\theta)}_{\text{Used for selection}} + \underbrace{(1-an)\mathcal{I}_{1}(\theta)}_{\text{Used for inference}}.\]
For data fission, note the following identity for smooth parametric models: $
	\nabla^{2} \ell(\theta| X) = \nabla^{2} \ell(\theta ; f(X)) + \nabla^{2} \ell(\theta ; g(X) \mid f(X))$.
Taking expectations, and denoting $\I_{f(X)}$, $\I_{g(X) | f(X)}$ as the Fisher information for the selection and inference datasets yields
\begin{align*}
 \mathcal{I}_{X}(\theta) &=  \I_{f(X)}(\theta)  + \mathbb{E}[-\nabla^{2} \ell(\theta ; g(X) \mid f(X)] \\
 &= \I_{f(X)}(\theta)  + \mathbb E \left[-\mathbb{E}[\nabla^{2} \ell(\theta ; g(X) \mid f(X)) \mid f(X)] \right] = \underbrace{\I_{f(X)}(\theta)}_{\text{for selection}} + \underbrace{\mathbb{E}[\I_{g(X)|f(X)}(\theta)]}_{\text{for inference}} .
\end{align*}
For a fixed parameter $a$ in data splitting, one can choose $\tau$ such that $
\mathbb{E}[\I_{g(X)|f(X)}(\theta)] = (1-an)\mathcal{I}_{1}(\theta)$
to find a comparable information split for data fission. This selected $\tau$ will only guarantee that the inference datasets created by data splitting and data fission approach contain the same information in expectation. For any particular realization of the fission step, $\I_{g(X)|f(X)}$ may be different than $(1-an)\mathcal{I}_{1}(\theta)$. In situations where $g(X)$ and $f(X)$ are independent, this equality can be modified to hold exactly since $\mathbb{E}[\I_{g(X)|f(X)}(\theta)]= \I_{g(X)}$.

\begin{remark}[Trading information between $f$ and $g$, part II] In the list of decompositions and in \cref{rmk:tradeoff} we noted that we could trade off ``information'' between $f(X)$ and $g(X)$ by varying certain hyperparameters. We can now clarify what this means. For a hyperparameter $p \in (a,b)$, we say that larger $p$ corresponds to more informative $f(X)$ and less informative $g(X)|f(X)$ to mean that 
$\lim_{p \rightarrow b} \mathcal{I}_{f(X)}\left(\theta \right) = \mathcal{I}_{X}\left(\theta \right) \text{~  and ~}
\lim_{p\rightarrow b} \mathbb{E}[\mathcal{I}_{g(X)|f(X)}\left(\theta\right)] = 0. $
\end{remark}

We now elaborate on the informal assertion in \cref{sec:introduction} that data fission for Gaussian data can be viewed as a continuous analog of data splitting with two examples. 

\begin{example}[Gaussian Datasets] \label{example:gaussian}
Let $\{X_{i}\}_{i=1}^{n}$ be iid  $N(\theta,\sigma^{2})$. $\mathcal{I}_{1} = \frac{1}{\sigma^{2}}$ and so $\frac{an}{\sigma^{2}}$ is the amount of information used for selection under a data splitting rule. If the data is fissioned using the rule in \cref{sec:list_decomp}, then $\mathcal{I}_{f(X)} = \frac{n}{\sigma^{2}(1 + \tau^2)}$. To compare the approaches, we note that the relation $a = \frac{ 1}{1 + \tau^2}$ results in the same split of information.
\end{example}

\begin{example}[Poisson Datasets] \label{example:poisson}
Let $\{X_{i}\}_{i=1}^{n}$ be i.i.d $\text{Pois}(\mu)$ .  We have that $\mathcal{I}_{1} = \frac{1}{\mu}$ and so $\frac{an}{\mu}$ is the amount of information used for selection under a data splitting rule. If the data is fissioned using the rule described in \cref{sec:list_decomp}, then $\mathcal{I}_{f(X)} = \frac{np}{\mu}$. The relation that equates the amount of information between data splitting and fission is then $a = p$.  
\end{example}


 The above examples are simplified because each data point is identically distributed so that each data point contains equivalent amounts of information about the unknown parameter $\theta$. This symmetry allowed us to find a relation between $a$ and $\tau$ that did not depend on the unknown parameters of interest. When the information value of data points varies, the two methods are not as directly comparable 


\paragraph{Information splitting with non-identically distributed data} \label{sec:non_iid} Consider a setting where data is independent but not identically distributed so each data point has different amounts of information. For instance, consider $Y_{i} \overset{\mathrm{iid}}{\sim} N(\mu, \sigma_{i}^{2})$. Or consider the linear model where $Y_{i} \overset{\mathrm{iid}}{\sim}  N(x_{i}^{T} \beta, \sigma^{2})$ for $x_{i} \in \mathbb{R}^{p}$. In this case, $\mathcal{I}_{Y_{i}}(\beta) = \sigma^{2}x_{i} x_{i}^{T}$, which places greater value on points with higher leverage. More generically, assume that $Y:= (Y_{1},...,Y_{n})$ with $\mathcal{I}_{Y_{i}}(\theta)$ denoting the Fisher information for observation $Y_{i}$ about a parameter $\theta$.  Consider $m$ different ways of allocating the information for a fixed $a$ which chooses $an$ points for the first dataset and the remaining $1-an$ points for the second. Denote these hypothetical splits as $S_{1},...,S_{m}$, where $S_{j} \subseteq [n]$ refers to the indices that have been allocated to the first dataset. 
Let $S$ denote the random variable that randomly selects $S_{j}$ with probability $p_{j}$. For any $j$, we let $\mathcal{I}_{S_{j}}(\theta) = \sum_{i \in S_{j}} \mathcal{I}_{x_{j}}(\theta)$ denote the information allocated to the first dataset and $\mathcal{I}_{S_{j}^{c}}(\theta) = \sum_{i \in S_{j}^{c}} \mathcal{I}_{Y_{i}}(\theta)$ denote the information allocated to the second dataset.   

In what follows, we let $m = {n \choose an}$ and let $S_{1},...,S_{m}$ denote all possible ways of choosing $an$ data points for the first dataset. We will also make the simplifying assumption that each $p_{j} = \frac{1}{ {n \choose an}}$ so that all possible splits have the same probability of being selected. Following \cref{example:gaussian} and \cref{example:poisson}, suppose we try to create a comparable split of the information using data fission by choosing a fixed $a$ annd then picking $\tau$ such that $\mathcal{I}_{f(X)} = a \sum_{i=1}^{n} \mathcal{I}_{Y_{i}} (\theta)$. In this framework, this will also equate $\mathcal{I}_{f(X)} = \mathbb{E}\left[ \mathcal{I}_{S}(\theta) \right]$ because
$$\mathcal{I}_{f(X)} =  
a \sum_{i=1}^{n} \mathcal{I}_{Y_{i}}(\theta) 
= \frac{1}{ { n \choose an}} \sum_{i=1}^{ {n \choose an}} a\sum_{j=1}^{n} \mathcal{I}_{Y_{j}}(\theta)
=\frac{1}{ { n \choose an}} \sum_{i=1}^{ {n \choose an}} \sum_{j=1}^{n} \mathbbm{1}_{i \in S_{i}} \mathcal{I}_{Y_{j}}(\theta)
= \mathbb{E}\left[ \mathcal{I}_{S}(\theta) \right].$$
Proposition~1 of \cite{rasines2021splitting}, which we restate here with some adjustments to account for our setting, demonstrates that there is a sense in which the information split using fission is more efficient even though we have chosen $\tau$ such that it has the same expected information content as our data splitting procedure. 
\begin{fact}[Proposition 1 of \cite{rasines2021splitting}]
Let $S_{1},...,S_{m}$ be deterministic data splitting rules and $S$ be the random variable which returns a particular split $S_{j}$ with some probability $p_{j}$ such that $\sum_{i=1}^{m} p_{i} = 1$. Suppose that data fission is conducted such that $\mathcal{I}_{f(X)}(\theta) = \mathbb{E}\left[\mathcal{I}_{S} \right]$. Assume that all information matrices $\mathcal{I}_{f(Y)}(\theta)$, $\mathbb{E}[\I_{g(Y)|f(Y)}(\theta)]$, and $\mathcal{I}_{Y_{i}}(\theta)$ are invertible and have dimension $p \times p$. Let $\phi$ be a real-valued, convex, and strictly increasing function function defined on the set of $p \times p$ positive definite matrices. Then 
$\phi\left[\mathcal{I}_{f(Y)}(\theta)^{-1}\right] \le  \mathbb{E} \left[\phi\left( \mathcal{I}_{S}(\theta)^{-1} \right)\right]$. Furthermore, when $g(Y)$ is independent of $f(Y)$, $\phi\left[\mathcal{I}_{g(Y)}(\theta)^{-1}\right] \le  \mathbb{E} \left[\phi\left( \mathcal{I}_{S^{c}}(\theta)^{-1} \right)\right]$.
\end{fact}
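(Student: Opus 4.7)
The plan is to reduce the claim to two classical facts: (i) the map $A\mapsto A^{-1}$ is operator convex on the cone of positive definite matrices, i.e.\ for any finite collection of positive definite matrices $\{A_j\}$ and weights $p_j\ge 0$ with $\sum_j p_j=1$,
\[
\Bigl(\sum_j p_j A_j\Bigr)^{-1} \;\preceq\; \sum_j p_j A_j^{-1}
\]
in the Loewner order; and (ii) Jensen's inequality for the scalar-valued convex function $\phi$ applied to the matrix-valued random variable $\mathcal{I}_S(\theta)^{-1}$.

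First I would rewrite the hypothesis $\mathcal{I}_{f(Y)}(\theta)=\mathbb{E}[\mathcal{I}_S(\theta)]$ as the convex combination $\mathcal{I}_{f(Y)} = \sum_j p_j \mathcal{I}_{S_j}$. Applying operator convexity of matrix inversion then yields $\mathcal{I}_{f(Y)}^{-1} \preceq \mathbb{E}[\mathcal{I}_S^{-1}]$. Because $\phi$ is assumed strictly increasing on the PD cone --- which I will interpret as monotonicity with respect to the Loewner order --- applying $\phi$ preserves the inequality: $\phi(\mathcal{I}_{f(Y)}^{-1}) \le \phi(\mathbb{E}[\mathcal{I}_S^{-1}])$. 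Finally, scalar Jensen applied to the convex $\phi$ on the matrix-valued random variable gives $\phi(\mathbb{E}[\mathcal{I}_S^{-1}]) \le \mathbb{E}[\phi(\mathcal{I}_S^{-1})]$, and chaining the three inequalities proves the first claim.

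For the second part, the plan is to show that independence of $g(Y)$ and $f(Y)$ forces $\mathcal{I}_{g(Y)}=\mathbb{E}[\mathcal{I}_{S^c}]$, after which the same three-step argument applies verbatim. By additivity of Fisher information across independent pieces, $\mathcal{I}_Y = \mathcal{I}_{f(Y)} + \mathcal{I}_{g(Y)}$; likewise, by independence across the $Y_i$'s, $\mathcal{I}_Y = \mathcal{I}_{S_j} + \mathcal{I}_{S_j^c}$ for every deterministic split. Combining these with $\mathbb{E}[\mathcal{I}_S]=\mathcal{I}_{f(Y)}$ gives $\mathcal{I}_{g(Y)} = \mathcal{I}_Y - \mathbb{E}[\mathcal{I}_S] = \mathbb{E}[\mathcal{I}_{S^c}]$, which feeds into the chain above with $\mathcal{I}_S$ replaced by $\mathcal{I}_{S^c}$.

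The main obstacle is ensuring the two notions of ``convex'' and ``increasing'' for $\phi$ are interpreted consistently --- $\phi$ is scalar-valued and convex on the convex cone of PD matrices, but its monotonicity must be taken with respect to the Loewner order for the step $\mathcal{I}_{f(Y)}^{-1}\preceq \mathbb{E}[\mathcal{I}_S^{-1}] \Rightarrow \phi(\mathcal{I}_{f(Y)}^{-1}) \le \phi(\mathbb{E}[\mathcal{I}_S^{-1}])$ to be valid. Operator convexity of matrix inversion itself is classical (it can be shown for two matrices by checking $2(A^{-1}+B^{-1})^{-1}\preceq\tfrac{1}{2}(A+B)$ via congruence transformations, and extended to arbitrary convex combinations by induction), so I would simply cite it rather than reprove it.
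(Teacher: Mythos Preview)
The paper does not supply its own proof of this statement: it is presented as a \emph{Fact}, explicitly attributed to Proposition~1 of Rasines and Young (2021), and merely restated with notation adapted to the data-fission setting. There is therefore no paper proof to compare against.

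That said, your argument is correct and is the standard route to this kind of inequality. The three-step chain --- operator convexity of $A\mapsto A^{-1}$ on the positive-definite cone, Loewner monotonicity of $\phi$, then scalar Jensen for the convex $\phi$ --- is exactly how one establishes $\phi\bigl((\mathbb{E}[\mathcal{I}_S])^{-1}\bigr)\le \mathbb{E}[\phi(\mathcal{I}_S^{-1})]$. Your reading of ``strictly increasing'' as monotone with respect to the Loewner order is the intended one; without that interpretation the middle step fails, so it is right to flag it. For the second claim, your use of the additive decomposition $\mathcal{I}_Y=\mathcal{I}_{f(Y)}+\mathcal{I}_{g(Y)}$ under independence (which the paper also invokes just before stating the Fact, noting $\mathbb{E}[\mathcal{I}_{g(Y)\mid f(Y)}]=\mathcal{I}_{g(Y)}$ in that case) together with $\mathcal{I}_Y=\mathcal{I}_{S_j}+\mathcal{I}_{S_j^c}$ for each deterministic split is exactly what is needed to obtain $\mathcal{I}_{g(Y)}=\mathbb{E}[\mathcal{I}_{S^c}]$ and rerun the chain.
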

Intuitively, data fission is more efficient because it splits the information in a determinsitic way, but data splitting introduces randomization into the splitting process which decreases efficiency. Since the width of confidence intervals for MLE parameters is a function of the inverse Fisher information, an immediate consequence of this is that MLE parameters constructed from data fission will have, on average, tighter confidence intervals compared with a data splitting rule with the same expected information split. Note that in the case where the same amount of information is available at each data point (i.e. $\mathcal{I}_{x_{i}}(\theta) = \mathcal{I}_{x_{j}}(\theta)$ for all $i,j$), the above inequality becomes an equality 
and $\phi\left[\mathcal{I}_{f(X)}(\theta)^{-1}\right] =  \mathbb{E} \left[\phi\left( \mathcal{I}_{S}(\theta)^{-1} \right)\right]$. Therefore, in \cref{example:gaussian} and \cref{example:poisson}, there is no apparent advantage to data splitting compared to data fission. When the information at each point varies, however, data fission can offer substantial benefits, which we explore in greater detail in \cref{sec:linreg}.

\section{Application: selective CIs after interactive multiple testing} \label{sec:interactive_testing}


Suppose we observe $y_{i} \sim N(\mu_i,\sigma^{2})$ for $n$ data points with known $\sigma^{2}$ alongside generic covariates $x_{i} \in \mathcal{X}$. 
First, the analyst wishes to choose a subset of hypotheses  $\mathcal{R}$ to reject from the set $\{H_{0,i}: \mu_{i} = 0\}$ while controlling the false discovery rate (FDR), which is defined as the expected value of the $\text{false discovery proportion } \mathrm{(FDP)}:= \frac{|x_i \in \mathcal{R}: \mu_{i} = 0|}{\max \{|\mathcal{R}|,1\}}$. After selecting these hypotheses, the analyst then may wish to construct either: 
\begin{itemize}
    \item multiple confidence intervals (CIs) with $1-\alpha$ coverage of $\mu_{i}$ for each $i \in \mathcal{R}$; or 
    \item a single  CI with $1-\alpha$ coverage of $\widebar{\mu} = \frac{1}{|\mathcal{R}|} \sum_{i \in \mathcal{R}} \mu_{i}$.
\end{itemize}
One method for rejecting hypotheses and constructing CIs that achieve coverage for the individual $\mu_{i}$ would be using the BH procedure (\cite{BHprocedure}) to form the rejection set and then construct CIs as described by \cite{BY_CI}. 
In this problem, these would be calculated as $y_{i} \pm z_{\beta/2}$ where $\beta = \frac{|\mathcal{R}|\alpha}{n}$.
However, we know of no methodology for aggregating the individual CIs to form a single CI that will cover $\widebar{\mu}$.

An alternative approach for selective inference is to compute a $p$-value for testing each $y_{i}$, and then mask the $p$-value as proposed by the AdaPT~(\cite{adapt_fithian}) and STAR~(\cite{lei2017star}) procedures. These interactive methods allow the data analyst to iteratively build a rejection set in a data adaptive way. 
A drawback of these approaches, however, is that they only are designed to work with a $p$-value. In contrast to the BH procedure, there is no available method to cover either the individual signals $\mu_{i}$ or $\widebar{\mu}$. 
Data fission offers one possible path forward:
\begin{enumerate}
    \item Draw $z_{i} \sim N(0,\sigma^{2})$ and let $f(y_{i}) = y_{i} + \tau z_{i}$ with $g(y_{i}) = y_{i} -\frac{1}{\tau}z_{i}$.
    \item Use $f(y_{i})$ to select a rejection set of suspected non-nulls using any desired error-control procedure (say AdaPT, STAR or BH). 
    \item After selecting hypotheses, we can form CIs to cover each $\mu_{i} \in \mathcal{R}$ with $1-\alpha$ coverage as $
        g(y_{i}) \pm z_{\alpha/2} \sigma \sqrt{1+\frac{1}{\tau^{2}}}$
    or we can form a $1- \alpha$ CI to cover $\overline{\mu}$ as$
        \frac{\sum_{i\in \mathcal{R}} g(y_{i})}{|\mathcal{R}|} \pm z_{\alpha/2} \sigma \sqrt{\frac{1+\frac{1}{\tau^{2}}}{|\mathcal{R}|}}$.
\end{enumerate}

\begin{figure} 
\begin{center}
\begin{subfigure}[t]{1\textwidth}
\centering
\includegraphics[width=0.23\linewidth]{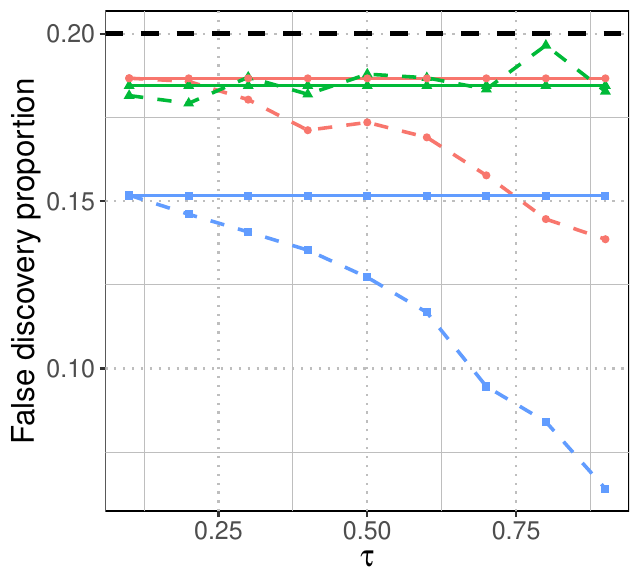}
\includegraphics[width=0.23\linewidth]{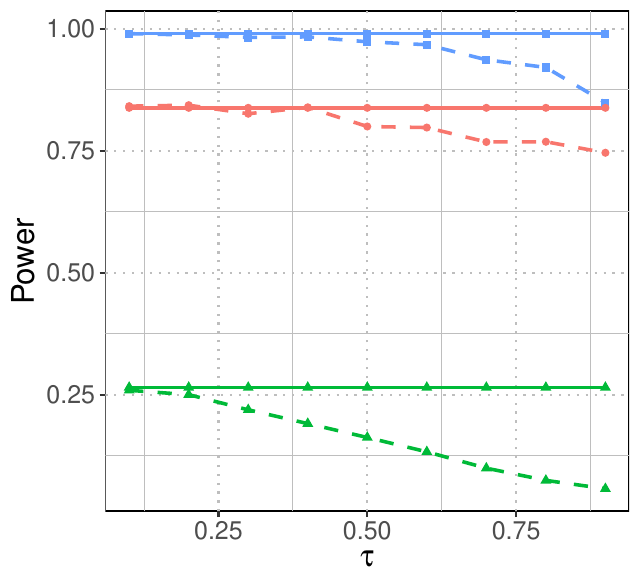}
\includegraphics[width=0.23\linewidth]{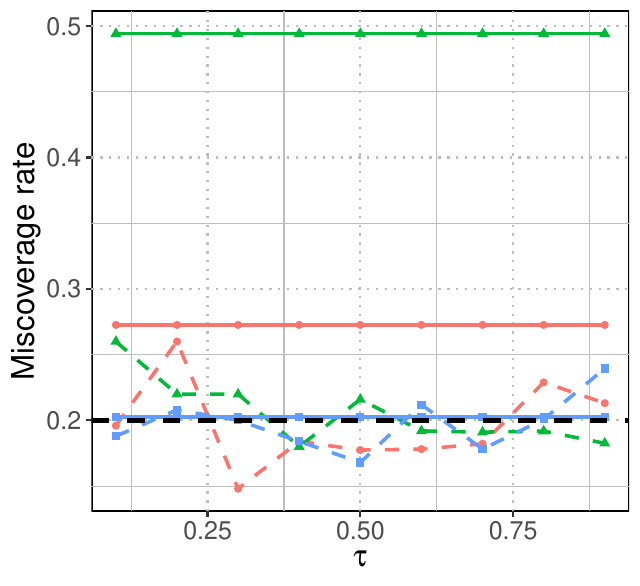}
\includegraphics[width=0.23\linewidth]{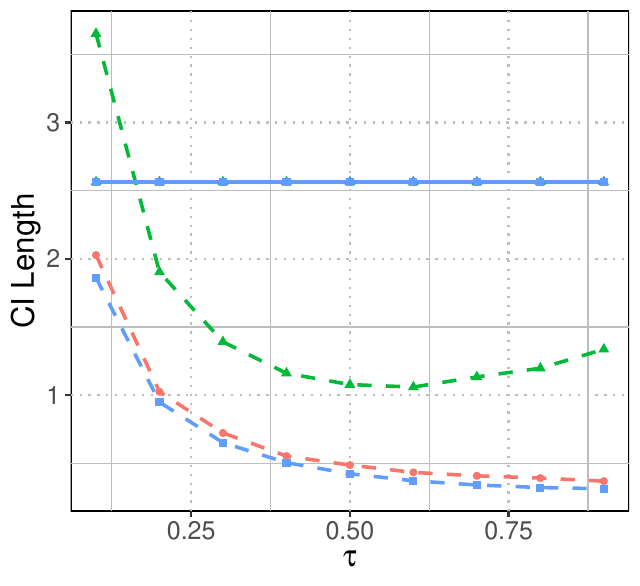}
\end{subfigure}
\vfill
\begin{subfigure}[t]{1\textwidth}
\centering
\includegraphics[width=0.4\linewidth]{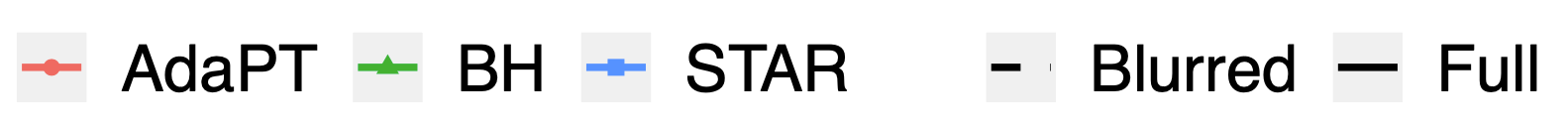}
\end{subfigure}
\caption{Numerical results averaged over $250$ trials for a $50 \times 50$ grid of hypotheses with target FDR level chosen at $0.2$ and $\tau$ varying over $(0,1)$. Solid lines denote metrics for the rejection sets formed using the full dataset and dotted lines denote metrics calculated using the rejection sets formed through data fission. All methods control FDR at the desired level, but ``double dipping'' to form CIs after forming a rejection set 
data results in invalid coverage. Fissioned CIs have the correct coverage. The fissiones CI lengths decrease as $\tau$ increases because more of the dataset gets reserved for inference.}
\label{fig:STAR_experiments_normal}
\end{center} 
\end{figure}

To illustrate this procedure, we repeat the experiments in Section~4.3 of \cite{lei2017star} but with fissioned data. Specifically, we let the data be arranged on a grid with non-nulls arranged in a circle in the center with $\mu_i = 2$ for each non-null and $\mu_i = 0$ for each null. We then form a rejection set using the BH, AdaPT, and STAR procedures and compute confidence intervals for $\widebar{\mu}$. Results are avaialble in \cref{fig:STAR_experiments_normal}. Fissioning the data allows the analyst to form CIs after forming a rejection set by varying $\tau$, while post-selective CIs for non-fissioned data do not have proper coverage. For a more detailed discussion of the simulation and for additional demonstrations on Poisson data, consult Appendix~\ref{sec:appendix_interactive_hyp_testing}.

\section{Application: selective CIs in fixed-design linear regression} \label{sec:linreg}
We now turn to applying data fission to fixed-design Gaussian linear regression. We expand on the discussion in \cite{rasines2021splitting}, and later build on several results in this section in our treatment of trend filtering in \cref{sec:trendfilter}. We assume that $y_i$ is the dependent variable and $x_i \in \mathbb{R}^p$ is a non-random vector of $p$ features for $i = 1,\ldots, n$ samples.  We denote $X = (x_{1},...,x_{n})^T$ as the model design matrix and  $Y = (y_{1},...,y_{n})^T$ with: 
\[Y = \mu + \epsilon \text{ with } \epsilon = (\epsilon_{1},...,\epsilon_{n})^T \sim N(0, \Sigma), \]
where $\mu = \mathbb{E}[Y|X] \in \R^{n}$ is a fixed unknown quantity and $\epsilon \in \R^{n}$ is a random quantity with a known covariance matrix $\Sigma$ (such as $\sigma^2 I_n$ for known $\sigma$; we discuss unknown $\sigma$ later). 

During the fission phase, we introduce the independent quantities $f(Y)$ and $g(Y)$ created by adding Gaussian noise $Z \sim N(0,\Sigma)$ as described in \cref{sec:list_decomp}, letting
$f(Y) = Y + \tau Z$, and $ g(Y) = Y - \frac{1}{\tau}Z$. 
We use $f(Y)$ to select a model $M \subseteq [p]$ that, in turn, defines a model design matrix $X_{M}$ which is a submatrix of $X$. After selecting $M$, we then use $g(Y)$ for inference by fitting a linear regression on $g(Y)$ against the selected covariates $X_{M}$. 

Let $\widehat{\beta}$ (as a function of the chosen model M) be defined in the usual way as
\begin{equation} \label{eqn:beta-hat}
\widehat{\beta}(M) = \argmin_{\widetilde{\beta}} \norm{g(Y)-X_{M}\widetilde{\beta}} ^{2} = (X_{M}^T X_{M})^{-1}X_{M}^{T}g(Y).
\end{equation}
Note that we make no assumptions that $\mu = \mathbb{E}[Y|X]$ is guaranteed to be a linear combination of the chosen covariates. Our target parameter is therefore the best linear approximator of the regression function using the selected model
\begin{equation} \label{eqn:beta-star}
\beta^{*}(M) =\argmin_{\widetilde{\beta}} \mathbb{E}\left[\norm{Y-X_{M}\widetilde{\beta}}^{2} \right]  = (X_{M}^{T}X_{M})^{-1}X_{M}^{T}\mu.
\end{equation}
We are then able to form CIs that guarantee $1- \alpha$ coverage of $\beta^{*}(M)$ as follows. 
\begin{theorem} \label{thm:normal_regression}
For $\widehat{\beta}(M)$ from~\eqref{eqn:beta-hat} and $\beta^{*}(M)$ from~\eqref{eqn:beta-star}, we have
$$\widehat{\beta}(M) \sim N\left(\beta^{*}(M),  (1+\tau^{-2})(X_{M}^{T}X_{M})^{-1}X_{M}^{T} \Sigma X_{M} (X_{M}^{T}X_{M})^{-1} \right).$$
Furthermore, we can form a $1-\alpha$ CI for the $k$th element of $\beta^{*}(M)$ as
$$\widehat{\beta}^{k}(M) \pm z_{\alpha/{2}} \sqrt{(1+\tau^{-2}) \left[(X_{M}^{T}X_{M})^{-1}X_{M}^{T} \Sigma X_{M} (X_{M}^{T}X_{M})^{-1}\right]_{kk}}.$$
\end{theorem}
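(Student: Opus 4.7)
The plan is to reduce everything to the independence of $f(Y)$ and $g(Y)$ guaranteed by the Gaussian decomposition in \cref{sec:list_decomp}, and then apply standard facts about linear transformations of Gaussian vectors.

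First, I would record the marginal distribution of $g(Y)$. Since $Z \sim N(0,\Sigma)$ is independent of $Y$ and $Y \sim N(\mu,\Sigma)$, linearity gives $g(Y) = Y - \tau^{-1} Z \sim N(\mu, (1+\tau^{-2})\Sigma)$. The crucial point is that $f(Y) \independent g(Y)$ by construction, so the selected model $M$, being a (possibly randomized) function of $f(Y)$ alone, is independent of $g(Y)$. Consequently, for each realized model $M$, the conditional distribution of $g(Y)$ given $M$ still equals its marginal $N(\mu, (1+\tau^{-2})\Sigma)$. This is what justifies treating $X_M$ as a fixed (non-random) matrix in the subsequent calculation; without this step, selection bias would enter the argument.

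Next, I would use formula \eqref{eqn:beta-hat}: conditional on $M$, $\widehat{\beta}(M) = (X_M^T X_M)^{-1} X_M^T g(Y)$ is a linear transformation $Ag(Y)$ with $A = (X_M^T X_M)^{-1} X_M^T$. Thus $\widehat{\beta}(M)$ is Gaussian with mean $A\mu = (X_M^T X_M)^{-1} X_M^T \mu = \beta^*(M)$ by \eqref{eqn:beta-star}, and covariance
\[
A \cdot (1+\tau^{-2})\Sigma \cdot A^T = (1+\tau^{-2})(X_M^T X_M)^{-1} X_M^T \Sigma X_M (X_M^T X_M)^{-1},
\]
which is the claimed distribution. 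For the confidence interval for the $k$-th coordinate, I would extract the $(k,k)$ entry of the covariance matrix to get the marginal variance of $\widehat{\beta}^k(M)$, then invert the standard Gaussian pivot $(\widehat{\beta}^k(M) - \beta^{*,k}(M))/\mathrm{s.d.}$ using $z_{\alpha/2}$ quantiles.

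I don't anticipate any real obstacle — the calculation is a one-line Gaussian linear-transformation fact once the distribution of $g(Y)$ is in hand. The only subtle point worth emphasizing in the write-up is the independence argument that lets us condition on $M$ without perturbing the distribution of $g(Y)$; this is where data fission buys validity over naive ``double dipping,'' and it is also where the correctness of $\beta^*(M)$ as the target estimand (defined in terms of $\mu$, not of the selected design) needs to be checked carefully.
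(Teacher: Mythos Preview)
Your proposal is correct and follows essentially the same route as the paper's proof: both compute the distribution of $g(Y)$ (equivalently of $\epsilon-\tau^{-1}Z$) as $N(\mu,(1+\tau^{-2})\Sigma)$ and then push it through the linear map $(X_M^TX_M)^{-1}X_M^T$. If anything, your write-up is more careful than the paper's, which omits the explicit independence argument you give (that $M$ is a function of $f(Y)\independent g(Y)$, so conditioning on $M$ leaves the law of $g(Y)$ unchanged); the paper simply treats $X_M$ as fixed without comment.
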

An assumption of this procedure is that the variance is known in order to do the initial split between $f(Y)$ and $g(Y)$ during the fission phase. In the case of unknown variance, one can use an estimator $\hat{\sigma}$ to create the split, but this forces the analyst to then condition on both
$f(Y)$ and $\hat{\sigma}$, which may not have a tractable distribution in high dimensional settings. In the case where $p$ is fixed and $n \rightarrow \infty$, we explore an extension of this methodology to account for unknown variance in \cref{sec:unknown_variance}. Extending this approach to finite samples and high-dimensional regimes remains an open question for future lines of research.

\begin{figure}[H] 
\centering
\includegraphics[width=0.27\linewidth]{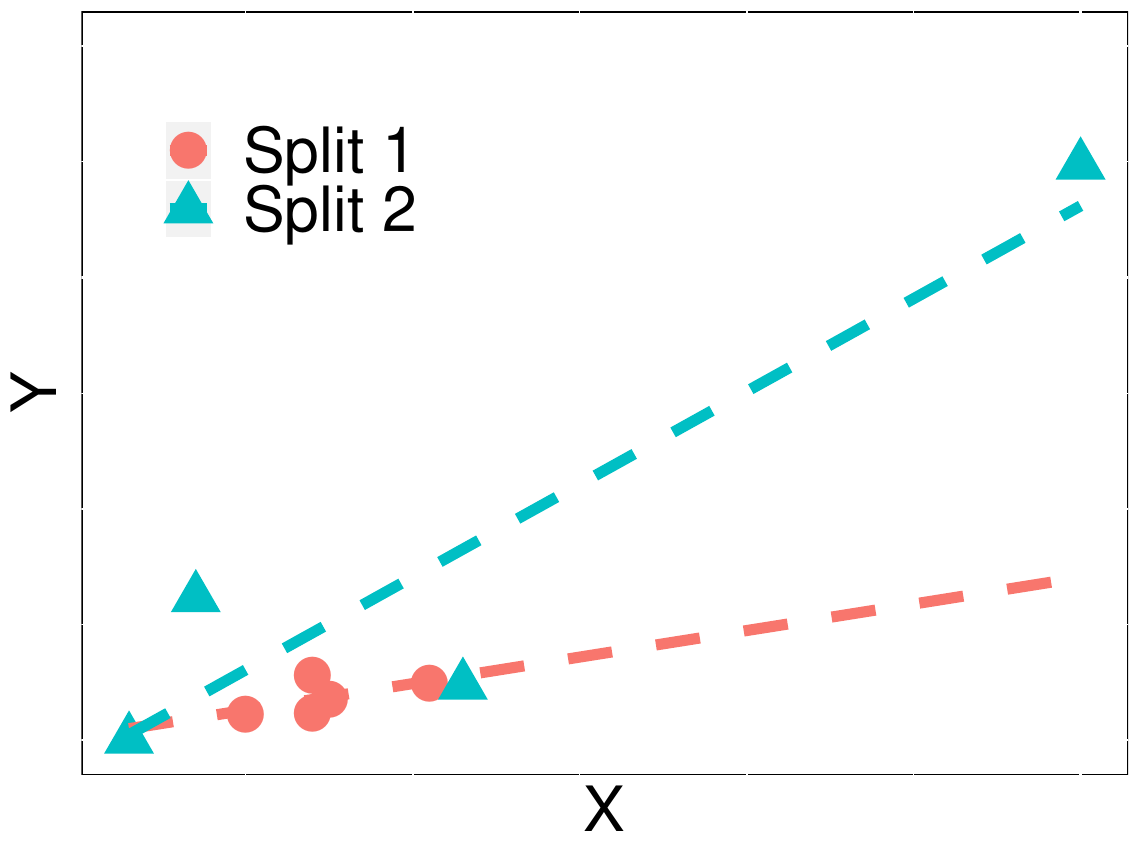}
\hfill
\includegraphics[width=0.27\linewidth]{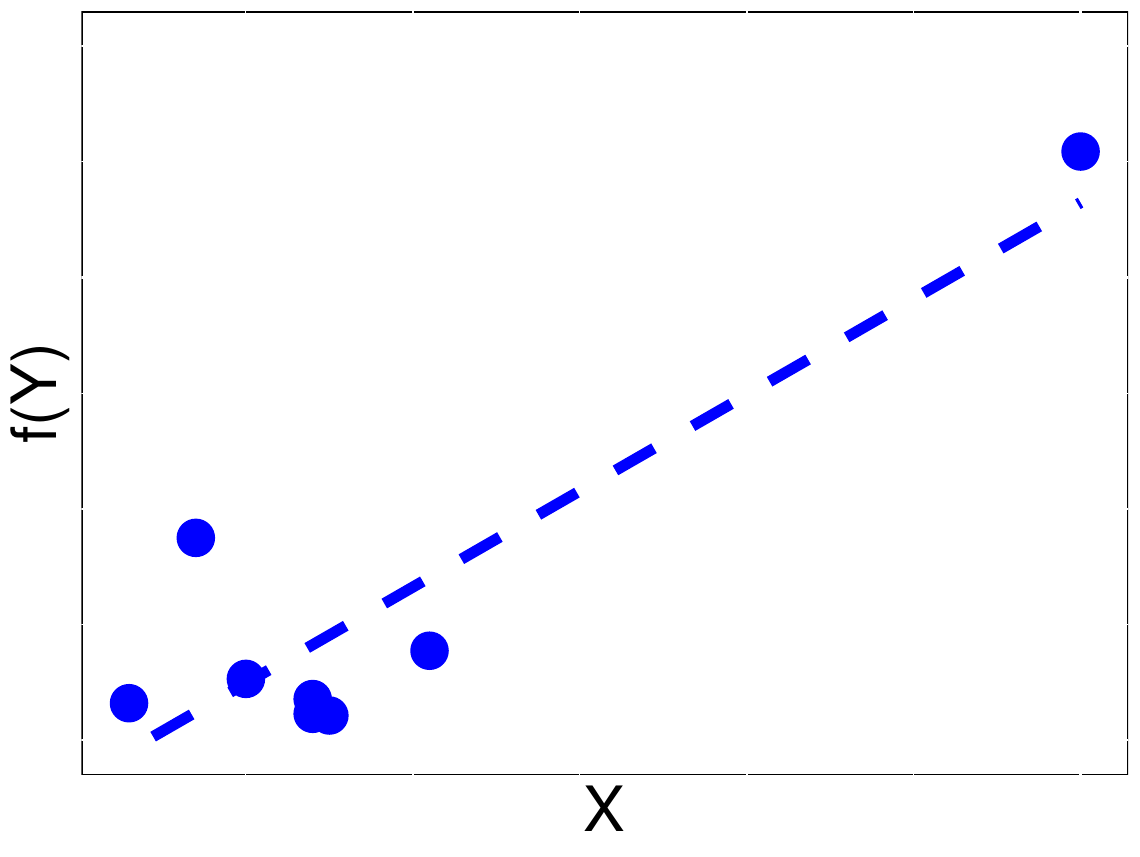}
\hfill
\includegraphics[width=0.27\linewidth]{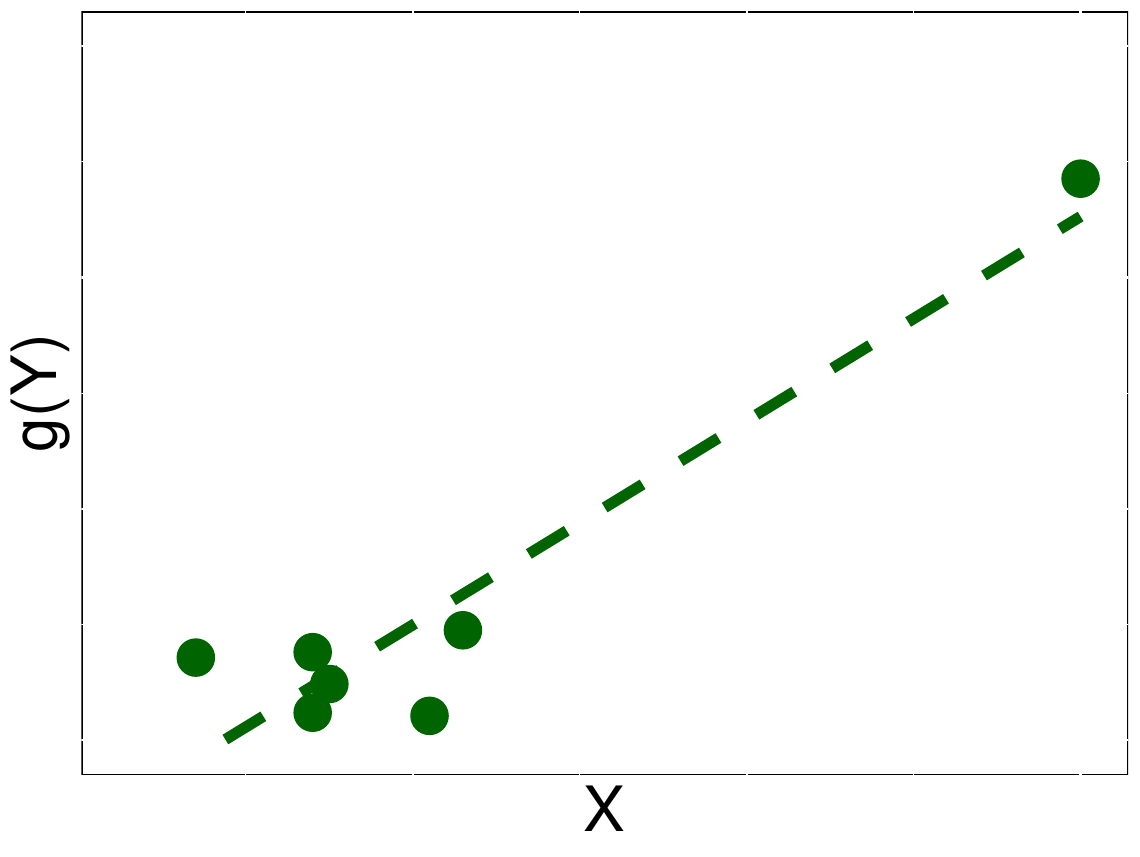}
\caption{Comparison of data splitting (left) and data fission (middle, right) for dataset with one highly influential point. Splitting the data and fitting a regression results in substantially different fitted models because the fitted values are heavily influenced by a single data point. In contrast, data fission keeps the same $X$ location for every data point, but randomly perturbs the response $Y$ with random noise to create new variables $f(Y)$ and $g(Y)$: notice the slight difference in the two figures. This enables the analyst to keep a ``piece'' of every data point in both $f(Y)$ and $g(Y)$, ensuring that leverage points have an impact in both copies of the dataset. }
\label{fig:example_highlev}
\end{figure}

A key advantage that data fission has over data splitting is that it allows the analyst to smoothly trade off information between selection and inference datasets by tuning $\tau$, while data splitting forces the analyst to allocate points discretely to either the selection or inference datasets. When the sample size is large and the distribution of covariates is well behaved, data splitting may be able to tradeoff information relatively smoothly by changing the proportion of data points in each sample. However, data fission will outperform data splitting in settings with small sample size  with a handful of points with high leverage. Data splitting has a disadvantage in this setting because the analyst is forced to choose to allocate each of these leverage points to either the selection or inference dataset. In contrast, data fission enables the analyst to ``hedge their bets" so that a piece of the information contained in each leverage point is allocated to both the selection and inference datasets. \cref{fig:example_highlev} offers an illustration of this tendency on an example dataset and an extended discussion on this topic can be found in Appendix~\ref{sec:appendix_splitting_fission}.

We now demonstrate the advantages that data fission has over data splitting through an empirical study. We conduct inference on some vector $Y$ given a set of covariates $X$ and a known covariance matrix $\Sigma = \sigma^{2} I_{n}$  as follows:
\begin{enumerate}
    \item Decompose $y_i$ into $f(y_i) = y_i - Z_i$ and $g(y_i) = y_i + Z_i$ where $Z_i \sim N(0, \sigma^{2})$. 
    \item Fit $f(y_i)$ using LASSO to select features, denoted as $M \subseteq [p]$. \footnote{In our experiments, we ue \texttt{cv.glmnet} in \texttt{R} package \texttt{glmnet} and choosing the tuning parameter $\lambda$ by the 1 standard deviation rule, which can be found in the value of \texttt{lambda.1se})}
    \item Fit $g(y_i)$ by linear regression without regularization using only the selected features 
    \item Construct CIs for the coefficients trained in step 3, each at level $\alpha$, using \cref{thm:normal_regression}.
\end{enumerate}

\paragraph{Simulation setup.} We choose $\sigma^{2} = 1$ and generate $n=16$ data points with $p=20$ covariates. For the first $15$ data points, we have an associated vector of covariates $x_i \in \mathbb{R}^{p}$ generated from independent Gaussians. The last data point, which we denote $x_{\text{lev}}$, is generated in such a way as to ensure it is likely to be more influential than the remaining observations due to having much larger leverage. We define $x_{\text{lev}} = \gamma \left(|X_{1}|_{\infty}, ..., |X_{p}|_{\infty}\right)$ where $X_{k}$ denotes the the $k$-th column vector of the model design matrix $X$ formed from the first $15$ data points and $\gamma$ is a parameter that we will vary within these simulations that reflects the degree to which the last data point has higher leverage than the first set of data points. We then construct $y_i \sim N(\beta^T x_i,\sigma^{2})$. The parameter $\beta$ is nonzero for 4 features: $(\beta_{1}, \beta_{16}, \beta_{17},\beta_{18}) = S_{\Delta}(1,1,-1,1)$ where $S_\Delta$ encodes the signal strength. We use $500$ repetitions and summarize performance as follows. For the selection stage, we compute the power (defined as $\frac{|j \in M: \beta_{j} \ne 0|}{|j \in [p]: \beta_j \ne 0|}$) and precision (defined as $\frac{|j \in M : \beta_j \ne 0|}{|M|}$) of selecting features with a nonzero parameter. For inference, we use the false coverage rate (defined as $\frac{|k \in M: [\beta^{\star}(M)]_k \notin \text{CI}_k|}{\max\{|M|,1\}}$) where $\text{CI}_k$ is the CI for $[\beta^{\star}(M)]_k$. We also track the average CI length within the selected model.

\begin{figure}[H]
\centering
    \begin{subfigure}[t]{0.32\textwidth}
        \centering
        \includegraphics[trim=5 20 25 60, clip, width=1\linewidth]{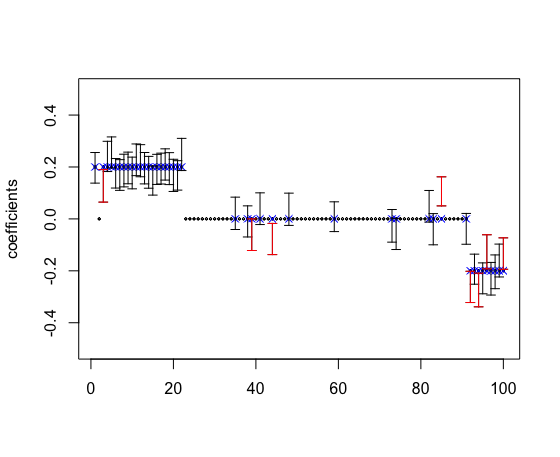}
    \end{subfigure}
    \hfill
    \begin{subfigure}[t]{0.32\textwidth}
        \centering
        \includegraphics[trim=5 20 25 60, clip, width=1\linewidth]{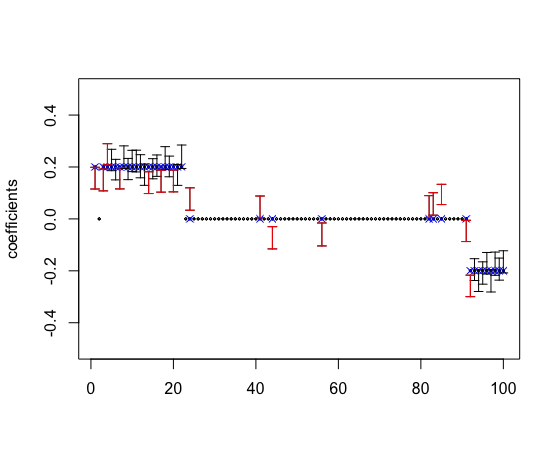}
    \end{subfigure}
    \hfill
    \begin{subfigure}[t]{0.32\textwidth}
        \centering
        \includegraphics[trim=5 20 25 60, clip, width=1\linewidth]{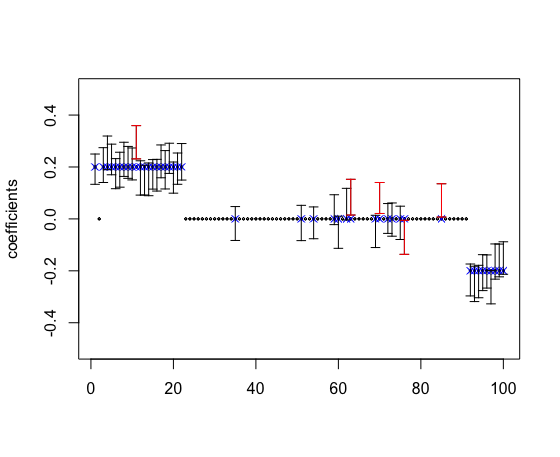}
    \end{subfigure}
    \caption{An instance of the selected feature (blue crosses) and the constructed CIs using fissioned data (left), full data twice (middle), and split data (right) with $S_\Delta = 0.2$ and target FCR set at $0.2$. The selected features are marked by blue crosses, which include all of the nonzero coefficients (corresponding to almost 100\%  power for selection) and also a few zero coefficients (corresponding to around 70\% precision for selection). CIs which do not cover the parameters correctly are marked red.}
    \label{fig:linear_example}
\end{figure}

As an illustration, \cref{fig:linear_example} shows an instance of the selected features and corresponding CIs for an example trial run. As a point of comparison, we compare the CIs constructed using data fission with those constructed using data splitting (when $50\%$ of the dataset is used for selection and the remaining for inference). We also compare these results to the (invalid) procedure where the original dependent variable is used twice to both select features and construct intervals 
The third methodology will not have coverage guarantees but it is still a useful point of comparison for evaluating the performance of the other two (valid) methodologies. \cref{fig:sims_highlev} shows results averaged over $500$ trials. Data splitting and data fission both control the FCR, but data fission dominates data splitting across every other metric---including significantly tighter CIs and higher power and precision. In simulation studies with larger sample sizes and less skewed covariates, data fission and data splitting have comparable performance---see Appendix~\ref{sec:appendix_supplemental_linear} for details. 

\begin{figure}[H] 
\centering
    \begin{subfigure}[t]{1\textwidth}
        \centering
        \includegraphics[width=0.23\linewidth]{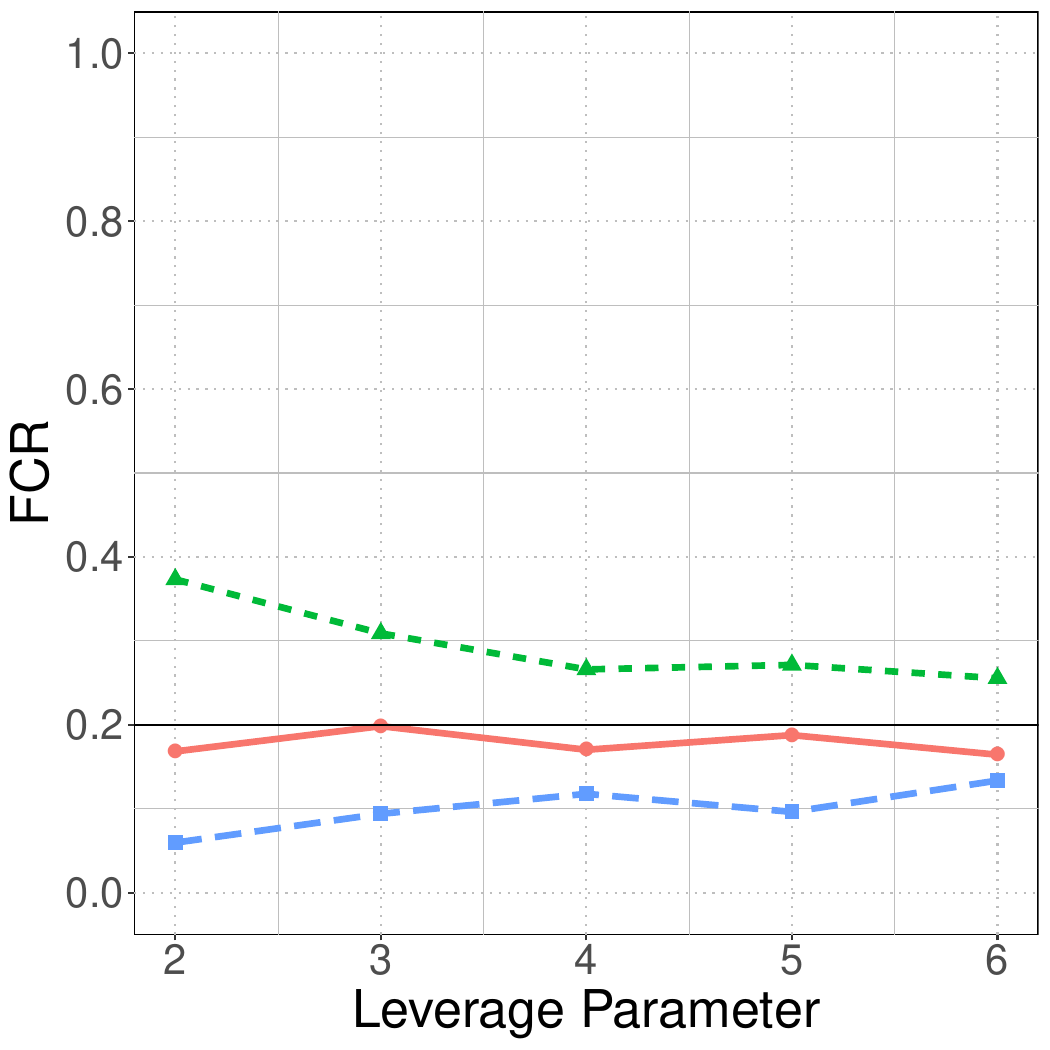}
    \hfill
        \includegraphics[width=0.23\linewidth]{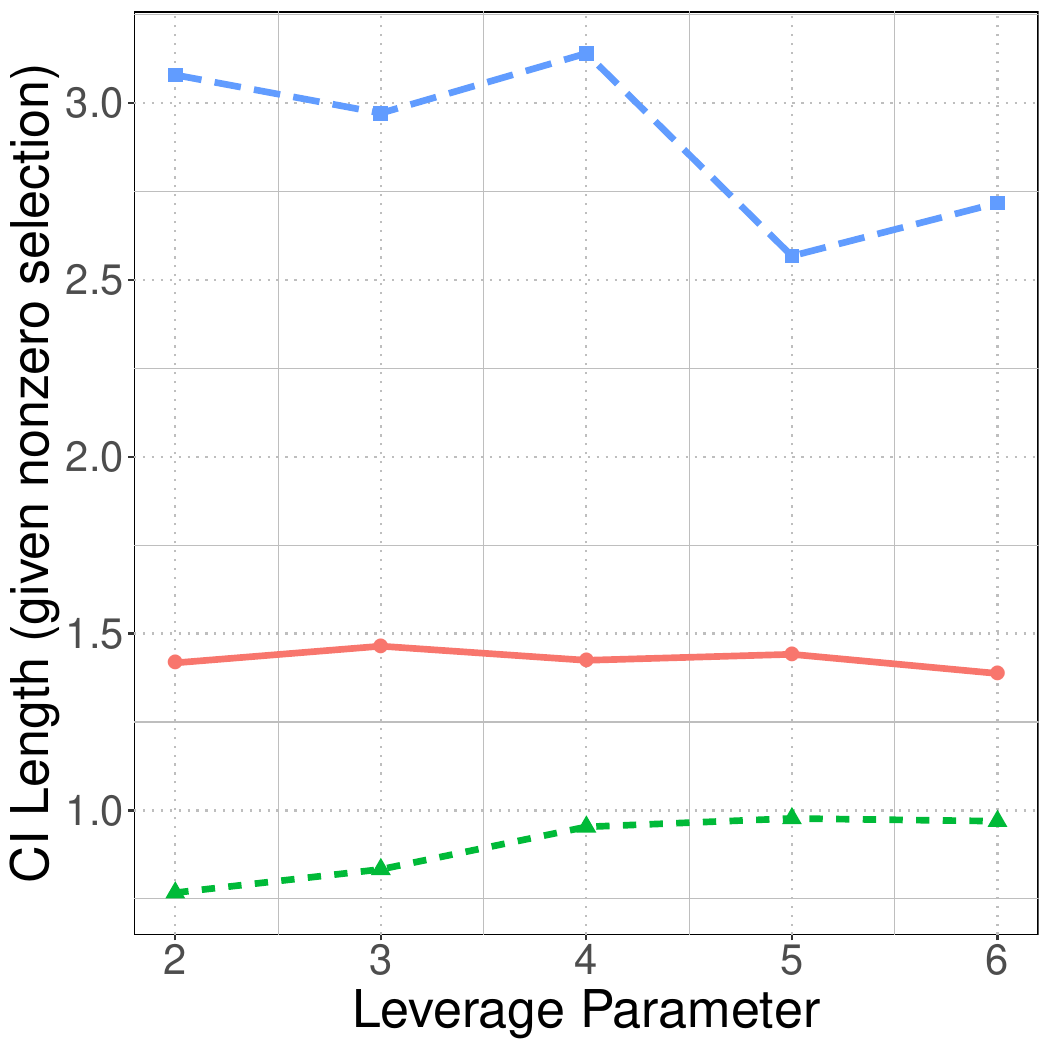}
    \hfill
        \includegraphics[width=0.23\linewidth]{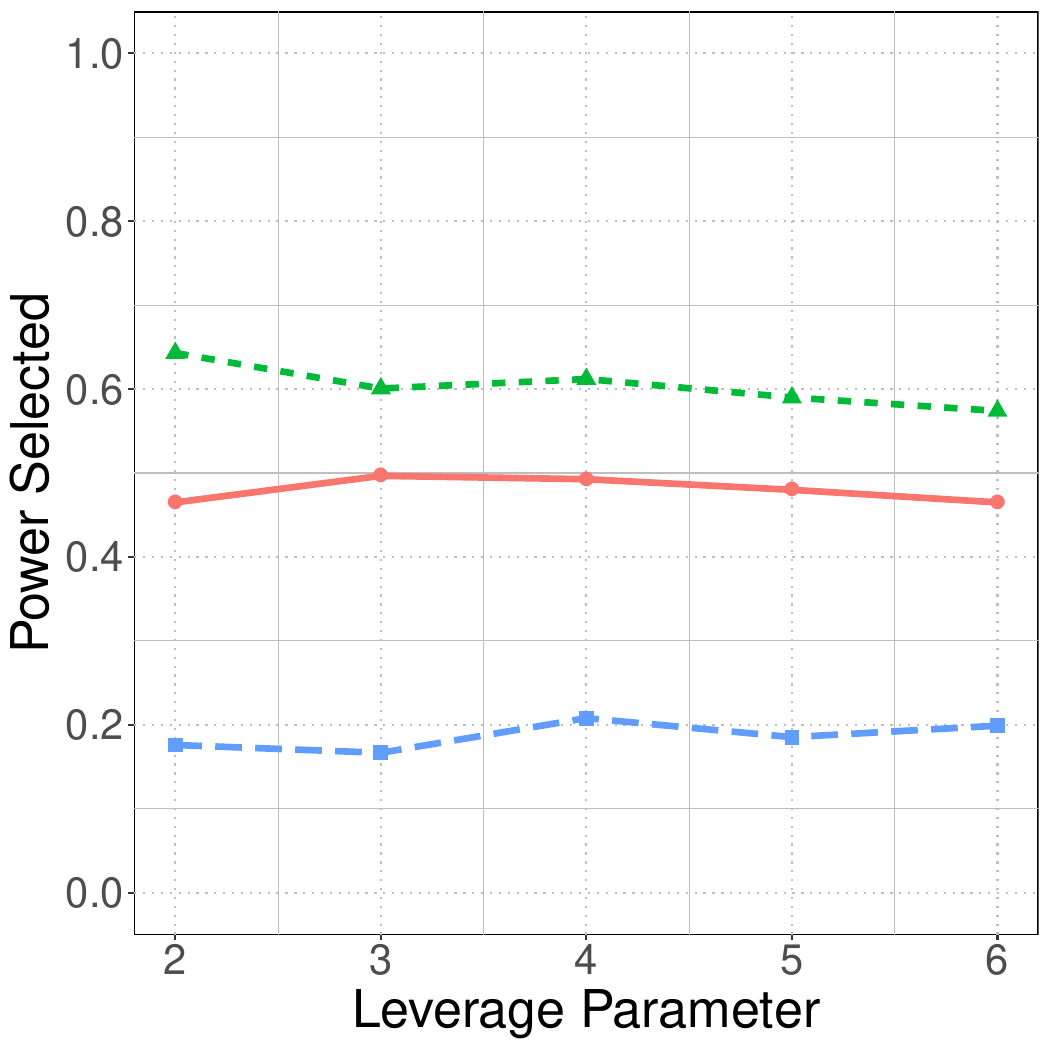}
    \hfill
        \includegraphics[width=0.23\linewidth]{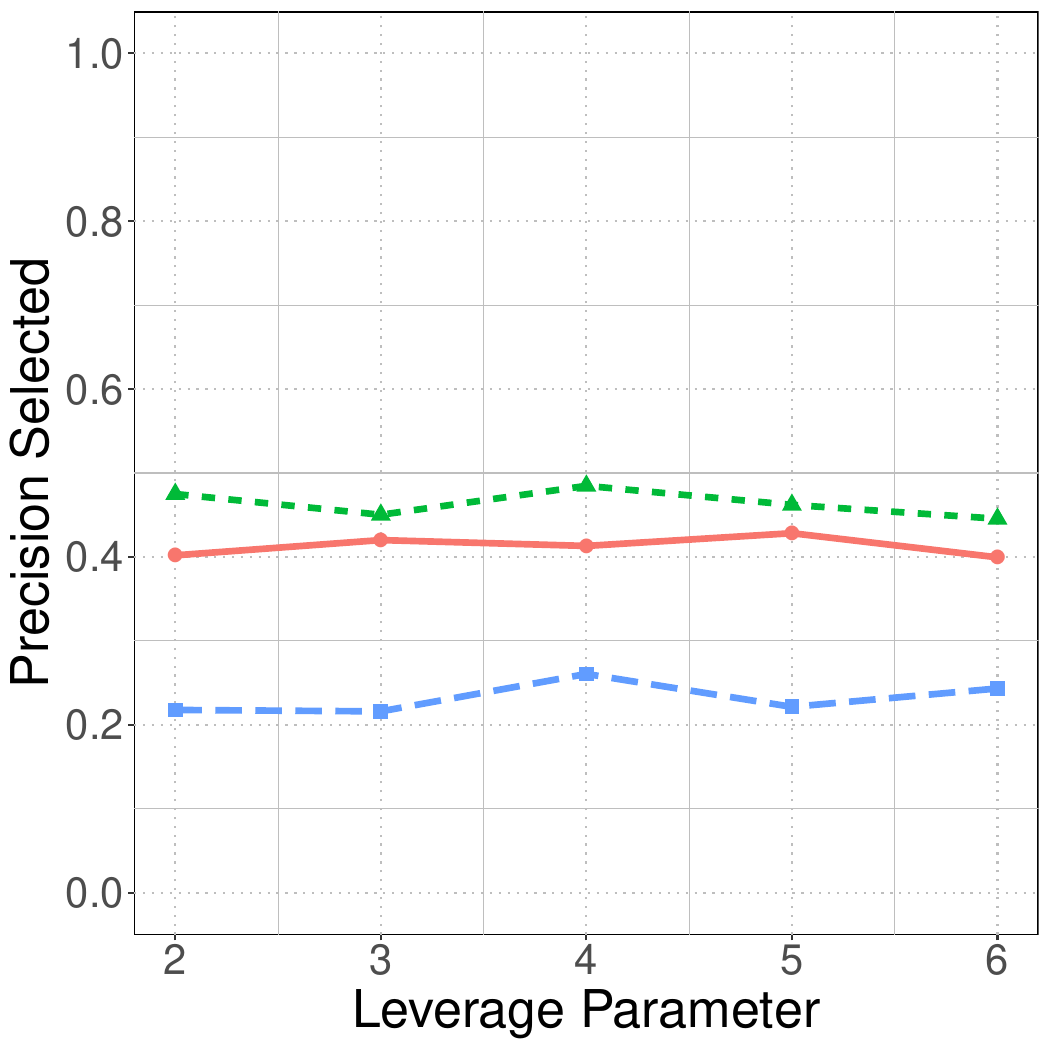}
    \end{subfigure}
\vfill
    \begin{subfigure}[t]{1\textwidth}
        \centering
        \includegraphics[width=0.3\linewidth]{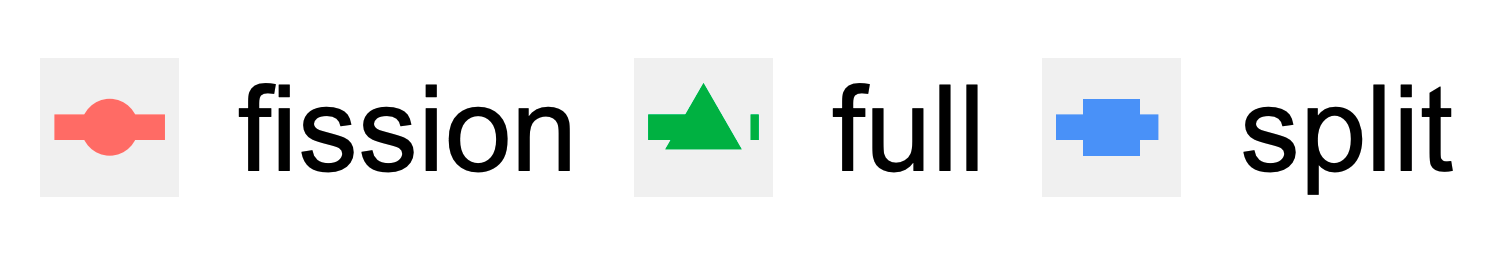}
    \end{subfigure}
    \caption{FCR, average length of the CIs, and power/precision for the selected features, when varying the leverage parameter $\gamma$ in $\{2, 3, 4, 5, 6\}$. The results are averaged over 500 trials. Both data splitting and data fission still control FCR, but data fission now has higher power and precision, as well as tighter CIs than data splitting. }
    \label{fig:sims_highlev}
\end{figure}

\section{Application to selective CIs in fixed-design generalized linear models and other quasi-MLE problems} \label{sec:qmle}
For regression problems involving non-Gaussian distributed data, the same generic principle can be used as in \cref{sec:linreg}. If the underlying distribution of $Y$ is known up to a set of unknown parameters and aligns with one of the distributions described in \cref{sec:appendix_list_decomp}, we can fission the data and then use $f(Y)$ for selection, while reserving $g(Y) | f(Y)$ to conduct inference. Even in cases where we are certain that the distribution of $Y$ has been properly specified and the fission procedure is valid, it is important to construct CIs that are robust to model misspecification for two reasons. First, because it is unlikely in practical settings that we are going to be able to select a set of covariates that corresponds to the ``actual'' conditional mean of the data. Second, the post-selective distribution $f(Y) | g(Y)$ may be ill-behaved and difficult to work with even if $Y$ is easy to model. For example, if $Y \sim \mathrm{Ber}(\theta)$, the post-selective distribution $g(Y) | f(Y)$ constructed from \cref{sec:linreg} is $\mathrm{Ber}\left(\frac{\theta}{\theta + (1-\theta) [p/(1-p)]^{2f(Y) - 1}}\right)$. Maximizing this likelihood function will be challenging since the likelihood is non-convex, so it may be practical to model the likelihood as a probit or logistic model as a working assumption. 

This approach of using maximum likelihood estimation to train a model but to construct guarantees in an assumption-lean setting is termed quasi-maximum likelihood estimation (QMLE). Most work in this setting, such as \cite{buja2019modelsII}, and \cite{samplesplit_bootstrap}, are designed around a background assumption where both the covariates and response are treated as i.i.d.\ random variables. Such theory is inapplicable to data fission since 
the covariates are observed during the model selection stage and therefore inference on $g(Y)$ is only valid if both $X$ and $f(Y)$ are conditioned on. \cite{fahrmeir_mle}, however, works in the setting of non-identically distributed but independent data with fixed covariates and therefore can be applied to this setting.  We first recap the relevant theory and then apply it to data fission when used for forming post-selective CIs in GLMs. 

\subsection{Problem setup and a recap of QMLE methods}
Suppose we observe $n$ independent random observations $y_{i} \in \R$ 
alongside fixed covariates $x_{i} \in \R^{p}$ and that we fission each data point such that the below assumption holds.
\begin{assumption}\label{as:1}
Data fission is conducted such that $g(y_{i}) \independent g(y_{k}) | f(Y),X$ for all $i \ne k$.\end{assumption}

 {\color{black} This is a different condition compared to the rules described in \cref{sec:introduction}. Any fissioning rule which satisfies \textbf{(P1)} would also satisfy this (weaker) condition. The majority of decompositions in \cref{sec:appendix_list_decomp} following $\textbf{(P2)}$ also satisfy \cref{as:1} but this is not necessarily the case --- for instance, the third rule for fissioning Gaussian distributed data would result in a post-selective distribution that is tractable but with correlated entries. However, if a fissioning process exists that creates independent entries in $g(Y) | f(Y), X$, the procedures outlined in this section will apply.} 

During model selection, a model $M \subseteq [p]$ is chosen from $f(Y)$ which result in an annealed set of covariates $\widetilde{x}_{i}\in \R^{|M|}$ and model design matrix $X_{M}$. We then find an estimate for $\beta$ using a working model of the density for $g(y_{i}) | f(Y), X$ as $p(g(y_{i}) | \beta, f(Y),X_{M})$, which defines the quasi-likelihood function $L_{n}(\beta) := \sum_{i=1}^{n} \log p(g(y_{i}) | \beta, f(Y), X_{M})$. 
Denote the score function $s_{n}(\beta) = \frac{\partial L_{n} }{ \partial \beta}$ as well as the quantities $H_{n}(\beta) = -\frac{\partial^{2} L_{n} }{ \partial \beta \beta^{T}}$, and $V_{n}(\beta) = \text{Var} \left( s_{n}(\beta) \right)$. If $\mathbb{E}(H_{n}(\beta))$ is positive definite, the target parameter $\beta^{\star}(M)$ is the root of $E(s_{n}(\beta))$ which is also the parameter which minimizes the KL distance between the true distribution (denoted as $q(g(y_{i} | f(Y),X)$) and the working model because
\begin{align*}
    \beta^{\star}_{n}(M) &= \argmin_\beta \mathbb{E}\left( \log q(g(y_{i})|X,f(Y)) - L_{n}(\beta) \right) \\
    &=\argmin_\beta \mathrm{D}_{KL}\left(\prod_{i=1}^n q(g(y_{i})|X,f(Y))|| \prod_{i=1}^n p(g(y_{i}) | \beta, f(Y),X_{M}) \right).
\end{align*}
We define $\hat{\beta}_{n}(M)$ as the  quasi-likelihood maximizer. Under mild regularity conditions, $\hat{\beta}_{n}(M)$ behaves asymptotically like $N\left(\beta^{\star}(M),  H_{n}^{-1} (\beta^{\star}_{n}(M))  V_{n}(\beta^{\star}_{n}(M)) H_{n}^{-1} (\beta^{\star}_{n}(M)) \right)$. Plug-in estimates  $\hat{H}_{n} := H_{n} (\hat{\beta}_{n}(M))$ and $\hat{V}_{n} := s_{n}(\hat{\beta}_{n}(M))s_{n}(\hat{\beta}_{n}(M))^{T}$ will yield asymptotically conservative estimates for the covariance matrix, allowing the user to form confidence intervals. For a detailed technical account, see Appendix~\ref{sec:appendix_QMLE_details}.

\subsection{Simulation results}
We verify the efficacy of this procedure through an empirical simulation. The advantages of data fission are again most apparent in settings with relatively few data points and a handful of leverage points just as we saw in the Gaussian case in \cref{sec:linreg}. 
\paragraph{Simulation setup.} Let $y_i$ be the dependent variable with $y_{i} \sim \text{Pois}\left(\exp\{ \beta^T x_i\}\right)$ where $x_i \in \mathbb{R}^p$ is a vector of $p$ features. We generate $n=16$ data points with $p=20$ covariates, where the parameter $ \beta$ is nonzero for 4 features: $(\beta_1, \beta_{16}, \ldots, \beta_{19}) = S_\Delta (1, 1,-1,1)$ and $S_\Delta$ encodes the signal strength. For the first $15$ observations, $x_i \in \{0,1\}^2 \times \mathbb{R}^{18}$, where the first two follow $\mathrm{Ber}(1/2)$ and the rest follow independent standard Gaussians. The last data point, which we again denote $x_{\text{lev}} = \gamma \left(|X_{1}|_{\infty}, ..., |X_{p}|_{\infty}\right)$ where $X_{k}$ denotes the the $k$-th column vector of the model design matrix $X$ formed from the first $15$ data points and $\gamma$ is the leverage parameter. Our proposed procedure is:

\begin{figure}[H]
\centering
    \begin{subfigure}[t]{1\textwidth}
        \centering
        \includegraphics[width=0.22\linewidth]{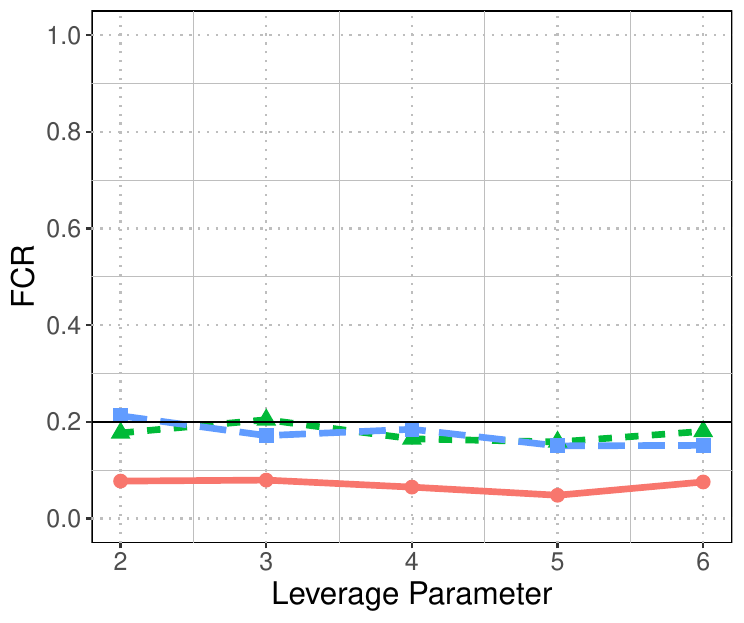}
    \hfill
        \includegraphics[width=0.22\linewidth]{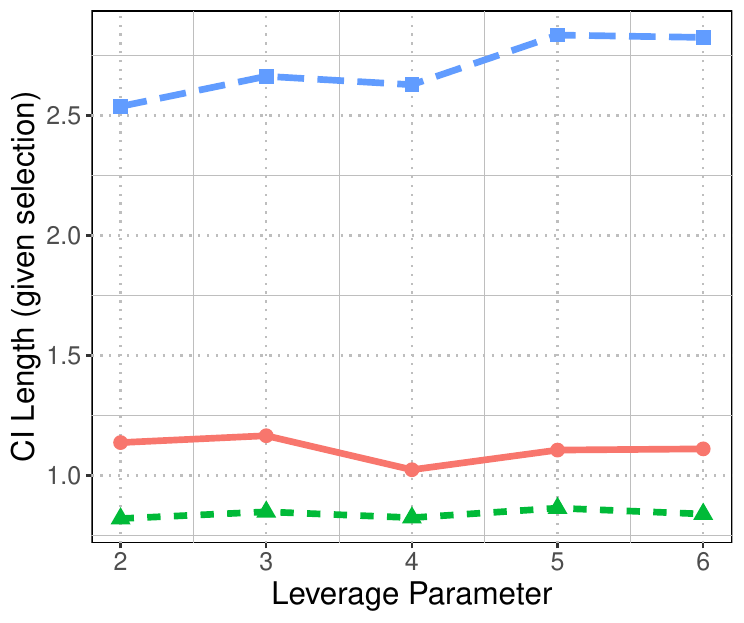}
    \hfill
        \includegraphics[width=0.23\linewidth]{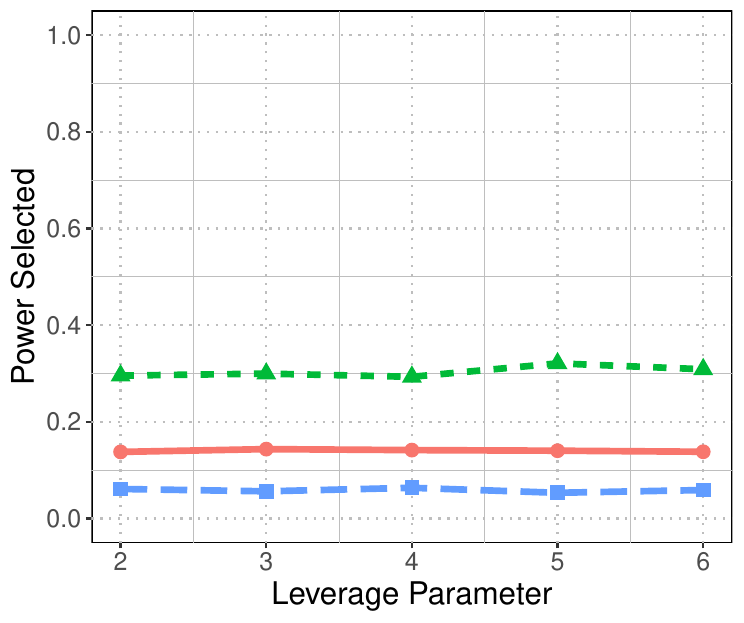}
    \hfill
        \includegraphics[width=0.23\linewidth]{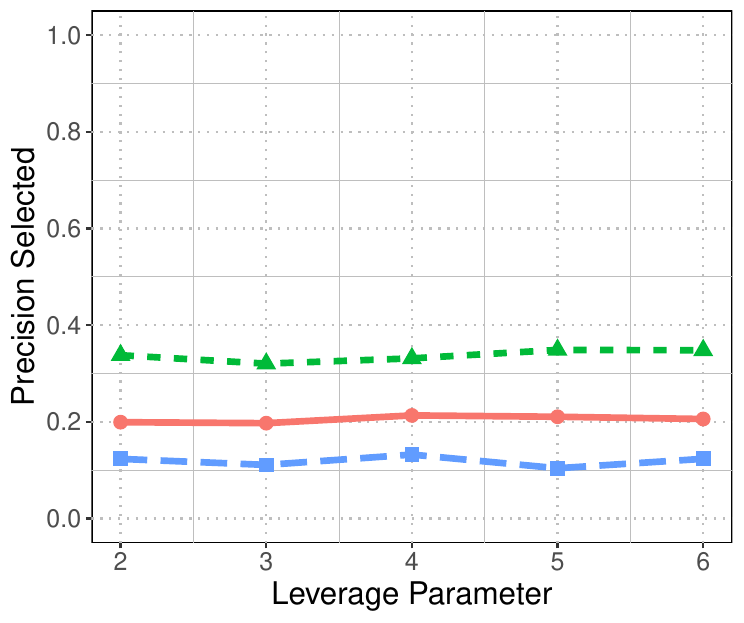}
    \end{subfigure}
\vfill
    \begin{subfigure}[t]{1\textwidth}
        \centering
        \includegraphics[width=0.3\linewidth]{figures/legend_fission_regression.png}
    \end{subfigure}
    \caption{FCR, length of the CIs, FSR, power for the sign of parameters, and power and precision for the selected features, when varying the leverage parameter $\alpha$ in $\{2, 3, 4, 5, 6\}$ for Poisson data over $500$ trials. CIs constructed using data fission are tighter than data splitting, and power during the selection stage is higher..} \label{fig:results_poisson_leverage}
\end{figure}

\begin{enumerate}
    \item Decompose each $y_i$ as $f(y_i) \sim \mathrm{Bin}(y_i, p)$, and $g(y_i) = y_i - f(y_i)$.
    \item Fit $f(y_i)$ by the GLM with log-link and LASSO regularization to select $M \in [p]$. 
    \item Fit $g(y_i)$ by another GLM without regularization with log-link using \textit{only} the selected features 
    and an offset of $\log(1-p)$ to account for the effect of the randomization.
    \item Construct CIs for the coefficients trained in the third step, each at level $\alpha$ and with the standard errors estimated as in \cref{thm:var_emp_fahrmeir}. For the experiments shown below, we also apply a finite sample correction as described in \cite{CR2_explanation}.
\end{enumerate}
Since step 2 may not select the correct model, the CIs in step 4 cover $\beta_{n}^{\star}(M)$. 

Results are shown in \cref{fig:results_poisson_leverage}. Model performance metrics are the same as those used in \cref{sec:linreg} with $\beta_{n}^{\star}(M)$ replacing $\beta^{\star}(M)$. The CIs for the non-fission approaches (data splitting and full data twice) are also constructed using sandwich estimators of the variance corresponding to \cref{thm:var_emp_fahrmeir}. Every method controls the FCR empirically, but the CI lengths for data fission are significantly tighter than for data splitting. Data fission is also more powerful at the selection stage than data splitting. Results for higher $n$ settings can be found in Appendix~\ref{sec:appendix_poisson} and results when this general framework is applied to logistic regression are contained in Appendix~\ref{sec:appendix_logistic}. Across all of these cases, using the full dataset twice no longer results in FCR control, as expected. 

\section{Application to post-selection inference for trend filtering and other nonparametric regression problems} \label{sec:trendfilter}
Consider a nonparametric setup with $y_{i}$ and covariates $x_{i}$ following
$	y_{i} = f_{0}(x_{i}) + \epsilon_{i}$,
where $f_{0}$ is the underlying function to be estimated and $\epsilon_{i}$ is random noise. We denote $Y= (y_{1},...,y_{n})^{T}$ and $\epsilon = (\epsilon_{1},...,\epsilon_{n})^{T}$. For now, assume that $\epsilon \sim N(0,\Sigma)$ for some known $\Sigma$. We can apply the methodologies of \cref{sec:list_decomp} and \cref{sec:linreg} to this problem as follows:
\begin{enumerate}
	\item Decompose $Y$ into $f(Y) = Y + \tau Z$ and $g(Y) = Y - \frac{1}{\tau}Z$ where $Z \sim N(0, \Sigma)$.
	\item Use $f(X)$ to choose a basis $a_{1},...,a_{p}$ for the series expansion of $x_{i}$ and denote $a(x_{i}) = \left(a_{1}(x_{i}),...,a_{p}(x_{i}) \right)^{T}$. Let $A$ denote the matrix of basis vectors for all $n$ data points. 
	\item Use $g(X) | f(X)$ to construct pointwise or uniform CIs as described below.
\end{enumerate}

\paragraph{Pointwise CI.}
We first note that the fitted line 
\begin{equation} \label{eqn:basis_hat}
	\widehat{\beta}(A) = \argmin_{\beta} \norm{g(Y)-A\beta} ^{2} = (A^{T}A)^{-1}A^{T}g(Y)
\end{equation}
corresponds to $A\widehat{\beta}(A)$ which is just the projection of $g(Y)$ onto the basis chosen during the model selection stage. Meanwhile, we define the \emph{projected regression function} as
\begin{equation}\label{eqn:basis_star}
	\beta^{*}(A) =\argmin_{\beta} \mathbb{E}\left[\norm{Y-A\beta}^{2} \right]  = (A^{T}A)^{-1}A^{T}f_{0}(X).
\end{equation}
We are then interested in using the fitted estimates $\widehat{\mu}_{x_{i}}(A) =a(x_{i})^{T}\widehat{\beta}(A)$ to construct intervals that guarantee coverage for the underlying function $f_{0}(x_{i})$ projected onto the chosen basis which we denote as $\mu^{*}_{x_{i}}(A) = a(x_{i})^{T}\beta^{*}(A)$ and refer to as the \emph{projected mean}.
\begin{corollary} \label{thm:trendfilter_pointwise_method}
	Recall the definitions of $y_{i}, \epsilon_{i},\widehat{\beta}(A), \beta^{*}(A)$ from \eqref{eqn:basis_hat} and~\eqref{eqn:basis_star}.
	When $\epsilon$ has known variance $\Sigma$, 
	\(\widehat{\mu}_{x_{i}}(A)\sim N\left(\mu^{*}_{x_{i}}(A),  (1+\tau^{-2})a(x_i)^{T}(A^{T}A)^{-1}A^{T} \Sigma A (A'A)^{-1}a(x_i) \right)\)
	and a $1-\alpha$ CI for $\mu^{*}_{x_{i}}(A)$ is
	\(\widehat{\mu}_{x_i}(A) \pm z_{\alpha/2}\sqrt{(1+\tau^{-2})a(x_i)^{T}(A^{T}A)^{-1}A^{T} \Sigma A (A^{T}A)^{-1}a(x_{i})}. \)
\end{corollary}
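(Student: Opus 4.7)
The plan is to treat \cref{thm:trendfilter_pointwise_method} as a direct specialization of \cref{thm:normal_regression}, followed by a one-dimensional linear transformation. The setup matches \cref{sec:linreg} if we identify the basis matrix $A$ with the model design matrix $X_M$ and the mean vector $f_0(X)$ with $\mu = \mathbb{E}[Y\mid X]$. Crucially, because $A$ is chosen only from $f(Y)$ and we are in the Gaussian fission regime where $f(Y) \independent g(Y)$, conditioning on the basis selected at step~2 does not alter the distribution of $g(Y)$, so \cref{thm:normal_regression} applies verbatim with $X_M \leftarrow A$.

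First I would invoke \cref{thm:normal_regression} to conclude that
\[
\widehat{\beta}(A) \sim N\!\left(\beta^{*}(A),\, (1+\tau^{-2})(A^{T}A)^{-1}A^{T}\Sigma A(A^{T}A)^{-1}\right),
\]
where $\widehat{\beta}(A)$ and $\beta^{*}(A)$ are exactly as defined in \eqref{eqn:basis_hat} and \eqref{eqn:basis_star}. Next I would apply the deterministic linear functional $v \mapsto a(x_i)^{T} v$ on both sides. Since affine transformations of Gaussians are Gaussian, the mean becomes $a(x_i)^{T}\beta^{*}(A) = \mu^{*}_{x_i}(A)$, and the variance becomes
\[
a(x_i)^{T}\bigl[(1+\tau^{-2})(A^{T}A)^{-1}A^{T}\Sigma A(A^{T}A)^{-1}\bigr]a(x_i),
\]
yielding the stated distribution for $\widehat{\mu}_{x_i}(A)$.

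The CI then follows by standardizing and inverting the probability statement $\Pr\bigl(|\widehat{\mu}_{x_i}(A)-\mu^{*}_{x_i}(A)| \leq z_{\alpha/2}\,\mathrm{sd}\bigr) = 1-\alpha$, where $\mathrm{sd}$ is the square root of the displayed variance. All quantities appearing in the variance are known (since $\Sigma$, $A$, $a(x_i)$, $\tau$ are all known to the analyst once the selection stage is complete), so the CI is computable.

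The only place that requires care is justifying that the randomness in $A$ (which depends on $f(Y)$) does not contaminate the distributional statement about $\widehat{\beta}(A)$, which is a function of $g(Y)$. This is the same subtlety already handled in \cref{thm:normal_regression} via the exact independence $f(Y)\independent g(Y)$ under Gaussian fission, and so there is no additional obstacle: the corollary is essentially a matter of propagating the Gaussian distribution of $\widehat{\beta}(A)$ through the linear functional $a(x_i)^{T}$.
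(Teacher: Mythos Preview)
Your proposal is correct and follows essentially the same approach as the paper's own proof: invoke \cref{thm:normal_regression} with $X_M \leftarrow A$ to obtain the distribution of $\widehat{\beta}(A)$, then apply the linear map $a(x_i)^{T}$ and use standard Gaussian properties. The paper's version is just terser; your added remarks about the independence of $f(Y)$ and $g(Y)$ justifying the conditioning are accurate but already implicit in the theorem being invoked.
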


\paragraph{Uniformly valid CIs.} 
We can also construct uniformly valid CIs $\{CI_{i}\}_{i=1}^{n}$ to cover the projected mean, controlling the \emph{simultaneous} type I error rate defined as
\(\mathbb{P}(\exists i \in [n]: \mu^{*}_{x_{i}}(A) \notin \text{CI}_i).\)
Here, we further assume homoscedastic errors so $\Sigma = \sigma^{2} I_{n}$. 
Construct $\text{CI}_i := (\widehat{\mu}_{x_{i}}(A) - w, \widehat{\mu}_{x_{i}}(A) + w)$, where the width $w := c(\alpha) \cdot \sigma\sqrt{(1+\tau^{-2})a(x_i)^T (A^T A)^{-1} a(x_i)}$, where $c(\alpha)$ is a multiplier. By Eq.~(5.5) in \cite{koenker2011additive}, $c(\alpha)$ is the solution to the equation
\(
	\frac{|\gamma|}{2\pi} e^{-c^{2}/2} + 1 - \Phi(c) = \alpha/2,
\)
where $|\gamma|=\sum_{i=1}^{n-1} ||\widetilde a(x_{i+1}) - \widetilde a(x_{i})||_2$ is the length of the curve connected by $\widetilde a(x_{i}) := \frac{(A^T A)^{-1/2} a(x_{i})}{||(A^T A)^{-1/2} a(x_{i})||}$. 

\begin{fact}[\cite{koenker2011additive}] \label{trendfilter_uniform_method}
	The above constructed CIs will cover the projected mean uniformly:
	\(
		\mathbb{P}(\exists i \in [n]: \mu^{*}_{x_{i}}(A) \notin \text{CI}_i) \leq \alpha,
	\)
	where $\{\mu^{*}_{x_{i}}(A)\}_{t=1}^n$ denotes the projected mean. 
\end{fact}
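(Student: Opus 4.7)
The plan is to reduce the claim to the Hotelling--Weyl--Sun tube formula of \cite{koenker2011additive} that underlies the defining equation of $c(\alpha)$. Since the statement is essentially a direct application of that formula, the work lies in verifying that the standardized pointwise pivot is a unit-variance linear functional of a single standard Gaussian vector in $\mathbb{R}^p$, indexed by the $\widetilde{a}(x_i)$ whose polygonal connection has length $|\gamma|$.

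First I would expand $g(Y)$: since $Y = f_0(X) + \epsilon$ with $\epsilon \sim N(0, \sigma^{2} I_n)$ and $Z \sim N(0, \sigma^{2} I_n)$ independent of $\epsilon$, setting $\eta := \epsilon - \tau^{-1} Z$ gives $\eta \sim N(0, (1+\tau^{-2})\sigma^{2} I_n)$ and $g(Y) = f_0(X) + \eta$. Substituting into \eqref{eqn:basis_hat}--\eqref{eqn:basis_star} yields
\[
    \widehat{\mu}_{x_i}(A) - \mu^*_{x_i}(A) = a(x_i)^T (A^T A)^{-1} A^T \eta.
\]
Define $\xi := \sigma^{-1}(1+\tau^{-2})^{-1/2} (A^T A)^{-1/2} A^T \eta$. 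A direct covariance computation yields $\xi \sim N(0, I_p)$, so the standardized pivot becomes
\[
    T_i := \frac{\widehat{\mu}_{x_i}(A) - \mu^*_{x_i}(A)}{\sigma \sqrt{(1+\tau^{-2})\, a(x_i)^T (A^T A)^{-1} a(x_i)}} = \widetilde{a}(x_i)^T \xi,
\]
which is the same standard Gaussian process $u \mapsto u^T \xi$ on $S^{p-1}$ evaluated at the unit vectors $\widetilde{a}(x_i)$ defined in the statement.

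Next, the non-coverage event is precisely $\{\max_i |T_i| > c(\alpha)\}$, and because the polygonal path $\gamma$ on the sphere passes through every $\widetilde{a}(x_i)$, we have $\max_i |T_i| \leq \sup_{u \in \gamma} |u^T \xi|$. For standard Gaussian $\xi \in \mathbb{R}^p$, the Hotelling--Sun volume-of-tubes identity cited as Eq.~(5.5) of \cite{koenker2011additive} asserts
\[
    \mathbb{P}\Bigl(\sup_{u \in \gamma} u^T \xi > c\Bigr) = \frac{|\gamma|}{2\pi}\, e^{-c^{2}/2} + 1 - \Phi(c).
\]
Applying this to both $\xi$ and $-\xi$ and union-bounding gives the two-sided bound with right-hand side equal to $\alpha/2 + \alpha/2 = \alpha$ by the definition of $c(\alpha)$, which is exactly the claimed uniform coverage.

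The main obstacle is purely bookkeeping: checking that the quantity $|\gamma|$ used by \cite{koenker2011additive} coincides with the polygonal length $\sum_{i=1}^{n-1}\|\widetilde{a}(x_{i+1}) - \widetilde{a}(x_i)\|_2$ given in the statement, and that the piecewise-linear corners do not produce additional endpoint/curvature contributions that would break the tube-formula equality---these subtleties are handled in the cited reference. Everything else is a routine Gaussian covariance check combined with the symmetry-plus-union-bound step.
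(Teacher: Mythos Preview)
The paper does not supply its own proof of this statement: it is stated as a \emph{Fact} attributed to \cite{koenker2011additive} and is never revisited in the appendix. Your reconstruction is correct and is precisely the argument that underlies the cited result---reducing the standardized pivots $T_i$ to evaluations of the process $u\mapsto u^T\xi$ on $S^{p-1}$ at the unit vectors $\widetilde a(x_i)$, bounding $\max_i|T_i|$ by the supremum over the connecting polygonal arc, and then invoking the Hotelling--Sun tube bound together with symmetry. One minor caveat: the tube expression $\tfrac{|\gamma|}{2\pi}e^{-c^2/2}+1-\Phi(c)$ is in general an upper bound (exact only when the tube does not self-overlap), which is consistent with the ``$\leq \alpha$'' in the statement; calling it an ``identity'' slightly overstates it, but this does not affect your conclusion.
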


\subsection{Application to trend filtering} \label{sec:trendfilter_empirical}
We explore how this approach to nonparametric inference performs empirically through the lens of \emph{trend filtering} as proposed by \cite{kim-boyd} and \cite{Steidl06splinesin}. Here, we consider the problem of estimating the underlying smooth trend of a time series $y_t \in \R$ with $t = 1, \ldots, n$. We have the same equation as in the standard nonparametric setup, but with $x_{i} = t$ for all $i$ and thus have that $y_{t} = f_{0}(t) + \epsilon_{t}$. Our goal is to estimate the underlying trend $\left(f_{0}(1),\ldots ,f_{0}(n) \right)$. The approach of (first order) trend filtering is to fit a piecewise linear approximation to this data with adaptively chosen breakpoints or \emph{knots} by solving:
\(
	\widehat{x} = \argmin_{x_{t} \in \R} 1/2\sum_{t=1}^n (y_t - x_t)^2 + \lambda \sum_{t=2}^{n-1} |x_{t-1} - 2x_t + x_{t+1}|.
\)
Although we focus on first order trend filtering, our methodology straightforwardly generalizes to higher order trend filtering. For a detailed overview of trend filtering, including generalization to higher order polynomials, see \cref{sec:background_trendfilter}. 

An issue issue with trend filtering is that uncertainty quantification is not generally tractable 
since the knots are chosen adaptively. In some settings, data carving approaches may offer a viable path forward. When the (fused or generalized) lasso is used to select the choice of knots, methods such as \cite{chen2021powerful}, \cite{duy2021parametric}, and \cite{lasso_posi_trendfilter} can be used to construct tests and CIs with valid post-selective distributions. The drawback of these approaches is that they offer the analyst no flexibility during the selection stage. If the analyst wishes to change the degree of the fitted polynomial or decrease the number of chosen knots after seeing preliminary results, inferential guarantees are no longer available. Although such actions are common in applied data analysis, data carving approaches do not easily handle ad-hoc changes to the prespecified selection methodology. Data fission 
offers the analyst flexibility to change the amount of regularization (alter the number of knots) or change the degree of the trend filtered estimate after a preliminary view of the selected model performance. 

Similar to \cref{sec:linreg}, an assumption of known variance is required when using data fission in order to ensure the selection and inference datasets are independent. Approaches that estimate variance empirically are explored in Appendix~\ref{sec:trendfilter_estimated_sigma} but we leave theoretical guarantees for future work. 
The proposed procedure is:
\begin{enumerate}
    \item Decompose $Y$ into $f(Y) = Y + \tau Z$, $g(Y) = Y - \frac{1}{\tau}Z$ where $Z \sim N(0, \Sigma)$. 
    \item Train a $k$-th order trendfilter using $f(Y)$ 
    and select the tuning parameter $\lambda$ by selecting the set of knots $\{\tau_k\}_{k \in 1}^{m}$ with the minimum cross-validation error.\footnote{Although we restrict ourselves to a fixed selection rule for simulation purposes, data fission allows for the use of any arbitrary method of choosing the set of knots. For a discussion of alternative methods for choosing the tuning parameter $\lambda$, see Appendix~\ref{sec:appendix_alternative_knots}.}
    \item Construct a $k$th degree falling factorial basis, as described in \cite{adaptive_piecewise_tibshirani}, with knots at $\{\tau_k\}_{k=1}^m$ and use this to regress against $g(Y)$. Note that when $k=0$ or $k=1$, this reduces to generating the more familiar truncated power basis. 
    \item Get \emph{pointwise} CIs for each data point $t = 1, \ldots, n$ at level $\alpha$ as described in \cref{thm:trendfilter_pointwise_method}. In cases where $\Sigma=\sigma^{2}I_{n}$, we can also construct \emph{uniform} CIs using \cref{trendfilter_uniform_method}.
\end{enumerate}

\paragraph{Simulation setup.} 
We verify the efficacy of this approach through an experiment. We simulate data as $y_t = f(t) + z_t$ for $t \in [n]$, where $z_t \sim N(0, \sigma^2)$, and $f_{0}(t)$ is generated by $f_{0}(t+1) = f_{0}(t) + v_t$. 
With probability $1 - p$, we let $v_{t+1} = v_t$ (no slope change in the trend). With probability $p$, we choose $v_{t + 1}$ from a uniform distribution on $[-0.5, 0.5]$. We choose the initial slope $v_1\sim\text{Unif}[-0.5, 0.5]$ and set $f_{0}(0) = 0$.

We use the above procedure to select knots and conduct inference. An example of how this methodology performs when compared to the (invalid) approach of using the full dataset twice is shown in \cref{fig:instance_trendfilter} for \emph{pointwise} CIs. As there is not an obvious way to apply data splitting to trend filtering, this view is excluded. 
Although using the full dataset twice is invalid, it performs worse for datasets that are more volatile, either in terms of the underlying structural trend (i.e. more knots and/or larger slopes) or in terms of the underlying noise level. For a similar demonstration with uniform CIs  
see Appendix~\ref{appendix:trendfilter_supplementa_uniforml}. 

\begin{figure}[H]
\centering
    \begin{subfigure}[t]{0.48\textwidth}
        \includegraphics[width=0.45\linewidth]{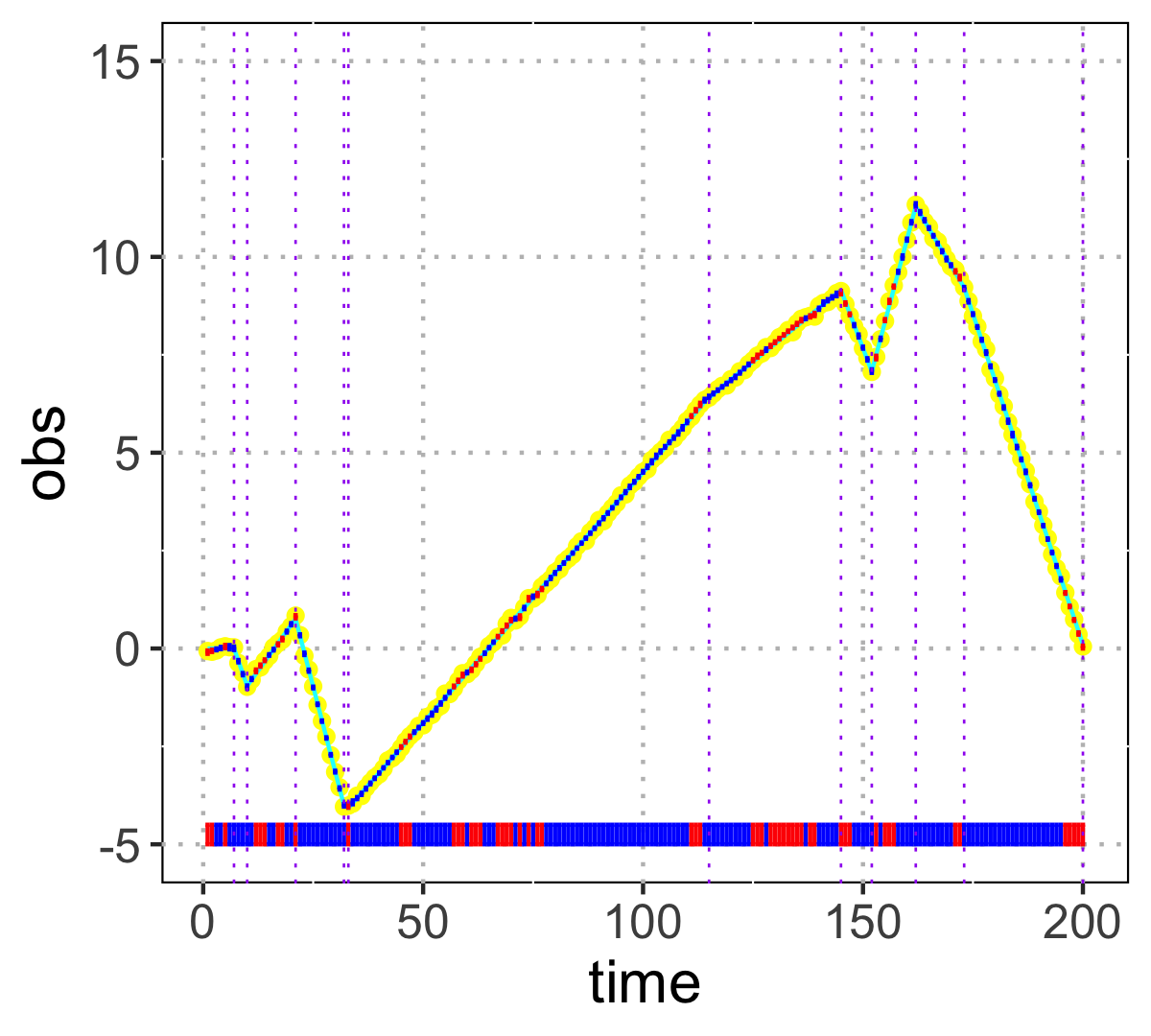}
        \includegraphics[width=0.45\linewidth]{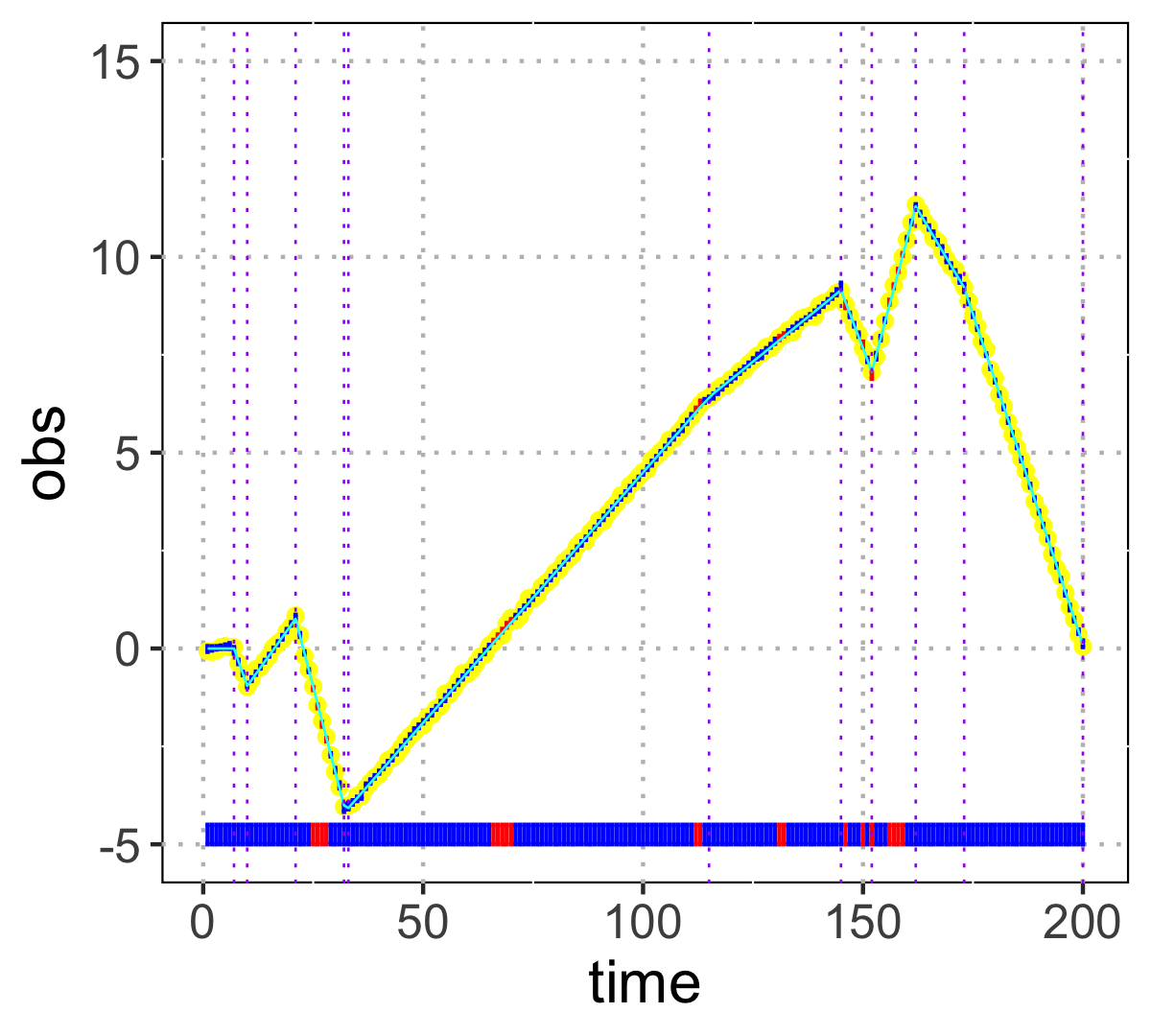}
        \caption{Dataset generated with smaller noise ($\sigma = 0.05$) and low probability of new knots ($p = 0.05$). Target FCR equals $0.2$ with empirical coverage of $0.1$ for data fission and $0.285$ when using the full dataset twice.} 
    \end{subfigure}
    \hspace*{0.25cm}
    \begin{subfigure}[t]{0.48\textwidth}
        \includegraphics[width=0.45\linewidth]{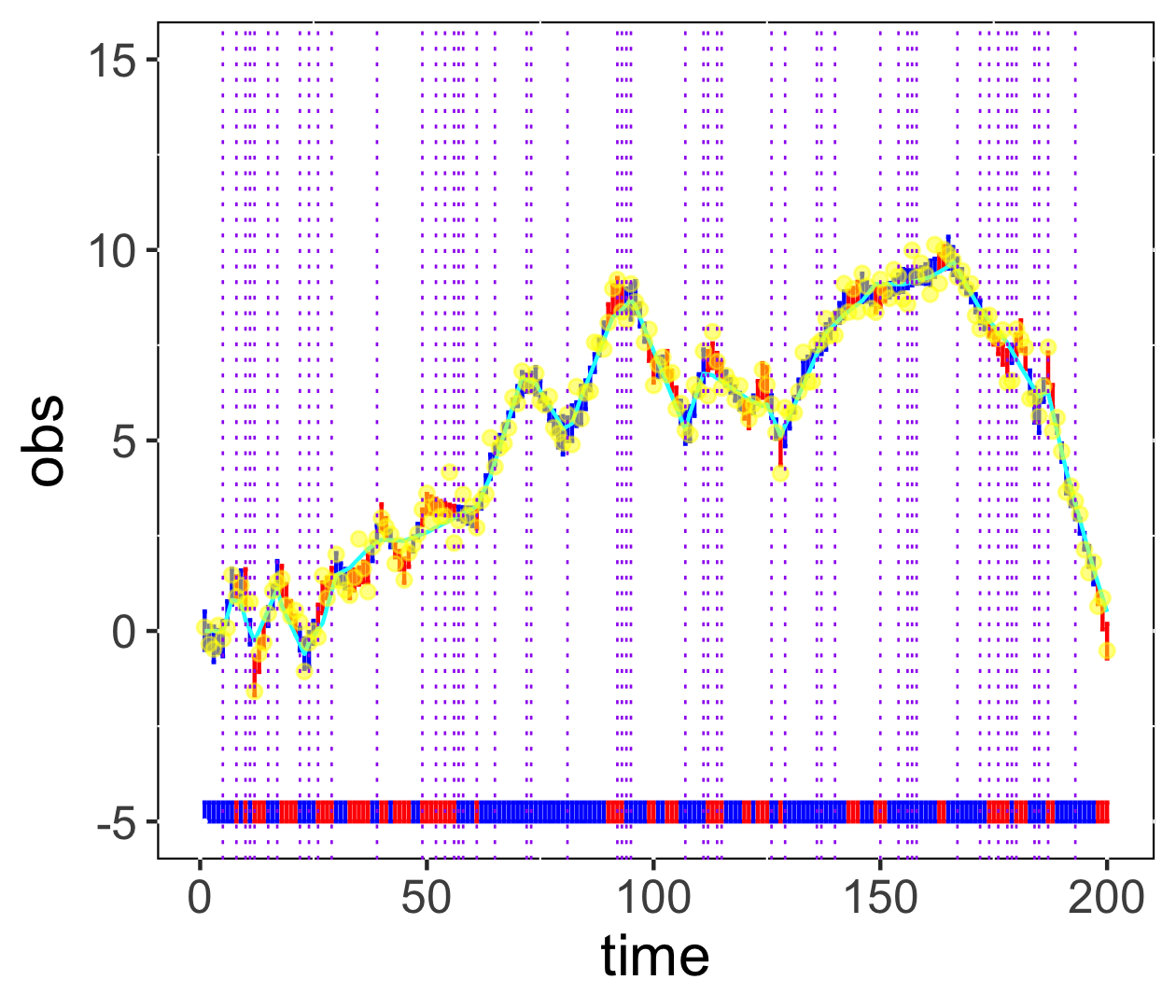}
        \includegraphics[width=0.45\linewidth]{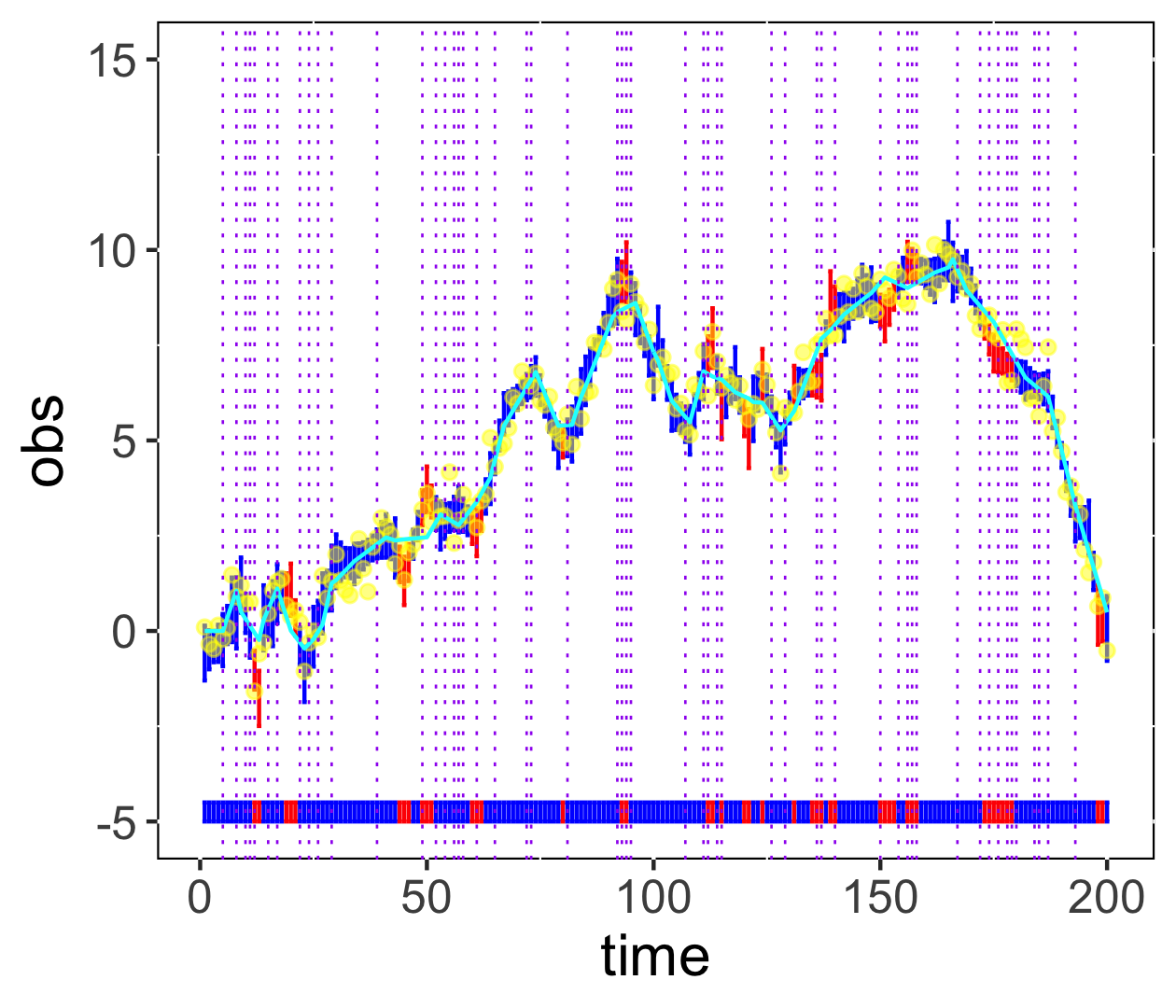}
        \caption{Dataset generated with larger noise ($\sigma = 0.5$) and higher probability of new knots ($p = 0.3$).  Target FCR equals $0.2$ with empirical coverage of $0.225$ for data fission and $0.365$ when using the full dataset twice.} 
    \end{subfigure}
    \caption{Two instances of the observed points (in yellow) and the \textit{pointwise} CIs (in blue if correctly cover the trend, in red if not; the time points with false coverage are also amplified in the bar at the bottom) using two types of methods: full data twice (left), and data fission (right). The underlying projected mean is marked in cyan, which mostly overlaps with the true underlying trend. The true knots are marked by vertical lines. Using data fission results in correct empirical coverage (the $0.225$ above was for just one run, the average is below $0.2$). In contrast, the FCR is not controlled when using the full dataset twice; it worsens as the underlying noise and trend become more volatile.}
    \label{fig:instance_trendfilter}
\end{figure}

\begin{figure}[H]      
\centering
    \begin{subfigure}[t]{0.2\textwidth}
        \centering
        \includegraphics[width=1\linewidth]{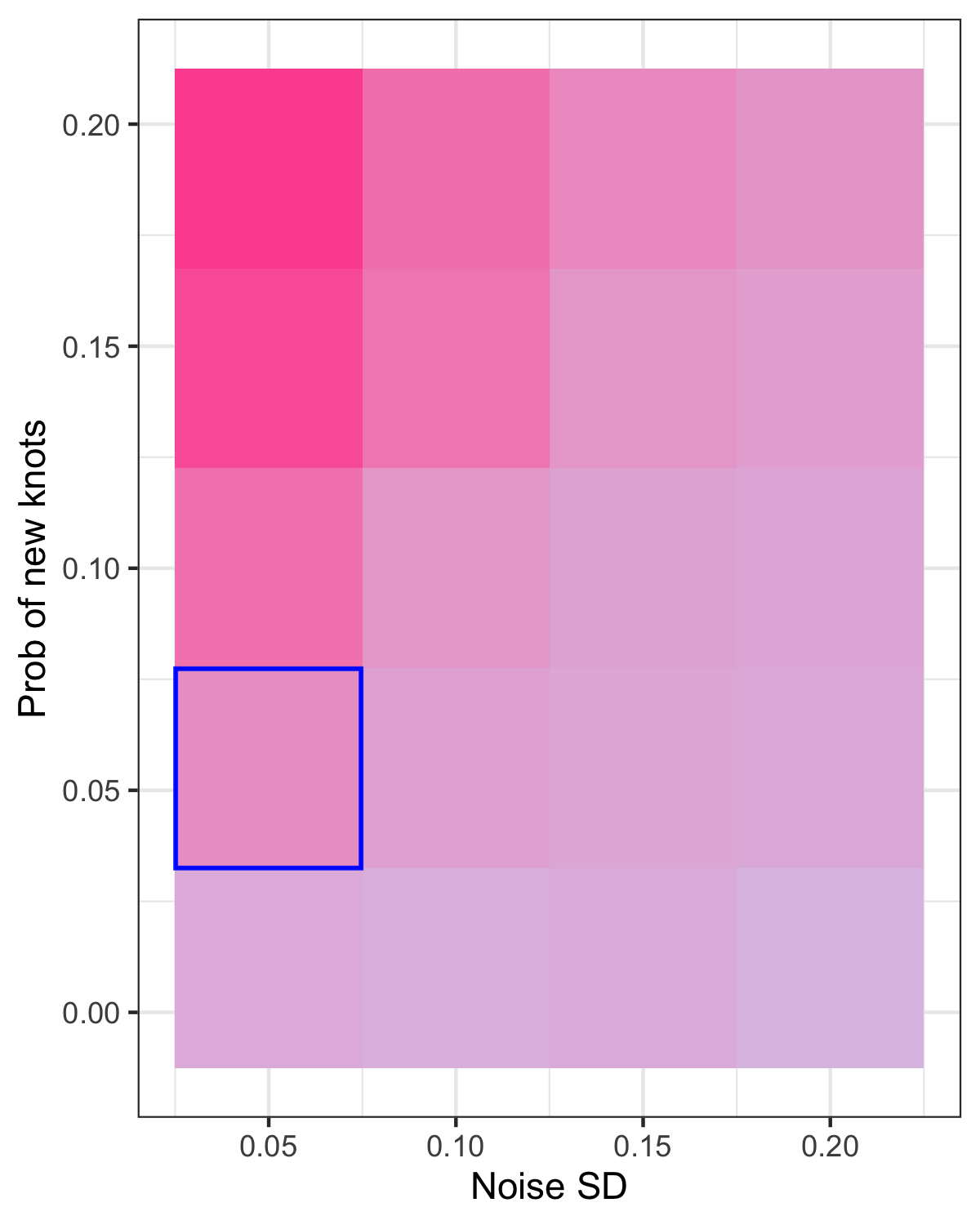}
        \caption{FCR using full data twice}
    \end{subfigure}
\hfill
    \begin{subfigure}[t]{0.2\textwidth}
        \includegraphics[width=1\linewidth]{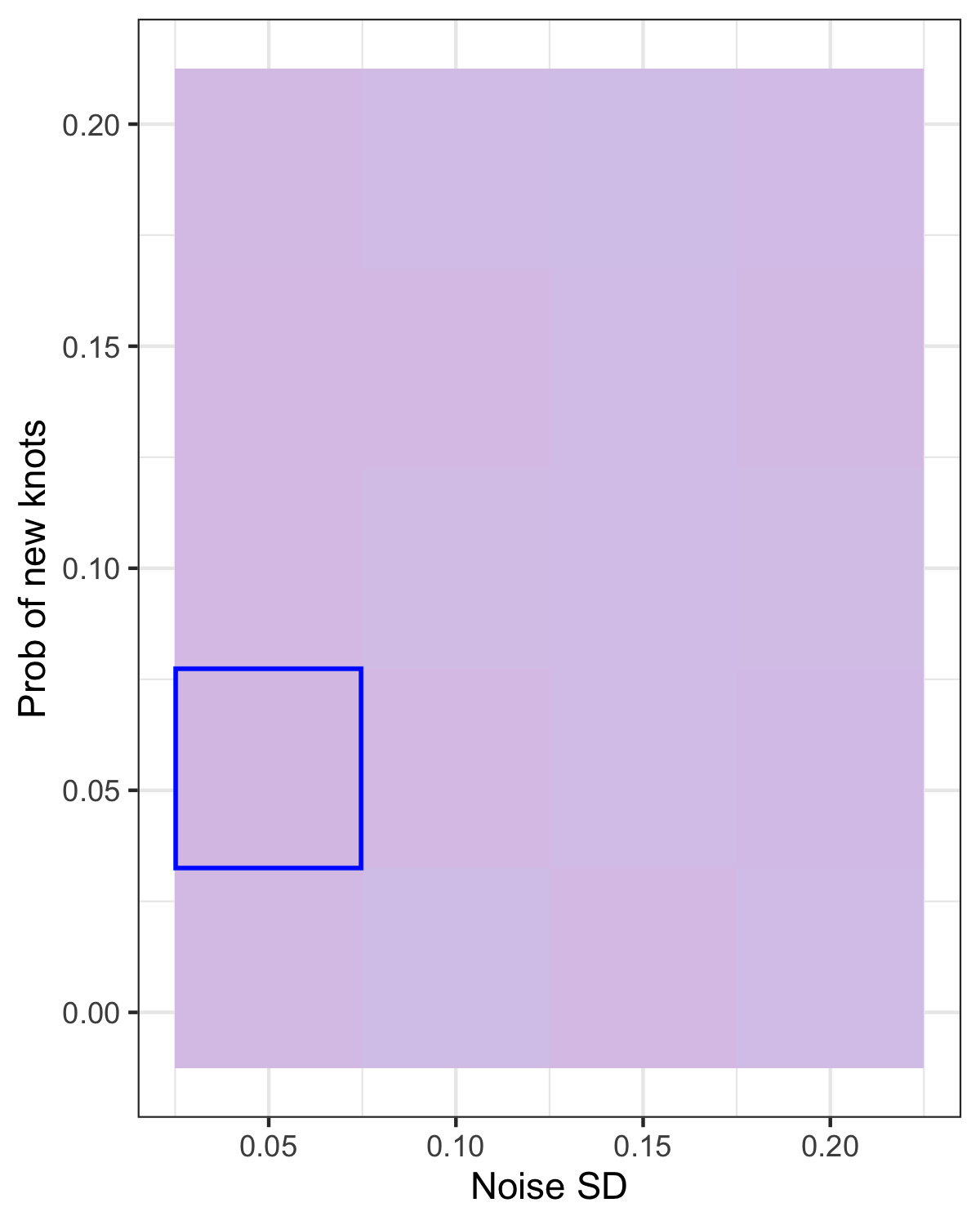}
        \caption{FCR using data fission}
    \end{subfigure}
\hfill
    \begin{subfigure}[t]{0.2\textwidth}
        \centering
        \includegraphics[width=1\linewidth]{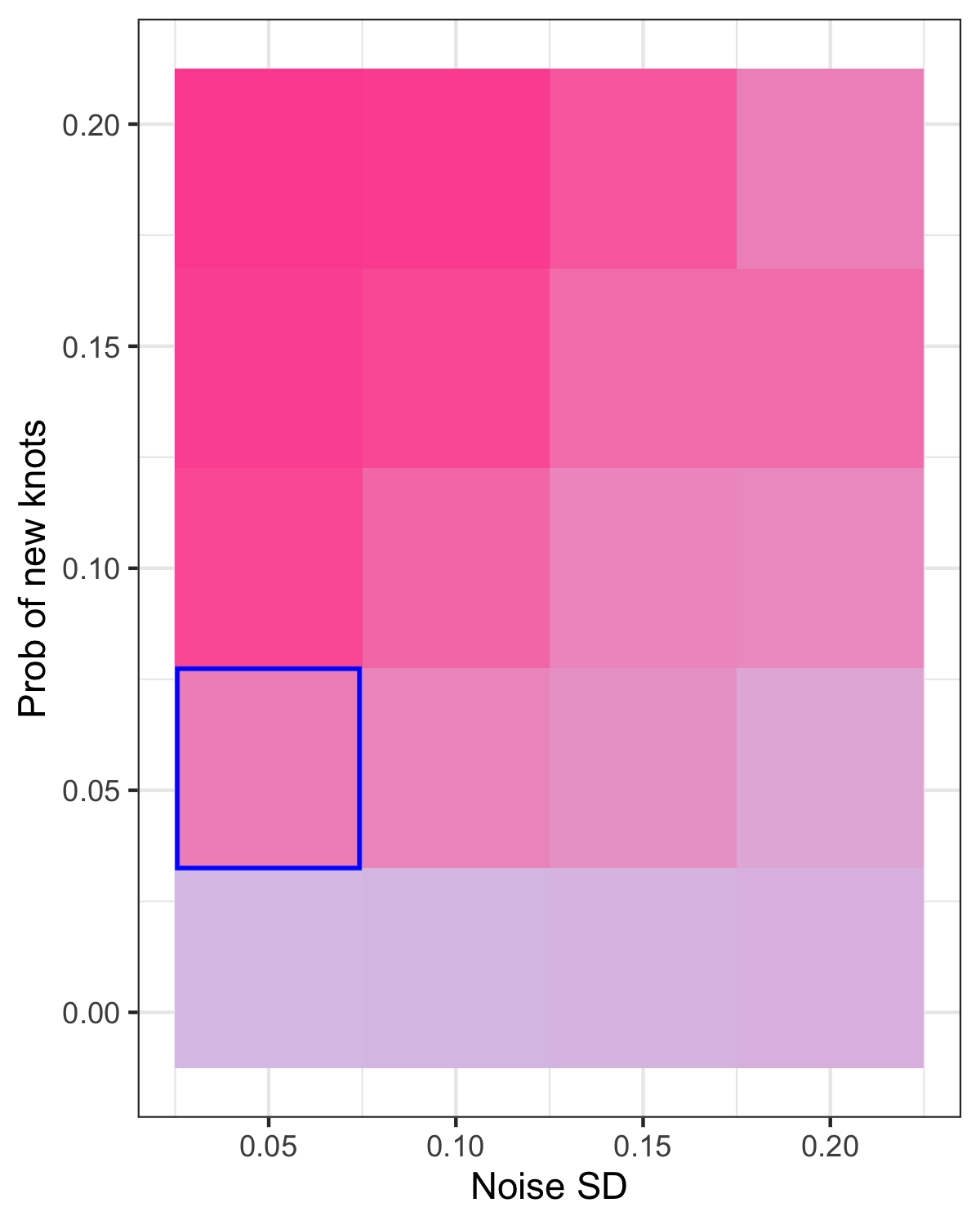}
        \caption{Simult. type I error (double~dip)}
    \end{subfigure}
\hfill
    \begin{subfigure}[t]{0.2\textwidth}
        \includegraphics[width=1\linewidth]{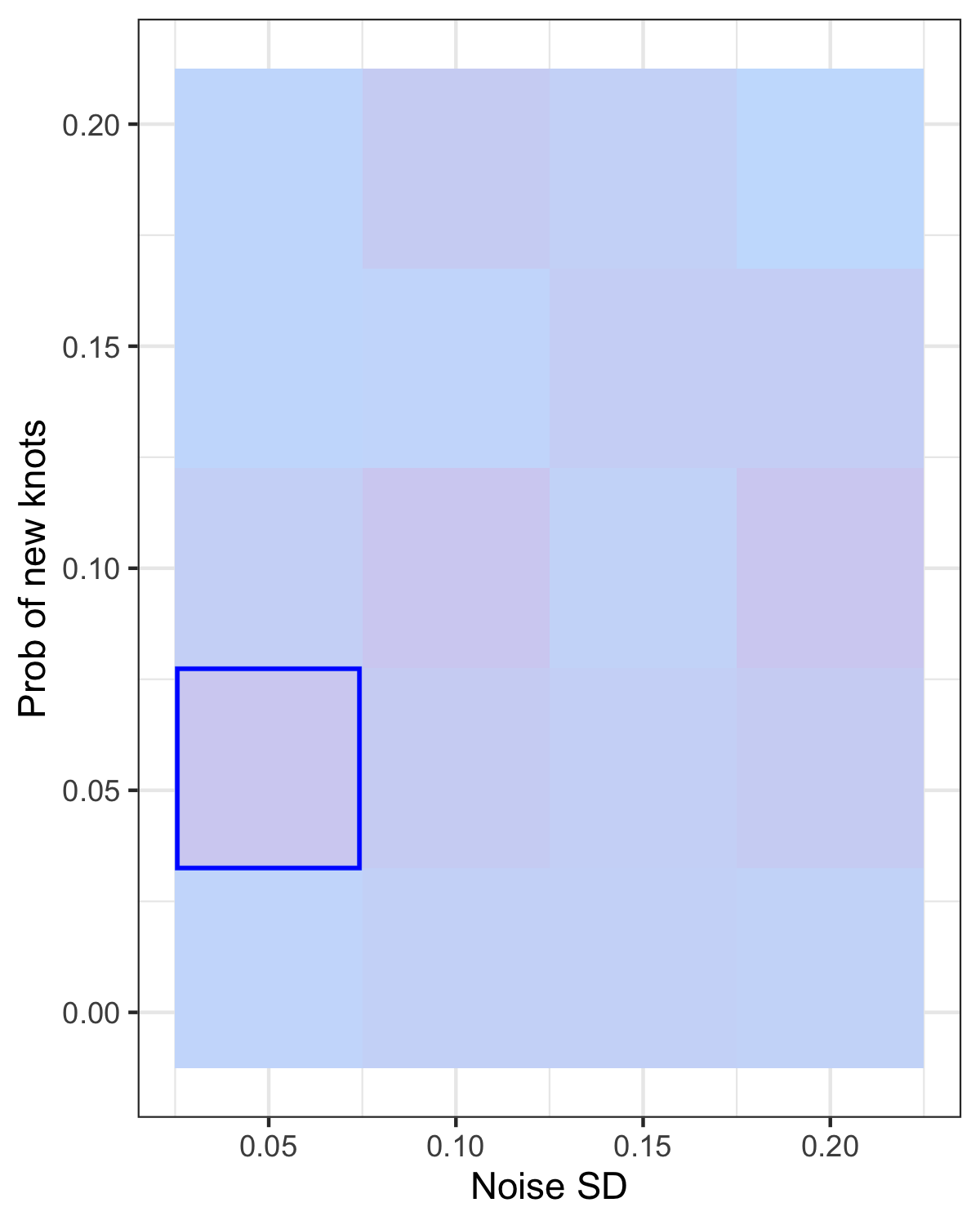}
        \caption{Simult. type I error (fission)}
    \end{subfigure}
\hfill
    \begin{subfigure}[t]{0.1\textwidth}
    \hskip 0pt
        \includegraphics[width=1\linewidth]{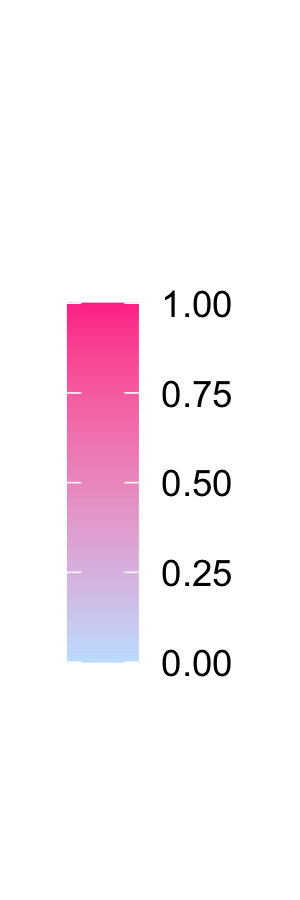}
    \end{subfigure}

    \caption{FCR for the \textbf{pointwise CIs} and simultaneous type I error for \textbf{uniform CIs} when varying the probability of having knots $p$ in $\{0.01, 0.55, 0.1, 0.145, 0.19\}$ and the noise SD in $\{0.05, 0.1, 0.15, 0.2\}$ (the blue circled cell represents the setting for the first shown instance in \cref{fig:instance_trendfilter}). The CIs generated using full data twice do not have valid FCR or simultaneous type I error control, especially when $p$ is large (more knots) and the noise standard deviation is small, but data fission is always valid.}
    \label{fig:trendfilter} 
\end{figure}

We repeat over $500$ trials and report the results in \cref{fig:trendfilter}. Reusing the full dataset performs most comparably to data fission when the noise is small and the probability of new knots is low. For noisier data and more volatile trends, reusing the data results in anticonservative CIs. See Appendix~\ref{appendix:trendfilter_supplemental} for more detailed empirical results.

\subsection{Application to spectroscopy} \label{sec:astro_examples}
In \cite{10.1093/mnras/staa110}, the authors introduce trend filtering as a tool for astronomical data analysis with spectral classification as a motivating example.
In spectroscopy, we observe a spectrum consisting of wavelengths ($\lambda$) and corresponding measurements of the coadded flux ($f(\lambda)$). Trend filtering can then used to create a ``spectral template''---a smoothed line of best fit for the observations. This template is combined with emission-line parameter estimates to classify the astronomical object. The authors demonstrate that trend filtering performs well empirically compared to the existing state-of-the-art for creating spectral templates which revolve around low-dimensional principal component analysis. We use this analysis as a way of demonstrating how CIs constructed using data fission appear for real data.

\begin{figure}
\centering
\begin{subfigure}{\textwidth}
\centering
\includegraphics[width=0.6\linewidth]{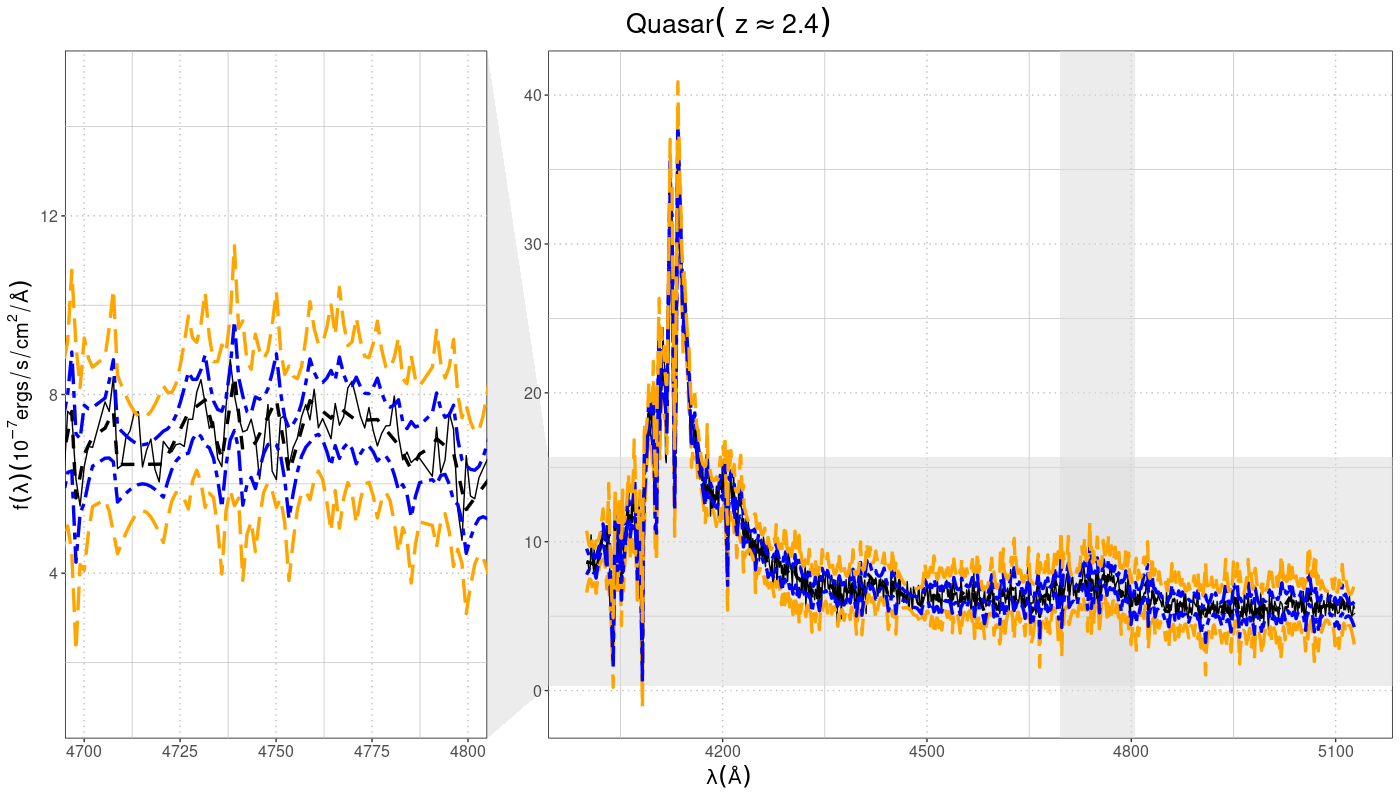}
\end{subfigure}
\begin{subfigure}{\textwidth}
  \centering
  \includegraphics[width=0.5\linewidth]{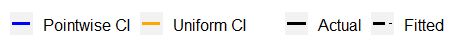}
\end{subfigure}
\caption{Fitted values as well as uniform and pointwise CIs for a quasar object fit using linear trend filtering. The right view shows the trend filter over the entire spectrum, but the left view ``zooms in'' on a smaller subset of the data to aid in visual identification.}
\label{fig:astronomical_examples}
\end{figure}

The dataset used for this analysis is the twelfth data release of the Baryon Oscillation Spectroscopic Survey (\cite{SDSS}). We estimate a spectral template using trend filtering, along with pointwise and uniform confidence bands (using the methods from \cref{sec:trendfilter}) designed to cover the conditional mean $\mu = \mathbb{E}[f(\lambda) | \lambda]$ of the observed spectra of the same three object
 discussed in \cite{10.1093/mnras/staa110}. In this setting, the flux measurement variances are known a priori 
 and account for the statistical uncertainty introduced by photon noise, CCD read noise, and sky-subtraction error. 
The results are shown in \cref{fig:astronomical_examples} for the quasar\footnote{
 DR12, Plate = 7130, MJD =5659, Fiber = 58. Located at (RA,Dec, z)  $\approx (349.737^{\circ}, 33.414^{\circ},2.399)$} 
and in \cref{sec:spectroscopy_supplement} for the remaining objects. Since the confidence bands displayed are covering the conditional mean and \emph{not} the observed data, there is no objective ``ground truth'' to compare the outputs of the model to in order to assess goodness of fit.  Therefore, the results need to be judged holistically.

\section{Conclusion} \label{sec:conclude}
We have proposed a method for selective inference through external randomization that allows flexibility to choose models in an arbitrary and data dependent way, like in data splitting, but also provides a more efficient split of available information in some settings. We demonstrate the efficacy of these methods in four applications: effect size estimation after interactive multiple testing, Gaussian linear regression, generalized linear models, and trend filtering. In the case of linear regression and GLMs, we note tighter CIs and higher power compared to data splitting in settings with leverage points and small sample size. For trend filtering, data fission enables uncertainty quantification while offering flexibility in choosing the number of knots and degree of the polynomial based on heuristic criteria. 

Numerous avenues for follow-up work exist. Although we note promising empirical results that suggest this procedure can be generalized to situations where the variance is unknown and the assumed distribution on the error term is misspecified, additional work needs to be done to establish rigorous guarantees, especially in high dimensional settings. Independently, it is possible to repeat the data fission procedure multiple times in parallel, but aggregating results seems nontrivial. Last, we anticipate applications to contemporary problems such as the creation of fake datasets and enabling differential privacy. 

\paragraph{Acknowledgements.}
We are very grateful to Anna Neufeld and Ameer Dharamshi for discovering errors in the original version of the manuscript and helping to revise our decomposition rules. We also thank Collin Politsch for help with the data for \cref{sec:astro_examples}. This dataset is constructed from SDS-IIII. Funding for SDSS-III has been provided by the Alfred P. Sloan Foundation, the Participating Institutions, the National Science Foundation, and the U.S. Department of Energy Office of Science. The SDSS-III web site is http://www.sdss3.org/.

SDSS-III is managed by the Astrophysical Research Consortium for the Participating Institutions of the SDSS-III Collaboration including the University of Arizona, the Brazilian Participation Group, Brookhaven National Laboratory, Carnegie Mellon University, University of Florida, the French Participation Group, the German Participation Group, Harvard University, the Instituto de Astrofisica de Canarias, the Michigan State/Notre Dame/JINA Participation Group, Johns Hopkins University, Lawrence Berkeley National Laboratory, Max Planck Institute for Astrophysics, Max Planck Institute for Extraterrestrial Physics, New Mexico State University, New York University, Ohio State University, Pennsylvania State University, University of Portsmouth, Princeton University, the Spanish Participation Group, University of Tokyo, University of Utah, Vanderbilt University, University of Virginia, University of Washington, and Yale University.

\bibliographystyle{chicago}
\bibliography{ref}
\begin{appendices}\label{sec:appendix}
\crefalias{section}{appendix}

\section{List of decompositions} \label{sec:appendix_list_decomp}
Included in this section is a more extensive list of decomposition rules for several commonly encountered distributions. When discussing each method, we will label them as follows so the reader will be able to relate them easily to the strategies discussed in \cref{sec:fission}:
\begin{itemize}
    \item \textbf{(P1)} will indicate that the decomposition strategy is an instance of the first principle ---  $f(X)$ and $g(X)$ are independent with known distributions. 
    \item \textbf{(P2)} will indicate that the decomposition strategy follows the second principle --- $f(X)$ has a known marginal distribution, while $g(X)$ has a known conditional distribution given $f(X)$. 
    \item \textbf{(P2 CP)} will indicate that the decomposition strategy is an instance of the second principle --- but specifically using ``conjugate-prior reversal''.
\end{itemize}
Proofs of the below claims are deferred to \cref{sec:decomp_proofs}.

\begin{itemize}
    \item \textbf{Gaussian} 
    \begin{enumerate}
        \item\textbf{(P1)} Suppose $X \sim N( \mu,  \Sigma)$ is $d$-dimensional ($d \geq 1$). Draw $ Z \sim N(0,  \Sigma)$. Then $f( X) =  X + \tau  Z$, where $\tau \in (0, \infty)$ is a tuning parameter, has  distribution $N( \mu, (1 + \tau^{2})  \Sigma)$; and $g( X) =  X - \tfrac{1}{\tau}  Z$ has distribution $N(\mu,  (1 + \tau^{-2})  \Sigma)$; and $f( X) \independent g( X)$. Larger $\tau$ indicates less informative $f( X)$ (and more informative $g( X) \mid f( X)$). 
        \item \textbf{(P2 CP)} Alternatively, draw $ Z \sim N( X, \tau  \Sigma)$, where $\tau \in (0, \infty)$ is a tuning parameter. Then $f( X) =  Z$ has marginal distribution $N( \mu, (1 + \tau)  \Sigma)$; and $g( X) =  X$ has conditional distribution $N(\frac{\tau}{\tau + 1} ( \mu + f( X)/\tau), \frac{\tau}{\tau + 1}  \Sigma)$. Larger $\tau$ indicates less informative $f( X)$.
        \item \textbf{(P2)} More generally, we can add Gaussian noise with an arbitrary covariance matrix. Draw $Z \sim N(0,\Sigma_{0})$ and let $f(X) = X-Z$ with $g(X) = X+Z$ as before. For notational convenience, let $\Sigma_{1} = \Sigma + \Sigma_{0}$ and $\Sigma_{2} = \Sigma - \Sigma_{0}$ Then $f(X) \sim N(\mu, \Sigma + \Sigma_{0})$ and $g(X) | f(X) \sim N\left( \mu + \Sigma_{2}\Sigma_{1}^{-1} \left( f(X) - \mu \right), \Sigma_{1} - \Sigma_{2}\Sigma_{1}^{-1}\Sigma_{2} \right)$. 
    \end{enumerate}
    \item \textbf{Gamma} 
    \begin{enumerate}
        \item \textbf{Exponential (P2 CP)} Suppose $X \sim \mathrm{Exp}(\theta)$. Draw $ Z = (Z_1, \ldots, Z_B)$ where  each element is i.i.d.\ $Z_i \sim \mathrm{Poi}(X)$ and $B \in \{1, 2, \ldots\}$ is a tuning parameter. Then $f(X) =  Z$, where each element is independently distributed as $\mathrm{Geo}(\tfrac{\theta}{\theta+1})$. $g(X) = X$ has conditional distribution $\mathrm{Gamma}(1 + \sum_{i=1}^B f_i(X), \theta + B)$. Larger $B$ indicates more informative $f(X)$ (and less informative $g(X) \mid f(X)$).
        
        \textbf{(P2 CP)} Alternatively, we can draw $Z \sim \mathrm{Poi}(\tau X)$, where $\tau \in (0,\infty)$ is a tuning parameter. Then $f(X) = Z$, marginally distributed as $\mathrm{Geo}(\tfrac{\theta}{\theta+\tau})$. $g(X) = X$ has conditional distribution $\mathrm{Gamma}(1 + f(X), \theta + \tau)$. Here, $f(X)$ is most informative when $\tau$ is comparable with $\theta$, and less informative when $\tau$ approaches $0$ or $\infty$. 
        
        
        \item \textbf{(P2 CP)}  Generally, suppose $X \sim \mathrm{Gamma}(\alpha, \beta)$. Draw $ Z = (Z_1, \ldots, Z_B)$ where each element is i.i.d.\ $Z_i \sim \mathrm{Poi}(X)$ and $B \in \{1, 2, \ldots\}$ is a tuning parameter. Then $f(X) =  Z$, where each element is independently distributed as a negative binomial $NB(\alpha, \tfrac{\beta}{\beta+1})$. $g(X) = X$ has conditional distribution $\mathrm{Gamma}(\alpha + \sum_{i=1}^B f_i(X), \beta + B)$. Larger $B$ indicates more informative $f(X)$ (and less informative $g(X) \mid f(X)$).
        
        \item \textbf{(P2 CP)}
        Alternatively, we can draw $Z \sim \mathrm{Poi}(\tau X)$, where $\tau \in (0,\infty)$ is a tuning parameter. Then $f(X) = Z$, marginally distributed as $\mathrm{NB}(\alpha, \tfrac{\beta}{\beta+\tau})$. $g(X) = X$ has conditional distribution $ \mathrm{Gamma}(\alpha + f(X), \beta + \tau)$. $f(X)$ is most informative when $\tau$ is comparable with $\theta$, and less informative when $\tau$ approaches zero or infinity.
        
        \item Note:
        decomposition of the Gamma distribution implies decomposition of the Chi-square distribution~$\chi^2_k$, as it is equivalent to $\mathrm{Gamma}(k/2, 1/2)$.
    \end{enumerate}

    \item \textbf{Beta} 
    \begin{enumerate}
        \item \textbf{(P2 CP)} Suppose $X \sim \mathrm{Beta}(\theta,1)$. Draw $Z \sim \mathrm{Bin}(B, X)$, where $B \in \{1, 2, \ldots\}$ is a tuning parameter. Then $f(X) = Z$ has marginal distribution as a discrete uniform in $\{0, 1, \ldots, B\}$ when $\theta = 1$, and stochastically larger (smaller) than a discrete uniform when $\theta$ is larger (smaller) than one (the PMF of $Z$ is $p_\theta(z) = \frac{\theta \Gamma(z + \theta) B!}{\Gamma(B + 1 + \theta) z!}$). $g(X) = X$ has conditional distribution $\mathrm{Beta}(\theta + f(X), B - f(X) + 1)$. Larger $B$ indicates more informative $f(X)$ (and less informative $g(X) \mid f(X)$).
        
        \item \textbf{(P2 CP)} Similarly, if $X \sim \mathrm{Beta}(1, \theta)$, we can draw $Z \sim \mathrm{Bin}(B, 1 - X)$. Then, and the resulting $g(X) | f(X) \sim \mathrm{Beta}(B - f(X) + 1,\theta + f(X))$ and $f(X)$ has the same marginal distribution as above.
        
        \item \textbf{Multivariate case: Dirichlet (P2 CP)} Suppose $ X \sim \mathrm{Dir}(\theta, 1, \ldots, 1)$, where $(\theta, 1, \ldots, 1)$ is a $d$-dimensional vector with $k \geq 2$. Draw $ Z \sim \mathrm{Multinom}(B,  X)$, where $B \in \{1, 2, \ldots\}$ is a tuning parameter. Then $f( X) =  Z$ has marginal distribution as a discrete uniform in its support $\{z_i \in \{0, \ldots, B\} \text{ for } i \in [d]: \sum_{i=1}^d z_i = B\}$ when $\theta = 1$, and for other $\theta$, the PMF is $p_\theta( z) = \frac{B! \Gamma(z_1 + \theta) \Gamma(d - 1 + \theta)}{z_1! \Gamma(\theta) \Gamma(B + d - 1 + \theta)}$. $g(X) = X$ has conditional distribution $\mathrm{Dir}(\theta + f_1( X), 1 + f_2( X), \ldots, 1+ f_k( X))$. Larger $B$ indicates more informative $f( X)$ (and less informative $g( X) \mid f( X)$).
        
        In the general case, where $ X \sim \mathrm{Dir}(\theta_1, \theta_2, \ldots, \theta_d)$, we can use the same construction. Then $f( X) =  Z$ has marginal distribution 
        \[
        p_{ \theta}( z) = \frac{ \Gamma\left(\sum_{i=1}^d \theta_i \right)B!}{\Gamma\left(B + \sum_{i=1}^d \theta_i\right)}  \prod_{i=1}^d \frac{ \Gamma(\theta_i + z_i)  }{\Gamma(\theta_i)z_{i}!},
        \]
        and $g(X) = X$ has conditional distribution $\mathrm{Dir}(\theta_1 + f_1( X), \ldots, \theta_k + f_d( X))$. Larger $B$ indicates more informative $f( X)$ (and less informative $g( X) \mid f( X)$).
    \end{enumerate}
    \item \textbf{Binomial (P2).} Suppose $X \sim \mathrm{Bin}(n, \theta)$. Draw $Z \sim \mathrm{Bin}(X, p)$ where $p \in (0,1)$ is a tuning parameter. Then $f(X) = Z$ has marginal distribution $\mathrm{Bin}(n, p\theta)$; and $g(X) = X - Z$ has conditional distribution as $$p_{\theta}(g(X) = y | f(X) = z) = \frac{(n-z)!}{y!(n-z-y)!}\left[\frac{(1-p) \theta}{1-\theta}\right]^y\left[\frac{1-\theta}{1-p \theta}\right]^{n-z}.$$ Larger $p$ indicates more informative $f(X)$. 
     Note that the decomposition of Binomial is not trivially applicable to Bernoulli distribution since $X = 1$ with probability one if $Z = 1$. 
    \item \textbf{Bernoulli (P2).} Suppose $X \sim \mathrm{Ber}(\theta)$. Draw $Z \sim \mathrm{Ber}(p)$ where $p \in (0,1)$ is a tuning parameter. Then
     $f(X) = X(1 - Z) + (1 - X)Z$ has marginal distribution $\mathrm{Ber}(\theta + p - 2p\theta)$; and $g(X) = X$ has conditional distribution (given $f(X)$) as $\mathrm{Ber}\left(\frac{\theta}{\theta + (1-\theta) [p/(1-p)]^{2f(X) - 1}}\right)$.  
     Smaller $p$ indicates more information in $f(X)$.
    \item \textbf{Categorical (P2).} Suppose $X \sim \mathrm{Cat}\left(\theta_{1},...,\theta_{d}\right)$. Draw $Z \sim \mathrm{Ber}(p)$ where $p \in (0,1)$ is a tuning parameter. Also draw $D \sim \mathrm{Cat}\left(\frac{1}{d},...,\frac{1}{d}\right)$. Let $f(X) = X(1-Z) + DZ$ and $g(X) = X$. Then $f(X) \sim \mathrm{Cat}\left(\phi_{1},...,\phi_{d}\right)$ with $\phi_{i} = (1-p)\theta_{i} + \frac{p}{d}$. Furthermore $g(X) | f(X)$ has distribution 
    \[p_{\theta} \left(g(X) = s | f(X) = t \right) =   
    \begin{cases}
        \frac{(1-p+ \frac{p}{d})\theta_s}{(1-p)\theta_{s} +p/d } & \text{if $s=t$} \\
         \frac{\theta_s \frac{p}{d}}{(1-p)\theta_{t} +p/d } & \text{if $s \ne t$.} \\
    \end{cases}\]
    Larger $p$ indicates more informative $f(X)$. Note that there is nothing special about choosing $\frac{1}{d}$ above, the method generalizes to any vector of probabilities for $D$. When $d=2$, substituting $1-X$ for $D$ in the above construction recovers the Bernoulli decomposition given above.  
    \item \textbf{Poisson} 
    \begin{enumerate}
    \item \textbf{(P1)}   Suppose $X \sim \mathrm{Poi}(\mu)$. Draw $Z \sim \mathrm{Bin}(X, p)$ where $p \in (0,1)$ is a tuning parameter. Then
     $f(X) = Z$ has marginal distribution $\mathrm{Poi}(p\mu)$; and $g(X) = X - Z$ is independent of $f(X)$ and distributes as $\mathrm{Poi}((1 - p)\mu)$. Larger $p$ indicates more informative $f(X)$.
     \item \textbf{(P2)}   Alternatively, draw $Z \sim \mathrm{Poi}(p)$. We can then exploit the thinning property of the Poisson distribution (\cite{last_penrose_2017}) to construct $f(X) = X+Z \sim \mathrm{Poi}(\mu +p)$. Letting $X=g(X)$ gives $g(X) | f(X) \sim \mathrm{Bin}\left(f(X),\frac{\mu}{\mu + p} \right)$. Larger $p$ corresponds to a less informative $f(X)$ (and more informative $g(X)|f(X)$).
    \end{enumerate}

    
    \item \textbf{Negative Binomial (P2).} Suppose $X \sim \mathrm{NB}(r, \theta)$. Draw $Z \sim \mathrm{Bin}(X, p)$ where $p \in (0,1)$ is a tuning parameter. Then $f(X) = Z$ has marginal distribution $\mathrm{NB}(r, \frac{\theta}{\theta + p - p\theta})$; and $g(X) = X - Z$ has conditional distribution $\mathrm{NB}(r + Z, \theta+p - p \theta)$. Larger $p$ indicates more informative $f(X)$.
    A special case of Negative Binomial is geometric distribution where $r = 1$.

    
    
\end{itemize}

\begin{remark}[Relationship to infinite divisibility.] Note that we are able to decompose Poisson, Binomial, and Negative Binomial by drawing a Binomial distribution with size $X$, which is a generic formulation. 
    Poisson and Negative Binomial belong to the class of discrete compound Poisson distribution (DCP) with parameters $(\lambda \alpha_1, \lambda \alpha_2, \ldots)$, where: 
    \begin{align}
        \mathbb{E}(t^X) \equiv \sum_{x=0}^\infty \mathbb{P}(X = x)t^x  = \exp\left\{\sum_{k=1}^\infty \alpha_k \lambda (t^k - 1)\right\},
    \end{align}
    for $|t| \leq 1$. It is equivalent to saying that $X$ is infinitely divisible. We note that if $X$ follows a DCP with parameter $(\lambda \alpha_1, \lambda \alpha_2, \ldots)$ and we draw $Z \sim \mathrm{Bin}(X, p)$, then $Z$ marginally follows a DCP with parameter $\left(\sum_{k=1}^\infty \alpha_k \lambda k (1 - p)^{k-1} p, \ldots, \sum_{k=i}^\infty \alpha_k \lambda {k \choose i} (1 - p)^{k-i} p^i, \ldots \right)$. However, it may not be true in general that $X\mid Z$ also follows a (shifted) DCP when $X$ is DCP. These examples are simply three cases where the conditional distribution is tractable.
    \end{remark}

\section{Proofs and additional theoretical results} \label{sec:appendix_theory}

\subsection{Proofs of decomposition rules} \label{sec:decomp_proofs}
This section contains the deferred derivations for the decomposition rules from \cref{sec:appendix_list_decomp}. 
\subsubsection{Gaussian Decomposition}
\begin{proposition}
\textbf{(P1)} Suppose $X \sim N( \mu,  \Sigma)$ is $d$-dimensional ($d \geq 1$). Draw $ Z \sim N(0,  \Sigma)$. Then $f( X) =  X + \tau  Z$, where $\tau \in (0, \infty)$ is a tuning parameter, has  distribution $N( \mu, (1 + \tau^{2})  \Sigma)$; and $g( X) =  X - \tfrac{1}{\tau}  Z$ has distribution $N(\mu,  (1 + \tau^{-2})  \Sigma)$; and $f( X) \independent g( X)$. Larger $\tau$ indicates less informative $f( X)$ (and more informative $g( X) \mid f( X)$). 
\end{proposition}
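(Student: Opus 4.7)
The plan is to exploit the fact that the pair $(f(X), g(X))$ is a linear transformation of the jointly Gaussian vector $(X, Z)$, so it is itself jointly Gaussian. Once joint Gaussianity is established, both the marginal distributions and the independence claim reduce to elementary mean and covariance computations.

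First I would observe that since $X$ and $Z$ are independent with $X \sim N(\mu, \Sigma)$ and $Z \sim N(0, \Sigma)$, the stacked vector $(X^T, Z^T)^T$ is Gaussian in $\mathbb{R}^{2d}$. Applying the linear map $(x, z) \mapsto (x + \tau z,\, x - \tau^{-1} z)$ then gives that $(f(X), g(X))$ is jointly Gaussian, and it therefore suffices to compute its mean vector and covariance structure. For $f(X)$ the mean is $\mu + \tau \cdot 0 = \mu$ and the covariance is $\Sigma + \tau^2 \Sigma = (1+\tau^2)\Sigma$; for $g(X)$ the mean is $\mu$ and the covariance is $\Sigma + \tau^{-2}\Sigma = (1+\tau^{-2})\Sigma$. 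This already yields the two claimed marginal distributions.

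Next I would establish independence. Because $(f(X), g(X))$ is jointly Gaussian, independence is equivalent to zero cross-covariance. Using bilinearity of covariance together with $\mathrm{Cov}(X, Z) = 0$, I compute
\[
\mathrm{Cov}(f(X), g(X)) = \mathrm{Cov}(X + \tau Z,\, X - \tau^{-1} Z) = \mathrm{Var}(X) - \tau^{-1}\tau\,\mathrm{Var}(Z) = \Sigma - \Sigma = 0.
\]
Combined with joint Gaussianity, this gives $f(X) \independent g(X)$.

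There is essentially no hard step here; the proof is a short verification. The only mild point worth being careful about is invoking the ``zero covariance implies independence'' property, which requires joint (not merely marginal) Gaussianity, and this is exactly why I would explicitly write $(f(X), g(X))$ as a linear image of $(X, Z)$ at the outset. The monotonicity remark about $\tau$ is then immediate from inspecting the variance $(1+\tau^2)\Sigma$ of $f(X)$: as $\tau$ grows, the noise level of $f(X)$ relative to $\mu$ grows, making $f(X)$ less informative about $\mu$, while the variance $(1+\tau^{-2})\Sigma$ of $g(X)$ shrinks toward $\Sigma$, so $g(X)$ retains essentially all the information of $X$.
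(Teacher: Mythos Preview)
Your proposal is correct and matches the paper's own proof essentially line for line: both argue that $(f(X),g(X))$ is jointly Gaussian, compute the means and variances to obtain the marginals, and show the cross-covariance vanishes to conclude independence. If anything, you are slightly more explicit than the paper in justifying joint Gaussianity via the linear map from $(X,Z)$, which is a good habit.
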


\begin{proof}
$f(X)$ and $g(X)$ are jointly distributed as Gaussian, so it is sufficient to show the means and variances match and they are uncorrelated. 
$$\text{Cov} \left( f(X),g(X) \right) = \text{Var} \left(X\right)  - \text{Var}(Z) = 0$$
$$\mathbb{E} \left( f(X) \right) =  \mu, \text{ and } \mathbb{E} \left( g(X) \right) =  \mu$$
$$\text{Var} \left( f(X) \right) =  (1 + \tau^{2})  \Sigma, \text{ and } \text{Var} \left( g(X) \right) =  (1 + \tau^{-2}) \Sigma$$
This complete the proof.
\end{proof}

\begin{fact}[Conditioning on Gaussians] \label{fact:gaussian_comparisons}
If $X \sim N(\mu,\Sigma)$ is partitioned as
$
X=\left[\begin{array}{l}
X_{1} \\
X_{2}
\end{array}\right]
$
with $
\mu=\left[\begin{array}{l}
\mu_{1} \\
\mu_{2}
\end{array}\right]$ and $\Sigma =\left[
\begin{array}{ll}
\Sigma_{11} & \Sigma_{12} \\
\Sigma_{21} & \Sigma_{22} \\
\end{array}\right]$, 
then the distribution of $X_{1}| X_{2}=\mathbf{a}$ is distributed as $N(\mu_{1}+\Sigma_{12} \Sigma_{22}^{-1}\left(a-\mu_{2}\right) , \Sigma_{11}-\Sigma_{12} \Sigma^{-1} \Sigma_{21})$.
\end{fact}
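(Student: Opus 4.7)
The plan is to use the standard ``decorrelation'' trick: introduce an auxiliary random vector
$W := X_1 - \Sigma_{12}\Sigma_{22}^{-1}(X_2 - \mu_2)$,
which is a linear transformation of the jointly Gaussian vector $(X_1, X_2)$. Since any affine function of a jointly Gaussian vector is jointly Gaussian, the pair $(W, X_2)$ is jointly Gaussian as well. The goal is then to show that $W \independent X_2$, compute the marginal law of $W$, and recover $X_1$ as a sum of a deterministic function of $X_2$ plus $W$.

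First, I would verify independence by checking $\mathrm{Cov}(W, X_2) = \mathrm{Cov}(X_1, X_2) - \Sigma_{12}\Sigma_{22}^{-1}\mathrm{Var}(X_2) = \Sigma_{12} - \Sigma_{12}\Sigma_{22}^{-1}\Sigma_{22} = 0$. For jointly Gaussian random vectors, zero cross-covariance implies independence, so $W \independent X_2$. Next I would compute the mean $\mathbb{E}[W] = \mu_1 - \Sigma_{12}\Sigma_{22}^{-1}(\mu_2 - \mu_2) = \mu_1$ and the variance by setting $A = \Sigma_{12}\Sigma_{22}^{-1}$ and expanding $\mathrm{Var}(W) = \Sigma_{11} + A\Sigma_{22}A^T - A\Sigma_{21} - \Sigma_{12}A^T$. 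Using $\Sigma_{21} = \Sigma_{12}^T$ and symmetry of $\Sigma_{22}$, each of $A\Sigma_{22}A^T$, $A\Sigma_{21}$, and $\Sigma_{12}A^T$ collapses to $\Sigma_{12}\Sigma_{22}^{-1}\Sigma_{21}$, yielding $\mathrm{Var}(W) = \Sigma_{11} - \Sigma_{12}\Sigma_{22}^{-1}\Sigma_{21}$. Hence marginally $W \sim N(\mu_1, \Sigma_{11} - \Sigma_{12}\Sigma_{22}^{-1}\Sigma_{21})$.

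Finally, writing $X_1 = W + \Sigma_{12}\Sigma_{22}^{-1}(X_2 - \mu_2)$ and conditioning on $X_2 = a$, the second term becomes the deterministic shift $\Sigma_{12}\Sigma_{22}^{-1}(a - \mu_2)$, while $W$, being independent of $X_2$, retains its marginal distribution. Summing these gives $X_1 \mid X_2 = a \sim N(\mu_1 + \Sigma_{12}\Sigma_{22}^{-1}(a - \mu_2), \Sigma_{11} - \Sigma_{12}\Sigma_{22}^{-1}\Sigma_{21})$, as claimed. The only genuine obstacle is the algebraic bookkeeping in simplifying $\mathrm{Var}(W)$, where one must track transposes of block matrices carefully; no deeper ideas are needed. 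An alternative route would be to write down the joint density and derive the conditional density via a Schur complement decomposition of $\Sigma^{-1}$, but the decorrelation approach above avoids inverting the full covariance matrix and makes the independence structure transparent.
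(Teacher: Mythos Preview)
Your proof is correct and follows the standard decorrelation argument. Note, however, that the paper does not actually prove this statement: it is stated as a \emph{Fact} (a known textbook result) and invoked without proof to establish the subsequent Gaussian decomposition propositions. So there is no ``paper's own proof'' to compare against; your argument supplies exactly the kind of standard derivation the authors took for granted.
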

\begin{proposition}
\textbf{(P2 CP)}  Suppose $X \sim N( \mu,  \Sigma)$ is $d$-dimensional ($d \geq 1$). Draw $ Z \sim N( X, \tau  \Sigma)$, where $\tau \in (0, \infty)$ is a tuning parameter. Then $f( X) =  Z$ has marginal distribution $N( \mu, (1 + \tau)  \Sigma)$; and $g( X) =  X$ has conditional distribution $N(\frac{\tau}{\tau + 1} ( \mu + f( X)/\tau), \frac{\tau}{\tau + 1}  \Sigma)$.
\end{proposition}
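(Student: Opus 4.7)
The plan is to reduce both claims to standard manipulations of jointly Gaussian vectors, leveraging \cref{fact:gaussian_comparisons} (which gives the conditional distribution of one subvector of a joint Gaussian given the other). Since $Z\mid X\sim N(X,\tau\Sigma)$, we can equivalently write $Z = X + W$ where $W\sim N(0,\tau\Sigma)$ is independent of $X$, so that $(X,Z)$ is jointly Gaussian.

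First I would establish the marginal distribution of $f(X)=Z$. Adding the independent Gaussians $X\sim N(\mu,\Sigma)$ and $W\sim N(0,\tau\Sigma)$ gives $Z\sim N(\mu,\Sigma+\tau\Sigma)=N(\mu,(1+\tau)\Sigma)$ directly, which matches the claimed marginal for $f(X)$. Next I would compute the joint covariance structure of $(X,Z)$. We have $\mathbb{E}[X]=\mathbb{E}[Z]=\mu$, $\mathrm{Var}(X)=\Sigma$, $\mathrm{Var}(Z)=(1+\tau)\Sigma$, and $\mathrm{Cov}(X,Z)=\mathrm{Cov}(X,X+W)=\Sigma$ by independence of $X$ and $W$.

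Finally, applying \cref{fact:gaussian_comparisons} to $(X_{1},X_{2})=(X,Z)$ with $a=f(X)$ yields
\begin{align*}
\mathbb{E}[X\mid Z=f(X)] &= \mu + \Sigma\bigl((1+\tau)\Sigma\bigr)^{-1}\bigl(f(X)-\mu\bigr) = \mu + \tfrac{1}{1+\tau}\bigl(f(X)-\mu\bigr),\\
\mathrm{Var}(X\mid Z) &= \Sigma - \Sigma\bigl((1+\tau)\Sigma\bigr)^{-1}\Sigma = \tfrac{\tau}{1+\tau}\Sigma.
\end{align*}
Rearranging the conditional mean as $\tfrac{\tau}{1+\tau}\mu + \tfrac{1}{1+\tau}f(X) = \tfrac{\tau}{1+\tau}\bigl(\mu + f(X)/\tau\bigr)$ gives exactly the claimed conditional distribution $g(X)\mid f(X)\sim N\bigl(\tfrac{\tau}{\tau+1}(\mu+f(X)/\tau),\tfrac{\tau}{\tau+1}\Sigma\bigr)$.

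There is no real obstacle here: the proof is essentially a direct application of the jointly Gaussian conditioning formula already invoked in the paper, combined with the simple observation that $Z=X+W$ with $W\independent X$. The only minor care needed is to invert the covariance in the correct form (noting $((1+\tau)\Sigma)^{-1}=\tfrac{1}{1+\tau}\Sigma^{-1}$ so that all $\Sigma$'s cancel appropriately, leaving expressions that do not depend on invertibility assumptions beyond what is already implicit in writing a multivariate Gaussian density) and to algebraically match the conditional mean to the exact form written in the statement.
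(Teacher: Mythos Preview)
Your proposal is correct and follows essentially the same approach as the paper: rewrite $Z=X+W$ with $W\sim N(0,\tau\Sigma)$ independent of $X$, identify the joint Gaussian distribution of $(X,Z)$, and apply \cref{fact:gaussian_comparisons}. You have simply filled in the explicit computation of the conditional mean and variance that the paper leaves to the reader.
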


\begin{proof} Rewrite $ Z = X + W$ with $W\sim N(0,\tau\Sigma)$ to see $Z \sim N(\mu,(1+\tau)\Sigma)$. They are then jointly distributed as:
$$\begin{pmatrix}
X \\
Z
\end{pmatrix} \sim N \left( \mu , \begin{pmatrix}
\Sigma & \Sigma   \\
\Sigma  & (1+\tau) \Sigma 
\end{pmatrix}  \right).$$ 
Conclude by applying \cref{fact:gaussian_comparisons}.

\end{proof}

\begin{proposition}
\textbf{(P2)}  Suppose $X \sim N( \mu,  \Sigma)$ is $d$-dimensional ($d \geq 1$). Draw $Z \sim N(0,\Sigma_{0})$ and let $f(X) = X-Z$ with $g(X) = X+Z$ as before. For notational convenience, let $\Sigma_{1} = \Sigma + \Sigma_{0}$ and $\Sigma_{2} = \Sigma - \Sigma_{0}$ Then $f(X) \sim N(\mu, \Sigma + \Sigma_{0})$ and $g(X) | f(X) \sim N\left( \mu + \Sigma_{2}\Sigma_{1}^{-1} \left( f(X) - \mu \right), \Sigma_{1} - \Sigma_{2}\Sigma_{1}^{-1}\Sigma_{2} \right)$.
\end{proposition}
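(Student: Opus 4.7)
The plan is to recognize that $(f(X), g(X))$ is an affine transformation of the independent Gaussian pair $(X, Z)$, so it is itself jointly Gaussian, and then to read off the conditional distribution from the standard multivariate Gaussian conditioning formula (Fact~1 in the excerpt). Since the statement is entirely a computation in the Gaussian family, there is no genuine obstacle; the main task is just to identify the blocks of the joint covariance matrix correctly.

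First I would write
\[
\begin{pmatrix} f(X) \\ g(X) \end{pmatrix}
= \begin{pmatrix} I & -I \\ I & I \end{pmatrix}
  \begin{pmatrix} X \\ Z \end{pmatrix},
\]
where $X \sim N(\mu, \Sigma)$ and $Z \sim N(0, \Sigma_0)$ are independent. This exhibits $(f(X), g(X))$ as jointly Gaussian with mean $(\mu, \mu)^T$. Next I would compute the three covariance blocks:
\[
\mathrm{Var}(f(X)) = \Sigma + \Sigma_0 = \Sigma_1, \qquad
\mathrm{Var}(g(X)) = \Sigma + \Sigma_0 = \Sigma_1,
\]
\[
\mathrm{Cov}(f(X), g(X)) = \mathrm{Cov}(X - Z, X + Z) = \mathrm{Var}(X) - \mathrm{Var}(Z) = \Sigma - \Sigma_0 = \Sigma_2,
\]
using independence of $X$ and $Z$. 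The marginal $f(X) \sim N(\mu, \Sigma_1)$ then follows immediately, proving the first claim.

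For the conditional distribution $g(X) \mid f(X)$, I would invoke Fact~1 with $X_1 = g(X)$, $X_2 = f(X)$, $\mu_1 = \mu_2 = \mu$, $\Sigma_{11} = \Sigma_{22} = \Sigma_1$, and $\Sigma_{12} = \Sigma_{21} = \Sigma_2$ (the latter equality holding because $\Sigma$ and $\Sigma_0$ are symmetric, hence so are $\Sigma_1$ and $\Sigma_2$). Assuming $\Sigma_1$ is invertible, Fact~1 directly gives
\[
g(X) \mid f(X) \sim N\bigl(\mu + \Sigma_2 \Sigma_1^{-1}(f(X) - \mu),\; \Sigma_1 - \Sigma_2 \Sigma_1^{-1} \Sigma_2\bigr),
\]
which is exactly the stated conditional distribution. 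The only technical caveat worth flagging in the write-up is the invertibility of $\Sigma_1 = \Sigma + \Sigma_0$; this is automatic as soon as $\Sigma$ is positive definite, since $\Sigma_0$ is positive semidefinite. If one wished to allow degenerate $\Sigma$, one could replace $\Sigma_1^{-1}$ by a pseudoinverse, but for the proposition as stated no additional work is needed.
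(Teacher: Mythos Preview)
Your proposal is correct and follows essentially the same approach as the paper: both arguments establish that $(f(X),g(X))$ is jointly Gaussian with the block covariance $\begin{pmatrix}\Sigma_1 & \Sigma_2\\ \Sigma_2 & \Sigma_1\end{pmatrix}$ and then invoke the standard Gaussian conditioning formula (Fact~1). Your write-up is a bit more explicit in computing the covariance blocks and in flagging the invertibility of $\Sigma_1$, but the underlying idea is identical.
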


\begin{proof} $f(X)$ and $g(X)$ are jointly distributed as 
$$\begin{pmatrix}
f(X) \\
g(X)
\end{pmatrix} \sim N \left( \mu , \begin{pmatrix}
\Sigma + \Sigma_{0} &  \Sigma - \Sigma_{0}   \\
\Sigma - \Sigma_{0}   & \Sigma + \Sigma_{0} 
\end{pmatrix}  \right).$$ 
Conclude by applying \cref{fact:gaussian_comparisons}.
\end{proof}
\subsubsection{Gamma Decomposition}
\begin{proposition}
\textbf{(P2 CP)}  Generally, suppose $X \sim \mathrm{Gamma}(\alpha, \beta)$. Draw $ Z = (Z_1, \ldots, Z_B)$ where each element is i.i.d.\ $Z_i \sim \mathrm{Poi}(X)$ and $B \in \{1, 2, \ldots\}$ is a tuning parameter. Then $f(X) =  Z$, where each element is marginally distributed as a negative binomial $NB(\alpha, \tfrac{\beta}{\beta+1})$. $g(X) = X$ has conditional distribution $\mathrm{Gamma}(\alpha + \sum_{i=1}^B f_i(X), \beta + B)$.
\end{proposition}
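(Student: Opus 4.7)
The plan is to verify both claims by direct application of Bayes' rule, leveraging the fact that the Gamma distribution is the conjugate prior for the Poisson rate. Because the $Z_i$ are conditionally i.i.d.\ given $X$, everything reduces to computing with the joint density
\[
p(z_1,\ldots,z_B, x) = \left(\prod_{i=1}^B \frac{x^{z_i} e^{-x}}{z_i!}\right) \cdot \frac{\beta^\alpha}{\Gamma(\alpha)} x^{\alpha-1} e^{-\beta x},
\]
which collects as $x^{\alpha + \sum_i z_i - 1} e^{-(\beta+B)x}$ times factors that do not depend on $x$.

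First I would handle the conditional distribution $g(X)\mid f(X)$. Treating the joint density above as a function of $x$ and ignoring multiplicative constants in $(z_1,\ldots,z_B)$, the kernel is exactly that of a $\mathrm{Gamma}(\alpha + \sum_{i=1}^B z_i,\,\beta+B)$ density. Since the conditional density is the joint divided by the marginal of $Z$, and the marginal is $x$-free, the conditional must equal this Gamma density. This gives $g(X)\mid f(X)\sim \mathrm{Gamma}(\alpha + \sum_{i=1}^B f_i(X),\, \beta + B)$ with no further work.

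Second, for the marginal of a single $Z_i$, I would integrate $x$ out of $p(z_i,x) = \frac{x^{z_i} e^{-x}}{z_i!} \cdot \frac{\beta^\alpha}{\Gamma(\alpha)} x^{\alpha-1} e^{-\beta x}$ using the Gamma normalization $\int_0^\infty x^{\alpha+z_i-1} e^{-(1+\beta)x}\,dx = \Gamma(\alpha+z_i)/(1+\beta)^{\alpha+z_i}$. This yields
\[
P(Z_i = z_i) = \frac{\Gamma(\alpha+z_i)}{\Gamma(\alpha)\, z_i!}\left(\frac{\beta}{1+\beta}\right)^{\alpha}\left(\frac{1}{1+\beta}\right)^{z_i},
\]
which I would recognize as the PMF of $\mathrm{NB}(\alpha,\beta/(\beta+1))$ under the standard parametrization (with success probability $\beta/(\beta+1)$ and $\alpha$ successes). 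Alternatively, one can derive the joint marginal of $(Z_1,\ldots,Z_B)$ by the same integration with $B$ in place of $1$, obtaining $P(Z=z) = \frac{\beta^\alpha}{\Gamma(\alpha)\prod_i z_i!}\cdot \frac{\Gamma(\alpha + \sum_i z_i)}{(\beta+B)^{\alpha+\sum_i z_i}}$, and then marginalizing to a single coordinate.

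There is no real obstacle here since this is a routine conjugate computation; the only subtlety worth flagging is that the $Z_i$ are marginally but not jointly independent (they share the latent $X$), so the statement of marginal negative binomiality refers to each coordinate individually, not to the joint law. The framework of \cref{thm:conjugate_reversal} (with the Poisson likelihood playing the role of $p(z\mid x,\theta_3)$ and the Gamma playing the role of $p(x\mid\theta_1,\theta_2)$) provides an immediate sanity check: the posterior parameters $(\alpha + \sum_i T(z_i), \beta + B\theta_3)$ with $T(z)=z$ and $\theta_3=1$ reproduce exactly the $(\alpha + \sum_i z_i, \beta + B)$ obtained above.
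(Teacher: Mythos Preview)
Your proposal is correct and follows essentially the same approach as the paper: both compute the conditional by recognizing the Gamma kernel in the joint density, and both obtain the marginal negative binomial by integrating out $x$ via the Gamma normalization. Your version is slightly more streamlined in that you avoid the reparameterization $\beta = p/(1-p)$ the paper uses, and your remark that the $Z_i$ are marginally but not jointly negative binomial is a useful clarification not made explicit in the paper.
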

\begin{proof}
Poisson being conjugate prior to a gamma distribution is a standard result. We have that
\begin{align*}
P\left( g(X) | Z_{1},...,Z_{B}\right) & \propto X^{\alpha-1} \prod_{i=1}^{n} e^{-\beta X_{i}}  X^{Z_i} e^{-X} \\
& \propto X^{\alpha+\sum_{i=1}^{n} Z_{i}-1} e^{-(\beta +B)X },
\end{align*}
which is proportional to the density of $\mathrm{Gamma}(\alpha + \sum_{i=1}^{n}f_{i}(X), \beta + B)$. Integrating out $g(X)$, we can recover the marginals. Re-paramaterize $\beta = \frac{p}{1-p}$ and then compute:
\begin{align*}
P(f(X) = z) & =\int_0^{\infty} \frac{x^z}{z !} e^{-x} \times\left(\frac{p}{1-p}\right)^\alpha x^{\alpha-1} \frac{e^{-x \frac{p}{1-p}}}{\Gamma(\alpha)} \mathrm{d} x \\
& =\left(\frac{p}{1-p}\right)^\alpha \frac{1}{z ! \Gamma(\alpha)} \int_0^{\infty} x^{\alpha+z-1} e^{-x \frac{p+1-p}{1-p}} \mathrm{~d} x \\
& =\left(\frac{p}{1-p}\right)^\alpha \frac{1}{z ! \Gamma(\alpha)} \Gamma(\alpha+z)(1-p)^{z+\alpha} \underbrace{\int_0^{\infty} \frac{ (1-p)^{-(z+\alpha)} }{\Gamma(z+ \alpha)}  x^{\alpha+z-1} e^{-x \frac{1}{1-p}} \mathrm{d} x}_{\text{Integrates to 1}} \\
& =\frac{\Gamma(\alpha+z)}{z ! \Gamma(\alpha)}(1-p)^z p^\alpha\\
& ={\alpha + z - 1 \choose \alpha }(1-p)^z p^\alpha,\\
\end{align*}
which is precisely the density of a $NB(\alpha, \tfrac{\beta}{\beta+1})$ variable.
\end{proof}

\begin{proposition}
\textbf{(P2 CP)} We can draw $Z \sim \mathrm{Poi}(\tau X)$, where $\tau \in (0,\infty)$ is a tuning parameter. Then $f(X) = Z$, marginally distributed as $\mathrm{NB}(\alpha, \tfrac{\beta}{\beta+\tau})$. $g(X) = X$ has conditional distribution $ \mathrm{Gamma}(\alpha + f(X), \beta + \tau)$. $f(X)$ is most informative when $\tau$ is comparable with $\theta$, and less informative when $\tau$ approaches zero or infinity.
\end{proposition}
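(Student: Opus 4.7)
The plan is to mimic the proof of the preceding proposition, since this statement is essentially the $B=1$ case with the Poisson rate rescaled by $\tau$. In fact, a clean shortcut is to observe that $Y := \tau X \sim \mathrm{Gamma}(\alpha, \beta/\tau)$ and that $Z \mid X \sim \mathrm{Poi}(\tau X) = \mathrm{Poi}(Y)$, so the present statement reduces to the previous one (with $B=1$ and rate parameter $\beta/\tau$) after undoing the scaling at the end. I would record this reduction but also give the direct computation, since it is short and parallels the previous derivation.

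For the direct route, first I would compute the conditional density $p(x \mid z)$ by Bayes' rule:
\begin{align*}
p(x \mid z) \;\propto\; p(z \mid x)\, p(x) \;\propto\; \frac{(\tau x)^z e^{-\tau x}}{z!} \cdot x^{\alpha-1} e^{-\beta x} \;\propto\; x^{\alpha + z - 1} e^{-(\beta+\tau)x},
\end{align*}
which is the kernel of a $\mathrm{Gamma}(\alpha + z, \beta + \tau)$ density. This immediately yields the claimed conditional distribution of $g(X)=X$ given $f(X)=z$.

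Next I would recover the marginal of $f(X)=Z$ by integrating:
\begin{align*}
\mathbb{P}(Z = z) &= \int_0^\infty \frac{(\tau x)^z e^{-\tau x}}{z!} \cdot \frac{\beta^\alpha}{\Gamma(\alpha)} x^{\alpha-1} e^{-\beta x}\, dx \\
&= \frac{\tau^z \beta^\alpha}{z!\,\Gamma(\alpha)} \int_0^\infty x^{\alpha + z - 1} e^{-(\beta+\tau)x}\, dx \\
&= \frac{\Gamma(\alpha + z)}{z!\,\Gamma(\alpha)} \left(\frac{\beta}{\beta+\tau}\right)^{\!\alpha} \left(\frac{\tau}{\beta+\tau}\right)^{\!z},
\end{align*}
which is exactly the PMF of $\mathrm{NB}(\alpha,\, \beta/(\beta+\tau))$ under the convention that $p = \beta/(\beta+\tau)$ is the ``success'' probability and $\binom{\alpha + z - 1}{z} = \Gamma(\alpha+z)/(z!\,\Gamma(\alpha))$.

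Neither step is technically hard; the main thing to be careful about is matching the parametrization conventions used earlier in the paper for $\mathrm{NB}$ (i.e.\ which of $p$ and $1-p$ multiplies $z$ versus $\alpha$), and making sure the gamma integral is evaluated with rate $\beta + \tau$ rather than $\beta$ so that the posterior rate picks up the ``$+\tau$'' term. Once the parametrization is fixed, both the marginal and the conditional drop out of a single line of algebra.
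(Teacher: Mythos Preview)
Your proof is correct and follows essentially the same approach as the paper: both establish the conditional via the Bayes-rule proportionality argument yielding the $\mathrm{Gamma}(\alpha+z,\beta+\tau)$ kernel, and both obtain the marginal by recognizing the Poisson--Gamma mixture as a negative binomial (the paper defers to the previous proposition's integration, you write it out directly). Your scaling shortcut via $Y=\tau X$ is a nice additional observation not present in the paper.
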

\begin{proof} We have that
\begin{align*}
P\left( g(X) | f(X) = Z\right) & \propto X^{\alpha-1} e^{-\beta X}  (\tau X)^{Z} e^{-\tau X} \\
& \propto X^{\alpha+Z-1} e^{-(\beta +\tau)X },
\end{align*}
which is proportional to the density of $\mathrm{Gamma}(\alpha + f(X), \beta + \tau)$. Conclude by recognizing the Negative Binomial as a Poisson-gamma mixture. Re-parameterize $\beta = \frac{p}{1-p}$ and repeating the arguments before to conclude that the marginal distribution is the density of a $NB(\alpha, \tfrac{\beta}{\beta+\tau})$ variable.

\end{proof}

\begin{corollary}
\textbf{Exponential (P2 CP)} Suppose $X \sim \mathrm{Exp}(\theta)$. Draw $ Z = (Z_1, \ldots, Z_B)$ where  each element is i.i.d.\ $Z_i \sim \mathrm{Poi}(X)$ and $B \in \{1, 2, \ldots\}$ is a tuning parameter. Then $f(X) =  Z$, where each element is marginally distributed as $\mathrm{Geo}(\tfrac{\theta}{\theta+1})$. $g(X) = X$ has conditional distribution $\mathrm{Gamma}(1 + \sum_{i=1}^B f_i(X), \theta + B)$. Larger $B$ indicates more informative $f(X)$ (and less informative $g(X) \mid f(X)$).
\end{corollary}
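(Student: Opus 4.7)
The plan is to deduce this corollary as the $\alpha=1$, $\beta=\theta$ specialization of the preceding Gamma proposition, using the standard identifications $\mathrm{Exp}(\theta) = \mathrm{Gamma}(1, \theta)$ and $NB(1, p) = \mathrm{Geo}(p)$. Applying the Gamma proposition immediately yields the conditional distribution $g(X) \mid f(X) \sim \mathrm{Gamma}(1 + \sum_{i=1}^B f_i(X), \theta + B)$, matching the claim, and its marginal conclusion that each $Z_i \sim NB(1, \theta/(\theta+1))$ coincides with $\mathrm{Geo}(\theta/(\theta+1))$ by definition of the negative binomial with one success.

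For a self-contained verification that does not route through the Gamma result, I would alternatively work from the joint density
\[
p(x, z_1, \ldots, z_B) \;=\; \theta e^{-\theta x} \prod_{i=1}^B \frac{x^{z_i} e^{-x}}{z_i!}.
\]
Collecting the $x$-dependent factors shows that the conditional density $p(x \mid z)$ is proportional to $x^{\sum_i z_i} e^{-(\theta+B) x}$, which is the kernel of $\mathrm{Gamma}(1 + \sum_i z_i, \theta + B)$. Integrating $x$ out of the joint using this Gamma normalizing constant recovers the joint PMF of $Z$; specializing to $B=1$ (which by exchangeability determines the one-dimensional marginal of any single $Z_i$ in the general $B$ case) gives $P(Z_i = z) = \theta/(\theta+1)^{z+1}$, i.e., $\mathrm{Geo}(\theta/(\theta+1))$ in the convention $P(Z=z) = p(1-p)^z$ for $z \geq 0$.

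There is no substantive obstacle here, since Poisson--Gamma conjugacy does all the work. The only pitfalls worth flagging are a convention check on $\mathrm{Geo}(p)$ (that it refers to the distribution supported on $\{0,1,2,\ldots\}$ with PMF $p(1-p)^z$, consistent with $NB(1,p)$), and noting that the components of $Z$ are exchangeable but \emph{not} jointly independent marginally: they are conditionally independent given $X$ and share the latent $X$, so the marginal claim refers to each one-dimensional marginal. The informativeness comment (that larger $B$ gives more information in $f(X)$) is then a direct consequence of the Fisher information decomposition in Section~\ref{sec:split_fission} applied to the explicit conditional Gamma above.
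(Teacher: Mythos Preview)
Your proposal is correct and matches the paper's own proof, which consists of the single line ``Apply the above decomposition rules, setting $\alpha = 1$ and $\beta = \theta$.'' Your additional self-contained verification and the caveats about the $\mathrm{Geo}$ convention and the non-independence of the $Z_i$ marginally are accurate and go beyond what the paper records.
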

\begin{proof}
Apply the above decomposition rules, setting $\alpha = 1$ and $\beta=\theta$.
\end{proof}

\begin{corollary}
\textbf{(P2 CP)} Alternatively, we can draw $Z \sim \mathrm{Poi}(\tau X)$, where $\tau \in (0,\infty)$ is a tuning parameter. Then $f(X) = Z$, marginally distributed as $\mathrm{Geo}(\tfrac{\theta}{\theta+\tau})$. $g(X) = X$ has conditional distribution $\mathrm{Gamma}(1 + f(X), \theta + \tau)$. Here, $f(X)$ is most informative when $\tau$ is comparable with $\theta$, and less informative when $\tau$ approaches $0$ or $\infty$. 
\end{corollary}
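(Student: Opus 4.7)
The plan is to recognize that this corollary is an immediate specialization of the general Gamma-Poisson decomposition result just established, applied with shape parameter $\alpha=1$. Since $\mathrm{Exp}(\theta)$ is exactly $\mathrm{Gamma}(1,\theta)$, and $\mathrm{NB}(1,p)=\mathrm{Geo}(p)$, both the conditional and marginal claims should fall out by direct substitution without any new calculation.

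Concretely, I would proceed in three short steps. First, invoke the identification $\mathrm{Exp}(\theta)=\mathrm{Gamma}(1,\theta)$, so the hypotheses of the preceding proposition (Gamma decomposition with a $\mathrm{Poi}(\tau X)$ external draw) are met with $\alpha=1$ and $\beta=\theta$. Second, plug into the stated conditional distribution of $g(X)\mid f(X)$, which gives $\mathrm{Gamma}(\alpha+f(X),\beta+\tau)=\mathrm{Gamma}(1+f(X),\theta+\tau)$, matching the claim. Third, plug into the stated marginal distribution of $f(X)$, which gives $\mathrm{NB}(\alpha,\tfrac{\beta}{\beta+\tau})=\mathrm{NB}(1,\tfrac{\theta}{\theta+\tau})$, and use the standard identity $\mathrm{NB}(1,p)=\mathrm{Geo}(p)$ to obtain $\mathrm{Geo}(\tfrac{\theta}{\theta+\tau})$.

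For the final remark about $\tau$ being ``most informative when $\tau$ is comparable to $\theta$,'' I would verify this heuristically by computing the Fisher information of $f(X)$ for $\theta$. Since $f(X)\sim \mathrm{Geo}(\tfrac{\theta}{\theta+\tau})$ with success probability $p(\theta)=\tfrac{\theta}{\theta+\tau}$, the information satisfies $\mathcal{I}_{f(X)}(\theta)=\bigl(\tfrac{dp}{d\theta}\bigr)^2/\bigl(p^2(1-p)\bigr)$, which simplifies to $\tfrac{\tau}{\theta^2(\theta+\tau)}$. Maximizing over $\tau$ on $(0,\infty)$ shows that the information vanishes as $\tau\to 0$ or $\tau\to\infty$ and is maximized at an interior point on the order of $\theta$, confirming the stated intuition.

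There is no real obstacle: the only mild subtlety is bookkeeping the two parametrization conventions (shape-rate for Gamma and the particular success-probability convention $p=\tfrac{\beta}{\beta+\tau}$ used in the preceding proof), but once these match the previous proposition, the specialization is mechanical.
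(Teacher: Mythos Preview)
Your core approach --- specializing the Gamma--Poisson result with $\alpha=1$, $\beta=\theta$ and invoking $\mathrm{NB}(1,p)=\mathrm{Geo}(p)$ --- is exactly what the paper does, and the distributional claims for $f(X)$ and $g(X)\mid f(X)$ follow correctly.

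However, your Fisher-information argument for the informativeness remark contains an error. Your formula $\mathcal{I}_{f(X)}(\theta)=\tau/[\theta^{2}(\theta+\tau)]$ is correct, but this function is \emph{monotone increasing} in $\tau$: its derivative is $1/[\theta(\theta+\tau)^{2}]>0$. It goes from $0$ at $\tau=0$ to the full information $1/\theta^{2}=\mathcal{I}_{X}(\theta)$ as $\tau\to\infty$; it does not vanish at infinity and has no interior maximum. This is intuitive: when $\tau$ is large, $Z/\tau\to X$ by the law of large numbers, so $f(X)=Z$ essentially reveals $X$ and carries all the information. So your claimed conclusion (``maximized at an interior point on the order of $\theta$'') does not follow from your own computation. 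The paper's proof, for its part, does not attempt to justify the informativeness remark at all --- it only cites the Gamma specialization --- so you are not omitting anything the paper provides, but the extra argument you add is incorrect as stated.
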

\begin{proof}
Apply the above decomposition rules, setting $\alpha = 1$ and $\beta=\theta$.
\end{proof}

\subsubsection{Beta Decomposition}

\begin{proposition}
\textbf{(P2 CP)} Suppose $X \sim \mathrm{Beta}(\theta,1)$. Draw $Z \sim \mathrm{Bin}(B, X)$, where $B \in \{1, 2, \ldots\}$ is a tuning parameter. Then $f(X) = Z$ has marginal distribution as a discrete uniform in $\{0, 1, \ldots, B\}$ when $\theta = 1$, and stochastically larger (smaller) than a discrete uniform when $\theta$ is larger (smaller) than one (the PMF of $Z$ is $p_\theta(z) = \frac{\theta \Gamma(z + \theta) B!}{\Gamma(B + 1 + \theta) z!}$). $g(X) = X$ has conditional distribution $\mathrm{Beta}(\theta + f(X), B - f(X) + 1)$.
\end{proposition}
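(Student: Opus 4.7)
The plan is a direct Bayes calculation, followed by a short monotone-likelihood-ratio argument for the stochastic comparison. First I would write down the joint density of $(X,Z)$ on $(0,1)\times\{0,1,\ldots,B\}$ by multiplying the $\mathrm{Beta}(\theta,1)$ density for $X$ by the $\mathrm{Bin}(B,X)$ PMF for $Z$:
\begin{equation*}
p(x,z)=\binom{B}{z}x^{z}(1-x)^{B-z}\cdot\theta\, x^{\theta-1}=\theta\binom{B}{z}x^{z+\theta-1}(1-x)^{B-z}.
\end{equation*}
Viewed as a function of $x$ for fixed $z$, this is (up to normalization) the kernel of a $\mathrm{Beta}(\theta+z,\,B-z+1)$ density, which immediately identifies the conditional distribution $g(X)\mid f(X)=z$ as $\mathrm{Beta}(\theta+z,\,B-z+1)$, giving the second claim.

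Next I would obtain the marginal PMF of $Z$ by integrating $x$ out, i.e.\ dividing by the Beta normalizing constant. This gives
\begin{equation*}
p_\theta(z)=\theta\binom{B}{z}\int_{0}^{1}x^{z+\theta-1}(1-x)^{B-z}\,dx=\theta\binom{B}{z}\frac{\Gamma(z+\theta)\,\Gamma(B-z+1)}{\Gamma(B+\theta+1)}.
\end{equation*}
Using $\Gamma(B-z+1)=(B-z)!$ and $\binom{B}{z}=B!/(z!(B-z)!)$ to cancel the $(B-z)!$ factors yields the stated formula $p_\theta(z)=\theta\,\Gamma(z+\theta)\,B!/\bigl(\Gamma(B+\theta+1)\,z!\bigr)$.

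For the claim about stochastic ordering against the discrete uniform, I would first specialize: plugging $\theta=1$ gives $p_1(z)=\Gamma(z+1)B!/\Gamma(B+2)z!=1/(B+1)$, confirming uniformity. For general $\theta$, I would examine the ratio
\begin{equation*}
\frac{p_\theta(z+1)}{p_\theta(z)}=\frac{\Gamma(z+1+\theta)/(z+1)!}{\Gamma(z+\theta)/z!}=\frac{z+\theta}{z+1},
\end{equation*}
which is strictly increasing in $\theta$. This is exactly the monotone likelihood ratio property in the parameter $\theta$, and a standard result then gives that $p_{\theta'}$ stochastically dominates $p_\theta$ whenever $\theta'>\theta$; comparing with $\theta=1$ delivers the stochastic dominance assertion in both directions. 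The only mildly delicate step is the algebraic simplification in the marginal, and the bookkeeping for the MLR-to-stochastic-order implication, but both are routine, so I do not foresee a serious obstacle.
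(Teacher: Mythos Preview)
Your proof is correct and follows the same direct Bayes-calculation approach as the paper: write the joint, read off the Beta posterior, and integrate to get the beta-binomial marginal. In fact your argument is more complete than the paper's, since you also verify the $\theta=1$ uniform case and supply the MLR argument for the stochastic ordering claim, both of which the paper states but does not prove.
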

\begin{proof} We have that
\begin{align*} 
P\left( X | f(X) = Z\right) & \propto x^{z}(1-x)^{B-z} x^{\theta -1},
\end{align*}
which is precisely the conditional distribution $\mathrm{Beta}(\theta + f(X), B - f(X) + 1)$. For the marginal distribution, it is a standard result that it is distributed as a beta-binomial distribution, which has corresponding density of:
$${B \choose z} \frac{ \beta(z+\theta,n- z + 1)}{\beta(\theta,1)} = \frac{\theta \Gamma(z + \theta) B!}{\Gamma(B + 1 + \theta) z!},$$
yielding the desired result. 
\end{proof}
\begin{proposition}
\textbf{(P2 CP)} Similarly, if $X \sim \mathrm{Beta}(1, \theta)$, we can draw $Z \sim \mathrm{Bin}(B, 1 - X)$. Then, and the resulting $g(X) | f(X) \sim \mathrm{Beta}(B - f(X) + 1,\theta + f(X))$ and $f(X)$ has the same marginal distribution as above.
\end{proposition}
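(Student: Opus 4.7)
The plan is to reduce this claim to the previous proposition via the standard symmetry of the Beta family, namely that if $X \sim \mathrm{Beta}(1,\theta)$ then $Y := 1 - X \sim \mathrm{Beta}(\theta, 1)$. Under this change of variables, our $Z \sim \mathrm{Bin}(B, 1-X)$ is exactly $\mathrm{Bin}(B, Y)$, so the construction coincides with the one from the previous proposition applied to $Y$. This reduces the work to invoking the previous result and then translating the conclusion back to $X$.

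Concretely, I would proceed in three short steps. First, I would note that the marginal distribution of $Z$ depends on $X$ only through $\mathrm{Bin}(B, 1-X) = \mathrm{Bin}(B, Y)$; since $Y \sim \mathrm{Beta}(\theta, 1)$, the marginal PMF of $f(X) = Z$ is precisely the beta-binomial PMF $p_\theta(z) = \frac{\theta \Gamma(z + \theta) B!}{\Gamma(B + 1 + \theta) z!}$ already computed in the previous proof, confirming the ``same marginal distribution'' claim. Second, for the conditional distribution I would apply the previous proposition to obtain $Y \mid Z \sim \mathrm{Beta}(\theta + Z, B - Z + 1)$, and then use the elementary fact that if $Y \sim \mathrm{Beta}(a,b)$ then $1 - Y \sim \mathrm{Beta}(b,a)$ to conclude that $g(X) = X = 1 - Y$ satisfies $g(X) \mid f(X) \sim \mathrm{Beta}(B - f(X) + 1, \theta + f(X))$.

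As a sanity check (and an alternative, self-contained derivation if reviewers prefer not to appeal to the previous proposition), I would verify both statements directly from Bayes' rule. The joint density (with respect to counting $\otimes$ Lebesgue measure) is
\begin{equation*}
p(x,z) \propto \binom{B}{z}(1-x)^{z} x^{B-z} \cdot \theta (1-x)^{\theta - 1} = \binom{B}{z} \theta\, x^{B-z}(1-x)^{\theta + z - 1},
\end{equation*}
so that $p(x \mid z) \propto x^{(B - z + 1) - 1} (1-x)^{(\theta + z) - 1}$, which is the kernel of $\mathrm{Beta}(B - f(X) + 1, \theta + f(X))$. Integrating $x$ out gives the same beta function value as in the previous proposition (after the substitution $u = 1 - x$), recovering the beta-binomial marginal.

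There is no real obstacle: the entire content of the statement is the symmetry $x \leftrightarrow 1-x$, which swaps the two Beta parameters and carries the previous proof through verbatim. The only point that merits care is keeping track of which Beta parameter is incremented by $f(X)$ versus $B - f(X) + 1$, which flips when we pass from $Y$ back to $X = 1 - Y$; the direct Bayes computation above makes this bookkeeping unambiguous.
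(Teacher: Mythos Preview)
Your proposal is correct. In fact, your ``sanity check'' direct Bayes computation is essentially identical to the paper's proof, which simply writes $P(X \mid f(X) = Z) \propto (1-x)^z x^{B-z}(1-x)^{\theta-1}$ and recognizes the $\mathrm{Beta}(B - f(X) + 1, \theta + f(X))$ kernel, then notes that the marginal computation is the same as before by symmetry of the beta function.

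Your primary route via the substitution $Y = 1 - X \sim \mathrm{Beta}(\theta,1)$ is a slightly different packaging: rather than redoing the kernel identification, you invoke the previous proposition wholesale and translate back using $\mathrm{Beta}(a,b) \mapsto \mathrm{Beta}(b,a)$ under $y \mapsto 1-y$. The paper gestures at this symmetry only for the marginal (``the beta function is symmetric'') but does the conditional by direct computation. Your reduction makes the dependence on the earlier result explicit and avoids any new calculation, which is arguably cleaner; the paper's version is marginally more self-contained. Both are equally short and valid, and your inclusion of the direct verification means there is no gap either way.
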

\begin{proof} We have that
\begin{align*} 
P\left( X | f(X) = Z\right) & \propto (1-x)^{z}x^{B-z} (1-x)^{\theta -1},
\end{align*}
which is precisely the conditional distribution $\mathrm{Beta}(B - f(X) + 1,\theta + f(X))$. The computation for the marginal density is the same as above as the beta function is symmetric. 
\end{proof}

\begin{proposition}
\textbf{Multivariate case: Dirichlet (P2 CP)} Suppose $ X \sim \mathrm{Dir}(\theta, 1, \ldots, 1)$, where $(\theta, 1, \ldots, 1)$ is a $d$-dimensional vector with $k \geq 2$. Draw $ Z \sim \mathrm{Multinom}(B,  X)$, where $B \in \{1, 2, \ldots\}$ is a tuning parameter. Then $f( X) =  Z$ has marginal distribution as a discrete uniform in its support $\{z_i \in \{0, \ldots, B\} \text{ for } i \in [d]: \sum_{i=1}^d z_i = B\}$ when $\theta = 1$, and for other $\theta$, the PMF is $p_\theta( z) = \frac{B! \Gamma(z_1 + \theta) \Gamma(d - 1 + \theta)}{z_1! \Gamma(\theta) \Gamma(B + d - 1 + \theta)}$. $g(X) = X$ has conditional distribution $\mathrm{Dir}(\theta + f_1( X), 1 + f_2( X), \ldots, 1+ f_k( X))$. Larger $B$ indicates more informative $f( X)$ (and less informative $g( X) \mid f( X)$).
        
In the general case, where $ X \sim \mathrm{Dir}(\theta_1, \theta_2, \ldots, \theta_d)$, we can use the same construction. Then $f( X) =  Z$ has marginal distribution 
\[
p_{ \theta}( z) = \frac{ \Gamma\left(\sum_{i=1}^d \theta_i \right)B!}{\Gamma\left(B + \sum_{i=1}^d \theta_i\right)}  \prod_{i=1}^d \frac{ \Gamma(\theta_i + z_i)  }{\Gamma(\theta_i)z_{i}!},
\]
and $g(X) = X$ has conditional distribution $\mathrm{Dir}(\theta_1 + f_1( X), \ldots, \theta_k + f_d( X))$.

\end{proposition}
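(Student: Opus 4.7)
The plan is to treat this as the standard Dirichlet-multinomial conjugacy calculation. First I will write the joint density of $(X, Z)$ on the simplex times the lattice by multiplying the Dirichlet prior density of $X$ and the multinomial PMF of $Z \mid X$:
\[
p(x, z) \;=\; \frac{\Gamma\!\left(\sum_{i=1}^d \theta_i\right)}{\prod_{i=1}^d \Gamma(\theta_i)} \prod_{i=1}^d x_i^{\theta_i - 1} \;\cdot\; \frac{B!}{\prod_{i=1}^d z_i!} \prod_{i=1}^d x_i^{z_i} \;=\; \frac{\Gamma\!\left(\sum_i \theta_i\right) B!}{\prod_i \Gamma(\theta_i)\,\prod_i z_i!}\prod_{i=1}^d x_i^{\theta_i + z_i - 1}.
\]

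Next I will read off the conditional: as a function of $x$ (with $z$ fixed), the joint is proportional to $\prod_i x_i^{(\theta_i + z_i) - 1}$, which is the kernel of a $\mathrm{Dir}(\theta_1 + z_1, \ldots, \theta_d + z_d)$ density on the simplex. Since the simplex has full measure under both sides, this identifies $g(X) \mid f(X) = z$ as $\mathrm{Dir}(\theta_1 + f_1(X), \ldots, \theta_d + f_d(X))$, giving the second claim.

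For the marginal, I will integrate $x$ out of the joint over the $(d-1)$-simplex using the Dirichlet normalization identity $\int_{\Delta} \prod_i x_i^{\alpha_i - 1}\,dx = \frac{\prod_i \Gamma(\alpha_i)}{\Gamma(\sum_i \alpha_i)}$ with $\alpha_i = \theta_i + z_i$. Because $\sum_i z_i = B$ by the support of the multinomial, $\sum_i(\theta_i + z_i) = B + \sum_i \theta_i$, and so
\[
p_\theta(z) \;=\; \frac{\Gamma\!\left(\sum_i \theta_i\right) B!}{\prod_i \Gamma(\theta_i)\,\prod_i z_i!} \cdot \frac{\prod_i \Gamma(\theta_i + z_i)}{\Gamma\!\left(B + \sum_i \theta_i\right)} \;=\; \frac{\Gamma\!\left(\sum_i \theta_i\right) B!}{\Gamma\!\left(B + \sum_i \theta_i\right)} \prod_{i=1}^d \frac{\Gamma(\theta_i + z_i)}{\Gamma(\theta_i)\,z_i!},
\]
which is the claimed Dirichlet-multinomial PMF.

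Finally, for the special case $\theta = (\theta, 1, \ldots, 1)$, I will substitute $\theta_1 = \theta$ and $\theta_i = 1$ for $i \geq 2$. Using $\Gamma(1) = 1$ and $\Gamma(1 + z_i) = z_i!$ for $i \geq 2$, the factors for $i \geq 2$ collapse to $1$, leaving $\sum_i \theta_i = \theta + d - 1$ and the stated PMF $p_\theta(z) = \frac{B!\,\Gamma(z_1 + \theta)\,\Gamma(d - 1 + \theta)}{z_1!\,\Gamma(\theta)\,\Gamma(B + d - 1 + \theta)}$; when $\theta = 1$ the ratio $\Gamma(z_1 + 1)/z_1! = 1$ and every mass point has probability $B!\,(d-1)!/(B + d - 1)!$, recovering the uniform on the support. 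There is no real obstacle here; the only mild care needed is to keep track that $\sum_i z_i = B$ is a support constraint (so the integration identity applies directly) rather than an extra condition to enforce.
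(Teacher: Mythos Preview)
Your proof is correct and follows essentially the same approach as the paper: both rely on the standard Dirichlet--multinomial conjugacy, with the paper simply citing \cref{thm:conjugate_reversal} and the Dirichlet-multinomial compound distribution as known facts, whereas you carry out the joint-density calculation and the simplex integral explicitly. The extra verification of the special case $(\theta,1,\ldots,1)$ and the uniform limit at $\theta=1$ is a nice touch the paper omits.
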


\begin{proof}
Apply \cref{thm:conjugate_reversal}. It is a standard result that the Dirichlet is conjugate prior to the multinomial distribution, yielding the stated conditional distributions. The compound distribution in this case is a Dirichlet-multinomial and the stated density function is again standard.
\end{proof}

\subsubsection{Binomial Decomposition}
\begin{proposition}
 \textbf{(P2)} Suppose $X \sim \mathrm{Bin}(n, \theta)$. Draw $Z \sim \mathrm{Bin}(X, p)$ where $p \in (0,1)$ is a tuning parameter. Then $f(X) = Z$ has marginal distribution $\mathrm{Bin}(n, p\theta)$; and $g(X) = X - Z$ has conditional distribution as $$p_{\theta}(g(X) = y | f(X) = z) = \frac{(n-z)!}{y!(n-z-y)!}\left[\frac{(1-p) \theta}{1-\theta}\right]^y\left[\frac{1-\theta}{1-p \theta}\right]^{n-z}.$$ Larger $p$ indicates more informative $f(X)$. 
     Note that the decomposition of Binomial is not trivially applicable to Bernoulli distribution since $X = 1$ with probability one if $Z = 1$. 
\end{proposition}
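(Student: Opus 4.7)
The statement splits into two claims: the marginal distribution of $f(X) = Z$ is $\mathrm{Bin}(n, p\theta)$, and the conditional distribution of $g(X) = X - Z$ given $f(X) = z$ has the displayed mass function. My plan is a direct computation for both, with the cleanest proof of the marginal coming from a thinning interpretation.

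For the marginal claim, I would represent $X = \sum_{i=1}^n B_i$ as a sum of $n$ independent $\mathrm{Ber}(\theta)$ trials, and realize $Z \mid X$ by independently thinning each successful $B_i$ with a $\mathrm{Ber}(p)$ coin $C_i$. Then $Z = \sum_{i=1}^n B_i C_i$ is a sum of $n$ independent $\mathrm{Ber}(p\theta)$ variables, hence $\mathrm{Bin}(n, p\theta)$. An algebraic alternative is to start from
\[
P(Z=z) = \sum_{x=z}^n \binom{n}{x}\theta^x(1-\theta)^{n-x}\binom{x}{z}p^z(1-p)^{x-z},
\]
apply the identity $\binom{n}{x}\binom{x}{z} = \binom{n}{z}\binom{n-z}{x-z}$, and finish with the binomial theorem on the remaining sum over $x-z$.

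For the conditional claim, I would apply the definition
\[
P(g(X) = y \mid f(X) = z) \;=\; \frac{P(X = y+z,\, Z = z)}{P(Z = z)},
\]
factor the numerator as $P(X = y+z)\,P(Z=z \mid X=y+z) = \binom{n}{y+z}\binom{y+z}{z}\theta^{y+z}(1-\theta)^{n-y-z}p^z(1-p)^y$, and divide by the marginal from the first part. Cancellation of the combinatorial factors via $\binom{n}{y+z}\binom{y+z}{z}/\binom{n}{z} = \binom{n-z}{y}$ yields the prefactor $(n-z)!/[y!(n-y-z)!]$ shown in the statement.

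The only remaining work is to verify that the residual product of powers matches the bracketed form. Concretely, what is left after cancellation is $\theta^y(1-p)^y(1-\theta)^{n-y-z}(1-p\theta)^{-(n-z)}$, and I would split $(1-\theta)^{n-y-z} = (1-\theta)^{n-z}(1-\theta)^{-y}$ to separate the $y$-dependent factor $[(1-p)\theta/(1-\theta)]^y$ from the $(n-z)$-dependent factor $[(1-\theta)/(1-p\theta)]^{n-z}$. The main obstacle is purely bookkeeping, and no deep probabilistic argument is required beyond the thinning observation; as a sanity check, one may verify that summing the resulting expression over $y \in \{0,\ldots,n-z\}$ equals one by a binomial expansion, which also confirms that $g(X) \mid f(X) = z$ is $\mathrm{Bin}(n-z, (1-p)\theta/(1-p\theta))$, an equivalent restatement of the displayed mass function.
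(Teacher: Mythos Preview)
Your proposal is correct and mirrors the paper's proof almost exactly: the paper uses the same thinning representation $Z = \sum_{i=1}^n Z_i X_i$ for the marginal, and the same Bayes' rule computation with the identity $\binom{n}{y+z}\binom{y+z}{z}/\binom{n}{z} = \binom{n-z}{y}$ for the conditional. Your additional algebraic alternative for the marginal and the sanity check identifying the conditional as $\mathrm{Bin}(n-z,(1-p)\theta/(1-p\theta))$ are nice extras but not in the paper.
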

\begin{proof}
For the binomial decomposition, let $X = \sum_{i=1}^{n} X_{i}$ with $X_{i} \sim \text{Ber}(\theta)$. Then $Z = \sum_{j=1}^{X} Z_{i} = \sum_{j=1}^{n} Z_{i}X_{i}$. When written this way, clearly $Z \sim \text{Binom}\left(n,p\theta \right)$.

For the conditional distribution, we have that:
\begin{align*}
\mathbb{P}(g(X) = y  \mid Z=z) & =\frac{\mathbb{P}(g(X) = y  \text { and } Z=z)}{\mathbb{P}(Z=z)} \\
& =\frac{\mathbb{P}(X=z+y \text { and } Z=z)}{\mathbb{P}(Z=z)} \\
& =\frac{\mathbb{P}(Z=z \mid g(X)=z+y) \mathbb{P}(g(X)=z+y)}{\mathbb{P}(Z=z)} \\
& =\frac{\left(\begin{array}{c}
z+y \\
z
\end{array}\right) p^z(1-p)^y\left(\begin{array}{c}
n \\
z+y
\end{array}\right) \theta^{z+y}(1-\theta)^{n-z-y}}{\left(\begin{array}{l}
n \\
z
\end{array}\right)(p \theta)^z(1-p \theta)^{n-z}} \\
& =\frac{(n-z)!}{y!(n-z-y)!}\left[\frac{(1-p) \theta}{1-\theta}\right]^y\left[\frac{1-\theta}{1-p \theta}\right]^{n-z}.
\end{align*}
\end{proof}
\subsubsection{Bernoulli Decomposition}

\begin{proposition}
\textbf{(P2)} Suppose $X \sim \mathrm{Ber}(\theta)$. Draw $Z \sim \mathrm{Ber}(p)$ where $p \in (0,1)$ is a tuning parameter. Then
     $f(X) = X(1 - Z) + (1 - X)Z$ has marginal distribution $\mathrm{Ber}(\theta + p - 2p\theta)$; and $g(X) = X$ has conditional distribution (given $f(X)$) as $\mathrm{Ber}\left(\frac{\theta}{\theta + (1-\theta) [p/(1-p)]^{2f(X) - 1}}\right)$.  
     Smaller $p$ indicates more information in $f(X)$.
\end{proposition}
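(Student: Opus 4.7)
The plan is to exploit the fact that $f(X)$ is simply the XOR of $X$ and $Z$: namely, $f(X) = 1$ exactly when $X \neq Z$, and $f(X) = 0$ exactly when $X = Z$. With this observation, the marginal distribution of $f(X)$ is immediate by direct enumeration of the two disjoint cases
\[
\Prob(f(X) = 1) = \Prob(X = 1, Z = 0) + \Prob(X = 0, Z = 1) = \theta(1-p) + (1-\theta)p = \theta + p - 2p\theta,
\]
using that $X \independent Z$. This establishes $f(X) \sim \mathrm{Ber}(\theta + p - 2p\theta)$.

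For the conditional distribution of $g(X) = X$ given $f(X)$, I would apply Bayes' rule in the two cases $f(X) \in \{0,1\}$ separately. In the case $f(X)=1$, we compute
\[
\Prob(X = 1 \mid f(X) = 1) = \frac{\Prob(f(X)=1 \mid X=1)\,\Prob(X=1)}{\Prob(f(X)=1)} = \frac{(1-p)\theta}{\theta + p - 2p\theta},
\]
and analogously for $f(X)=0$ one gets $p\theta/(\theta p + (1-\theta)(1-p))$. The final step is to check that both expressions agree with the unified formula $\theta/\bigl(\theta + (1-\theta)[p/(1-p)]^{2f(X)-1}\bigr)$ by substituting $f(X)=1$ (yielding exponent $+1$) and $f(X)=0$ (yielding exponent $-1$), and multiplying numerator and denominator through by $(1-p)$ or $p$ as needed to clear fractions inside the denominator.

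There is essentially no obstacle here; the only thing to be careful about is the algebraic simplification in verifying that the two case-by-case conditional probabilities coincide with the single compact formula involving the odds ratio $p/(1-p)$ raised to the power $2f(X) - 1$. The statement about ``smaller $p$'' meaning more information in $f(X)$ can be justified by noting that as $p \to 0$, $f(X) \to X$ almost surely (so $f$ becomes fully informative), while at $p = 1/2$ the XOR is independent of $X$ and therefore $f(X)$ carries no information about $\theta$.
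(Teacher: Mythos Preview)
Your proposal is correct and follows essentially the same approach as the paper: compute the marginal of $f(X)$ by enumerating the two disjoint events, then compute the conditional of $g(X)=X$ given $f(X)$ case by case via Bayes' rule, and finally match both cases to the compact odds-ratio formula. Your XOR framing is a nice conceptual touch not made explicit in the paper, and your value for the $f(X)=0$ case, $p\theta/(\theta p + (1-\theta)(1-p))$, is in fact the correct one (the paper's displayed numerator $p(1-\theta)$ is a typo).
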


\begin{proof}
We have that $\mathbb{E}[f(X)] = \mathbb{E}\left[X(1 - Z) + (1 - X)Z\right] = \theta(1-p) + p(1-\theta) = \theta + p - 2p\theta $. Therefore, $f(X) \sim \text{Ber}(\theta+p-2p\theta)$.

For the conditional distribution, observe that
\begin{align*}
\mathbb{P} \left[g(X) = 1 | f(X) =1\right] = \frac{\mathbb{P} \left[g(X) = 1,f(X) =1\right]  }{ \mathbb{P} \left[f(X) =1\right]} = \frac{(1-p)\theta }{ \theta + p - 2p\theta}, \\  
\end{align*}
\begin{align*}
\mathbb{P} \left[g(X) = 1 | f(X) =0\right] = \frac{\mathbb{P} \left[g(X) = 1,f(X) =0\right]  }{ \mathbb{P} \left[f(X) =0\right]} = \frac{p(1-\theta) }{1 -  \theta - p + 2p\theta}, \\  
\end{align*}
and then recognize that this aligns with the values given by $\mathbb{P} \left[g(X) = 1 | f(X) \right] = \frac{\theta}{\theta + (1-\theta) [p/(1-p)]^{2f(X) - 1}}$.
\end{proof}
\subsubsection{Categorical Decomposition}
\begin{proposition}
\textbf{(P2)} Suppose $X \sim \mathrm{Cat}\left(\theta_{1},...,\theta_{d}\right)$. Draw $Z \sim \mathrm{Ber}(p)$ where $p \in (0,1)$ is a tuning parameter. Also draw $D \sim \mathrm{Cat}\left(\frac{1}{d},...,\frac{1}{d}\right)$. Let $f(X) = X(1-Z) + DZ$ and $g(X) = X$. Then $f(X) \sim \mathrm{Cat}\left(\phi_{1},...,\phi_{d}\right)$ with $\phi_{i} = (1-p)\theta_{i} + \frac{p}{d}$. Furthermore $g(X) | f(X)$ has distribution 
    \[p_{\theta} \left(g(X) = s | f(X) = t \right) =   
    \begin{cases}
        \frac{(1-p+ \frac{p}{d})\theta_s}{(1-p)\theta_{s} +p/d } & \text{if $s=t$} \\
         \frac{\theta_s \frac{p}{d}}{(1-p)\theta_{t} +p/d } & \text{if $s \ne t$.} \\
    \end{cases}\]
\end{proposition}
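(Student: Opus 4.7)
The plan is to verify the two claims (marginal of $f(X)$ and conditional of $g(X) \mid f(X)$) by directly applying the law of total probability and Bayes' rule, using the structural fact that $f(X)$ is a mixture: with probability $1-p$ (the event $Z=0$) it equals $X$, and with probability $p$ (the event $Z=1$) it equals $D$, which is independent of $X$.

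First, I would compute the marginal. For any $t \in [d]$, by conditioning on $Z$,
\begin{align*}
\mathbb{P}(f(X) = t) &= \mathbb{P}(Z=0)\,\mathbb{P}(X = t) + \mathbb{P}(Z=1)\,\mathbb{P}(D = t) \\
&= (1-p)\theta_t + p/d,
\end{align*}
where I use that $X \independent Z$ and $D \independent (X,Z)$ by construction. This gives $f(X) \sim \mathrm{Cat}(\phi_1,\ldots,\phi_d)$ with $\phi_i = (1-p)\theta_i + p/d$.

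For the conditional distribution, I would apply Bayes' rule to $\mathbb{P}(g(X) = s \mid f(X) = t) = \mathbb{P}(X=s, f(X)=t)/\mathbb{P}(f(X)=t)$. Split the joint event $\{X=s, f(X)=t\}$ by the value of $Z$. Since $f(X)=t$ happens iff either ($Z=0$ and $X=t$) or ($Z=1$ and $D=t$), and both $Z$ and $D$ are independent of $X$, I get
\begin{align*}
\mathbb{P}(X=s, f(X)=t) &= \mathbb{P}(X=s)\bigl[\mathbbm{1}_{s=t}\mathbb{P}(Z=0) + \mathbb{P}(Z=1)\mathbb{P}(D=t)\bigr] \\
&= \theta_s\bigl[\mathbbm{1}_{s=t}(1-p) + p/d\bigr].
\end{align*}
Dividing by $\phi_t = (1-p)\theta_t + p/d$ yields $\theta_s(1-p+p/d)/\phi_t$ when $s=t$ and $\theta_s(p/d)/\phi_t$ when $s \neq t$, matching the stated formula.

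No real obstacle is expected: the argument is a bookkeeping exercise in the law of total probability, and the only subtle point is being careful that $D$ and $Z$ are independent of $X$ so the factorizations above are valid. The generalization noted in the remark (replacing $(1/d,\ldots,1/d)$ by an arbitrary probability vector for $D$) follows by the identical computation with $1/d$ replaced by $\mathbb{P}(D=t)$, and the reduction to the Bernoulli case when $d=2$ and $D = 1-X$ is straightforward to check by substitution.
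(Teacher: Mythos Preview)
Your proposal is correct and takes essentially the same approach as the paper: both compute the conditional distribution via Bayes' rule by finding the joint $\mathbb{P}(X=s,\,f(X)=t)$ through a case split on $Z$. Your write-up is in fact slightly more complete (you also verify the marginal of $f(X)$, which the paper's proof omits) and more streamlined (handling $s=t$ and $s\neq t$ in one line via the indicator $\mathbbm{1}_{s=t}$).
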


\begin{proof}
First, assume $s\ne t$,
\begin{align*}
\mathbb{P} \left(g(X) = s | f(X) = t \right) &=  \frac{ \mathbb{P} \left(f(X) = t | g(X) = s \right) \mathbb{P} \left(g(X) =s\right)    }{ \mathbb{P} \left( f(X) = t, g(X) = s \right) + \mathbb{P} \left( f(X) = t, g(X) \ne s \right)  }\\
&=  \frac{\theta_s \frac{p}{d}   }{ (1-p)\theta_{t} +p/d  }.\\
\end{align*}
Similarly, if $s= t$,
\begin{align*}
\mathbb{P} \left(g(X) = s | f(X) = s \right) &=  \frac{ \mathbb{P} \left(f(X) = s | g(X) = s \right) \mathbb{P} \left(g(X) =s\right)    }{ \mathbb{P} \left( f(X) = s, g(X) = s \right) + \mathbb{P} \left( f(X) = s, g(X) \ne s \right)  }\\
&= \frac{(1-p+ \frac{p}{d})\theta_s}{(1-p)\theta_{s} +p/d }.\\
\end{align*}

\end{proof}

\subsubsection{Poisson Decomposition}
\begin{proposition} \textbf{(P1)}  Suppose $X \sim \mathrm{Poi}(\mu)$. Draw $Z \sim \mathrm{Bin}(X, p)$ where $p \in (0,1)$ is a tuning parameter. Then
     $f(X) = Z$ has marginal distribution $\mathrm{Poi}(p\mu)$; and $g(X) = X - Z$ is independent of $f(X)$ and distributes as $\mathrm{Poi}((1 - p)\mu)$. Larger $p$ indicates more informative $f(X)$. 
     
     Alternatively, draw $Z \sim \mathrm{Poi}(p)$. Construct $f(X) = X+Z \sim \mathrm{Poi}(\mu +p)$. Letting $X=g(X)$ gives $g(X) | f(X) \sim \mathrm{Bin}\left(f(X),\frac{\mu}{\mu + p} \right)$. Larger $p$ corresponds to a less informative $f(X)$ (and more informative $g(X)|f(X)$).
\end{proposition}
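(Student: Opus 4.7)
The plan is to verify the two claims by direct manipulation of probability mass functions, since both parts reduce to elementary but well-known calculations about Poisson and binomial distributions.

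For the first claim (P1), I would compute the joint PMF of the pair $\bigl(f(X), g(X)\bigr) = (Z, X-Z)$ directly. Conditioning on $X$ and using $Z \mid X \sim \mathrm{Bin}(X, p)$, we get
\begin{equation*}
\mathbb{P}(Z = z,\, X - Z = y) = \mathbb{P}(X = z+y)\,\mathbb{P}(Z = z \mid X = z+y) = \frac{e^{-\mu}\mu^{z+y}}{(z+y)!}\binom{z+y}{z}p^{z}(1-p)^{y}.
\end{equation*}
The key algebraic step is rewriting $\mu^{z+y} = \mu^{z}\mu^{y}$ and $e^{-\mu} = e^{-p\mu}e^{-(1-p)\mu}$ and cancelling the $(z+y)!$ against the binomial coefficient, which factors the joint PMF as the product of a $\mathrm{Poi}(p\mu)$ PMF at $z$ and a $\mathrm{Poi}((1-p)\mu)$ PMF at $y$. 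This simultaneously gives the two marginals and the independence claim.

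For the second claim (P2), the marginal distribution of $f(X) = X + Z$ follows immediately from the standard convolution identity for independent Poissons: $\mathrm{Poi}(\mu) \ast \mathrm{Poi}(p) = \mathrm{Poi}(\mu + p)$. For the conditional, I would apply Bayes' rule:
\begin{equation*}
\mathbb{P}(X = k \mid X + Z = n) = \frac{\mathbb{P}(X = k)\,\mathbb{P}(Z = n-k)}{\mathbb{P}(X+Z = n)} = \frac{\tfrac{e^{-\mu}\mu^{k}}{k!} \cdot \tfrac{e^{-p}p^{n-k}}{(n-k)!}}{\tfrac{e^{-(\mu+p)}(\mu+p)^{n}}{n!}}.
\end{equation*}
Grouping terms, the exponentials cancel and the ratio collapses to $\binom{n}{k}\bigl(\tfrac{\mu}{\mu+p}\bigr)^{k}\bigl(\tfrac{p}{\mu+p}\bigr)^{n-k}$, the PMF of $\mathrm{Bin}\bigl(n, \tfrac{\mu}{\mu+p}\bigr)$ evaluated at $k$. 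Setting $g(X) = X$ and $n = f(X)$ gives the stated conditional distribution.

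There is no real obstacle here: both halves are textbook-level identities (Poisson thinning and Poisson conditioning), and the only thing to be careful about is bookkeeping of the $\mu$ versus $p$ in the second part so that the success probability in the resulting binomial has the correct orientation (it is $\mu/(\mu+p)$ because we are conditioning on the sum and asking about the contribution of $X$). No additional assumptions from earlier in the paper are needed beyond the elementary properties of the Poisson distribution.
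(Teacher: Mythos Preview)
Your proposal is correct and in fact gives more detail than the paper's own proof, which simply states that both decompositions are standard and cites \cite{last_penrose_2017}. Your direct computation of the joint PMF for Poisson thinning and your Bayes' rule derivation for the conditioning-on-the-sum identity are exactly the textbook arguments the paper is gesturing at, so there is nothing to add or correct.
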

\begin{proof}
Both of these decompositions are standard. See \cite{last_penrose_2017}.
\end{proof}

\subsubsection{Negative Binomial Decomposition}

\begin{proposition}
\textbf{(P2)} Suppose $X \sim \mathrm{NB}(r, \theta)$. Draw $Z \sim \mathrm{Bin}(X, p)$ where $p \in (0,1)$ is a tuning parameter. Then $f(X) = Z$ has marginal distribution $\mathrm{NB}(r, \frac{\theta}{\theta + p - p\theta})$; and $g(X) = X - Z$ has conditional distribution $\mathrm{NB}(r + Z, \theta+p - p \theta)$. Larger $p$ indicates more informative $f(X)$.
    A special case of Negative Binomial is geometric distribution where $r = 1$.
\end{proposition}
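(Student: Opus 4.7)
The plan is to leverage the standard Poisson--Gamma mixture representation of the negative binomial distribution together with Poisson thinning, which has already been established for the Poisson decomposition just above. Write $X \mid \Lambda \sim \mathrm{Poi}(\Lambda)$ with $\Lambda \sim \mathrm{Gamma}(r,\beta)$ where $\beta = \theta/(1-\theta)$; marginalizing over $\Lambda$ recovers $X \sim \mathrm{NB}(r,\theta)$. Because $Z \mid X \sim \mathrm{Bin}(X,p)$, Poisson thinning gives that \emph{conditional on $\Lambda$}, the variables $Z \sim \mathrm{Poi}(p\Lambda)$ and $X - Z \sim \mathrm{Poi}((1-p)\Lambda)$ are independent.

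First I would obtain the marginal law of $Z$. The random variable $p\Lambda$ is $\mathrm{Gamma}(r,\beta/p)$, so mixing $\mathrm{Poi}(p\Lambda)$ against this Gamma yields $Z \sim \mathrm{NB}(r, \beta/(\beta+p))$. Substituting $\beta = \theta/(1-\theta)$ and simplifying gives the second parameter as $\theta/(\theta + p - p\theta)$, matching the claim. For the conditional distribution of $g(X) = X - Z$ given $Z = z$, I would invoke Gamma--Poisson conjugacy on the mixing variable: $\Lambda \mid Z = z \sim \mathrm{Gamma}(r + z, \beta + p)$. Since $X - Z$ and $Z$ are conditionally independent given $\Lambda$, the conditional law of $X - Z$ given $Z = z$ is the Poisson--Gamma mixture of $\mathrm{Poi}((1-p)\Lambda)$ against the updated Gamma, which is $\mathrm{NB}(r+z, (\beta+p)/(\beta + 1))$. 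Plugging in $\beta = \theta/(1-\theta)$ reduces the second parameter to $\theta + p - p\theta$, as required.

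The only real obstacle is the bookkeeping between the two common parameterizations of the Gamma (rate vs.\ scale) and of the NB (success vs.\ failure probability), where a sign or inversion error is easy to commit; I would double-check by verifying the boundary cases $p \to 0$ (in which $f(X) \equiv 0$ and $g(X) \mid f(X) = X$ so the conditional must collapse to $\mathrm{NB}(r,\theta)$) and $p \to 1$ (in which $f(X) = X$ and $g(X) \mid f(X) = 0$ so the conditional must be a point mass at $0$, i.e.\ $\mathrm{NB}(r+z,1)$). As an independent sanity check, a purely combinatorial derivation is available: $P(Z = z) = \sum_{k\geq z}\binom{k+r-1}{k}\theta^r(1-\theta)^k \binom{k}{z}p^z(1-p)^{k-z}$, and after reindexing $k = z + j$, applying the identity $\binom{z+j+r-1}{j}\binom{z+j}{z}\bigl/\binom{z+r-1}{z} = \binom{z+j+r-1}{j}$ and summing with the negative binomial series $\sum_{j\geq 0}\binom{z+j+r-1}{j}x^j = (1-x)^{-(z+r)}$ at $x = (1-p)(1-\theta)$, one recovers the same NB marginal; dividing $P(X = y + z, Z = z)$ by this yields the conditional directly.
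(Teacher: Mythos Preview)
Your proof is correct and takes a genuinely different route from the paper. The paper proceeds by direct summation: it writes $\mathbb{P}(Z=k)=\sum_{m\ge k}\mathbb{P}(X=m)\mathbb{P}(Z=k\mid X=m)$, expands the binomial coefficients, reindexes by $\alpha=m-k$, recognizes a negative-binomial normalizing sum, and simplifies; the conditional is then obtained by computing the joint $\mathbb{P}(Z=k,\,X-Z=m)$ and dividing by the marginal just derived. In contrast, you invoke the Poisson--Gamma mixture representation $X\mid\Lambda\sim\mathrm{Poi}(\Lambda)$, $\Lambda\sim\mathrm{Gamma}(r,\beta)$ and reduce everything to Poisson thinning plus Gamma--Poisson conjugacy, so that both the marginal of $Z$ and the conditional of $X-Z$ fall out as standard Poisson--Gamma mixtures with no combinatorial manipulation at all. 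Your argument is shorter and more structural, and it makes transparent \emph{why} the conditional is again negative binomial with shape $r+z$ (it is exactly the posterior shape update). The paper's approach is more elementary in the sense of being self-contained, while yours reuses machinery that the paper has in fact already set up for the Gamma and Poisson decompositions; in that sense your route fits the surrounding material especially well. One small slip in your combinatorial sanity check: the identity as written, $\binom{z+j+r-1}{j}\binom{z+j}{z}\big/\binom{z+r-1}{z}=\binom{z+j+r-1}{j}$, is a typo (the left factor should be $\binom{z+j+r-1}{z+j}$, coming from the NB pmf of $X$ at $z+j$); with that correction the series argument goes through and matches the paper's calculation.
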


\begin{proof} For the marginal distribution, observe that
\begin{align*}
\mathbb{P} \left(Z = k \right) &= \sum_{m=k}^{\infty} \mathbb{P} \left( X = m  \right) \mathbb{P} \left( Z = k | X = m \right)  \\
&=\sum_{m=k}^{\infty} { m +r -1 \choose m } (1- \theta)^{m} \theta^{r} { m \choose k } p^{k} (1-p)^{m-k} \\
&=\sum_{m=k}^{\infty} \frac{ (m+r-1)!}{(r-1)!m!} \frac{ m!}{(m-l)!k!} (1- \theta)^{m} (1-p)^{m}\theta^{r} p^{k} (1-p)^{-k} \\
&=\sum_{m=k}^{\infty}  { m +r -1 \choose m-k } \left((1-\theta)(1-p) \right)^{m} \theta^{r} (1-p)^{-k} {r+ k -1 \choose k}.
\end{align*}
Letting $\alpha = m-k $, we can rearrange terms as 
\begin{align*}
\underbrace{\sum_{\alpha=0}^{\infty}{ \alpha + r + k -1 \choose r +k -1} \left[ 1 - (1-\theta)(1-p)\right]^{r+k} }_{=1} \left[(1-\theta)(1-p) \right]^{\alpha} \left[(1-\theta)(1-p)  \right]^{k}\left[ 1 - (1-\theta)(1-p)\right]^{-r-k} \\ 
\theta^{r} p^{k} (1-p)^{-k} { r + k -1 \choose k}.
\end{align*}
Combining terms gives us:
\begin{align*}
\mathbb{P} \left(Z = k \right) &= [(1-\theta)(1-p)]^{k}  [1-(1-\theta)(1-p)]^{-r-k} \theta^{r} p^{k} (1-p)^{-k} { r + k -1 \choose r - 1} \\
&= { r + k -1 \choose r - 1}  \left[ \frac{\theta}{\theta+p-\theta p} \right]^{r}\left[ 1-\frac{\theta}{\theta+p-\theta p} \right]^{k}. \\
\end{align*}
Thus, $Z \sim \text{NB} \left(r, \frac{\theta}{\theta + p - p\theta} \right)$. 

For the conditional distribution, first we compute the joint density.
\begin{align*}
\mathbb{P} \left(Z = k | X-Z = m \right) &= \mathbb{P} \left(X = k = m \right)\mathbb{P} \left(Z=k | X = k+m\right) \\
&= {k + m + r -1 \choose k + m } (1-\theta)^{k+m} \theta^{r} { k + m \choose k } p^{k} (1-p)^{m} \\
&= {k + m + r -1 \choose m }  {r+k - 1\choose k }  \left[(1-\theta)(1-p)\right]^{k+m} \theta^{r}p^{k} (1-p)^{-k}.
\end{align*} 
Now, substituting in the marginal distribution from above will recover the conditional density. 

\begin{align*}
\mathbb{P} \left(X- Z = m | Z = k \right) &= \frac{\mathbb{P} \left(Z = k | X-Z = m \right)}{ \mathbb{P} \left(Z =k \right) } \\
&= {k + m + r -1 \choose k + m }\left[(1-\theta)(1-p)\right]^{m}\left[1-(1-\theta)(1-p)\right]^{r+k},
\end{align*} 
which is the density of $\text{NB} \left(r + Z, \theta + p - \theta p \right).$

\end{proof}

\subsection{Proof of \cref{thm:conjugate_reversal}}
\begin{proof}
Note that because the density $p(z \mid  x)$ must integrate to 1, we can view the function $H( \theta_1,  \theta_2)$ as a normalization factor since
$$H(\theta_{1},\theta_{2}) = \frac{1}{\int_{-\infty}^{\infty} \exp\{\theta_1^T  x -  \theta_2^T  A( x) dx\}} \text{.}$$
Therefore, to compute the marginal density, we have
\begin{align*}
    p( z |  \theta_1, \theta_2,\theta_3) &= \int_{-\infty}^{\infty} h(z) H(\theta_{1},\theta_{2})  \exp\{ (T(z)+\theta_1)^T  x -  ( \theta_2+ \theta_3)^T A( x)\} dx\\
    &= h(z)\frac{H( \theta_1,  \theta_2)}{H( \theta_1 +  T( z),  \theta_2 + \theta_3)}. 
\end{align*} 
Similarly, the computation of the conditional density is straightforward
\begin{align*}
p( x | z, \theta_1, \theta_2, \theta_3) &=\frac{h(z) H(\theta_{1},\theta_{2})  \exp\{ (T(z)+\theta_1)^T  x -  ( \theta_2+ \theta_3)^T A( x)\}}{h(z)\frac{H( \theta_1,  \theta_2)}{H( \theta_1 +  T( z),  \theta_2 + \theta_3)}} \\
&= H(\theta_{1} + T(z),\theta_{2} + \theta_{3})\exp\{ x^T  (\theta_{1} + T(z)) -  (\theta_2 + \theta_3)^T  A(x)\} \\
&= p( x \mid  \theta_1 +  T(z),  \theta_2 +  \theta_3). 
\end{align*}
This completes the proof.
\end{proof}

\subsection{Proof of \cref{thm:normal_regression}}
\begin{proof} This follows as a standard application of OLS properties. First, write
\begin{align*}
	\widehat{\beta}(M) = & (X_{M}^T X_{M})^{-1}X_{M}^T g(Y)  = (X_{M}^T X_{M})^{-1}X_{M}^T [\mu+ \epsilon - \tfrac{1}{\tau}Z ] \\
	= & \beta^{\star}(M) + (X_{M}^T X_{M})^{-1}X_{M}^T [\epsilon - \tfrac{1}{\tau}Z ]. 
\end{align*}
From Section~\ref{sec:list_decomp}, we know that $(\epsilon -\frac{1}{\tau}Z) \sim N(0, (1+\tau^{-2}) \Sigma)$. 
Therefore $$\widehat{\beta}(M) \sim N\left(\beta^{\star}(M),  (1+\tau^{-2})(X_{M}^{T}X_{M})^{-1}X_{M}^{T} \Sigma X_{M} (X_{M}^{T}X_{M})^{-1} \right).$$ The coverage statement then follows straightforwardly from the definition of a CI and the properties of the multivariate Gaussian distribution. 
\end{proof}

\subsection{Proof of \cref{thm:trendfilter_pointwise_method}}
\begin{proof}
	It follows directly from Theorem~\ref{thm:normal_regression} that
	\[\widehat{\beta}(A) \sim  N\left(\beta^{\star}(A),  (1+\tau^{-2})(A^{T}A)^{-1}A^{T} \Sigma A (A^{T}A)^{-1} \right).\]
	To conclude, simply multiply by $a(x_i)^{T}$ and apply standard properties of the Gaussian distribution. 
\end{proof}

\subsection{Extension to unknown variance in Gaussian linear regression} \label{sec:unknown_variance}
In the case where $p$ is fixed, $n \rightarrow \infty$, and $\Sigma = \sigma^{2}I_{n}$ we can extend the argument in Section~\ref{sec:linreg} to accommodate unknown variance. In this setting, if we take $\hat{\sigma}^{2}$ to be the variance estimated from the residuals by fitting the \emph{full} model with all $p$ covariates, it is a consistent estimate of $\sigma^{2}$. See section~5.3 of \cite{selective_inference_asymptotics} for further details.

We can then modify the definition of $f(Y)$ and $g(Y)$ as follows:
\begin{equation} \label{eqn:update_blur_ds}
f(Y) = Y + \hat{\sigma}\tau Z, \text{~  and ~ } g(Y) = Y - \frac{\hat{\sigma}}{\tau}Z. \end{equation}
where $Z\sim N(0,I_{n})$. We can then modify \cref{thm:normal_regression} as follows:

\begin{corollary} 
Let $\widehat{\beta}(M)$ be defined as in~\eqref{eqn:beta-hat} and $\beta^{\star}(M)$ be defined as in~\eqref{eqn:beta-star}. If $f(Y)$ and $g(Y)$ are defined as in~\eqref{eqn:update_blur_ds} and $\Sigma = \sigma^{2}I_{n}$, then the following holds:
\[\widehat{\beta}(M) \overset{d}{\to}  N\left(\beta^{\star}(M),  (1+\tau^{-2}) \sigma^{2} (X_{M}^{T}X_{M})^{-1} \right).\]
Furthermore, if $\hat{\sigma}$ is a consistent estimate of $\sigma$, we can form an asymptotic $1-\alpha$ CI for the $k$th element of $\beta^{\star}(M)$ as
\[ \widehat{\beta}^{k}(M) \pm \hat{\sigma} z_{\alpha/{2}} \sqrt{(1+\tau^{-2}) \left[(X_{M}^{T}X_{M})^{-1}\right]_{kk}}.\]
\end{corollary}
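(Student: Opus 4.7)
The plan is to mirror the proof of \cref{thm:normal_regression} while absorbing the fact that $\hat\sigma$ is random via Slutsky's theorem. First I would substitute $g(Y) = \mu + \epsilon - (\hat\sigma/\tau) Z$ into the definition of $\widehat\beta(M)$ from \eqref{eqn:beta-hat} to obtain
\[
\widehat{\beta}(M) - \beta^*(M) \;=\; (X_M^T X_M)^{-1} X_M^T\!\left[\epsilon - \tfrac{\hat\sigma}{\tau} Z\right].
\]
Then I would perform the oracle-versus-remainder decomposition
\[
\epsilon - \tfrac{\hat\sigma}{\tau} Z \;=\; \underbrace{\bigl(\epsilon - \tfrac{\sigma}{\tau} Z\bigr)}_{W} \;+\; \underbrace{\tfrac{\sigma-\hat\sigma}{\tau}\,Z}_{R},
\]
so that the ``oracle'' piece involves only the true variance and a remainder captures the estimation error.

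The oracle piece is immediate: because $\epsilon \independent Z$ and both are centered Gaussians with variances $\sigma^2 I_n$ and $I_n$ respectively, $W \sim N(0,(1+\tau^{-2})\sigma^2 I_n)$ exactly, and premultiplying by $(X_M^T X_M)^{-1} X_M^T$ reproduces the conclusion of \cref{thm:normal_regression} with the advertised covariance. For the remainder piece, I would appeal to standard regularity on the fixed design (i.e.\ $\tfrac{1}{n} X_M^T X_M \to \Sigma_M \succ 0$) together with $Z \sim N(0,I_n)$ independent of everything else to conclude that $(X_M^T X_M)^{-1} X_M^T Z$ has covariance $(X_M^T X_M)^{-1} = O(1/n)$, so this vector is $O_p(n^{-1/2})$. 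Combined with $\hat\sigma - \sigma = o_p(1)$ from the consistency hypothesis, the contribution of $R$ to $\widehat\beta(M) - \beta^*(M)$ is $o_p(n^{-1/2})$, strictly smaller than the $O_p(n^{-1/2})$ oracle term. Slutsky's theorem then delivers the stated convergence in distribution.

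The CI coverage statement follows by studentizing the $k$-th coordinate and applying Slutsky once more: the numerator converges to a Gaussian by the first part, and the estimated standard error differs from the true one only by a factor $\hat\sigma/\sigma \to 1$ in probability, so the studentized quantity converges to $N(0,1)$ and the stated interval has asymptotic coverage $1-\alpha$.

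The main obstacle is conceptual rather than technical: $\hat\sigma$ is computed from $Y$ and therefore is \emph{not} independent of $\epsilon$, so one cannot condition on $\hat\sigma$ and treat $\epsilon$ as fresh Gaussian noise. In particular, since $\hat\sigma$ enters both $f(Y)$ and $g(Y)$, the clean independence $f(Y) \independent g(Y)$ used in \cref{sec:linreg} no longer holds, and selection based on $f(Y)$ is in principle entangled with the inference object. The decomposition above sidesteps this by showing that the entire dependence is mediated through $\hat\sigma-\sigma = o_p(1)$ and is asymptotically negligible; making a fully rigorous statement that also handles a data-dependent selected model $M$ (rather than a fixed one) would require an additional continuity / stability argument on the selection rule, which aligns with the treatment in Section~5.3 of \cite{selective_inference_asymptotics} cited immediately above the corollary.
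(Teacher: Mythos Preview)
Your argument is correct and takes a genuinely different route from the paper's. The paper first rescales $Y/\hat\sigma$, applies Slutsky to claim $Y/\hat\sigma \overset{d}{\to} N(\mu/\sigma, I_n)$, forms $f_0(Y)=Y/\hat\sigma+\tau Z$ and $g_0(Y)=Y/\hat\sigma-\tau^{-1}Z$, uses the continuous mapping theorem to get their joint (asymptotically independent) Gaussian limit, and then multiplies back by $\hat\sigma$ via Slutsky before a final continuous mapping step to reach $\widehat\beta(M)$. Your oracle/remainder decomposition instead works directly at the level of the fixed-dimensional vector $\widehat\beta(M)-\beta^*(M)$: the oracle piece $(X_M^TX_M)^{-1}X_M^T(\epsilon-\tfrac{\sigma}{\tau}Z)$ is \emph{exactly} Gaussian for every $n$, and the remainder factorises as $(\hat\sigma-\sigma)$ times a term depending only on $Z$, which is independent of $\hat\sigma$, so Slutsky applies cleanly. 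This avoids the awkwardness in the paper's argument of invoking Slutsky and the continuous mapping theorem on $n$-dimensional vectors with $n\to\infty$; indeed, since $(X_M^TX_M)^{-1/2}X_M^TZ\sim N(0,I_{|M|})$ exactly, you do not even need the design regularity $\tfrac{1}{n}X_M^TX_M\to\Sigma_M$ that you invoke---the remainder is $o_p(1)$ after studentization solely because $\hat\sigma-\sigma=o_p(1)$. What the paper's route buys in exchange is the explicit asymptotic joint law of $(f(Y),g(Y))$ with block-diagonal covariance, which makes the heuristic ``selection on $f(Y)$ is asymptotically decoupled from inference on $g(Y)$'' more visible; your final paragraph correctly notes that neither argument fully handles a data-dependent $M$ without an additional stability assumption on the selection rule.
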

\begin{proof} By assumption, we know $\hat{\sigma}^{2} \overset{p}{\to} \sigma^{2}$.  By Slutsky's theorem, $\frac{Y}{\hat{\sigma}} \overset{d}{\to} N(\frac{\mu}{\sigma},I_{n})$. Draw an independent $Z \sim N(0,I_{n})$ and define $f_{0}(Y) = \frac{Y}{\hat{\sigma}} + \tau Z$ and $g_{0}(Y) = \frac{Y}{\hat{\sigma}} - \frac{1}{\tau}Z$. Applying the continuous mapping theorem to the joint density of $(\frac{Y}{\hat{\sigma}}, Z)$ implies that $$(f_{0}(Y),g_{0}(Y))^{T} \overset{d}{\to} N \left( \frac{\mu}{\sigma}, \begin{pmatrix}
(1+ \tau^{2}) I_{n} & 0  \\
0 & (1 + \frac{1}{\tau^{2}})I_{n}
\end{pmatrix} \right).$$ Applying Slutsky's theorem once more and noting that $f(Y) = \hat{\sigma}f_{0}(Y)$ and $g(Y) = \hat{\sigma}g_{0}(Y)$ gives us
$$(f(Y), g(Y))^{T} \overset{d}{\to} N \left( \mu ,\begin{pmatrix}
\sigma^{2}(1+ \tau^{2}) I_{n} & 0  \\
0 & \sigma^{2}(1 + \frac{1}{\tau^{2}} ) I_{n}
\end{pmatrix} \right).$$ 
Applying the continuous mapping theorem once more to the definition of $\hat{\beta}(M)$ and repeating the arguments in \cref{thm:normal_regression} gives us that \[\widehat{\beta}(M) \overset{d}{\to}  N\left(\beta^{\star}(M),  (1+\tau^{-2}) \sigma^{2} (X_{M}^{T}X_{M})^{-1} \right).\]
\end{proof}
Note that this formulation is only valid in the case when the analyst bases their selection event off of the asymptotic distribution of $f(Y)$ as opposed to its finite sample realizations. Its applicability in real-world data analysis is therefore somewhat heuristic. 

\subsection{Technical exposition of QMLE procedures} \label{sec:appendix_QMLE_details}
Recall the setting described in \cref{sec:qmle}. To recap, after observing $f(Y)$ and $X$ to select a model $M \subseteq [p]$, we conduct inference using a working model of the density for $g(y_{i}) | f(Y), X$ which we denote $p(g(y_{i}) | \beta, f(Y),X_{M})$. Denote the true mean $\mu_{i} = \mathbb{E}[g(y_{i}) | f(Y),X]$ and true variance $\text{Var}\left[g(y_{i}) | f(Y),X \right] = \sigma^{2}_{i}$ with $\mu = (\mu_{1},...,\mu_{n})^{T}$.

The working model implicitly defines a quasi-likelihood function $$L_{n}(\beta) := \sum_{i=1}^{n} \log p(g(y_{i}) | \beta, f(Y), X_{M}).$$ We assume that: (i) the support of $p$ does not depend on $\beta$, (ii) the quasi-likelihood is twice differentiable with respect to $\beta$, and (iii) integration and differentiation with respect to $\beta$ may be interchanged. Under these assumptions, we make note of the corresponding quantities of interest:
$s_{n}(\beta) = \frac{\partial L_{n} }{ \partial \beta} \text{, } H_{n}(\beta) = -\frac{\partial^{2} L_{n} }{ \partial \beta \beta^{T}}  \text{, and } V_{n}(\beta) = \text{Var} \left( s_{n}(\beta) \right).$

Under the assumption that $\mathbb{E}(H_{n}(\beta))$ is positive definite, the target parameter $\beta_{n}^{\star}(M)$ is defined as the root of $E(s_{n}(\beta))$. We define $\hat{\beta}_{n}(M)$ to be an estimator which maximizes the observed score functions $s_{n}(\beta)$.
\begin{definition}[Asymptotic existence]
If $P\left(s_{n}(\hat{\beta}_{n}(M)) = 0 \text{, } H_{n}(\hat{\beta}_{n}(M)) \text{ is p.d.} \right) \rightarrow 1 $ as $n \rightarrow \infty$, then $\hat{\beta}_{n}(M)$ asymptotically exists. 
\end{definition}
Under the assumption that $\hat{\beta}_{n}(M)$ asymptotically exists, we study its convergence to $\beta_{n}^{\star}(M)$. Note that there is no guarantee that $\beta^{\star}_{n}(M)$ itself converges to a single value --- it is a \emph{sequence} that may not converge to anything in general, as it implicitly depends on the sequence of covariates that is observed for each data point. This is a consequence of the fixed-design nature of the regression problem. For random and independent $x_{i}$, we can guarantee $\beta^{\star}(M) := \beta^{\star}_{n}(M)$ will be a single unique value, as described in \cite{huber1967behavior}.

We make the following assumptions about the sequence of target parameters $\beta^{\star}_{n}(M)$.
\begin{itemize}
    \item \textbf{(D)} Divergence: $\lambda_{\text{min}} \{ V_{n}(\beta_{n}^{\star}(M)) \} \rightarrow \infty$.
    \item \textbf{(B)} Bounded from above and below: There exist a $c$, $C$, and $T$ such that
    \begin{enumerate}
        \item $\mathbb{P} \left( \lambda_{\text{min}} \{ V_{n}^{-1/2}(\beta_{n}^{\star}(M)) H_{n}(\beta_{n}^{\star}(M)) V_{n}^{-T/2}(\beta_{n}^{\star}(M)) \} \ge c > 0 \right) \rightarrow 1 $,
        \item $\mathbb{P} \left(  \lambda_{\text{max}} \{ V_{n}^{-1/2}(\beta_{n}^{\star}(M)) H_{n}(\beta_{n}^{\star}(M)) V_{n}^{-T/2}(\beta_{n}^{\star}(M)) \} \le C < \infty\right) \rightarrow 1 $.
    \end{enumerate}
    \item \textbf{(S)} Smoothness: For all $\delta >0$,
    $$ \max_{\beta : \norm{V_{n}^{T/2}(\beta_{n}^{\star}(M))  (\beta - \beta^{\star}_{n}(M)) } \le \delta} \norm{H_{n}^{-1/2}(\beta^{\star}_{n}(M)) H_{n}(\beta) H_{n}^{-T/2}(\beta^{\star}_{n}(M)) - I}_{2} \overset{p}{\rightarrow} 0.$$
    \item \textbf{(N)} Lindeberg-Feller condition for asymptotic normality holds for any $\epsilon > 0$:
    $$\sum_{i=1}^{n} \mathbb{E}[a_{i}^{T} V_{n}^{-1}(\beta_{n}^{\star}(M))a_{i} \mathds{1} \{ a_{i}^{T} V_{n}(\beta_{n}^{\star}(M)) a_{i}\ge \epsilon^{2} \} ] \rightarrow 0,$$
    where $a_{i}= s_{i} (\beta^{\star}_{n}(M)) - s_{i-1} (\beta^{\star}_{n}(M)) $.
\end{itemize}

\begin{theorem}[{\citet[Theorem 4]{fahrmeir_mle}}] \label{thm:glm_clt_est}
Suppose $p(g(y_{i})|\beta,X_{M},f(Y))$ is such that (D), (B1) and (S) hold. If $\beta^{\star}_{n}(M)$ exists, then $\hat{\beta}_{n}(M)$ exists asymptotically and 
$\hat{\beta}_{n}(M) \overset{p}{\rightarrow}   \beta^{\star}_{n}(M).$
If (B2) and (N) also hold, then 
$V_{n}^{-1/2} (\beta^{\star}_{n}(M) )  H_{n} (\beta^{\star}_{n}(M)) \left( \hat{\beta}_{n}(M) - \beta^{\star}_{n}(M) \right) \overset{d}{\rightarrow}  N(0,I).$
\end{theorem}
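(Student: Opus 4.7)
Since this is cited as Theorem~4 of \cite{fahrmeir_mle}, the cleanest proof is simply to invoke that reference, but here I outline how the argument proceeds from first principles. The strategy is the standard two-step M-estimator pipeline: first establish existence and consistency of $\widehat{\beta}_n(M)$ in the metric induced by $V_n^{T/2}(\beta^{\star}_n(M))$, then upgrade to asymptotic normality via a one-term Taylor expansion of the score at $\beta^{\star}_n(M)$.

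For the consistency half, the plan is as follows. By construction $s_n(\widehat{\beta}_n(M)) = 0$ on the event that $\widehat{\beta}_n(M)$ exists, and a mean-value expansion of the score around $\beta^{\star}_n(M)$ gives
\begin{equation*}
0 \;=\; s_n(\beta^{\star}_n(M)) \;-\; H_n(\widetilde{\beta})\bigl(\widehat{\beta}_n(M) - \beta^{\star}_n(M)\bigr)
\end{equation*}
for some $\widetilde{\beta}$ on the segment between the two. Because $\me[s_n(\beta^{\star}_n(M))] = 0$ with variance $V_n(\beta^{\star}_n(M))$, Chebyshev's inequality yields $V_n^{-1/2}(\beta^{\star}_n(M)) s_n(\beta^{\star}_n(M)) = O_p(1)$. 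Combined with the lower-eigenvalue bound in (B1) and divergence (D), this will force $\widehat{\beta}_n(M) - \beta^{\star}_n(M)$ to shrink in the appropriate $V_n^{T/2}$-norm. The smoothness condition (S) then allows one to replace $H_n(\widetilde{\beta})$ by $H_n(\beta^{\star}_n(M))$ in the above expansion uniformly over a shrinking neighborhood, which closes a fixed-point style existence argument for a root of $s_n$ on a set of probability tending to one.

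For the asymptotic normality half, the key step is to apply the Lindeberg-Feller central limit theorem to the (independent but non-identically-distributed) score increments $a_i = s_i(\beta^{\star}_n(M)) - s_{i-1}(\beta^{\star}_n(M))$, which is precisely what condition (N) is designed to guarantee. This yields
\begin{equation*}
V_n^{-1/2}(\beta^{\star}_n(M))\, s_n(\beta^{\star}_n(M)) \;\overset{d}{\longrightarrow}\; N(0,I).
\end{equation*}
Rearranging the Taylor expansion gives $\widehat{\beta}_n(M) - \beta^{\star}_n(M) = H_n(\widetilde{\beta})^{-1} s_n(\beta^{\star}_n(M))$, and by (S) combined with (B2) the sandwich ratio $H_n^{-1/2}(\beta^{\star}_n(M)) H_n(\widetilde{\beta}) H_n^{-T/2}(\beta^{\star}_n(M)) \overset{p}{\to} I$. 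Multiplying through by $V_n^{-1/2}(\beta^{\star}_n(M)) H_n(\beta^{\star}_n(M))$ and invoking Slutsky's theorem delivers the stated convergence.

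The main obstacle is not any single calculation but rather verifying that (S) genuinely justifies the substitution of $H_n(\widetilde\beta)$ by $H_n(\beta^\star_n(M))$ in a non-i.i.d.\ fixed-design setting, where the neighborhood of consistency is a $V_n$-ellipsoid whose geometry in the original parameter space changes with $n$. A uniform law of large numbers on a fixed compact set is not available, so one must lean on (S) being stated in the correct $H_n$-sandwiched form; the care taken in formulating (D), (B1), (B2), (S), (N) in \cite{fahrmeir_mle} is precisely what permits this replacement to go through uniformly as $n \to \infty$.
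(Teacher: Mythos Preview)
Your proposal is correct and, in fact, goes beyond what the paper does: the paper provides no proof of this theorem at all, simply stating it with attribution to \citet[Theorem~4]{fahrmeir_mle} and then using it as a black box to prove the subsequent \cref{thm:var_emp_fahrmeir}. Your sketch of the two-step M-estimator argument (consistency via Chebyshev plus (B1), (D), (S); then normality via Lindeberg--Feller under (N) followed by a Taylor expansion with Slutsky) is the standard route and accurately reflects how Fahrmeir's original proof proceeds.
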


In cases of correct specification,  $H_{n}\left(\beta^{\star}_{n}(M) \right) = V_{n}\left(\beta^{\star}_{n}(M) \right)$ which recovers the usual formula for the asymptotic distribution of $\hat{\beta}$. 

All that remains is finding plug-in estimators for $H_{n}$ and $V_{n}$.
To ease notation, let $\hat{H}_{n} := H_{n} (\hat{\beta}_{n}(M))$ which is consistent due to the continuous mapping theorem. There is no known consistent estimator for $V_{n} (\beta^{\star}_{n}(M) )$, but we can find an overestimate by using $\hat{V}_{n} := s_{n}(\hat{\beta}_{n}(M))s_{n}(\hat{\beta}_{n}(M))^{T}$.
This is motivated by the fact that 
\( \mathbb{E}[s_{n}(\beta^{\star}_{n}(M))s_{n}(\beta^{\star}_{n}(M))^{T}] = \text{Var}\left(s_{n}(\beta^{\star}_{n}(M))\right) + \mathbb{E}[s_{n}(\beta^{\star}_{n}(M))] \mathbb{E} [s_{n}(\beta^{\star}_{n}(M))]^{T} .
\)
The last term is positive semidefinite and will generally be small compared to the first term, since the score function should be small if the working model is chosen well. This results in asymptotically \emph{conservative} confidence intervals when used as a plug-in estimator. 
\begin{theorem} \label{thm:var_emp_fahrmeir}
Assume the conditions of \cref{thm:glm_clt_est} hold. We can then form an asymptotically conservative $1- \alpha$ CI for the $k$th element of $\beta_{n}^{\star}(M)$ as 
$ [\hat{\beta}_{n}(M)]_{k} \pm z_{\alpha/2} \sqrt{ [\hat{H}_{n}^{-1}\hat{V}_{n}\hat{H}_{n}^{-1}]_{kk}  }$.
\end{theorem}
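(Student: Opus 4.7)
The plan is to obtain the result as a straightforward corollary of \cref{thm:glm_clt_est} by combining three ingredients: (a) the asymptotic normality already proved there, (b) consistency of the plug-in Hessian $\hat H_n$, and (c) an asymptotic positive-semidefinite lower bound showing that the sandwich estimator $\hat H_n^{-1}\hat V_n \hat H_n^{-1}$ dominates the true sandwich variance $H_n^{-1}(\beta_n^\star(M))V_n(\beta_n^\star(M))H_n^{-T}(\beta_n^\star(M))$ in the limit. Once these are assembled, Slutsky's theorem delivers a Wald-type pivot whose limit is $N(0,C)$ with $C \preceq I$, so every marginal Wald interval has asymptotic coverage at least $1-\alpha$.

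First, I would rewrite the conclusion of \cref{thm:glm_clt_est} as
\[
\bigl(H_n^{-1}(\beta_n^\star) V_n(\beta_n^\star) H_n^{-T}(\beta_n^\star)\bigr)^{-1/2}(\hat\beta_n(M) - \beta_n^\star(M)) \overset{d}{\to} N(0,I).
\]
This already gives a (non-implementable) CI with exact asymptotic coverage. The task is therefore to replace the unknown sandwich matrix by $\hat H_n^{-1}\hat V_n\hat H_n^{-1}$ without losing coverage. For the Hessian, the smoothness condition (S) combined with the consistency $\hat\beta_n(M)\overset{p}{\to}\beta_n^\star(M)$ (also provided by \cref{thm:glm_clt_est}) gives
\[
H_n^{-1/2}(\beta_n^\star)\,\hat H_n\, H_n^{-T/2}(\beta_n^\star) \overset{p}{\to} I,
\]
so that $\hat H_n$ may be freely substituted for $H_n(\beta_n^\star)$ in the normalization via Slutsky.

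Second, I would handle $\hat V_n$ via the dominance identity highlighted in the text: at the \emph{population} value of the score,
\[
\mathbb{E}\!\left[s_n(\beta_n^\star)s_n(\beta_n^\star)^T\right] \;=\; V_n(\beta_n^\star) + \mathbb{E}[s_n(\beta_n^\star)]\mathbb{E}[s_n(\beta_n^\star)]^T \;\succeq\; V_n(\beta_n^\star).
\]
I would then argue that, after normalizing by $V_n^{-1/2}(\beta_n^\star)$, the random matrix $V_n^{-1/2}\hat V_n V_n^{-T/2}$ converges in probability to some positive semidefinite $C$ with $C \succeq I - o_p(1)$. This step uses the smoothness (S) to Taylor-expand $s_n(\hat\beta_n)$ around $\beta_n^\star$ and the bounds in (B) to control the linearization remainder; condition (N) (Lindeberg--Feller) guarantees the requisite concentration of the leading term. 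The main obstacle is precisely this step, because $\hat V_n$ is evaluated at $\hat\beta_n$ rather than $\beta_n^\star$, and a naive plug-in could in principle shrink the estimator below $V_n$. The key is to show that the shift from $\beta_n^\star$ to $\hat\beta_n$ only introduces terms of strictly smaller order than $V_n(\beta_n^\star)$, so that asymptotic dominance is preserved.

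Third and finally, I would combine (a)--(c) through Slutsky's theorem. Writing
\[
T_n^{(k)} \;=\; \frac{[\hat\beta_n(M)]_k - [\beta_n^\star(M)]_k}{\sqrt{[\hat H_n^{-1}\hat V_n\hat H_n^{-1}]_{kk}}},
\]
the above imply $T_n^{(k)} \overset{d}{\to} N(0,c_k)$ for some scalar $c_k \in (0,1]$, since the $k$-th diagonal of the estimated sandwich matrix is at least that of the true one in the limit. Hence $\Prob(|T_n^{(k)}| \le z_{\alpha/2}) \ge 1-\alpha + o(1)$, which is exactly the conservative coverage statement claimed by the theorem.
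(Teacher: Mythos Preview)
Your proposal is correct and follows essentially the same route as the paper: invoke the asymptotic normality from \cref{thm:glm_clt_est}, argue consistency of $\hat H_n$, use the identity $\mathbb{E}[s_n s_n^T]=V_n+\mathbb{E}[s_n]\mathbb{E}[s_n]^T\succeq V_n$ to show the empirical sandwich dominates the true one in the limit, and conclude conservativeness via Slutsky. The only cosmetic difference is that the paper appeals to the continuous mapping theorem and a WLLN for the plug-in steps (passing through a Cholesky factorization to extract the diagonal inequality), whereas you lean more explicitly on conditions (S), (B), (N) to justify the same substitutions.
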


\begin{proof}\cref{thm:glm_clt_est} demonstrates that $\hat{\beta}_{n}(M) \overset{p}{\rightarrow}  \beta^{\star}_{n}(M)$. Due to the assumption that the quasi-likelihood function is twice continuously differentiable, $H_{n}$ and $s_{n}$ are both continuous functions with respect to $\beta$, so by the continuous mapping theorem, $\hat{H}_{n} \overset{p}{\rightarrow}  H_{n}(\beta^{\star}_{n}(M))$ and $s_{n}(\hat{\beta}_{n}(M))  \overset{p}{\rightarrow}  s_{n}(\beta^{\star}_{n}(M))$. In the rest of the argument, we drop the explicit dependence on the selected model ($M$) to avoid clunky notation and therefore let $\hat{\beta}_{n} := \hat{\beta}_{n}(M)$ an $\beta^{\star}_{n} := \beta^{\star}_{n}(M)$ in the arguments that follow. Now, note that
\begin{align*}
    \mathbb{E}[s_{n}(\beta^{\star}_{n})s_{n}(\beta^{\star}_{n})^{T}] &= \text{Var}\left(s_{n}(\beta^{\star}_{n})\right) + \mathbb{E}[s_{n}(\beta^{\star}_{n})] \mathbb{E} [s_{n}(\beta^{\star}_{n})]]^{T} 
\end{align*} 
where the last term is positive semidefinite since it is a Gram matrix. Recall that if two matrices $A,B$ are positive semidefinite then $ABA$ is positive semidefnite and $A^{-1}$ is positive semidefinite when $A$ is invertible. Since $H_{n}\left(\beta^{\star}_{n}\right)$ is also positive semidefinite by assumption, we know that 
$$H^{-1}_{n}\left(\beta^{\star}_{n}\right) \left( \mathbb{E}[s_{n}(\beta^{\star}_{n})s_{n}(\beta^{\star}_{n})^{T}] -V_{n} \left(\beta^{\star}_{n} \right)\right) H^{-1}_{n}\left(\beta^{\star}_{n}\right) $$ 
is positive semidefnite.

Denote the Cholesky decomposition of $H^{-1}_{n}\left(\beta^{\star}_{n}\right) \mathbb{E}[s_{n}(\beta^{\star}_{n})s_{n}(\beta^{\star}_{n})^{T}]  H^{-1}_{n}\left(\beta^{\star}_{n}\right) $ as $LL^{T}$  and the Cholesky decomposition of $H^{-1}_{n}\left(\beta^{\star}_{n}\right) V_{n} \left(\beta^{\star}_{n} \right) H^{-1}_{n}\left(\beta^{\star}_{n}\right) $ as $MM^{T}$. From the above, we know that $LL^{T} - MM^{T}$ is positive semidefinite which implies that
$$L_{kk} - M_{kk} =\sqrt{ [H^{-1}_{n}\left(\beta^{\star}_{n}\right) \mathbb{E}[s_{n}(\beta^{\star}_{n})s_{n}(\beta^{\star}_{n})^{T}]  H^{-1}_{n}\left(\beta^{\star}_{n}\right)]_{kk} } - \sqrt{ [H^{-1}_{n}\left(\beta^{\star}_{n}\right) V_{n} \left(\beta^{\star}_{n} \right) H^{-1}_{n}\left(\beta^{\star}_{n}\right)]_{kk}}$$
must be greater than $0$ since the diagonal entries must be positive. 

By the WLLN,  $\hat{V}_{n}  \overset{p}{\rightarrow} \mathbb{E}[s_{n}(\beta^{\star}_{n})s_{n}(\beta^{\star}_{n})^{T}] $ which combined with the above statement shows us 
\begin{equation}
    \label{eqn:lim_gt_plugin}
    \lim_{n \rightarrow \infty} P\left( \sqrt{ [\hat{H}_{n}^{-1}\hat{V}_{n}\hat{H}_{n}^{-1}]_{kk} } \ge  \sqrt{ [H_{n}^{-1}\left(\beta^{\star}_{n} \right) V_{n} \left(\beta^{\star}_{n} \right)H_{n}^{-1}\left(\beta^{\star}_{n} \right)]_{kk}}\right) = 1.
\end{equation}
Putting everything together, we have that
\begin{align*}
 1- \alpha &= \lim_{n \rightarrow \infty} P \left(  [\beta_{n}^{\star}]_{k} \in \left[ [\hat{\beta}_{n}]_{k} \pm z_{\alpha/2} M_{kk} \right] \right) \\
 &\le \lim_{n \rightarrow \infty} P \left(  [\beta_{n}^{\star}]_{k} \in \left[ [\hat{\beta}_{n}]_{k} \pm z_{\alpha/2}\sqrt{ [\hat{H}_{n}^{-1}\hat{V}_{n}^{1}\hat{H}_{n}^{-1}]_{kk} } \right]\right),
\end{align*}
where the first line follows from \cref{thm:glm_clt_est} and the second line follows from~\eqref{eqn:lim_gt_plugin}. 

\end{proof}
We now demonstrate how to apply \cref{thm:var_emp_fahrmeir} in the context of generalized linear models which we then use for the simulations discussed in \cref{sec:qmle} and Appendices~ \ref{sec:appendix_poisson}~and~\ref{sec:appendix_logistic}. We restrict attention to working models in the \emph{exponential dispersion} family, i.e., with a density
$p\left( g(y_{i}) | f(Y), X \right) = \exp \left( \frac{y_{i} \theta_{i} - b(\theta_{i})}{a(\phi)} + c(g(y_{i}), \phi) \right)$
for some functions $a(\cdot), b(\cdot), c(\cdot,\cdot)$. Denote $m = (m_{1},...,m_{n})^{T}$ to be the mean under the assumption of the working model and $v = (v_{1},...,v_{n})^{T}$ to be the variance under the assumption of the working model. The covariates are linked to the (assumed) random component through a \emph{link function} $h$ where $ \eta_{i} := \beta^{T} \widetilde{x}_{i}$ and $\eta_i = h(m_{i})$. The equations defined in the preceding section simplify to 
\begin{align*}
s_{n}(\beta) &= X_{M}^{T} D V^{-1} \left( g(Y) - m \right),\\
H_{n}(\beta^{\star}_{n}(M))  &= \sum_{i=1}^{n} \widetilde{x}_{i} \left(\frac{\partial m_i}{ \partial \eta_i }\right)^{2} v_{i}^{-1} \widetilde{x}_{i}^{T}  = X_{M}^{T} D^{2} V^{-1} X_{M},\\
V_{n}(\beta^{\star}_{n}(M)) &= \sum_{i=1}^{n}\widetilde{x}_{i} \left( \frac{\partial m_{i}}{ \partial \eta_{i}} \right)^{2} \frac{\sigma^{2}_{i}}{v_{i}^{2}}\widetilde{x}_{i}^{T} = X_{M}^{T} D^{2} \Sigma V^{-2}X_{M} ,
\end{align*}
where $D,V,\Sigma$ are diagonal matrices with $D_{ii} = \frac{\partial m_{i}}{ \partial \eta_{i}}$,  $V_{ii} = v_{i}$, and $\Sigma_{ii} = \sigma^{2}_{i}$. Furthermore, we can rewrite the plug-in estimator described above as $$\hat{V}_{n} := s_{n}(\hat{\beta}_{n}(M))s_{n}(\hat{\beta}_{n}(M))^{T} = \sum_{i=1}^{n}\widetilde{x}_{i}\left( g(y_{i}) - m_{i}\right)^{2} v_{i}^{-2} \left(\frac{\partial m_{i}}{ \partial \eta_{i}}\right)^{2} \widetilde{x}_{i}^{T}.$$ 
We now apply these equations to three prototypical examples of GLMs (Gaussian regression, Poisson regression, logistic regression).
\paragraph{Example 1: Gaussian linear regression} Assume that we have split $Y$ into two pieces $f(Y)$ and $g(Y)$ using one of the methodologies in \cref{sec:list_decomp} but the post-selective likelihood function cannot be conveniently modeled. As a simplifying assumption, an analyst decides to model $g(Y)$ using linear regression with a selected model $M \subseteq [p]$ and an assumption of Gaussian errors but wants to conduct inference in a way that is robust to misspecification. If the analyst models the data as homoscedastic with common variance $\sigma_{0}^{2}$, then the score equation simplifies as
\(s_{n}(\beta) = X_{M}^{T}  \left( g(Y) - X_{M} \beta \right),\)
leading to the standard estimator of $\hat{\beta}_{n}(M) = (X_{M}^{T}X_{M})^{-1}X_{M}^{T}g(Y)$. We have our plug-in variance estimator:
\begin{align*}
    \hat{H}_{n}^{-1}\hat{V}_{n} \hat{H}_{n}^{-1} &= \left( \sigma_{0}^{2} X_M^{T}X_{M} \right)^{-1}
\left(X_M^{T}\text{diag}\left(\frac{g(Y) - X_{M}\hat{\beta}}{\sigma_{0}^{4}} \right) X_{M} \right) \left(\sigma_{0}^{2} X_M^{T} X_{M} \right)^{-1} \\
&= (X_{M}^{T}X_{M})^{-1} X_{M}^{T} \text{diag}\left(g(Y) - X_{M}\hat{\beta}\right) X_{M} (X_{M}^{T}X_{M})^{-1},
\end{align*}
which is exactly the sandwich estimator of \cite{huber1967behavior}. However, in this context, the estimator is an overestimate for $\beta^{\star}_{n}(M)$ and leads to conservative rather than exact CIs. 
\paragraph{Example 2: Poisson regression} Considering $y_{i} \sim \mathrm{Pois}(\mu)$, we fission the data as in \cref{sec:list_decomp} with $f(Y_{i}) \sim \text{Bin}(Y_{i},p)$ with $g(y_{i})  = y_{i} - f(y_{i})$. We then attempt to model $g(y_{i})$ using a GLM with log link function using selected covariates $x_{i \cdot M}$ and an offset of $\log(1-p)$ to account for the randomization. 

We assume that the true distribution of the data is indeed Poisson but the mean $\mathbb{E}(y_{i} |  x_{i})$ may not be a linear function of the chosen covariates $\tilde{x}_{i}$. The constructed CIs will then cover the target parameters $\beta^{\star}_{n}(M)$ that minimizes the KL divergence between the true and the modeled distribution:
\begin{align*}
    \beta^{\star}_{n}(M) := \argmin_\beta \mathrm{D}_{KL}\left(\prod_{i=1}^n q \left( (1-p) \mu_{i} \right) || \prod_{i=1}^n q\left(\exp\{ \log(1-p) + \beta^T \tilde{x}_{i}\}\right)\right),
\end{align*}
where $q(\cdot)$ is the distribution of Poisson distribution as a function of its mean.

For this problem, we have that $\frac{\partial m_{i}}{ \partial \eta_{i}} = m_{i} = v_{i} = \exp\left( \beta^{T} \tilde{x}_{i}\right)$. This leads to the plug-in estimator for variance to be 
\begin{align} \label{eqn:plugin_poisson}
    \hat{H}_{n}^{-1}\hat{V}_{n} \hat{H}_{n}^{-1} &= \left(X_M^{T} \hat{V} X_{M} \right)^{-1}
\left(X_M^{T}\hat{D}X_{M} \right) \left(X_M^{T} \hat{V} X_{M} \right)^{-1},
\end{align}
where $\hat{D}$ and $\hat{V}$ are diagonal matrices with $\hat{D}_{ii} = g(y_{i}) - (1-p)\exp \left( \hat{\beta}^{T}\tilde{x}_{i}\right)$ and $\hat{V}_{ii} = (1-p)\exp\left( \hat{\beta}^{T} \tilde{x}_{i}\right)$. 

\paragraph{Example 3: Logistic regression}
Assume that $y_{i} \sim \mathrm{Ber}(\mu_{i})$ and we fission the data into $f(y_{i}) \sim \mathrm{Ber}(\mu_i + p - 2p\mu_i)$ and $g(y_{i}) | f(y_{i}) \sim \mathrm{Ber}\left(\frac{\mu_i}{\mu_i + (1-\mu_i) [p/(1-p)]^{2f(Y_i) - 1}}\right)$ as described in \cref{sec:list_decomp}. Modelling this likelihood is challenging because the function is non-convex, so it may be easier to try and use a logistic regression with selected covariates $\widetilde{x}_{i}$ to model this probability even though the post-selective distribution does not have this functional form. 
As before, we can still interpret the fitted parameters from a logistic regression as the projection onto the working model. 
For this problem, we have that $\frac{\partial m_{i}}{ \partial \eta_{i}} = m_{i}(1-m_{i}) = v_{i}$. This leads to the plug-in estimator for variance to be the same as in~\eqref{eqn:plugin_poisson} except with $\hat{D}_{ii} =g(y_{i}) - \hat{m}_{i} $ and $\hat{V}_{ii} = \hat{m}_{i}(1-\hat{m}_{i})$ with $\hat{m}_{i} = \frac{1}{1 + \exp \left( - \hat{\beta}^{T} \tilde{x}_{i}\right)}$.

\section{Additional comparisons between data splitting and data fission} \label{sec:appendix_splitting_fission}
We can give an analogous statement about how this data fission procedure relates to data splitting for fixed-design linear regression as the one made in \cref{sec:split_fission}. For data splitting,  we pick $a \in \{\frac{1}{n},..., \frac{n-1}{n},1\} $ and set aside the first $an$ observations for selection and the remaining $(1-a)n$ observations for inference. We then select some model $M \subseteq [p]$ during the selection stage. This restricts us to a smaller model design matrix $\XinfM \in \R^{(1-a)n \times |M| }$ for inference. Also denote $\Sigma_{\text{inf}}$ the covariance matrix for the error term of these $(1-a)n$ observations. Then,
$$ \widehat{\beta}_{\mathrm{split}} \sim N\left(\beta^{*}(M) ,(\XinfMT  \XinfM)^{-1}\XinfMT \Sigma_{\text{inf}} \XinfM (\XinfMT  \XinfM)^{-1} \right).$$
For data fission, let us denote the model chosen during the selection stage as  $M^{I} \subseteq [p]$ and the corresponding model design matrix $X_{M^{I}} \in \R^{n \times |M^{I}|}$. Then, 
$$\widehat{\beta}_{\mathrm{fission}} \sim N\left(\beta^{*}(M^{I}),(1+\tau^{-2})(\XinfMstar^{T} \XinfMstar)^{-1}\XinfMstar^{T} \Sigma \XinfMstar (\XinfMstar^{T} \XinfMstar)^{-1} \right).$$
Both quantities are unbiased, but determining whether $\widehat{\beta}_{\mathrm{fission}}$ or $\widehat{\beta}_{\mathrm{split}}$ has lower variance depends on too many unknown quantities to say definitively without additional assumptions. First, there is no guarantee that $M$ and $M^{I}$ will be the same --- i.e. different models may be selected during the first stage for each of these two procedures. Second, nothing is assumed about the  design matrix $X$ so it is hard to say much about how $(X^{T}X)^{-1}$ changes if restricted to only a subset of rows/columns. However, heuristically, the estimator increases variance when compared with data splitting via randomization through the $1 + \tau^{-2}$ term but decreases variance by increasing the number of rows within the model design matrix. We use an idealized example below to draw more parallels. 
\begin{example} \label{example1} Assume that we have
	orthogonal covariates ($x_{i}^Tx_{j} =0$ for all $i \ne j$)
	and known homoscedastic noise (for all $i$, $\sigma_{i}^{2}=\sigma^{2}$, for some known $\sigma^2$).
	Also assume that both procedures select the same model $M = M^I$.
Then $\widehat{\beta}_{\mathrm{fission}} 
\sim N(\beta^{*}(M), (1+\tau^{-2} )  \sigma^{2} (X_{M}^{T}X_{M})  ^{-1})$ and 
$\widehat{\beta}_{\mathrm{split}} 
\sim N(\beta^{*}(M),  \sigma^{2} (\XinfMT \XinfM)  ^{-1}).$
Moreover the matrices are easily invertible since $X^{T}X$ is a diagonal matrix. Therefore, we have:
\[ \Var(\beta_{j}^{\mathrm{fission}})= \frac{(1+\tau^{-2})\sigma^{2}}{\sum_{i=1}^{n} X_{ij}^{2}}, \text{ ~ and ~ } \Var(\beta_{j}^{\mathrm{split}})= \frac{\sigma^{2}}{\sum_{i=an+1}^{n} X_{ij}^{2}}.\]
For a fixed $a$, the parameter $\tau$ that equates these two variances is $\tau = \left( \frac{\sum_{i=1}^{an} X_{ij}^{2}}{\sum_{i=an+1}^{n} X_{ij}^{2}} \right)^{1/2}.$
\end{example}
Although idealized, this example draws attention to a key weakness that data splitting has in fixed-design linear regression. Data fission is always able to smoothly tradeoff the information between the selection and inference datasets through the parameter $\tau$ but data splitting is limited in the ways that information can be divided since it requires that the analyst allocate points discretely. 

\section{Background on trend filtering} \label{sec:background_trendfilter}
We include a summary of trend filtering when the order $k>1$. As before, we consider the problem of estimating the underlying smooth trend of a time series $y_t \in \R$ with $t = 1, \ldots, n$ with structural equation,
\begin{equation} \label{eqn:trendfilter}
    y_{t} = f_{0}(t) + \epsilon_{t}.
\end{equation}
Our goal is to estimate the underlying trend $\left(f_{0}(1),\ldots ,f_{0}(n) \right)$. The approach of trend filtering is to fit a piecewise polynomial of degree $k$ to the data with adaptively chosen breakpoints or \emph{knots}. Formally, the $k$-th order trend filtering estimator is defined to be $\widehat{x} = \left( \widehat{x}_{1}, \ldots , \widehat{x}_{n} \right)$, which is the solution to the minimization problem
\begin{equation} \label{eqn:trend_optimizer}
\widehat{x} =  \argmin_{x \in \R^{n}} \frac{1}{2} \norm{Y - x}_{2}^{2} + \lambda \norm{D^{(k+1)}x}_{1}\text{,}
\end{equation}
where $\lambda \ge 0$ is a tuning parameter and $k$ is the order of the piecewise polynomial that is being chosen to fit the data. $D^{(k+1)} \in R^{(n-k)\times n}$ is the $k$-th order difference matrix defined recursively by defining 
\[D^{(1)} = \begin{bmatrix}
	-1 & 1 & 0 & \dots & 0 & 0 \\
	0 & -1 & 1 & \dots & 0 & 0 \\
	\vdots & \\
	0 & 0 & 0 & \dots & -1 & 1\\
\end{bmatrix} \in \R^{(n-k-1)},\]
and $D^{(k+1)} \in \R^{(n-k) \times n}$ as 
\[D^{(k+1)} = D^{(1)}D^{(k)}. \]
From this definition, we can see that setting $k=0$ yields $D^{(k+1)} = D^{(1)}$ which corresponds to fitting a piecewise constant function. This is also known as the 1-dimensional fused lasso problem of \cite{fused_lasso}. Choosing $k=1$ corresponds to fitting a piecewise linear function across $t$, choosing $k=2$ corresponds to fitting a piecewise quadratic function across $t$, and so on. 

Although trend filtering is a relatively recently developed tool in nonparametric statistics, it has gained substantial popularity over the last several years due to it converging to the true underlying function at the minimax rate (see \cite{adaptive_piecewise_tibshirani}) in addition to being computationally efficient. The specialized alternation direction method of multipliers (ADMM) algorithm  of \cite{Ramdas2014FastAF} converges at the $\mathcal{O}(n^{1.5})$ rate in the worst case but in practice tends to scale extremely well for large datasets. Most other nonparametric estimators that can be computed efficiently such as smoothing splines and uniform-knot regression splines are not locally adaptive and therefore do not converge at the minimax rate. Other methods that are locally adaptive such as the locally adaptive regression spline of \cite{locally_adaptive_regression_splines} can be shown to converge at the minimax rate but are not computationally tractable. For a more complete overview of these competing methods and the various ways in which they may tradeoff between their theoretical properties, the interested reader may wish to consult \cite{10.1093/mnras/staa106}.

\section{Supplemental simulation results} 

\subsection{Simulation results for interactive hypothesis testing}\label{sec:appendix_interactive_hyp_testing}
In this section, we provide a more detailed description of the simulation setup in \cref{sec:interactive_testing}, additional commentary on the simulation results, and additional demonstrations for non-Gaussian datasets.  

\paragraph{Simulation setup} The covariates $x_{i} \in \R^{2}$ are arranged on a $50 \times 50$ grid in the area $[-100,100] \times [-100,100]$. We let the set of non-nulls be arranged in a circle in the center of the grid as shown in \cref{fig:STAR_experiments_normal} and set $\mu_i = 2$ for each non-null and $\mu_i = 0$ for each true null. The data is generated as $y_{i} \sim N(\mu_{i}, 1)$. We form a rejection set using the BH, AdaPT, and STAR procedures, both with and without data fission. For the non-fissioned versions of these procedures using the full dataset, the (one-sided) p-values that are used are calculated as $p_{i}^{\mathrm{full}} = 1-\Phi(y_{i})$. Similarly, for the fissioned versions of these procedures, the (one-sided) p-values are calculated as $p_{i}^{\mathrm{fission}} = 1-\Phi(f(y_{i}) (1+\tau^{2})^{-\frac{1}{2}}  )$. The AdaPT and STAR procedures are designed so that the analyst may explicitly use the covariate information $x_{i}$ in forming their rejection set, with STAR explicitly imposing a structural constraint that the rejection region needs to be spatially convex (with respect to its covariates $x_{i}$). Note that although AdaPT and STAR control the false discovery rate regardless of how the analyst chooses to reject their hypotheses with this information, for the purposes of this simulation, we rely on the algorithmic procedures described in Section~4.2 of \cite{lei2017star} and Section~4 of \cite{adapt_fithian} to form the rejection set. Since the BH procedure is not designed to use side information to form a rejection set, its power is significantly lower than the AdaPT and STAR procedures for this simulation.

An example of a rejection set for each of the six methodologies for a single trial run can be seen in \cref{fig:STAR_experiments_normal}. We note that the increased variance introduced through fission will necessarily decrease the power of these procedures in the selection phase, but the level of degradation is minor for small values of $\tau$. After forming the rejection sets for the fissioned versions of the datasets, we form CIs to cover $\overline{\mu}$ using the above outlined procedure. As a point of comparison, we also compare this methodology to the invalid procedure of ``double dipping'' to form CIs on the non-fissioned dataset. 

\begin{figure} 
\begin{center}
\begin{subfigure}[t]{0.28\textwidth}
\raisebox{12mm}{\includegraphics[width=\linewidth]{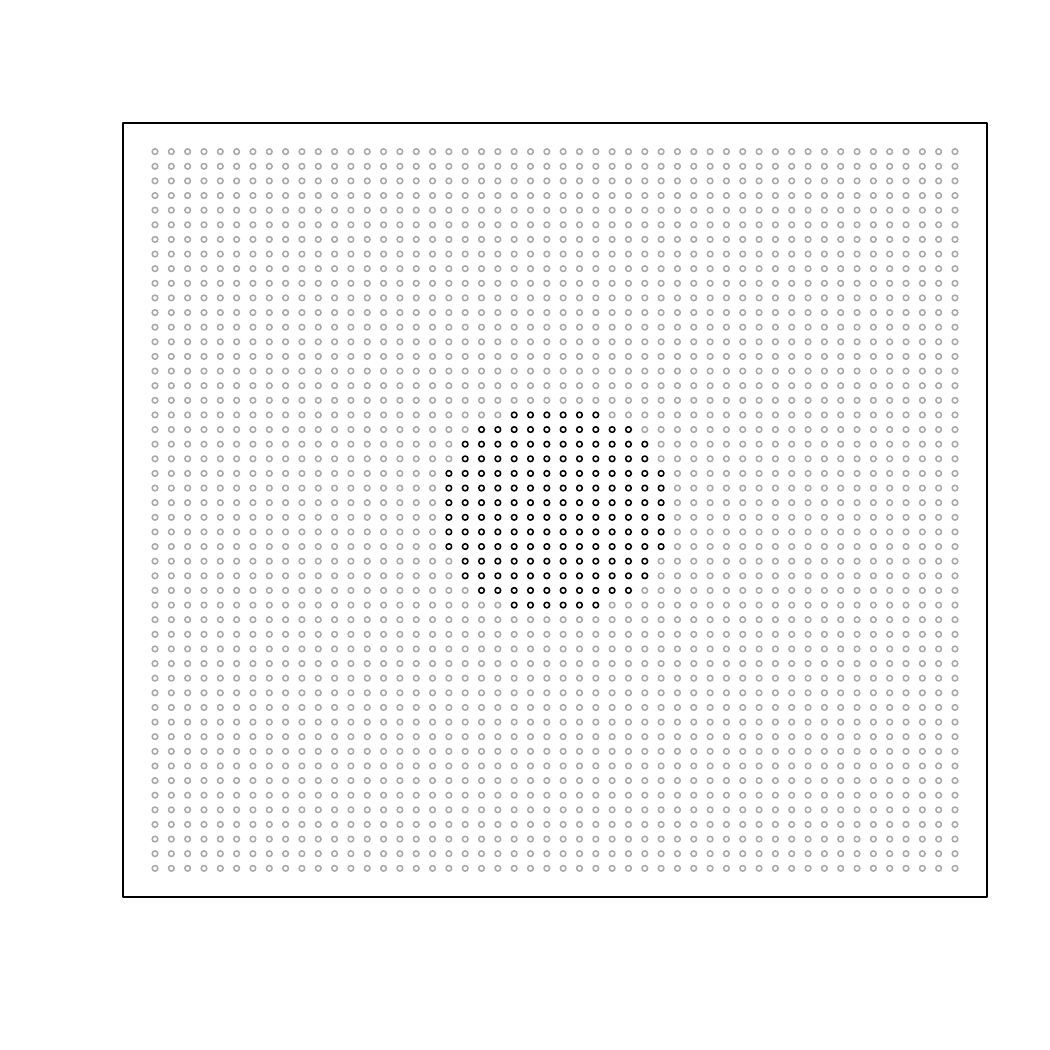}}
\end{subfigure}
\begin{subfigure}[t]{0.7\textwidth}
\includegraphics[width=\linewidth]{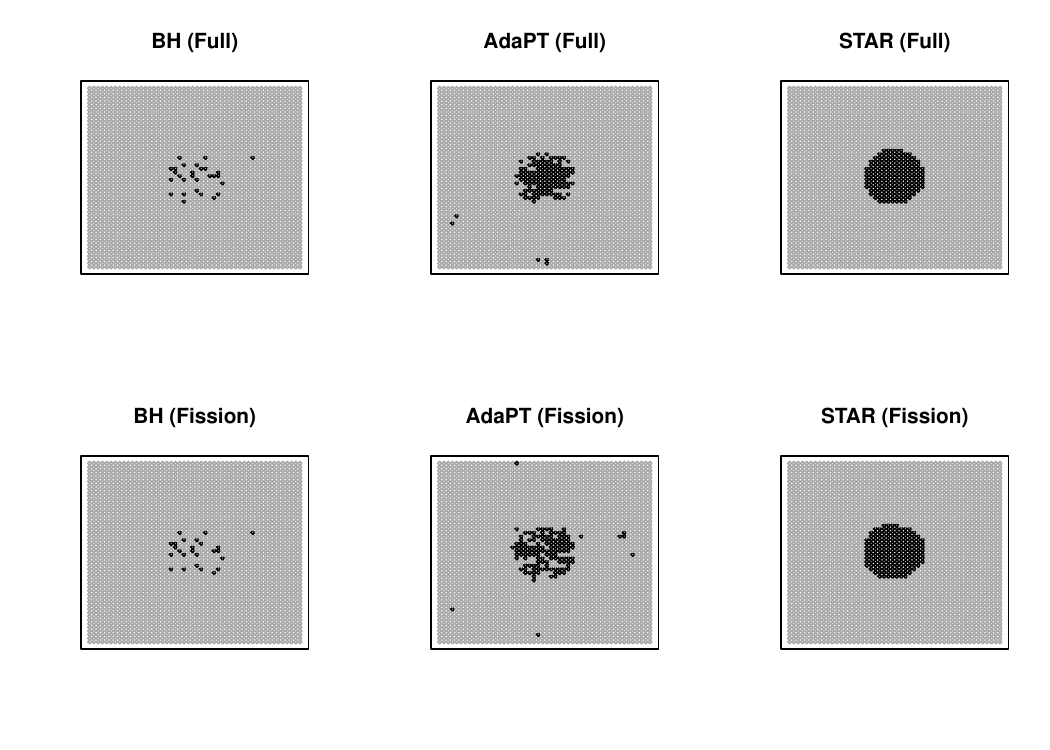}
\end{subfigure}
\caption{Example rejection region for a single trial run. STAR, AdaPT, and BH procedures are used to try and recover the true signal pattern (left) with a target FDR of $0.1$. Fissioning the data (with $\tau=0.1$) to allow for the estimation of signal strength only mildly impairs the ability of these procedures to correctly identify the rejection region.}
\label{STAR_experiments_singletrial}
\end{center} 
\end{figure}

We repeat this simulation over $250$ trials for a variety of different $\tau$ to observe how information trades off between the selection and inference steps. To measure the effectiveness of our procedure, we track the following metrics for the \emph{selection} stage:
\begin{align*}
    \text{Power}:= \frac{|x_i \in \mathcal{R}: \mu_{i} \ne 0|}{|\{i: \mu_{i} \ne 0 \}|}, \text{ ~ and ~ }
    \text{false discovery proportion } \mathrm{(FDP)}:= \frac{|x_i \in \mathcal{R}: \mu_{i} = 0|}{\max \{|\mathcal{R}|,1\}}.
\end{align*}
For the \emph{inference} stage, we denote the CI created through the data fission procedures that covers $\overline{\mu}$ as $\overline{\text{CI}}$ with lower and upper bounds $\overline{\text{CI}}(1), \overline{\text{CI}}(2)$ and compute the following metrics:
\[\text{Miscoverage} = \mathds{1}\left(\overline{\mu} \not \in \overline{\text{CI}}\right), \text{ ~ and ~ }
\text{CI Length} =  |\overline{\text{CI}}(2) - \overline{\text{CI}}(1)|. \]


The results are shown in \cref{fig:STAR_experiments_normal} with target coverage level $1- \alpha = 0.8$ and target FDR control at $0.2$. {\color{black} As $\tau$ increases, we see a tradeoff between the power in the selection stage and the CI length in the inference stage. Despite having this general trend, the decrease in CI length is not strictly monotonic with $\tau$ in all cases (e.g. the fissioned BH procedure has CI lengths which actually increase from $\tau \approx 0.5$ to $\tau \approx 1$) because the length of the CIs increases with $\tau$ but decreases with $|\mathcal{R}|$. Since the size of the rejection set increases as power increases, moving $\tau$ causes these two effects to work in opposite directions. Nonetheless, increasing $\tau$ tends to decrease CI length in most instances.}

\paragraph{Poisson Data} Note that in the preceding examples, Gaussian distributed data were used but this procedure can be modified straightforwardly for data that is distributed in any way such that a decomposition rule (e.g. from \cref{sec:appendix_list_decomp}) can be applied. Consider a situation where we now observe $y_{i} \sim \text{Pois}(\mu_{i})$ and during the selection stage, we are attempting to form a rejection set against the null hypothesis $H_{0}: \mu_{i} = 1$ compared to the alternative $H_{1}: \mu_{i} > 1$.

One technical note is that in this case, if we calculate the (one-sided) p-value using the straightforward calculation of $p_{i} = P(Y \le y_{i}) : Y \sim \text{Pois}(1)$, the technique of p-value masking will fail to guarantee error control because $p_{i}$ can be reconstructed from $\text{min}\left(p_{i}, 1-p_{i} \right)$ when the distribution is discrete. To account for this, we add an additional layer of randomization when calculating the $p$-value using \cite{BROCKWELL20071473}. Specifically, we draw $U \sim \text{Unif}(0,1)$ and calculate
$$Y' = F(Y)U + (1-U)F(Y-),$$
where $F(Y)$ denotes the CDF of a $\text{Pois}(1)$ variable and $F(Y-)$ denotes the corresponding left limit. This ensures that $Y'$ is uniformly distributed and stochastically dominated by $F(Y)$ and is therefore a valid $p$-value.

To form CIs at the inference stage, we can manually invert the test statistics but a simpler method described in \cite{ulm_smr} provides a convenient shortcut. In particular, the $1-\alpha$ CI around $\widebar{\mu}$ for the CIs constructed from the fissioned data can be calculated as
$$\frac{1}{2|\mathcal{R}|(1-\tau)} \left(\chi^{2}_{\alpha/2} \left(2\sum_{i} y_{i}\right), \chi^{2}_{1-\alpha/2} \left(2\sum_{i} y_{i} +2\right)\right).$$
Empirical results are shown in \cref{fig:STAR_experiments_poisson}. Counterintuitively, CI length also tends to decrease with $\tau$ up to $\tau \approx 0.6$ , despite the fact that increasing $\tau$ reserves less information for inference. This happens when the effect that increasing $|\mathcal{R}|$ has outweighs the increasing variance from the individual signals through the $1- \tau$ term.
\begin{figure} 
\begin{center}
\begin{subfigure}[t]{1\textwidth}
\centering
\includegraphics[width=0.23\linewidth]{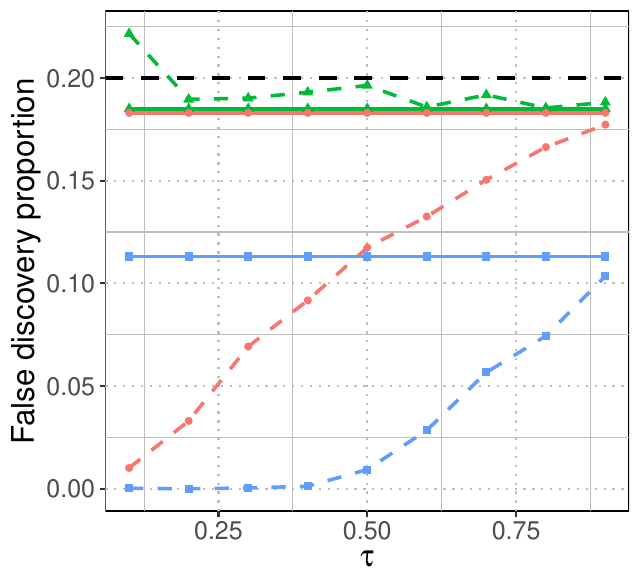}
\includegraphics[width=0.23\linewidth]{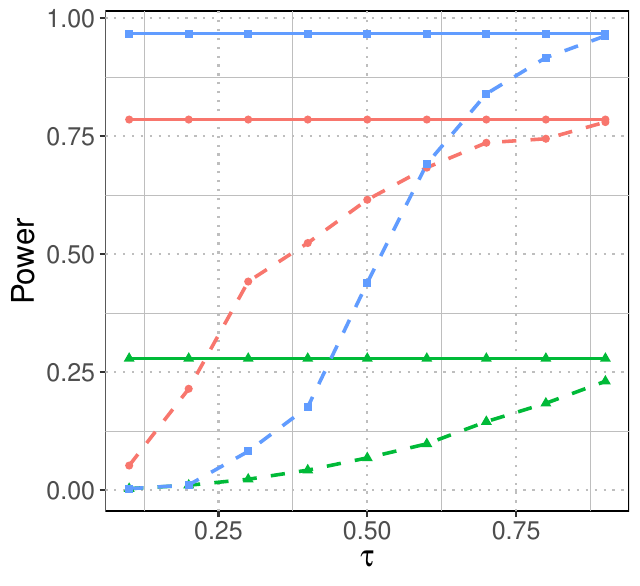}
\includegraphics[width=0.23\linewidth]{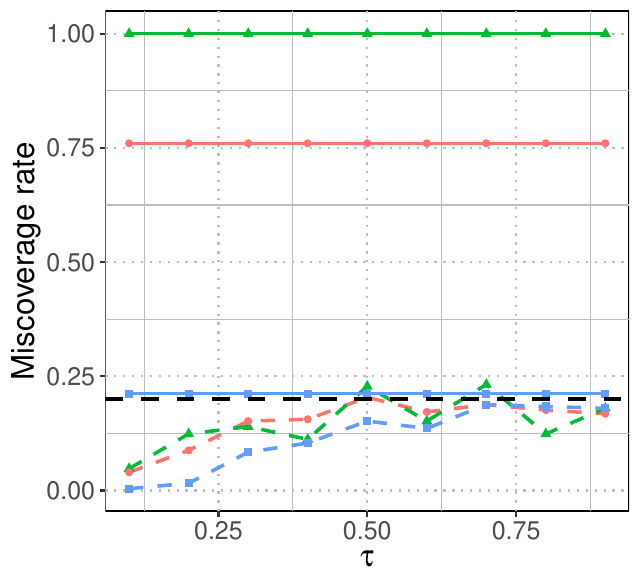}
\includegraphics[width=0.23\linewidth]{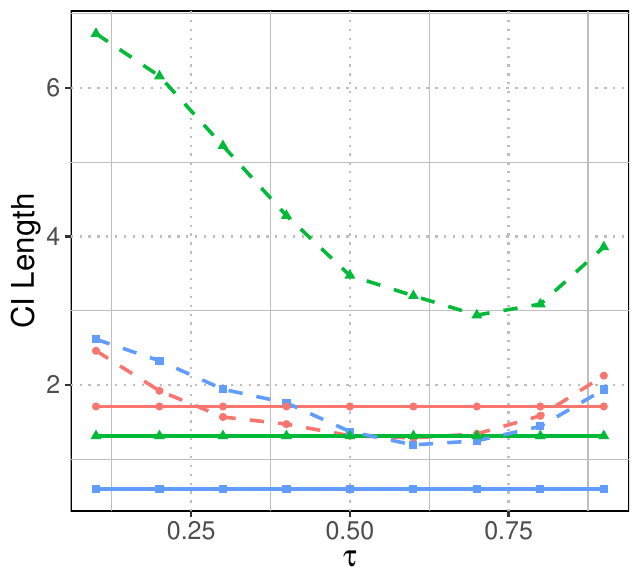}
\end{subfigure}
\vfill
\begin{subfigure}[t]{1\textwidth}
\centering
\includegraphics[width=0.4\linewidth]{figures/legend_interactive.png}
\end{subfigure}
\caption{Numerical results for the \emph{selection stage} for a Poisson signal, averaged over $250$ trials for a $50 \times 50$ grid of hypotheses with target FDR level chosen at $0.2$ and $\tau$ varying over $\{0.1,0.2,0.3,0.5,0.5,0.6,0.7,0.8,0.9\}$. Solid lines denote metrics for the rejection sets formed using the full dataset and dotted lines denote metrics calculated using the rejection sets formed through data fission. All fission methods control the false discovery rate at the desired coverage level but we can see that increasing $\tau$ increases the power of the fissioned procedures as more of the information gets reserved for selection. }
\label{fig:STAR_experiments_poisson}
\end{center} 
\end{figure}

\subsection{Simulation results for fixed-design linear regression}\label{sec:appendix_supplemental_linear}

The empirical results discussed in \ref{sec:linreg} demonstrate the advantage of data fission in fixed-design settings with small sample sizes and a handful of points with high leverage. In cases where the sample size is larger and the distribution of covariates are less irregular, data fission still tends to have slightly smaller confidence intervals but the differences are not as drastic. The following additional simulations help to elucidate this.

\paragraph{Additional metrics} In addition to the metrics described in \cref{sec:linreg}, we compute two additional metrics to aid in comparison between the available methodologies. In particular, we track the averaged proportion of falsely reported CIs among those indicating a non-zero parameter (the false sign rate),
\begin{align}
    \text{FSR} := \frac{|k \in M: \{\beta_k < 0 \text{ and } \text{CI}_k(1) > 0 \}\text{ OR } \{ \beta_k > 0 \text{ and } \text{CI}_k(2) < 0|\}}{\max\{|k \in M: 0 \notin \text{CI}_k|,1\}},
\end{align}
and power of correctly reporting CIs that indicates a nonzero parameter with the correct sign,
\begin{align}
    \text{power}^\text{sign} := \frac{|k \in M: \beta_k > 0 \text{ and } \text{CI}_k(1) > 0|}{\max\{|j \in [p]: \beta_j > 0|,1\}}.
\end{align}
In many of the simulations to follow, power, precision, and CI width are almost identical for data splitting and data fission so computing these metrics offer a complementary set of information to aid in performance evaluation. 

\paragraph{Simulation with independent covariates} We repeat the simulation discussed in \cref{sec:linreg} but now have $p = 100$ features with $x_i \in \mathbb{R}^{100}$. The vector of covariates $x_i$ follows independent standard Gaussians; and $y_i \sim N(\beta^T x_i,\sigma^{2})$ where the parameter $\beta$ is nonzero for 30 features: $(\beta_1, \beta_3, \ldots, \beta_{22}, \beta_{92}, \ldots, \beta_{100}) = S_\Delta \cdot (\underbrace{1, \ldots, 1}_{21}, \underbrace{-1, \ldots, -1}_{9})$ where $S_\Delta$ encodes the signal strength. \cref{fig:linear_indep} shows the result of these simulations averaged over $500$ repetitions. Data splitting and data fission appear to have roughly comparable performance in this case, as the large sample sizes allows for data splitting to trade off data between selection and inference smoothly, although the CI lengths are still slightly smaller for data fission. 

\begin{figure}[H]
\centering
    \begin{subfigure}[t]{1\textwidth}
        \centering
        \includegraphics[width=0.3\linewidth]{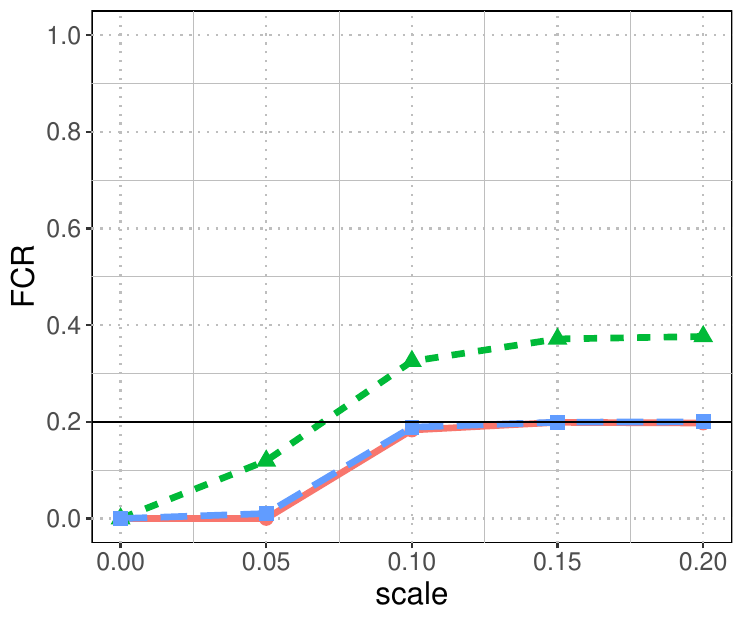}
    \hfill
        \includegraphics[width=0.3\linewidth]{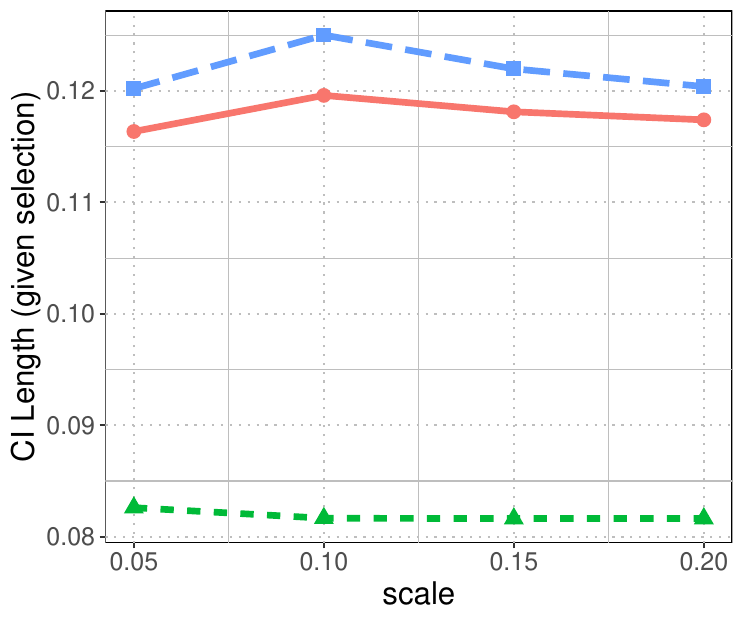}
    \hfill
        \includegraphics[width=0.3\linewidth]{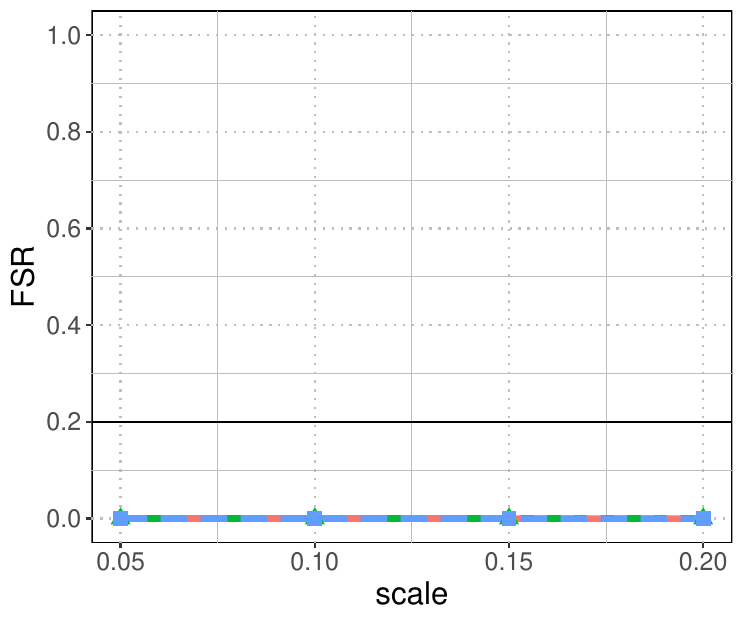}
    \end{subfigure}
\vfill
    \begin{subfigure}[t]{1\textwidth}
        \centering
        \includegraphics[width=0.3\linewidth]{figures/legend_fission_regression.png}
    \end{subfigure}
\vfill
    \begin{subfigure}[t]{1\textwidth}
        \centering
        \includegraphics[width=0.3\linewidth]{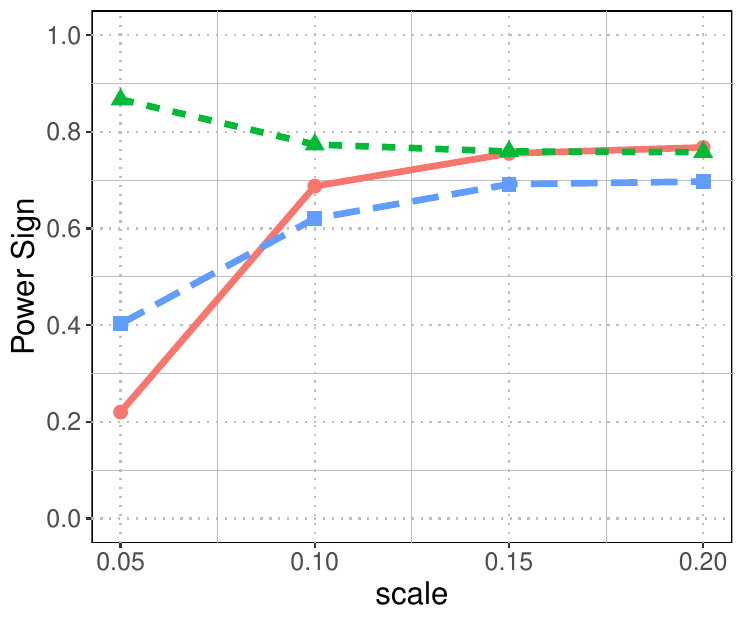}
    \hfill
        \includegraphics[width=0.3\linewidth]{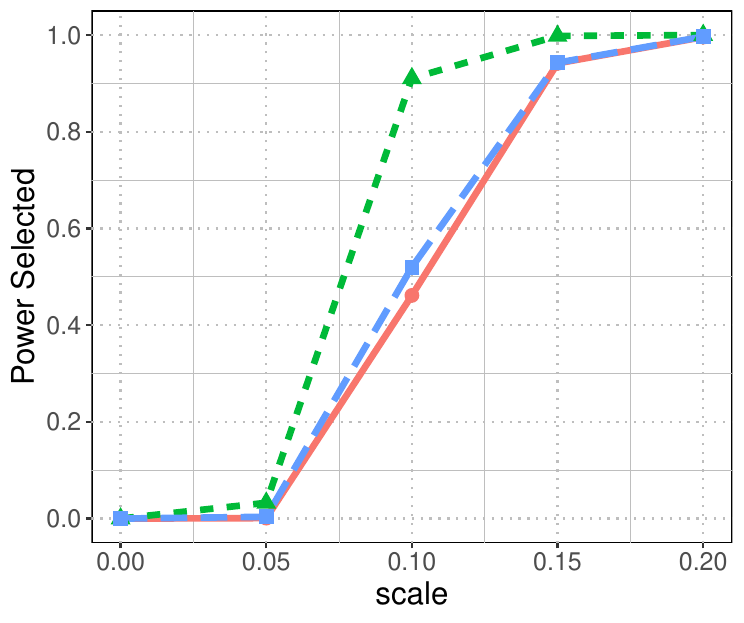}
    \hfill
        \includegraphics[width=0.3\linewidth]{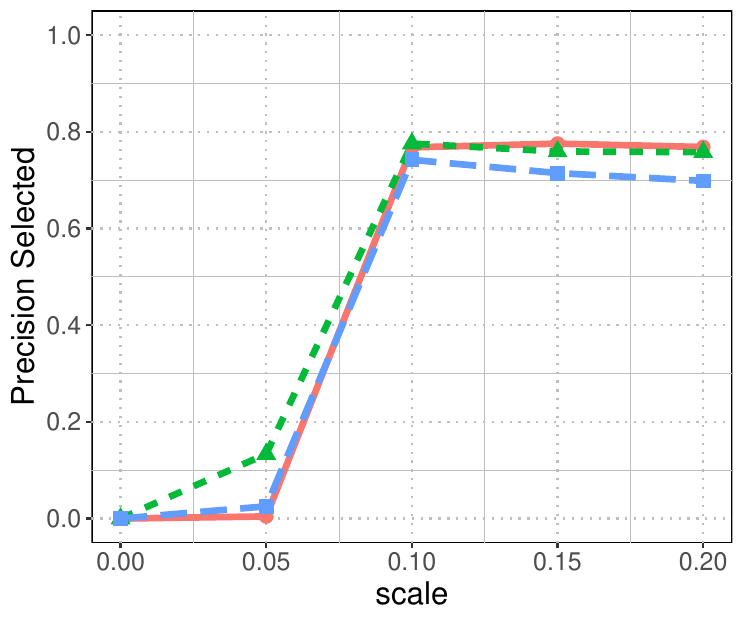}
    \end{subfigure}
    \caption{FCR, length of the CIs, FPR and power for the sign of parameters, and power and precision for the selected features, when varying the signal strength $S_\Delta$ in $\{0, 0.05, 0.1, 0.15, 0.2\}$. The results are averaged over 500 repetitions. The CIs by using the original data twice do not have FCR control guarantee due to selection bias. In contrast, our proposed procedure using the masking idea has valid FCR, without inflating the length of CIs much or reducing the power of selecting non-zero features.}
    \label{fig:linear_indep}
\end{figure}

\paragraph{Simulation with dependent covariates} We repeat the experiment described above but with dependent covariates $X$. We let $X$ be generated from a multivariate Gaussian with zero mean. The covariance matrix is a five-block diagonal matrix, each block a $20 \times 20$ Toeplitz matrix:
\begin{align} \label{eq:rho_mat_dependent}
    \begin{bmatrix}
1 & \rho & \cdots & \rho^{d-2} & \rho^{d-1}\\
\rho & 1 & \rho & \cdots & \rho^{d-2}\\
\vdots & \vdots & \vdots & \vdots & \vdots\\
\rho^{d-2} & \cdots & \rho & 1 & \rho\\
\rho^{d-1}  & \rho^{d-2} & \cdots & \rho & 1
\end{bmatrix},
\end{align}
where $d = 20$. Results are shown in~\cref{fig:i-positive_dependent} for varying $\rho$. We note that in this setting the relative ordering of the three methods is unchanged across evaluation metrics, but the power of both data splitting and data fission decreases when there is negative dependence among the covariates.

\begin{figure}[H]
\centering
    \begin{subfigure}[t]{1\textwidth}
        \centering
        \includegraphics[width=0.3\linewidth]{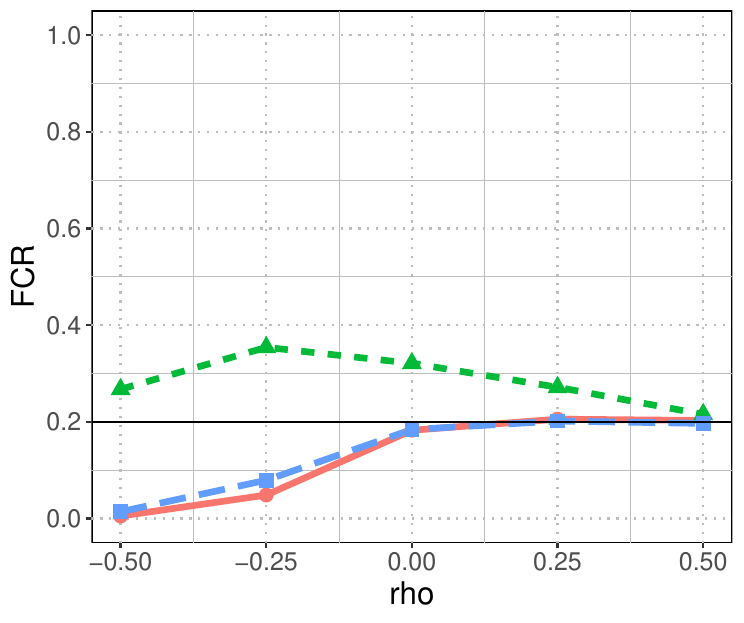}
    \hfill
        \includegraphics[width=0.3\linewidth]{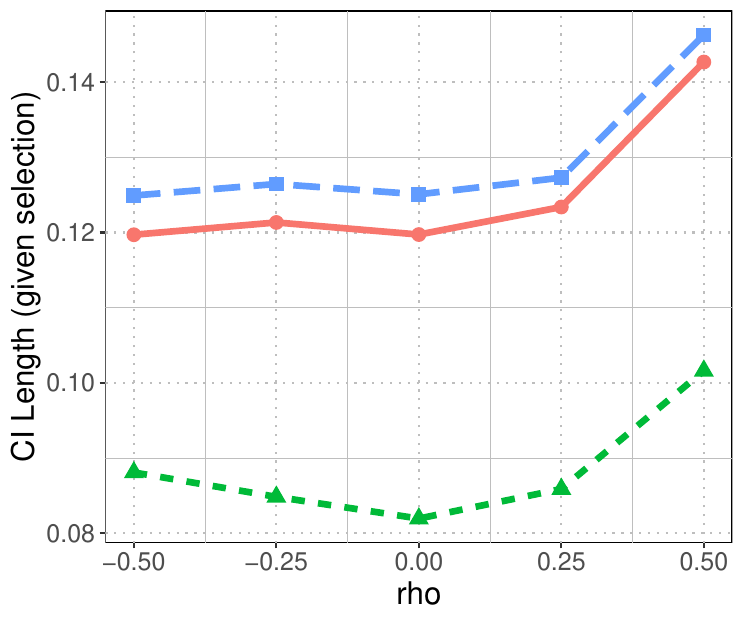}
    \hfill
        \includegraphics[width=0.3\linewidth]{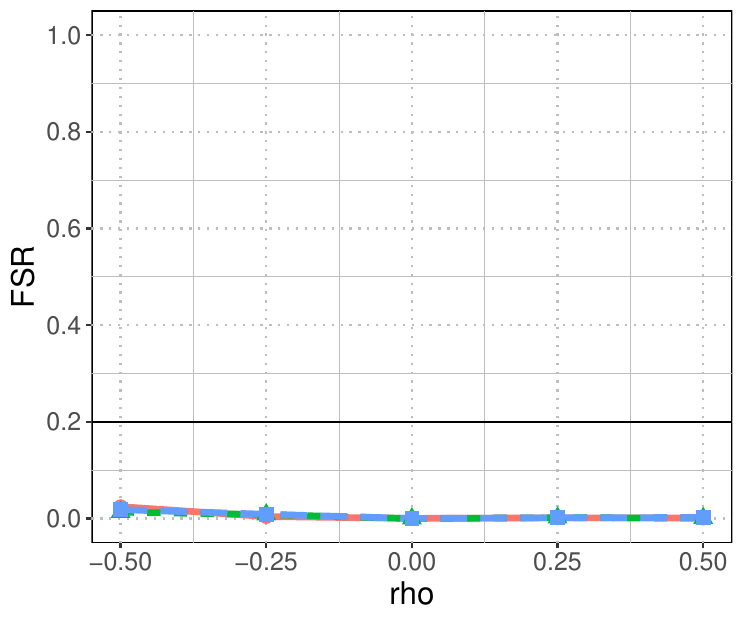}
    \end{subfigure}
\vfill
    \begin{subfigure}[t]{1\textwidth}
        \centering
        \includegraphics[width=0.3\linewidth]{figures/legend_fission_regression.png}
    \end{subfigure}
\vfill
    \begin{subfigure}[t]{1\textwidth}
        \centering
        \includegraphics[width=0.3\linewidth]{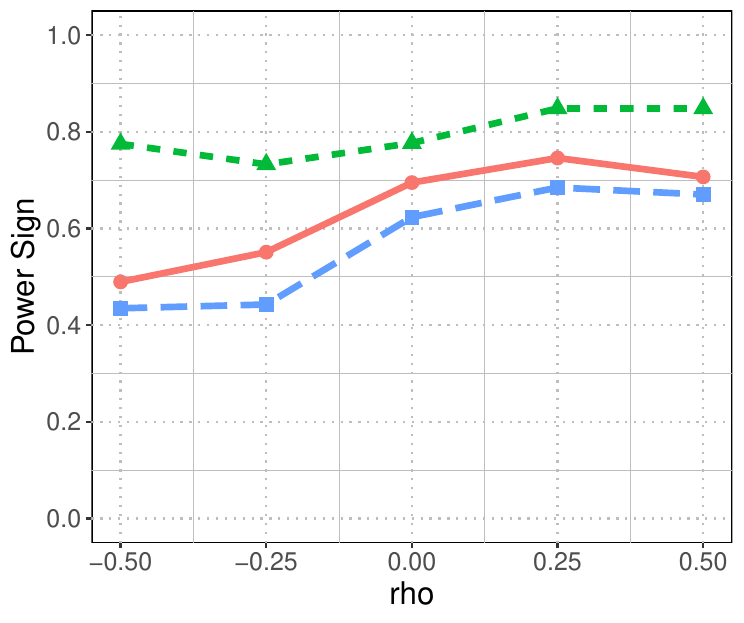}
    \hfill
        \includegraphics[width=0.3\linewidth]{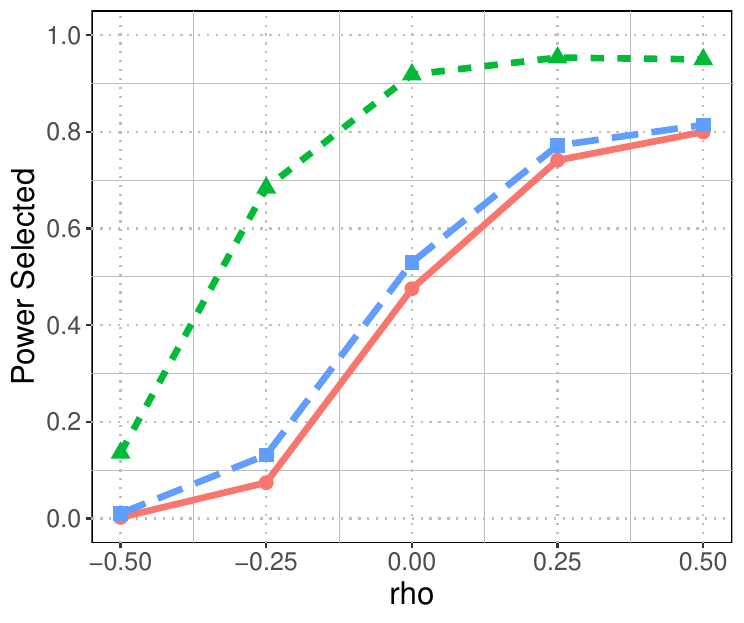}
    \hfill
        \includegraphics[width=0.3\linewidth]{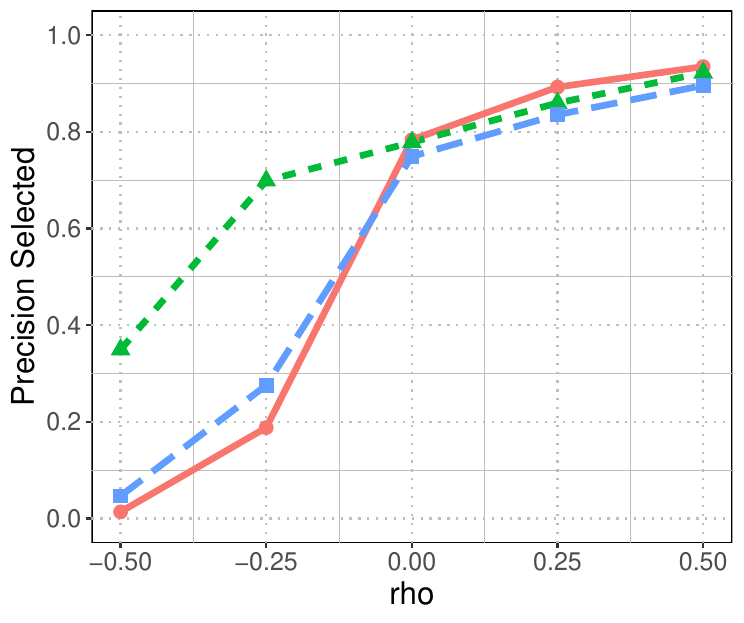}
    \end{subfigure}
    \caption{FCR, length of the CIs, FPR and power for the sign of parameters, and power and precision for the selected features, when varying the correlation parameter $\rho$ in $\{-0.5, -0.25, 0, 0.25, 0.5\}$ (with $\rho = 0$ being independent covariates). It appears that negative correlation leads to lower power for all methods.}
    \label{fig:i-positive_dependent}
\end{figure}

\paragraph{Simulation results with misspecified errors and estimated variance}
{\color{black} We investigate how sensitive the methodology is to a misspecified model for the error distribution. The simulation setup is identical to the previous setting described above, but now the errors are not Gaussian. In particular, we experiment with three different choices for the error term: Laplace, skew normal distribution with scale parameter equal to 1 and shape parameter equal to 5, and a $t$-distribution with $5$ degrees of freedom. In all cases, the error terms are also rescaled to have $0$ mean and unit variance. 

When performing model selection and inference, however, we (incorrectly) use a Gaussian decomposition rule from \cref{sec:list_decomp} to construct $f(Y) = Y + Z$ and $g(Y) = Y - Z$ with $Z \sim N(0,1)$. We also repeat these simulations when the fission is performed using an estimate of the variance (as discussed in \cref{sec:unknown_variance}) in each case. Results are again averaged over $500$ repetitions and reported in \cref{fig:linear_misspecified}. We note that the generated confidence intervals have correct coverage in all cases. The procedure also seems to have very similar performance under misspecification for all error types except Laplace errors, which result in noticeably wider confidence intervals and lower power during the selection stage. These results suggest that data fission is robust to reasonable levels of misspecification in the error term. 
}
\begin{figure}[H]
\centering
    \begin{subfigure}[t]{1\textwidth}
        \centering
        \includegraphics[width=0.3\linewidth]{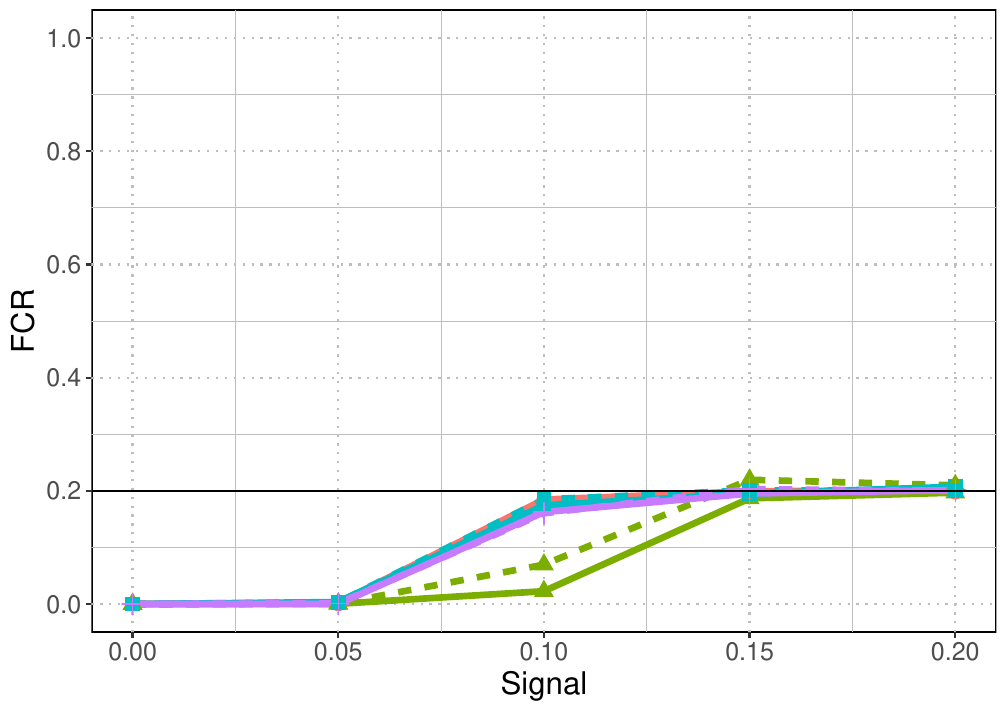}
    \hfill
        \includegraphics[width=0.3\linewidth]{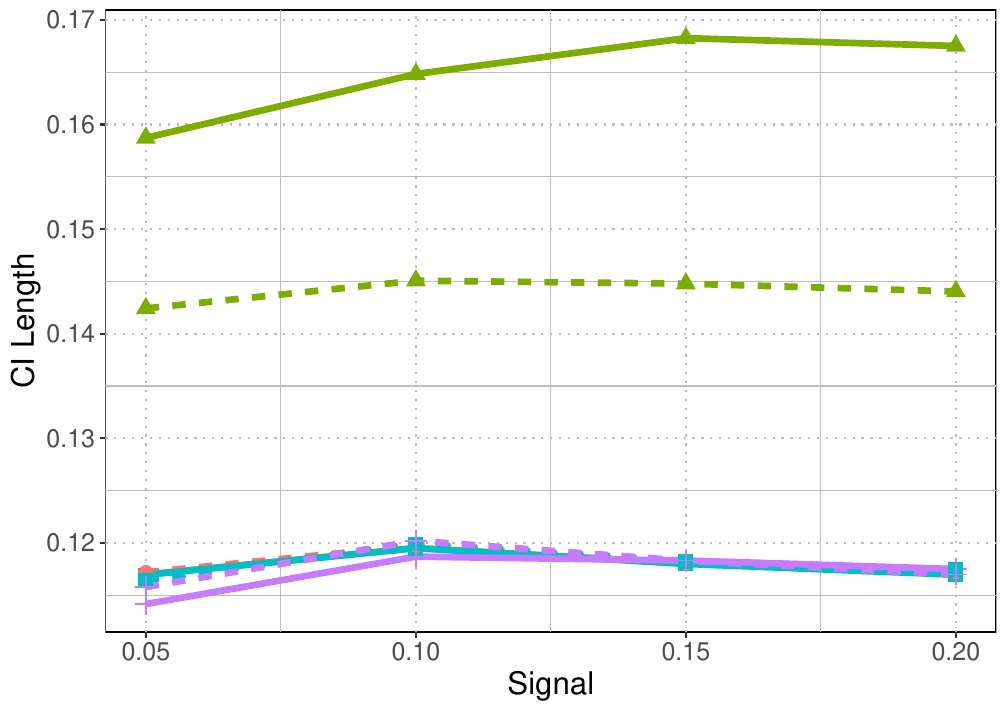}
    \hfill
        \includegraphics[width=0.3\linewidth]{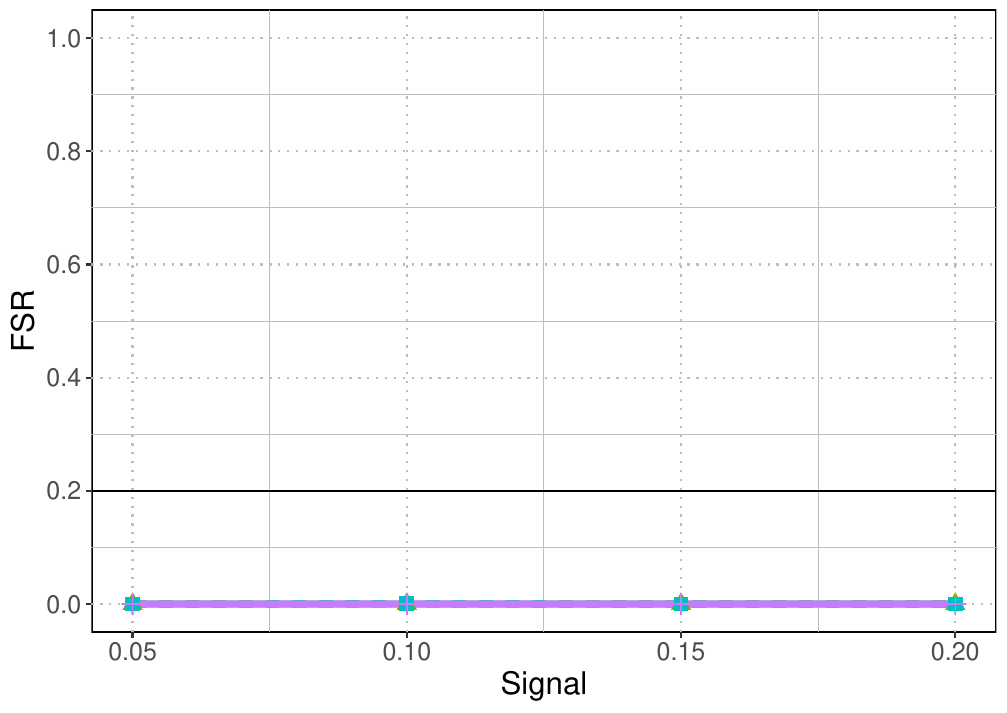}
    \end{subfigure}
\vfill
    \begin{subfigure}[t]{1\textwidth}
        \centering
        \includegraphics[width=0.8\linewidth]{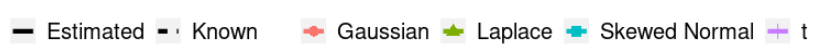}
    \end{subfigure}
\vfill
    \begin{subfigure}[t]{1\textwidth}
        \centering
        \includegraphics[width=0.3\linewidth]{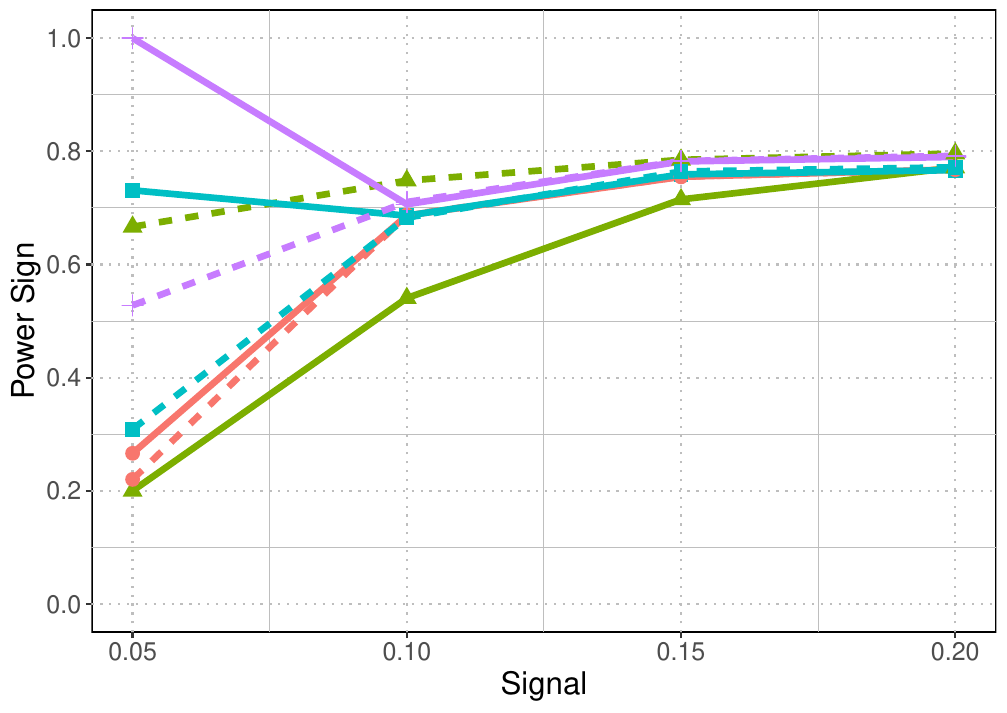}
    \hfill
        \includegraphics[width=0.3\linewidth]{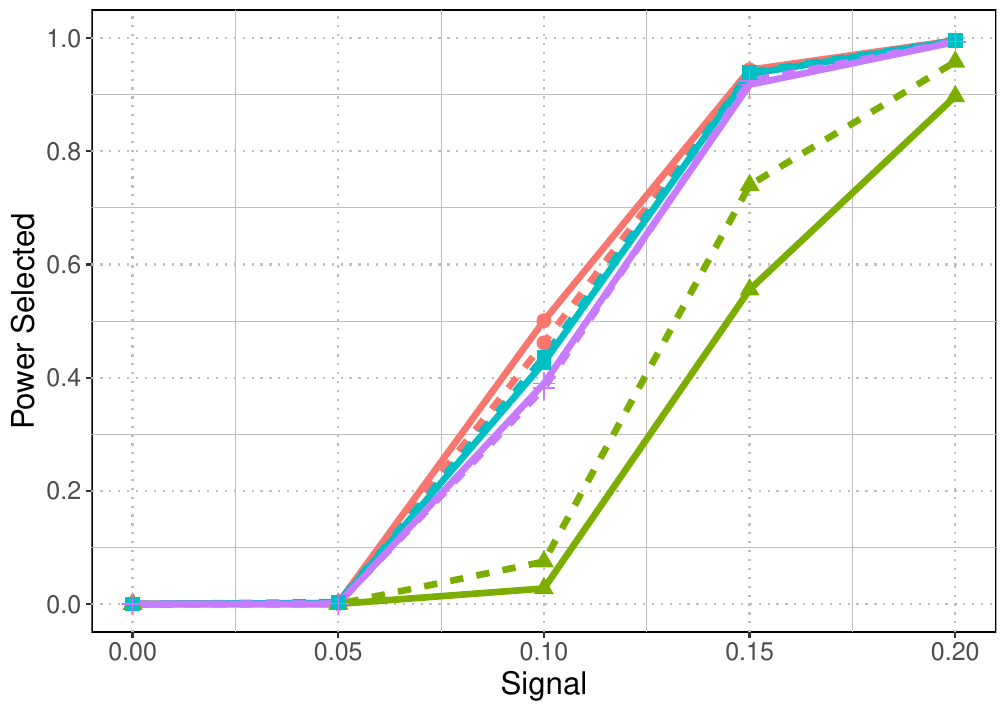}
    \hfill
        \includegraphics[width=0.3\linewidth]{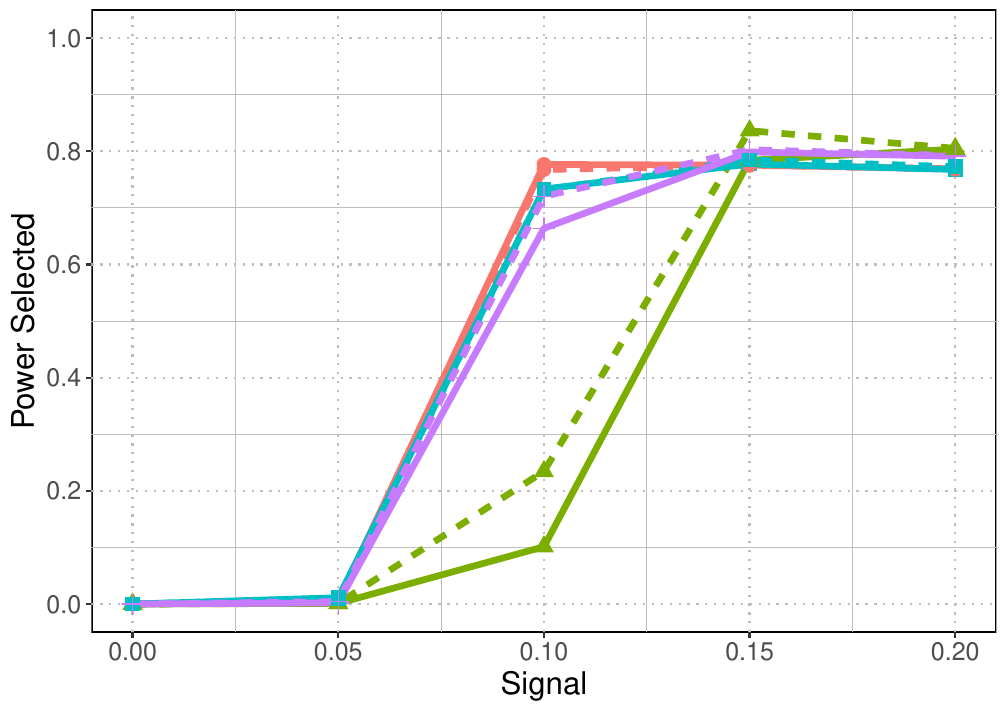}
    \end{subfigure}
    \caption{FCR, length of the CIs, FPR and power for the sign of parameters, and power and precision for the selected features, when varying the signal strength $S_\Delta$ in $\{0, 0.05, 0.1, 0.15, 0.2\}$ and the errors are generated from misspecified models. Under misspecification, the constructed confidence intervals still have adequate coverage. Performance in terms of CI length and power/precision during the selection stage is markedly worse when the error term is Lapalace distributed, but other types of misspecification have minimal impact on the procedure's efficacy.}
    \label{fig:linear_misspecified}
\end{figure}

\subsection{Simulation results for fixed-design Poisson regression}
\label{sec:appendix_poisson}
We also demonstrate how data fission compare with data splitting in the Poisson regression setting when $n$ is large. 
\paragraph{Setup.} Let $y_i$ be the dependent variable and $x_i \in \mathbb{R}^p$ be a vector of $p$ features. Suppose we have $n = 1000$ samples. We have collected $p = 100$ features $X_i \in \{0,1\}^2 \times \mathbb{R}^{98}$, where the first two follow $\mathrm{Ber}(1/2)$ and the rest follow independent Gaussians. Suppose $Y_i$ follow a Poisson distribution with the expected value $\exp\{ \beta^T X_i\}$, where the parameter $ \beta$ is nonzero for 29 features: $(\beta_1, \beta_3, \ldots, \beta_{22}, \beta_{93}, \ldots, \beta_{100}) = S_\Delta \cdot (\underbrace{1, \ldots, 1}_{21}, \underbrace{2, \ldots, 2}_{8})$ and $S_\Delta$ encodes the signal strength.

\paragraph{Simulation results with mutually independent covariates.} As an illustration of the procedure, \cref{fig:poisson_example3} shows an instance of selected features and the constructed CIs for a single trial run. The selected features are marked by blue crosses, which include all of the nonzero coefficients (corresponding to almost 100\%  power for selection) and also many zero coefficients (corresponding to around 50\% precision for selection). The constructed CIs that do not contain the true value are marked in red. 

\cref{fig:i-positive3} shows results averaged over 500 trials. Compared with the results using data splitting, the CIs using data fission is significantly tighter. 
Although the precision during the selection step is not high for both methods (around 50\% of selected features do not have true signals), we are able to identify the the true signals by constructing CIs in the second step with FCR control. 
When interested in the sign indicated by the CIs, we have constructed CIs indicating a positive sign for all of the nonzero (in our case positive) coefficients (reflected as 100\% power for the signs and 0\% FSR).

\begin{figure}[H]
\centering
    \begin{subfigure}[t]{0.32\textwidth}
        \centering
        \includegraphics[trim=5 20 25 60, clip, width=1\linewidth]{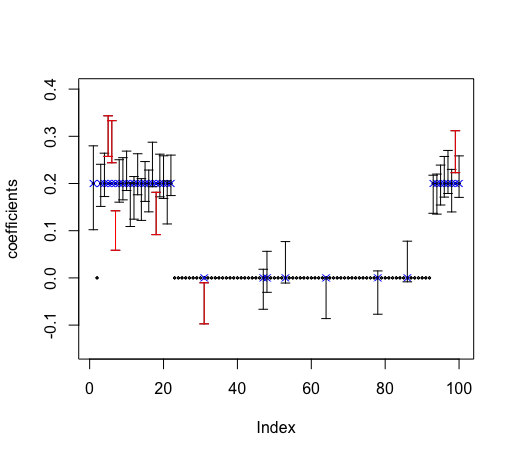}
    \end{subfigure}
    \hfill
    \begin{subfigure}[t]{0.32\textwidth}
        \centering
        \includegraphics[trim=5 20 25 60, clip, width=1\linewidth]{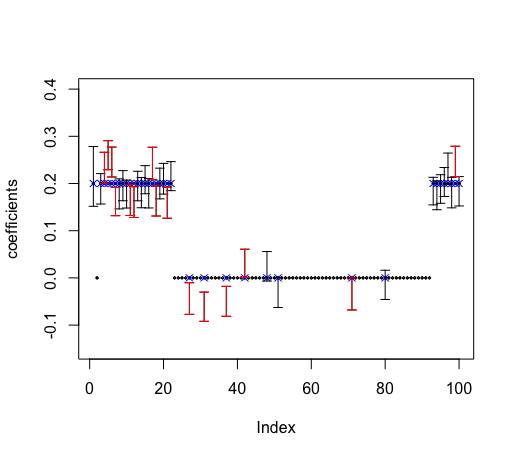}
    \end{subfigure}
    \hfill
    \begin{subfigure}[t]{0.32\textwidth}
        \centering
        \includegraphics[trim=5 20 25 60, clip, width=1\linewidth]{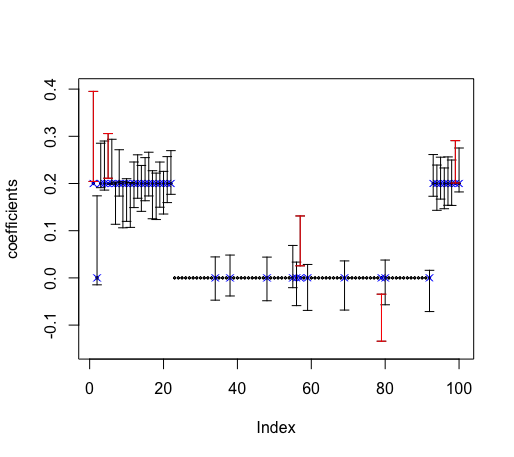}
    \end{subfigure}
    \caption{An instance of the selected feature (blue crosses) and the constructed CIs using fissioned data (left), full data twice (middle), and split data (right). The black dots indicate the actual values for the model parameters and the blue crosses, conditional on selection, represent the target parameters that minimize the KL divergence between the true distribution of the data and the selected model (i.e. $\beta^{\star}_{n}(M)$. CIs which do not cover the parameters correctly are marked in red.}
    \label{fig:poisson_example3}
\end{figure}

\begin{figure}[H]
\centering
    \begin{subfigure}[t]{1\textwidth}
        \centering
        \includegraphics[width=0.3\linewidth]{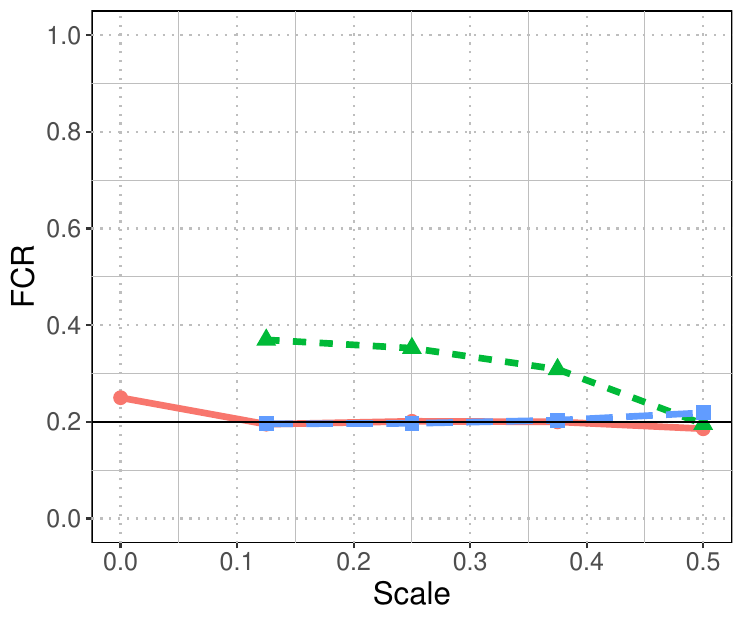}
    \hfill
        \includegraphics[width=0.3\linewidth]{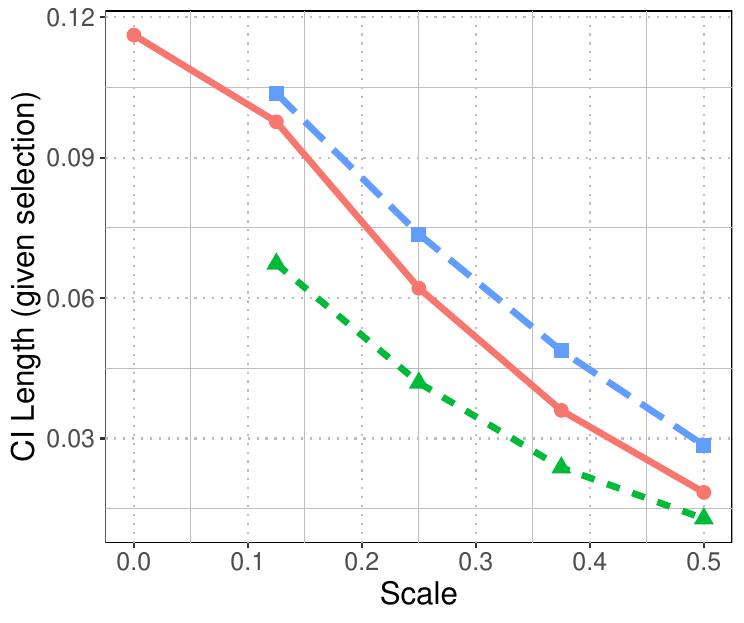}
    \hfill
        \includegraphics[width=0.3\linewidth]{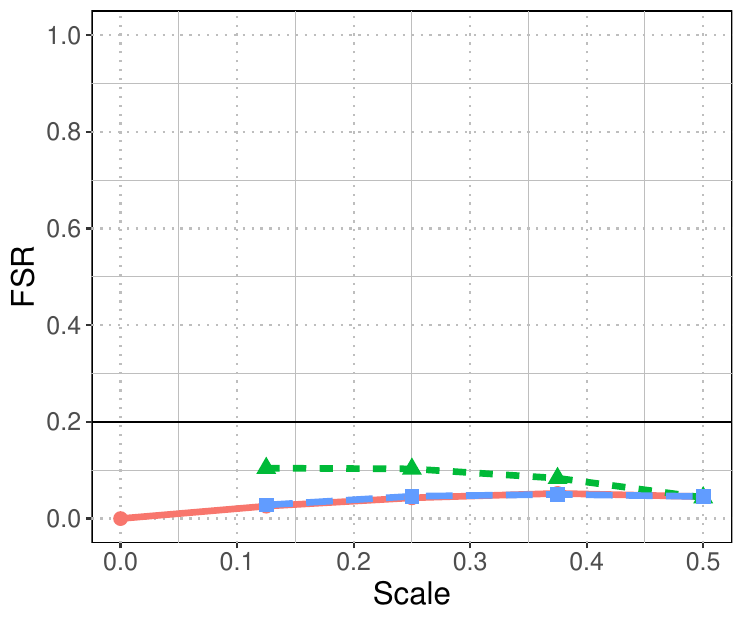}
    \end{subfigure}
\vfill
    \begin{subfigure}[t]{1\textwidth}
        \centering
        \includegraphics[width=0.3\linewidth]{figures/legend_fission_regression.png}
    \end{subfigure}
\vfill
    \begin{subfigure}[t]{1\textwidth}
        \centering
        \includegraphics[width=0.3\linewidth]{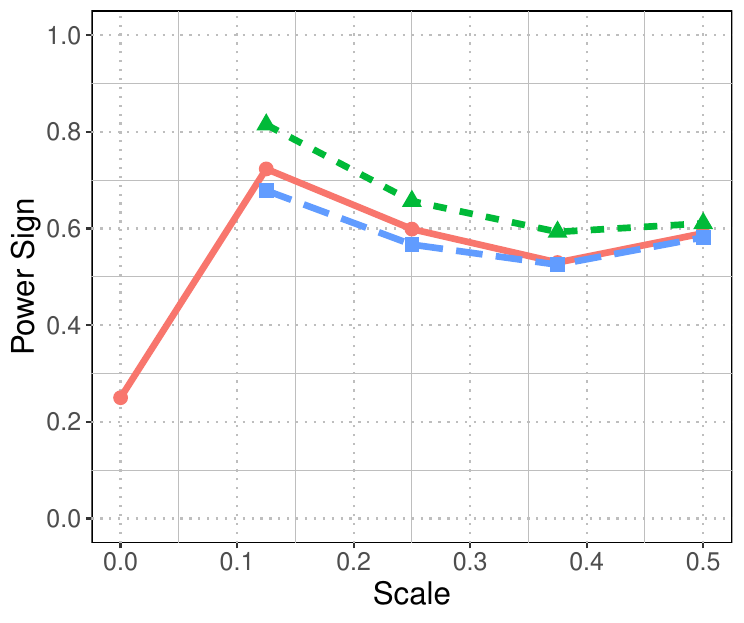}
    \hfill
        \includegraphics[width=0.3\linewidth]{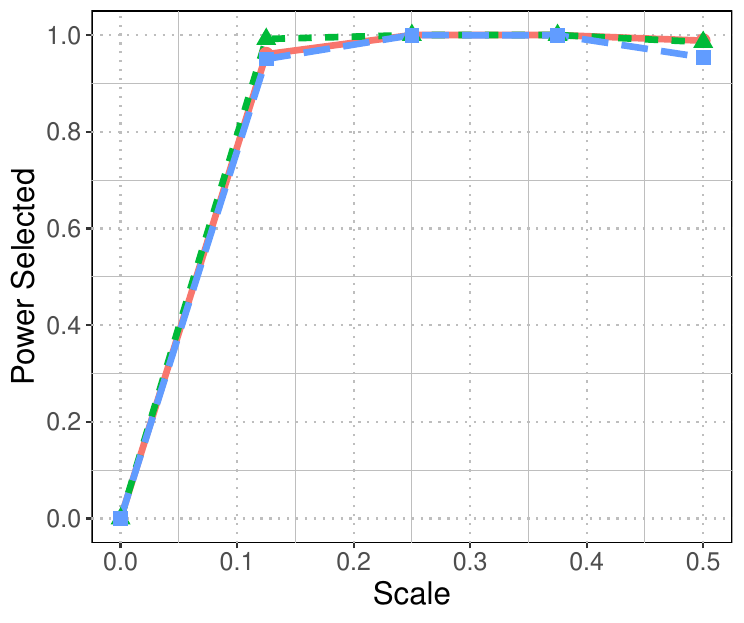}
    \hfill
        \includegraphics[width=0.3\linewidth]{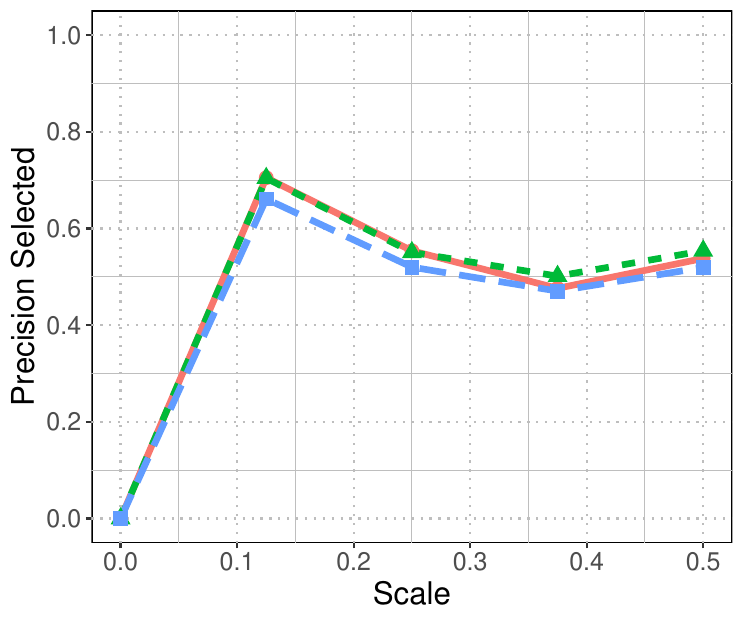}
    \end{subfigure}
    \caption{FCR, length of the CIs, FPR and power for the sign of parameters, and power and precision for the selected features, when varying the signal strength $S_\Delta$ in $\{0, 0.125, 0.25, 0.375, 0.5\}$. The results are averaged over 500 repetitions. The CIs by using the original data twice do not have FCR control guarantee due to selection bias. In contrast, our proposed procedure using fission has valid FCR, without inflating the length of CIs much or reducing the power of selecting non-zero features.}
    \label{fig:i-positive3}
\end{figure}

\paragraph{Simulation results with dependent covariates.} We modify the above simulation slightly so that the $X$ are samples from \emph{dependent} rather than independent Gaussians. The covariance matrix is a five-block diagonal matrix, each block a $20 \times 20$ toeplitz matrix:
\begin{align} \label{eq:rho_mat}
    \begin{bmatrix}
1 & \rho & \cdots & \rho^{d-2} & \rho^{d-1}\\
\rho & 1 & \rho & \cdots & \rho^{d-2}\\
\vdots & \vdots & \vdots & \vdots & \vdots\\
\rho^{d-2} & \cdots & \rho & 1 & \rho\\
\rho^{d-1}  & \rho^{d-2} & \cdots & \rho & 1
\end{bmatrix},
\end{align}
where $d = 20$.
Results averaged over $500$ trials are shown in \cref{fig:i-positive_dependent2} for varying $\rho$.

\begin{figure}[H]
\centering
    \begin{subfigure}[t]{1\textwidth}
        \centering
        \includegraphics[width=0.3\linewidth]{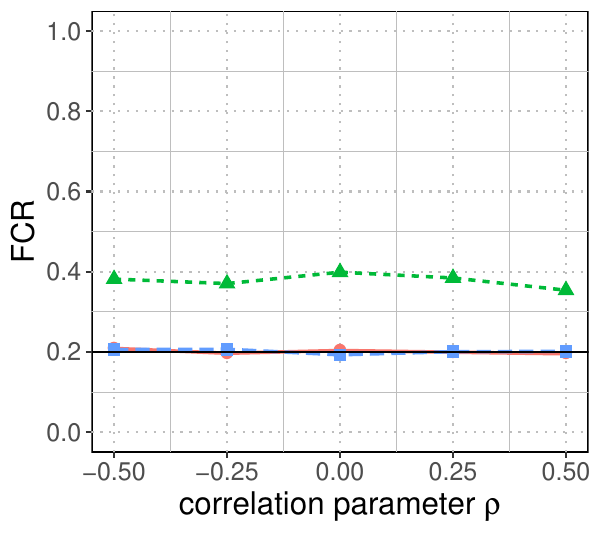}
    \hfill
        \includegraphics[width=0.3\linewidth]{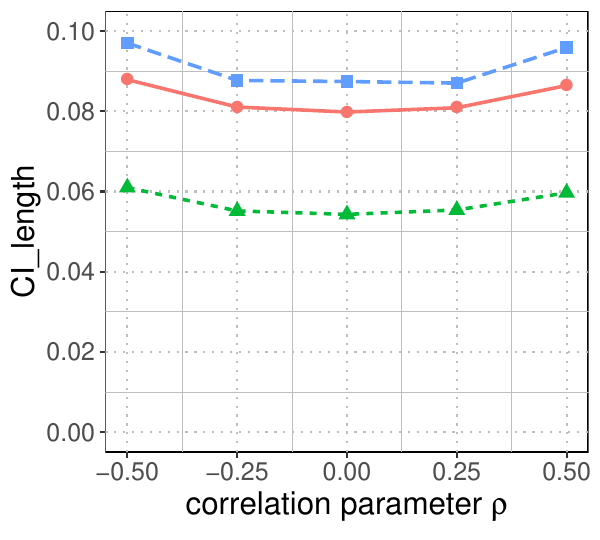}
    \hfill
        \includegraphics[width=0.3\linewidth]{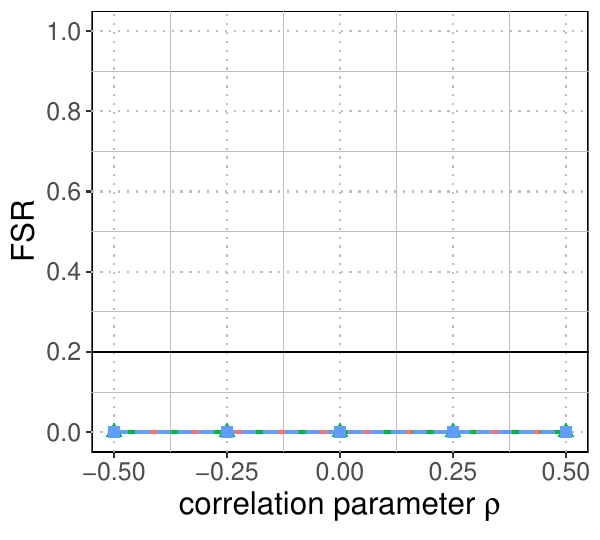}
    \end{subfigure}
\vfill
    \begin{subfigure}[t]{1\textwidth}
        \centering
        \includegraphics[width=0.3\linewidth]{figures/legend_fission_regression.png}
    \end{subfigure}
\vfill
    \begin{subfigure}[t]{1\textwidth}
        \centering
        \includegraphics[width=0.3\linewidth]{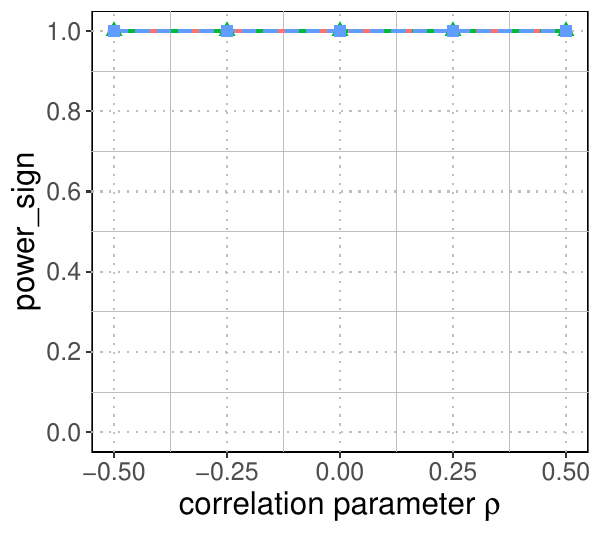}
    \hfill
        \includegraphics[width=0.3\linewidth]{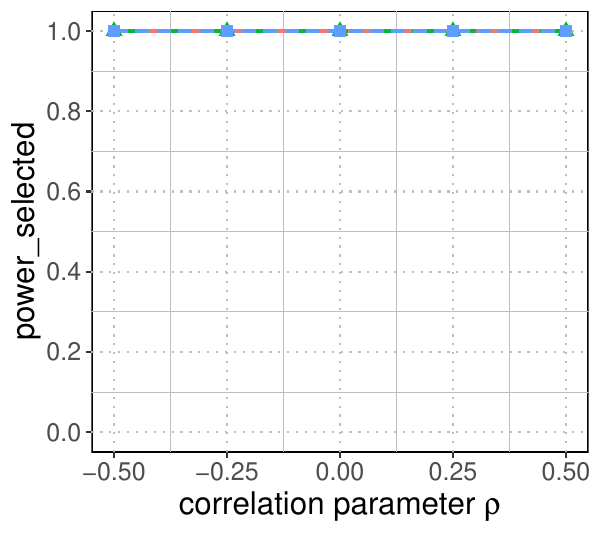}
    \hfill
        \includegraphics[width=0.3\linewidth]{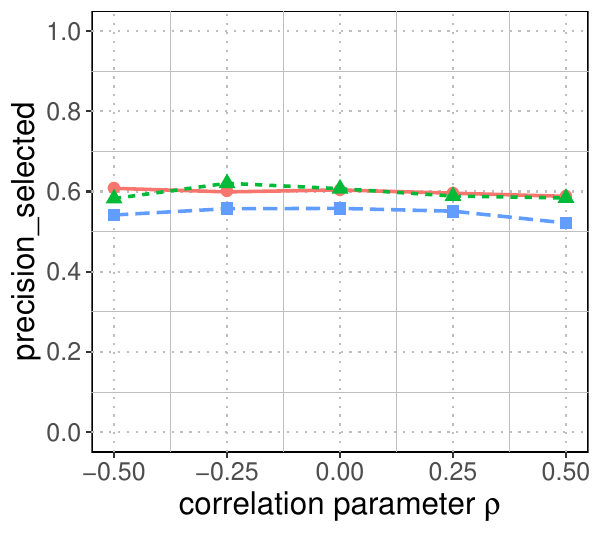}
    \end{subfigure}
    \caption{FCR, length of the CIs, FSR/power for the sign of parameters, and power and precision for the selected features, when varying the correlation parameter $\rho$ in $\{-0.5, -0.25, 0, 0.25, 0.5\}$ (with $\rho = 0$ corresponding to mutually independent covariates). The performance of the three methods is relatively similar under different degrees of dependence.}
    \label{fig:i-positive_dependent2}
\end{figure}
Data fission continues to have tighter confidence intervals than data splitting, though the performances across the three methods are simular for the other metrics. 

\subsection{Simulation results for fixed-design logistic regression}\label{sec:appendix_logistic}
We also explore the empirical performance of data fission in a logistic regression setting. 

\paragraph{Setup.} Let $y_i$ be the dependent variable and $x_i \in \mathbb{R}^p$ be a vector of $p$. We let $n = 1000$ and $p = 100$ with $x_i \in \{0,1\}^2 \times \mathbb{R}^{98}$ generated with a $\mathrm{Ber}(1/2)$ for the first two entries and the remaining following iid Standard Gaussians. Suppose the conditional distribution of $y_i$ given $x_i$ is a Bernoulli distribution with expected value $\left(1 + \exp\{- \beta^T  x_i\}\right)^{-1}$, where the parameter $ \beta$ is nonzero for 30 features: $(\beta_{1}, \beta_3, \ldots, \beta_{22}, \beta_{92}, \ldots, \beta_{100}) = S_\Delta \cdot (\underbrace{1, \ldots, 1}_{21}, \underbrace{2, \ldots, 2}_{9})$ and $S_\Delta$ encodes the signal strength.

\paragraph{Proposed procedure.} Following Section~\ref{sec:list_decomp}, we can use data fission for constructing selective CIs as follows.
\begin{enumerate}
    \item  Draw $Z_i \sim \mathrm{Ber}(p)$ where the ``flip probability'' $p \in (0,1)$ is a tuning parameter; and let $f(Y_i) = Y_i(1 - Z_i) + (1 - Y_i)Z_i$, and $g(Y_i) = Y_i$.
    \item Fit $f(Y_i)$ with a GLM with a logit link function and lasso regularization to select features, denoted as $M \in [p]$ (in our examples, we use \texttt{cv.glmnet} in the \texttt{R} package \texttt{glmnet} and choose the tuning parameter $\lambda$ by the 1 standard deviation rule, which can be found in the value of \texttt{lambda.1se}).
    \item Fit $g(Y_i)$ with another GLM with a logit link function and no regularization using \textit{only} the selected features (we use \texttt{glm} in the \texttt{R} package \texttt{stats}).
    \item Construct CIs for the coefficients trained in the third step, each at level $\alpha$ and with the standard errors estimated as in \cref{thm:var_emp_fahrmeir}. For the experiments shown below, we also apply a finite sample correction as described in \cite{CR2_explanation}.
\end{enumerate}
For greater detail on steps 3 and 4, see the logistic regression discussion in Appendix~\ref{sec:appendix_QMLE_details}. Since the working model is not correctly specified in this instance, the CIs are now are covering the target parameters $\beta_{n}^{\star}(M)$ which minimize the KL divergence between the chosen model and the true distribution of the data.


This procedure is illustrated for a single trial run in \cref{fig:i-positive6}. As usual, we compare the CIs constructed using data fission with data splitting and the (invalid) approach of reusing the full dataset for both selection and inference. We see that even though


\begin{figure}[H]
\centering
    \begin{subfigure}[t]{0.4\textwidth}
        \centering
        \includegraphics[trim=5 20 25 60, clip, width=1\linewidth]{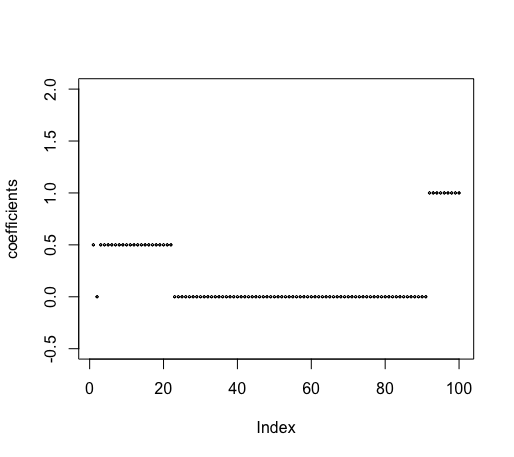}
    \caption{True coefficient values.}
    \end{subfigure}
\hfill
    \begin{subfigure}[t]{0.4\textwidth}
        \includegraphics[trim=5 20 25 60, clip, width=1\linewidth]{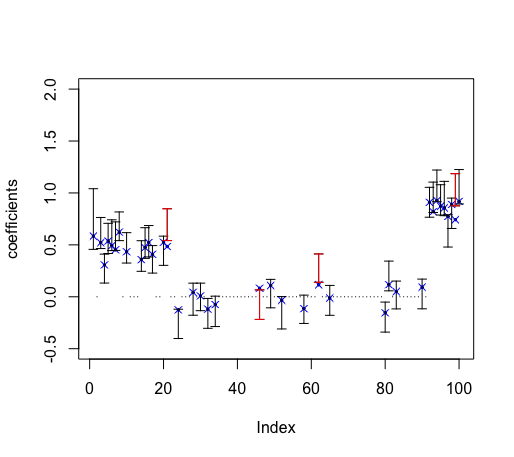}
    \caption{CI for target parameters using data fission.}
    \end{subfigure}
\end{figure}
\begin{figure}[H]
\ContinuedFloat
\centering
    \begin{subfigure}[t]{0.4\textwidth}
        \centering
        \includegraphics[trim=5 20 25 60, clip, width=1\linewidth]{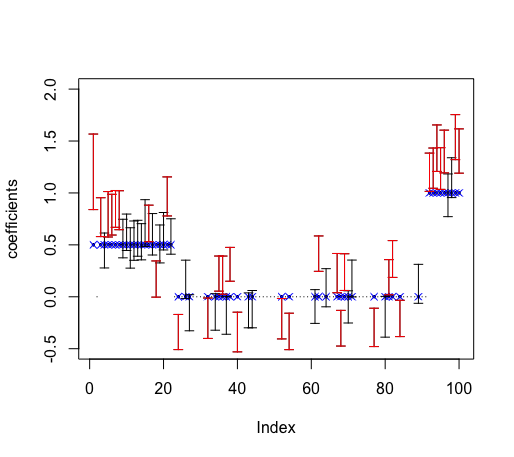}
    \caption{CIs for target parameters using full data twice.}
    \end{subfigure}
\hfill
    \begin{subfigure}[t]{0.4\textwidth}
        \includegraphics[trim=5 20 25 60, clip, width=1\linewidth]{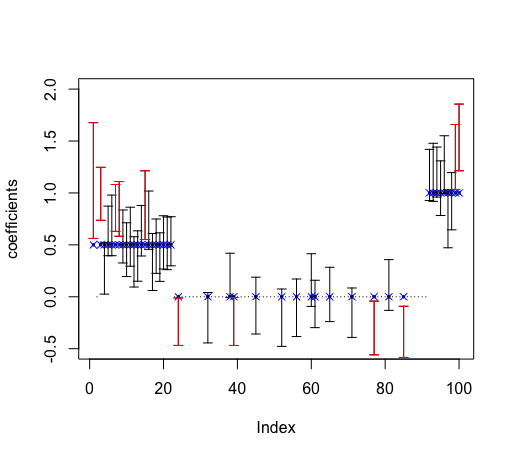}
    \caption{CIs for target parameters using splitting.}
    \end{subfigure}
    \caption{An instance of CIs for an example set of selected features using data fission, data splitting, and the (invalid) approach of using using the full dataset twice for both selection and inference. The upper left-hand graph shows the true coefficients, but the CIs are designed to cover the target parameters ($\beta^{\star}_{n}(M)$, marked by blue crosses) as described above. We note that both data splitting and data fission have valid FCR control (at target level = $0.2$), but the CIs constructed using data fission are much tighter. A drawback of data fission in this case, however, is the target parameters are shifted more from their actual values when compared with data splitting.}
    \label{fig:i-positive6}
\end{figure}

\paragraph{Simulations varying signal stength.} 
We repeat this experiment over $500$ trials while varying signal signal strength from $0$ to $0.5$, with results shown in \cref{fig:i-positive2}. We note that across all observed signal strengths, data fission offers tighter CIs when compared to both data splitting \emph{and} the (invalid) approach that reuses the full dataset for both selection and inference. This fact appears remarkable, but we note the significantly decreased power of data fission in this case. The simpler model, combined with the fact that the model misspecification introduced by data fission can adjust the target parameter, means that the comparison in CI length across methods is not necessarily apples-to-apples. 

\begin{figure}[H]
\centering
    \begin{subfigure}[t]{1\textwidth}
        \centering
        \includegraphics[width=0.3\linewidth]{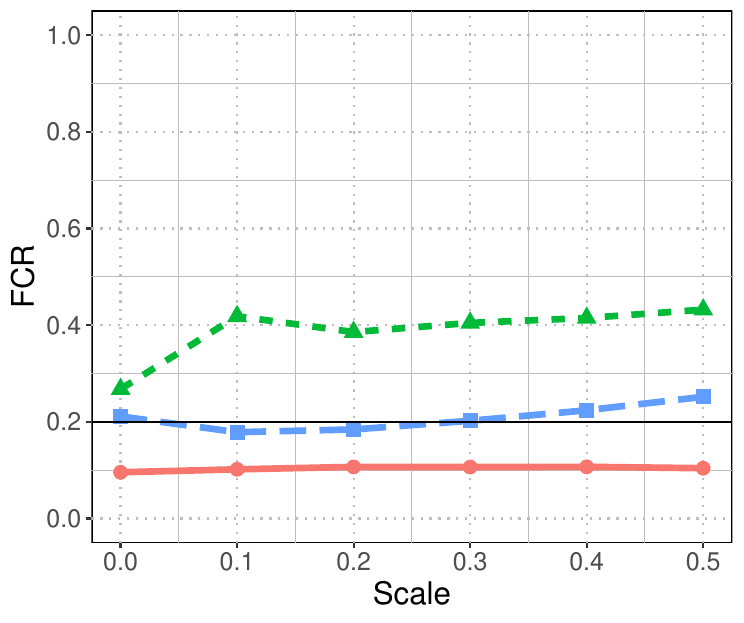}
    \hfill
        \includegraphics[width=0.3\linewidth]{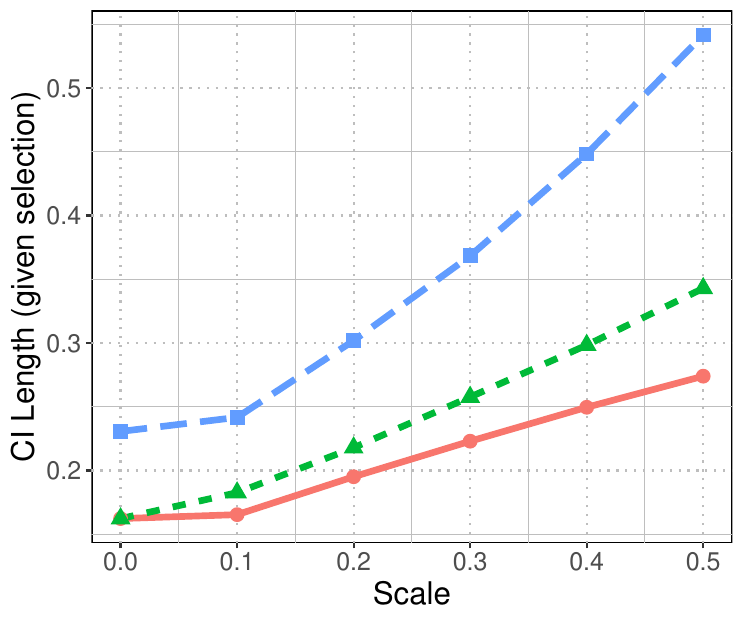}
    \hfill
        \includegraphics[width=0.3\linewidth]{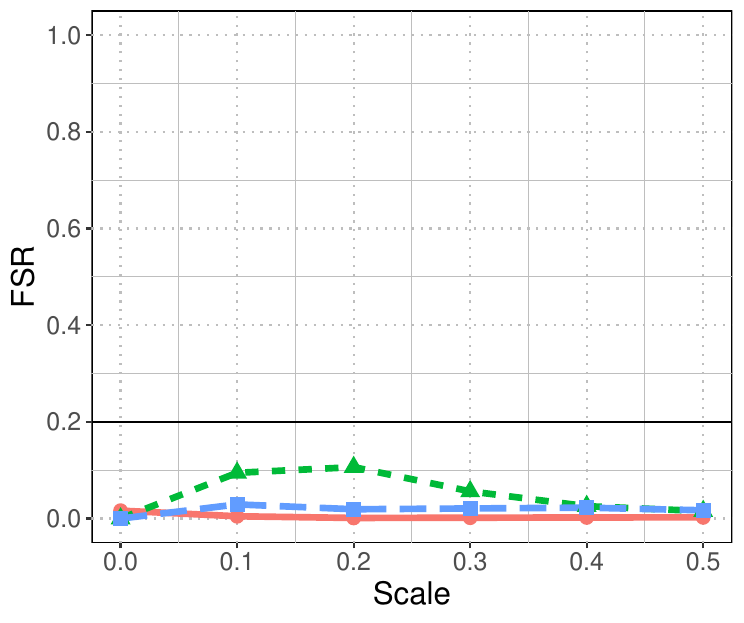}
    \end{subfigure}
\vfill
    \begin{subfigure}[t]{1\textwidth}
        \centering
        \includegraphics[width=0.3\linewidth]{figures/legend_fission_regression.png}
    \end{subfigure}
\vfill
    \begin{subfigure}[t]{1\textwidth}
        \centering
        \includegraphics[width=0.3\linewidth]{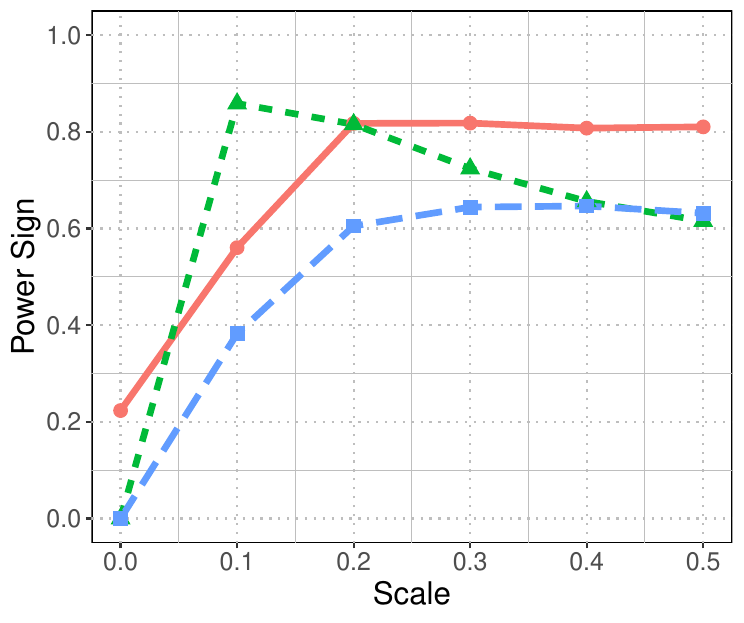}
    \hfill
        \includegraphics[width=0.3\linewidth]{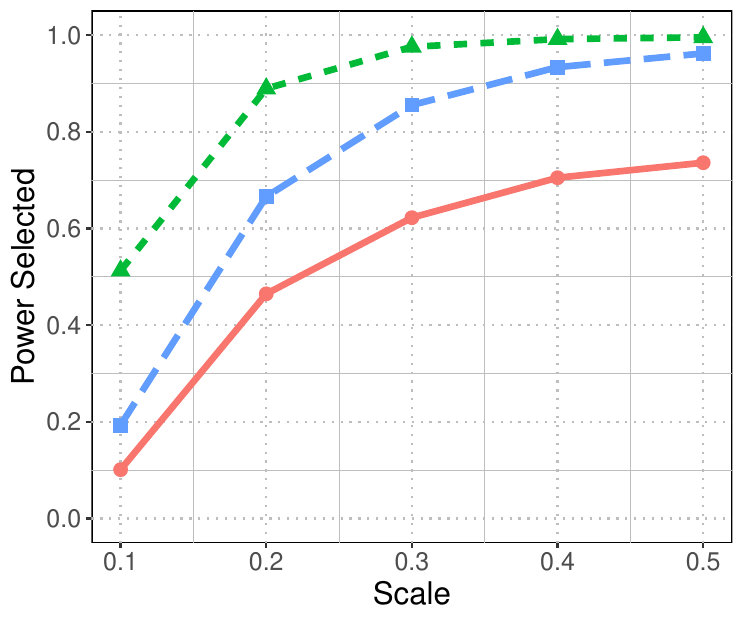}
    \hfill
        \includegraphics[width=0.3\linewidth]{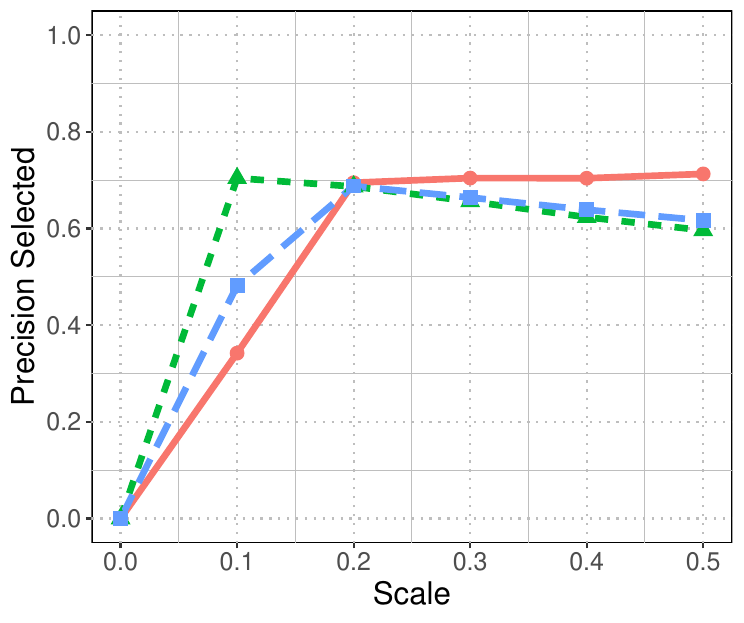}
    \end{subfigure}
\end{figure}
\begin{figure}[H]
\ContinuedFloat
\centering
    \caption{
    FCR, CI width, FSR, power for the sign of parameters, and power and precision for the selected features, when varying the signal strength $S_\Delta$ in $\{0, 0.1, 0.2, 0.3, 0.4, 0.5\}$. The hyperparameter $p$ is chosen as $0.2$ and target FCR for CIs is set at $0.2$.  We note that data fission offers smaller CI widths when compared to both data splitting and the (invalid) approach that reuses the full dataset for both selection and inference.
    }
    \label{fig:i-positive2}
\end{figure}

\paragraph{Simulations varying fission hyperparameter $p$.} We also examine the performance of these experiments as we vary the fission hyperparameter $p$. We note that values of $p$ close to either $0$ or $1$ correspond to having more information reserved for the selection step, while setting $p=0.5$ maximizes the amount of information reserved for inference. In \cref{fig:i-positive9}, we can see this behavior manifested in the ``dips''  in the CI length, power and precision as $p$ ranges from $0.3$ to $0.5$.
\begin{figure}[H]
\centering
    \begin{subfigure}[t]{1\textwidth}
        \centering
        \includegraphics[width=0.3\linewidth]{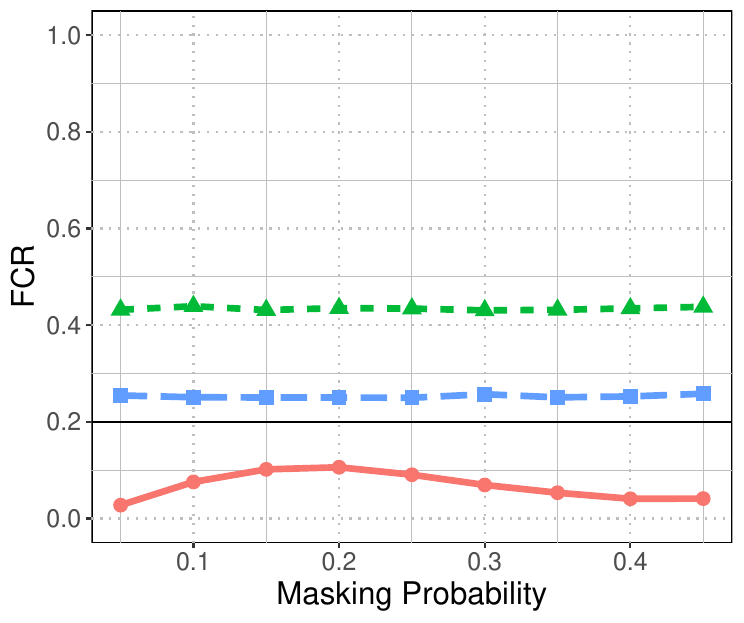}
    \hfill
        \includegraphics[width=0.3\linewidth]{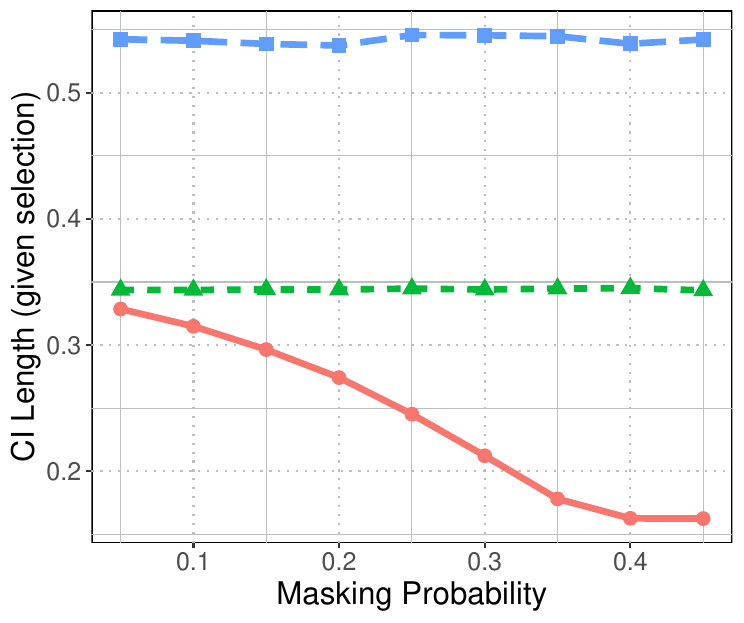}
    \hfill
        \includegraphics[width=0.3\linewidth]{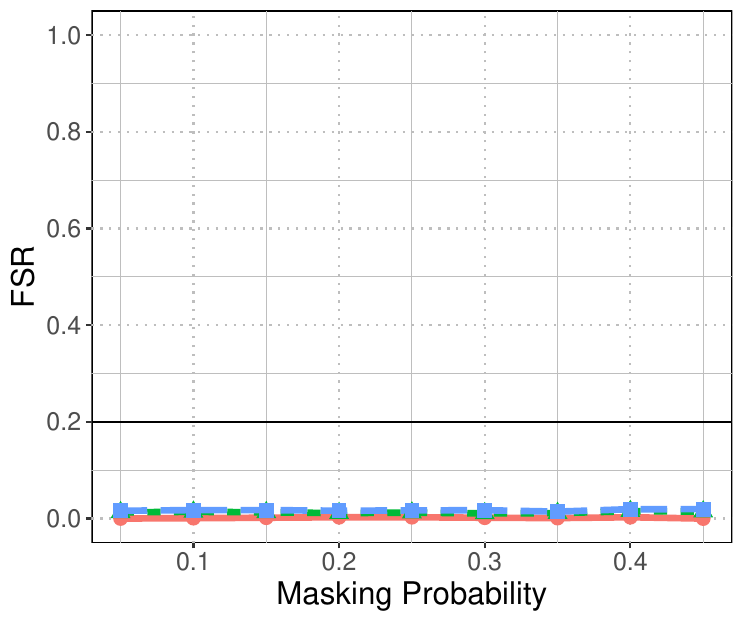}
    \end{subfigure}
\vfill
    \begin{subfigure}[t]{1\textwidth}
        \centering
        \includegraphics[width=0.3\linewidth]{figures/legend_fission_regression.png}
    \end{subfigure}
\vfill
    \begin{subfigure}[t]{1\textwidth}
        \centering
        \includegraphics[width=0.3\linewidth]{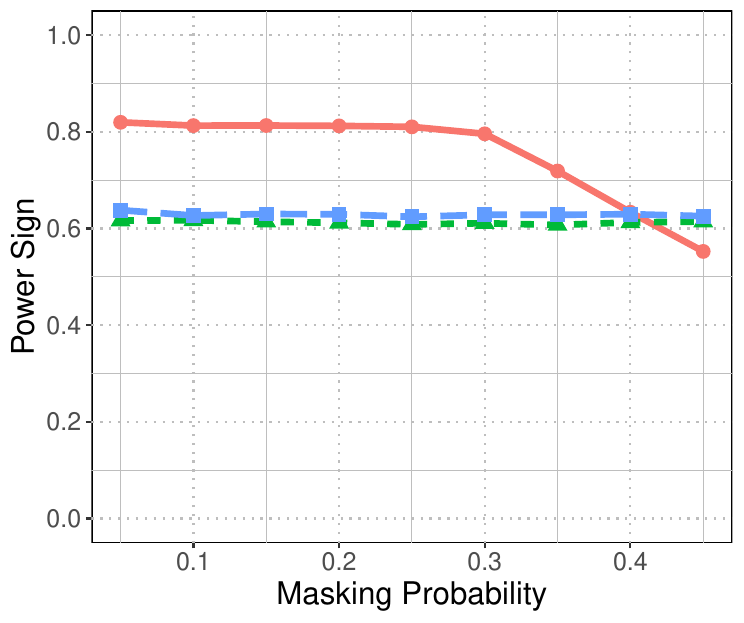}
    \hfill
        \includegraphics[width=0.3\linewidth]{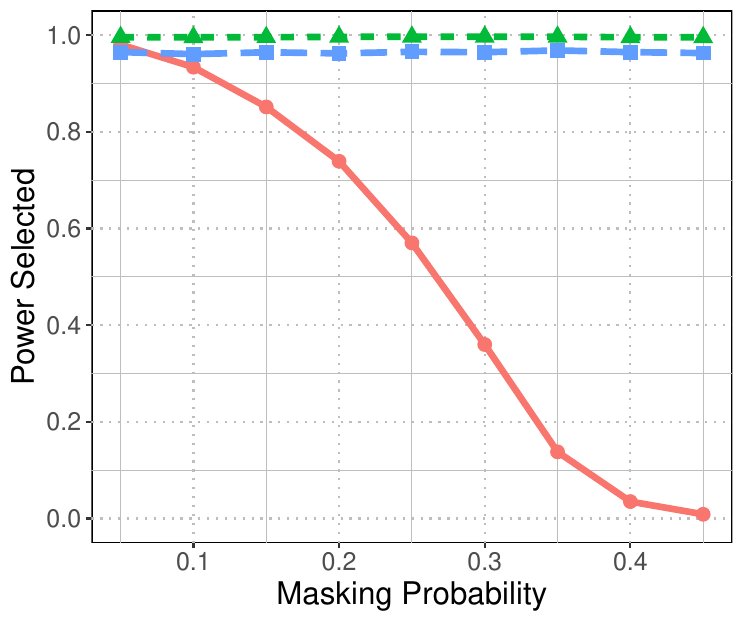}
    \hfill
        \includegraphics[width=0.3\linewidth]{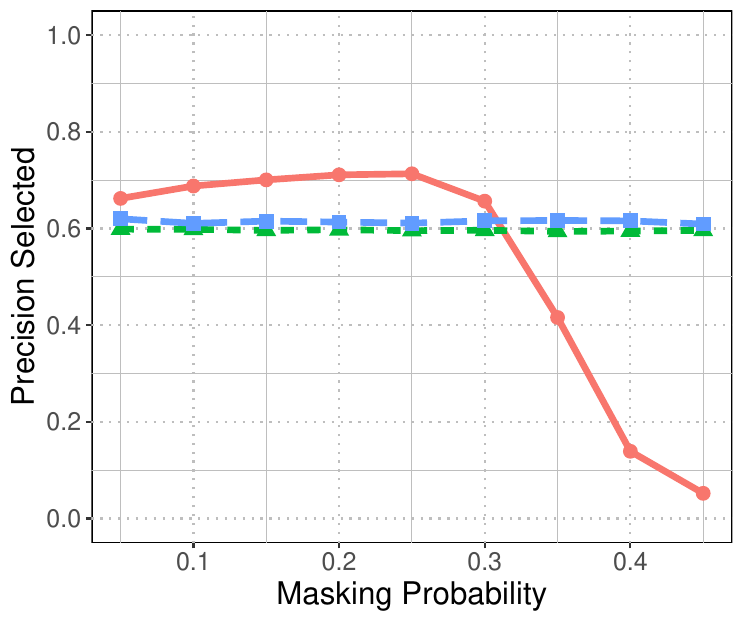}
    \end{subfigure}
    \caption{FCR, CI width, power for the sign of parameters, and power/precision for the selected features, when varying the hyperparameter $p$ in $\{0.05, 0.1, 0.15, 0.2, 0.25, 0.3, 0.35, 0.4, 0.45\}$ with fixed signal strength $S_\Delta=0.5$. The power and precision are higher when $p$ is closer to either $0$ or $1$  rather than $0.5$. Interestingly, however, the CI lengths only slightly increases as $p$ varies away from $0.5$.}
    \label{fig:i-positive9}
\end{figure}

\subsection{Simulation results for trend filtering}\label{appendix:trendfilter_supplemental}
In \cref{sec:trendfilter_empirical}, we note that data fission enables the construction of confidence intervals with proper pointwise and uniform coverage but at the price of wider confidence intervals when compared with using the full dataset twice for selection and inference. We expand on this finding through additional simulations. In particular:
\begin{itemize}
    \item  In Appendices~ \ref{sec:appendix_pointwise} and~\ref{sec:appendix_uniform}, we investigate how the confidence interval lengths compare between these two methods.
    \item In Appendix~\ref{sec:trendfilter_estimated_sigma}, we investigate an extension of the trend filtering methodology that accounts for unknown variance. 
    \item In Appendix~\ref{sec:appendix_alternative_knots}, we investigate alternative ways of selecting knots apart from cross validation such as Stein's unbiased risk estimate (SURE). 
\end{itemize}


\subsubsection{Additional results for pointwise CI simulations} \label{sec:appendix_pointwise}
In addition to FCR, we also track the pointwise CI length for the simulations described in \cref{sec:trendfilter_empirical}. We see that when using the full data twice, the average CI length shrinks drastically when the probability of new knots increases while data fission enforces confidence intervals that grow in size as the number of knots increases. In contrast, when the underlying variance increases, both methods increase the length of the confidence intervals. 

\begin{figure}[H]      
\centering
    \begin{subfigure}[t]{0.3\textwidth}
        \includegraphics[width=1\linewidth]{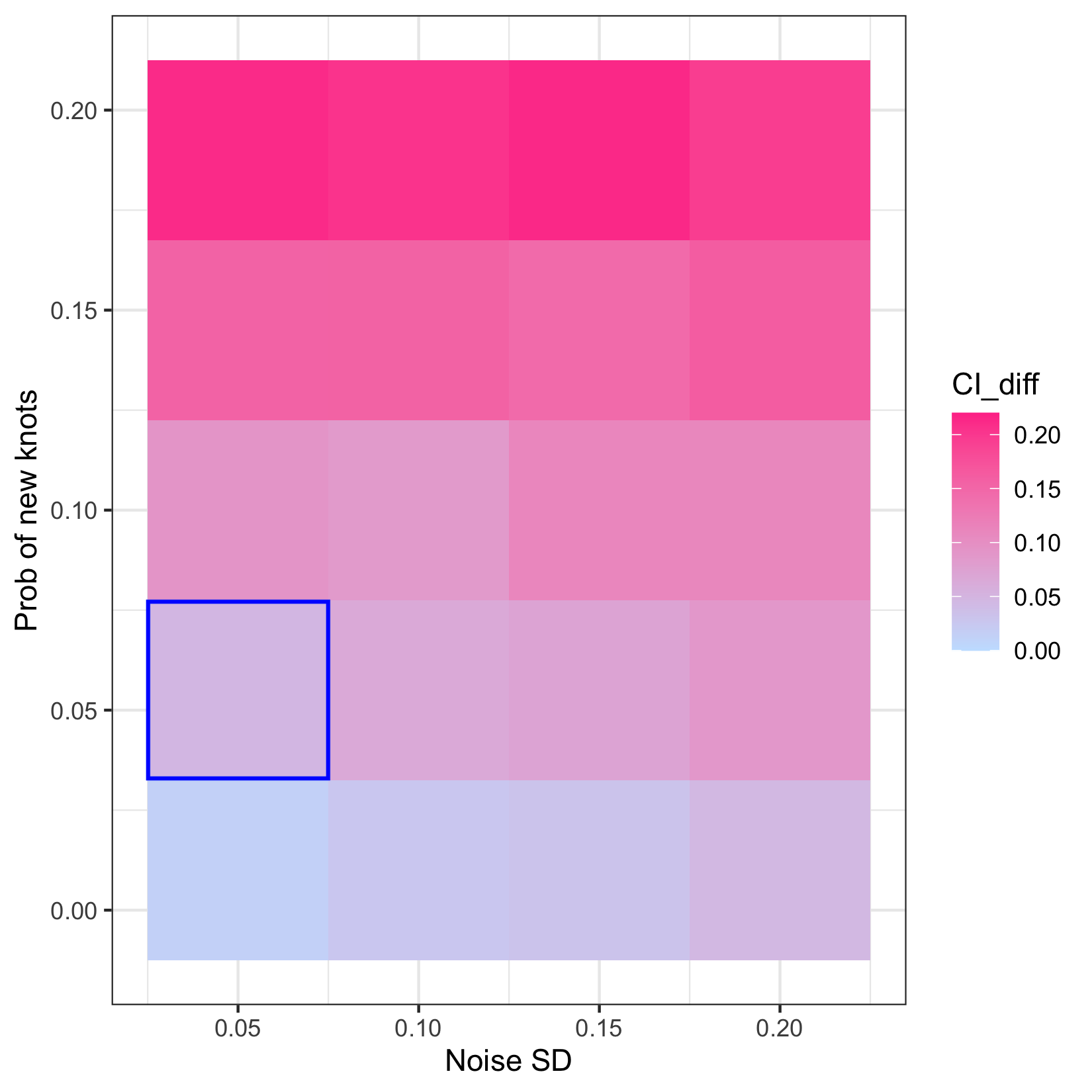}
        \caption{Difference in the averaged CI length when reusing the full dataset, versus data fission.}
    \end{subfigure}
\hfill
    \begin{subfigure}[t]{0.3\textwidth}
        \includegraphics[width=1\linewidth]{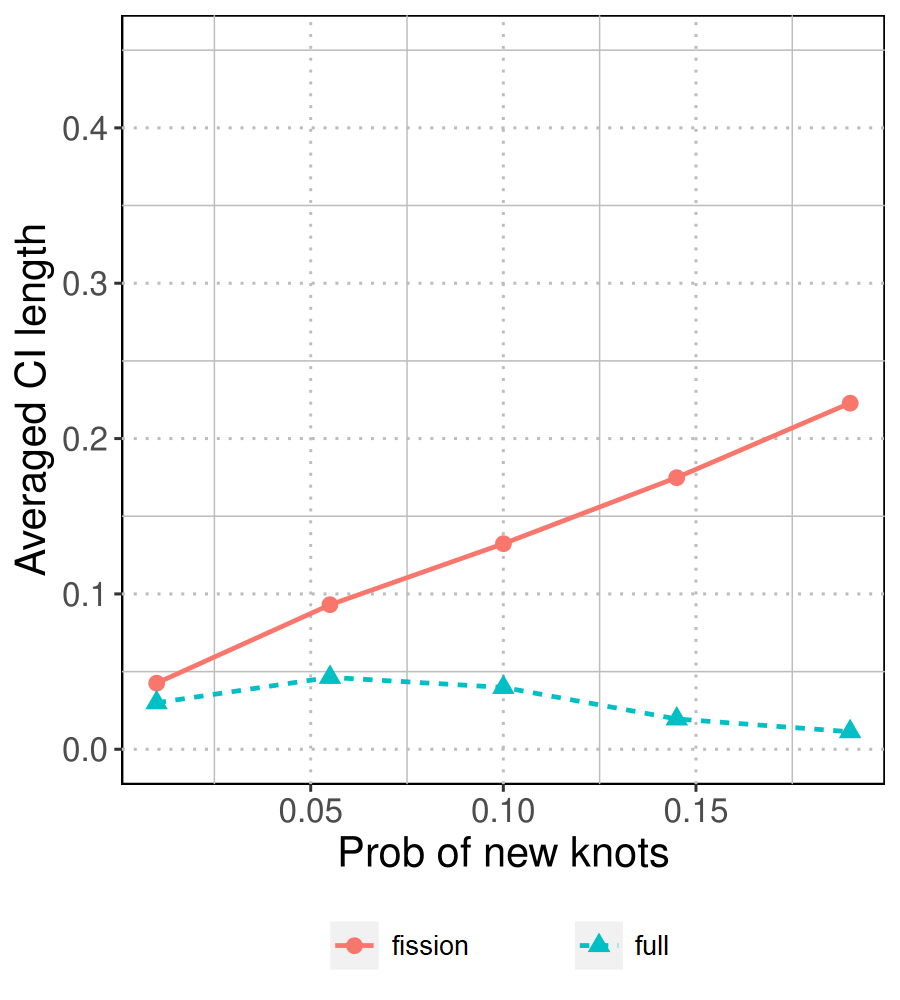}
        \caption{CI lengths when noise SD is 0.05.}
    \end{subfigure}
\hfill
    \begin{subfigure}[t]{0.3\textwidth}
        \includegraphics[width=1\linewidth]{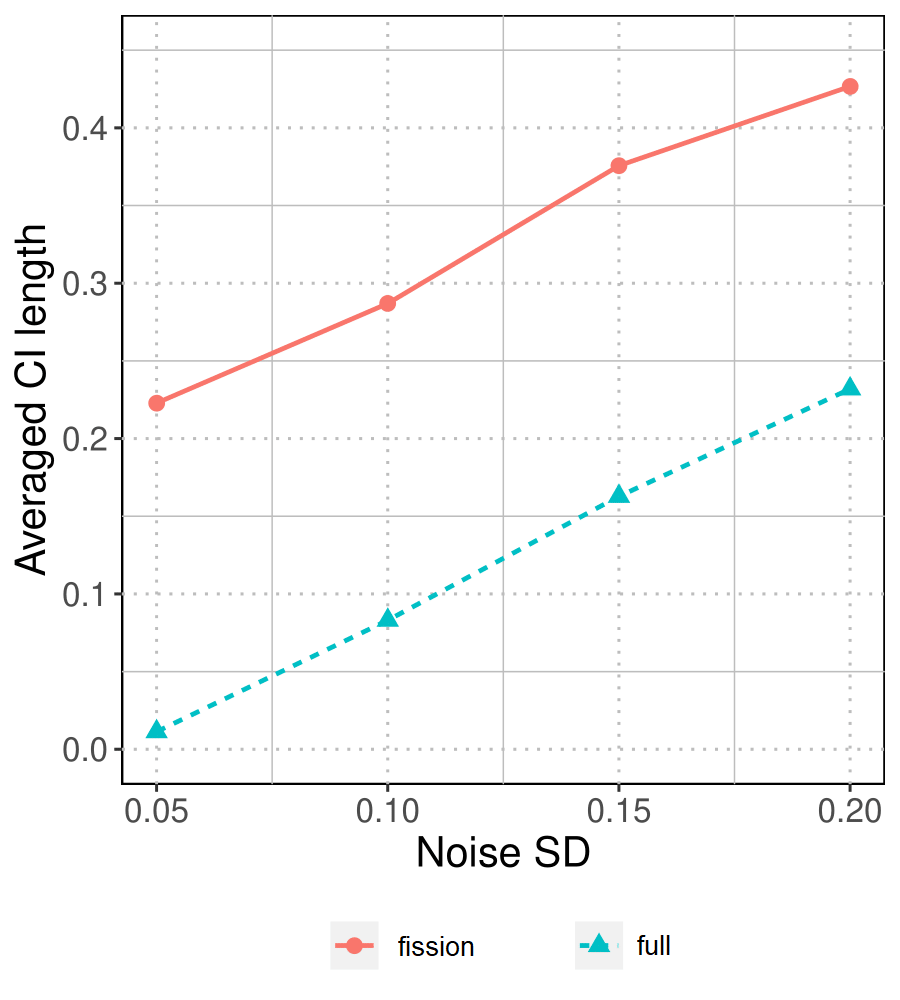}
        \caption{CI lengths when the probability of new knots is $0.2$.}
    \end{subfigure} 
    \caption{The width of \textbf{pointwise CIs} when varying the probability of having knots $p$ in $\{0.01, 0.55, 0.1, 0.145, 0.19\}$ and the noise SD in $\{0.05, 0.1, 0.15, 0.2\}$. The widths for CIs constructed using data fission range from 0.03 to 0.43, and are larger than the CIs constructed when the full dataset is reused for inference. Both methods have increased CI length as the noise SD increases, but when reusing the full dataset, the length is still not adequate to ensure FCR control. Interestingly, only the data fission approach increases the length of the CI as the number of knots increases.}
    \label{fig:trendfilter_CI} 
\end{figure}

\subsubsection{Additional results for uniform CI simulations} \label{sec:appendix_uniform}
In addition to simultaneous type I error control, we also track the uniform CI length for the simulations described in \cref{sec:trendfilter_empirical} in \cref{fig:trendfilter2_CI}. The differences between CIs constructed using data fission and using the full dataset twice are most stark when the noise is small and the probability of new knots is large because these are the circumstances where the trend constructed using the full dataset is likely to overfit.
\begin{figure}[H]
\centering
     \begin{subfigure}[t]{0.32\textwidth}
        \centering
        \includegraphics[width=1\linewidth]{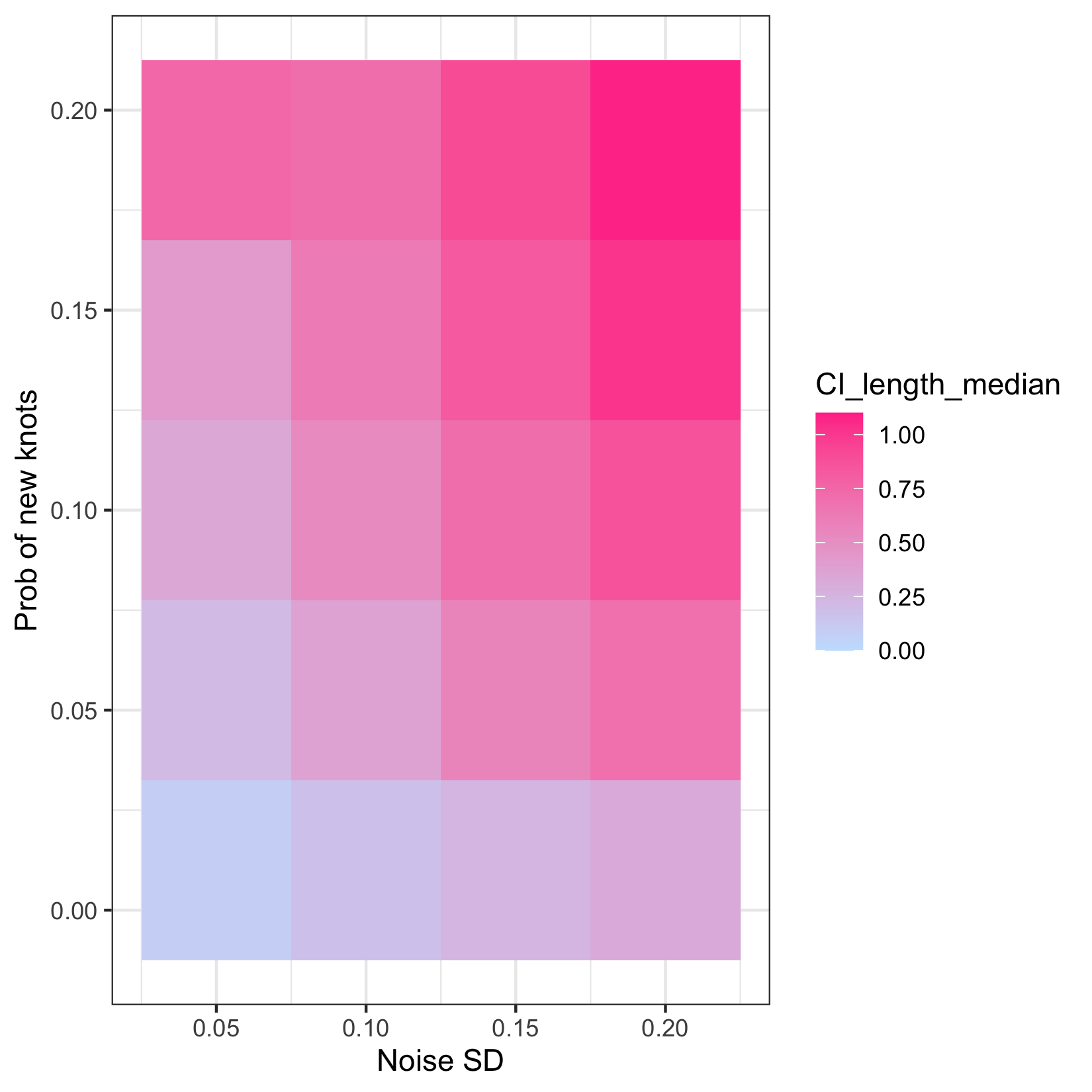}
    \end{subfigure}
\hfill
    \begin{subfigure}[t]{0.32\textwidth}
        \includegraphics[width=1\linewidth]{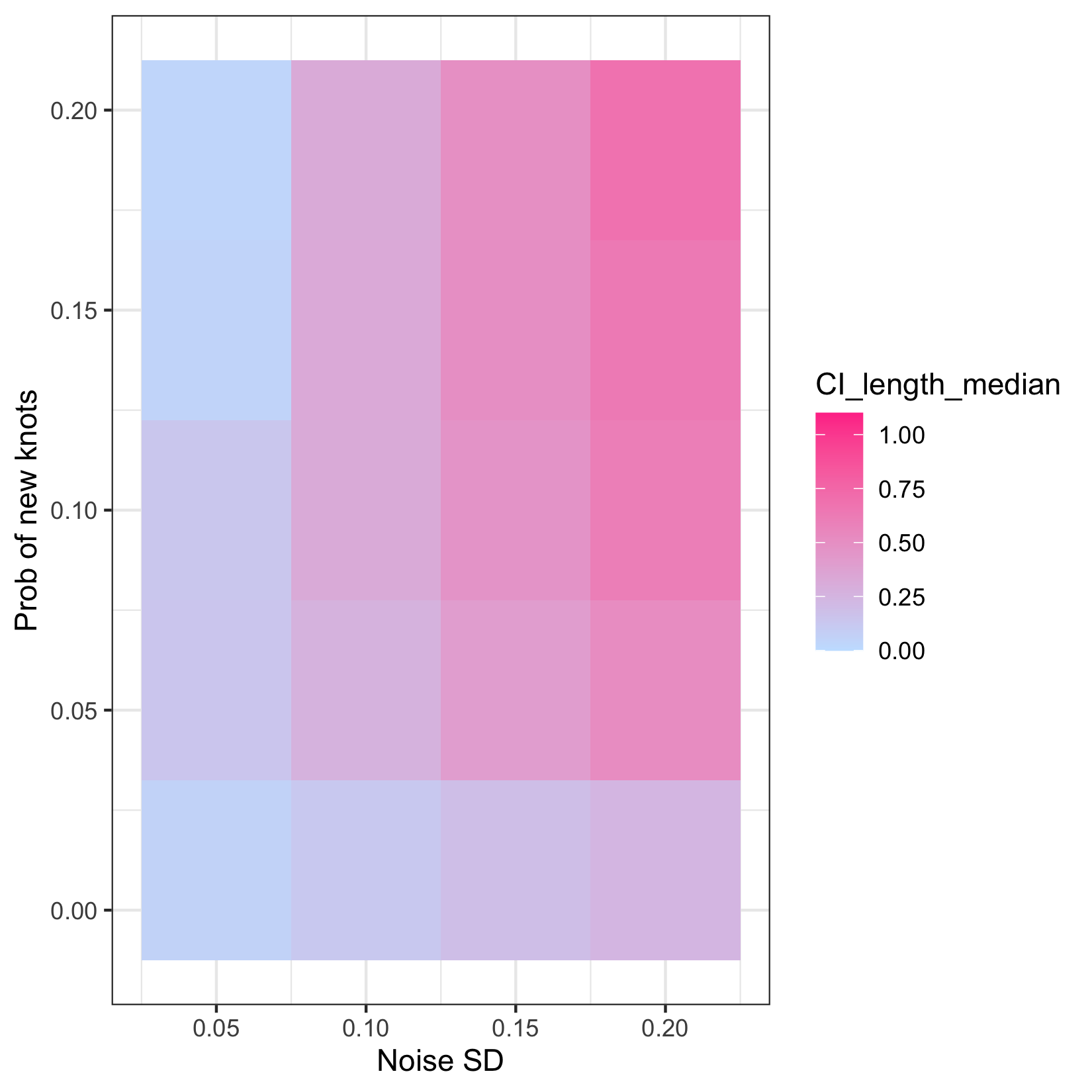}
    \end{subfigure}
\hfill
    \begin{subfigure}[t]{0.32\textwidth}
        \includegraphics[width=1\linewidth]{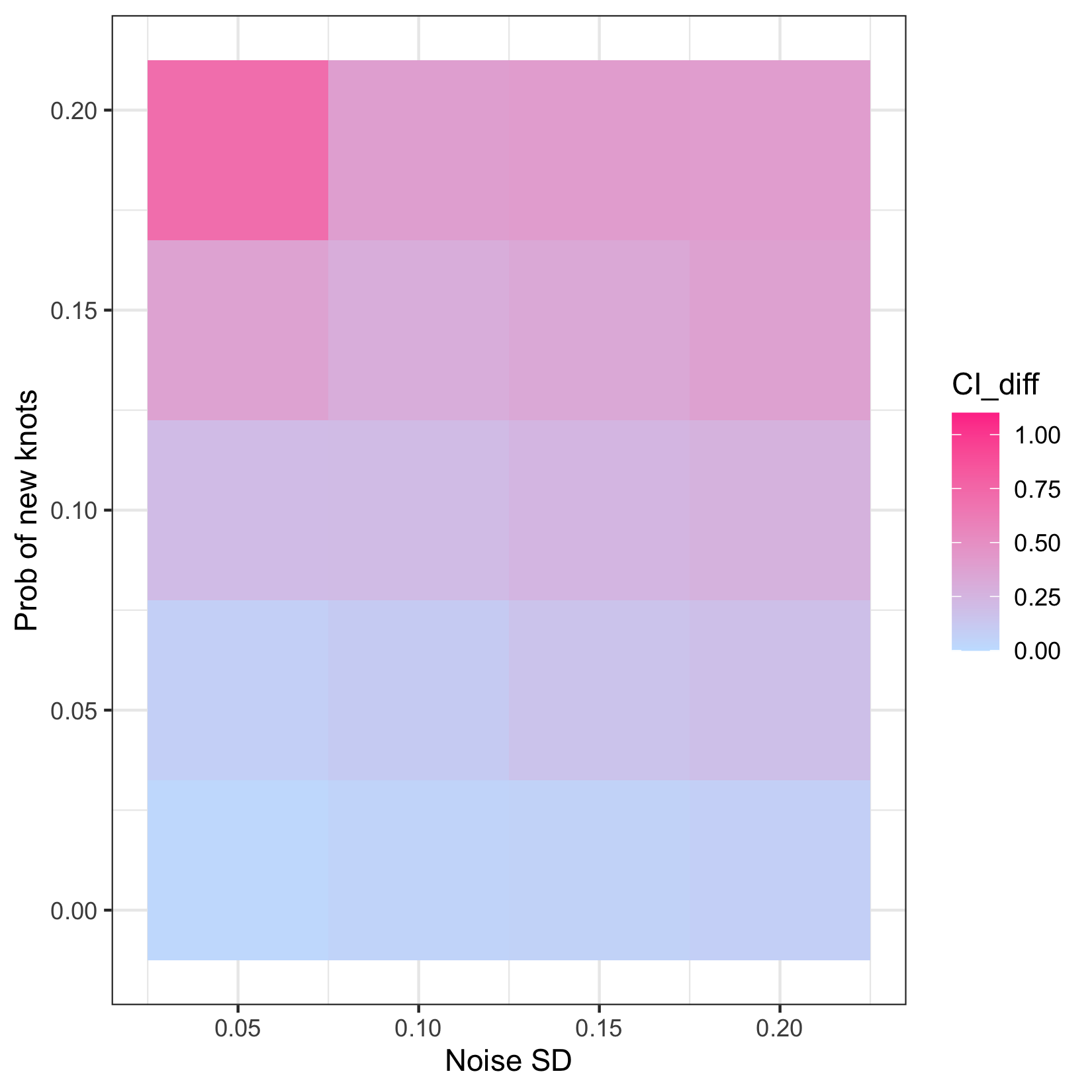}
    \end{subfigure}
\caption{The median CI width using either data fission (left) or full data twice (middle)  for \textbf{uniform CIs} have similar trends with respect to noise SD and probability of new knots. The CI difference (fission CI minus full-data CI) decreases when the noise SD increases (such that the effect of double dipping is smaller) or the probability of new knots decreases. These settings also correspond to instances where reusing the full dataset twice tends not violate FCR or type I error control too badly, since the width of the intervals is almost the same as those created from data fission.}

    \label{fig:trendfilter2_CI}
\end{figure}

\paragraph{Understanding the relationship of $c(\alpha)$ to changing CI lengths}
One obstacle in understanding how the widths of uniform confidence bands change as noise increases and the underlying structural trend changes is that the length is controlled by the multiplier $c(\alpha)$ used in Fact~\ref{trendfilter_uniform_method} which is somewhat opaque. Recall that $c(\alpha)$ is the solution to
\begin{align} \label{eq:multiplier}
	\frac{|\gamma|}{2\pi} e^{-c^{2}/2} + 1 - \Phi(c) = \alpha/2.
\end{align}
To aid in forming an intuition behind the empirical trends noted in \cref{fig:trendfilter2_CI}, we plot some intermediate statistics in \cref{fig:trendfilter_intermediate_statistics}. 

\begin{figure}[H]
\centering
     \begin{subfigure}[t]{0.22\textwidth}
        \centering
        \includegraphics[width=1\linewidth]{figures/regression_trendfilter_varySigmaProb_CI_length_median_masking.png}
    \end{subfigure}
\hfill
    \begin{subfigure}[t]{0.22\textwidth}
        \includegraphics[width=1\linewidth]{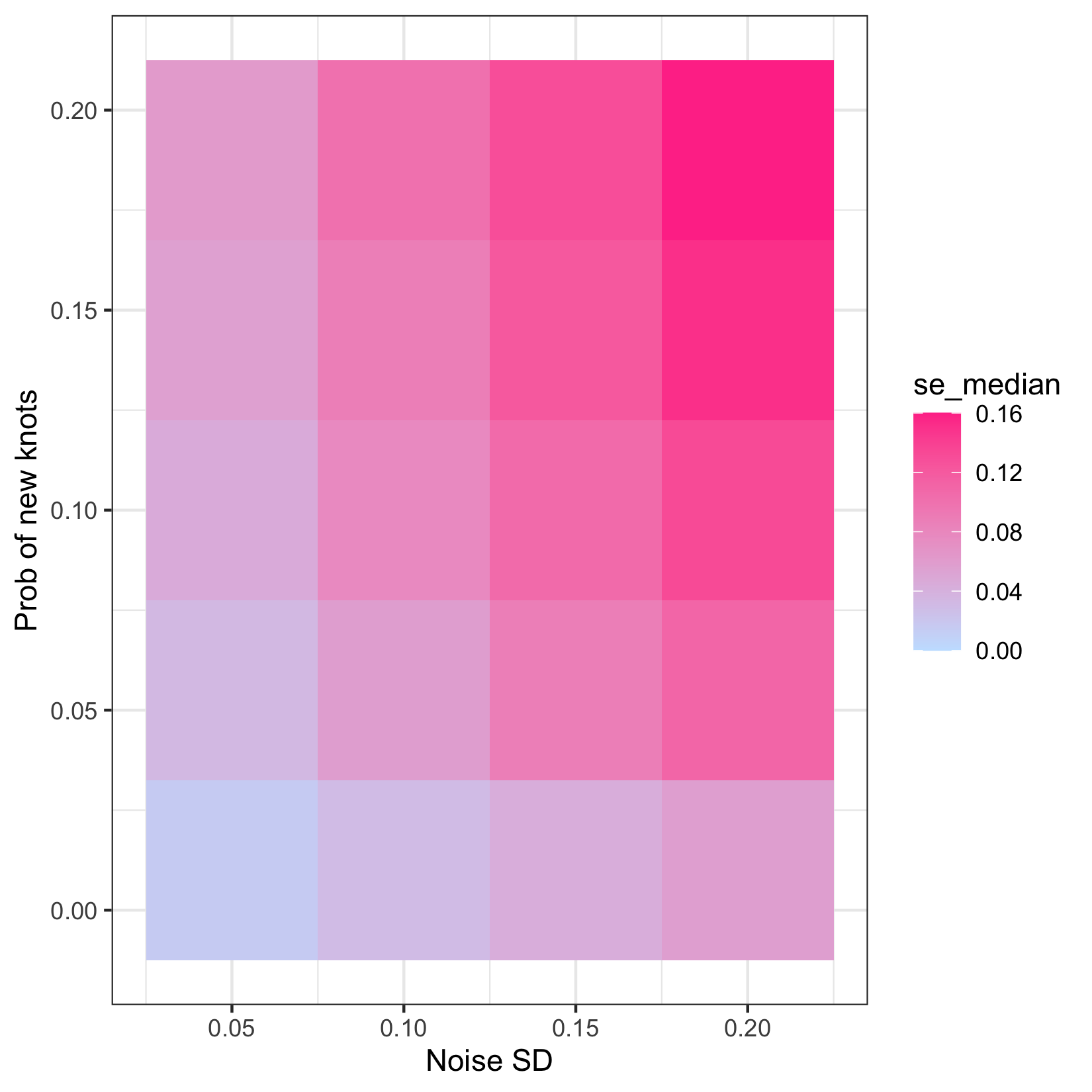}
    \end{subfigure}
\hfill
    \begin{subfigure}[t]{0.22\textwidth}
        \includegraphics[width=1\linewidth]{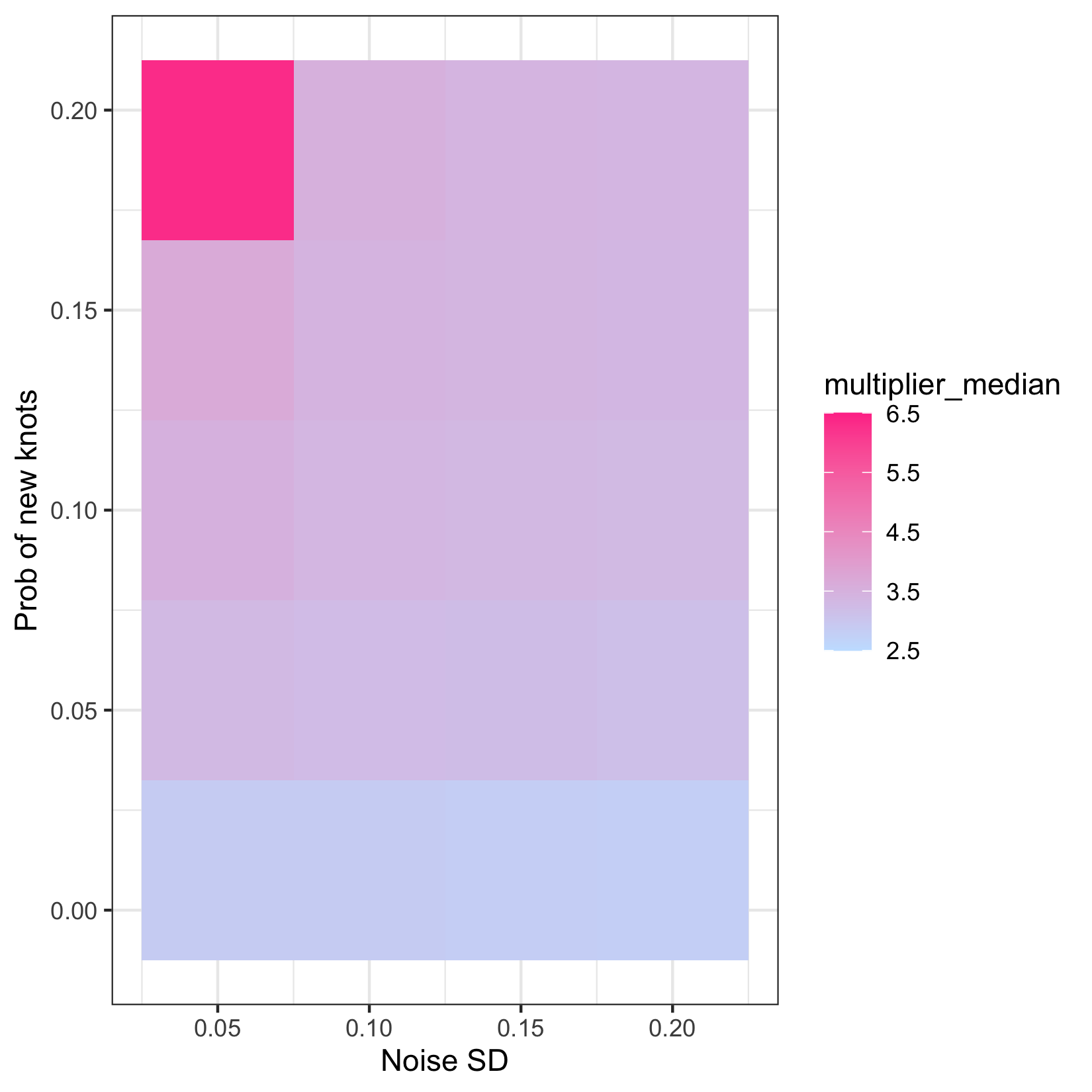}
    \end{subfigure}
\hfill
    \begin{subfigure}[t]{0.22\textwidth}
        \includegraphics[width=1\linewidth]{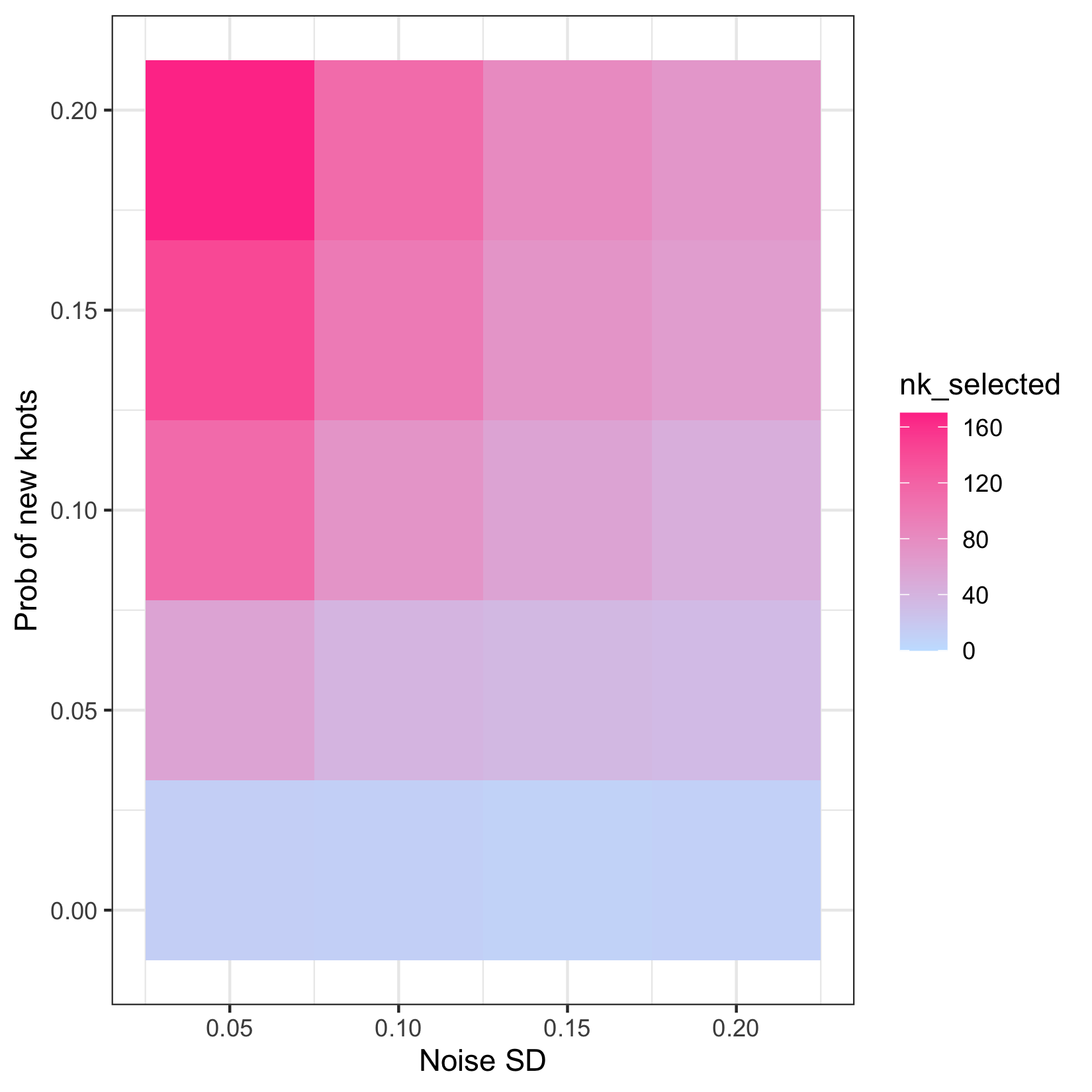}
     \end{subfigure}  
\caption{The CI width using data fission method increases with noise variance and the probability of new knots (first), because $\widehat{\text{SE}}$ increases (second) compared to the change in the multiplier $c(\alpha)$ (third). Both changes can be traced back to change in the number of knots (fourth): $\widehat{\text{SE}}$ decreases and $c(\alpha)$ increases with the number of knots.}
\label{fig:trendfilter_intermediate_statistics}
\end{figure}

The above plots are medians over repetitions since the distribution of CI width is skewed to large values; thus, the mean may not summarize the pattern clearly. The mean of CI width does not have a consistent trend because there are a  few extremely large  CI widths, due to extremely large multipliers $c(\alpha)$. Upon closer inspection, such large value exist when the number of knots is large. In \cref{fig:multiplier}, we show the value of left-hand side of~\eqref{eq:multiplier} for solving $c(\alpha)$ when the number of knots is $197$ (vs $200$ time points). The solution for $c(\alpha)$ is over 400 (at the intersection of the black path and the red level of $\alpha$).

\begin{figure}[H]
\centering
\hspace{1cm}
    \begin{subfigure}[t]{0.32\textwidth}
        \includegraphics[width=1\linewidth]{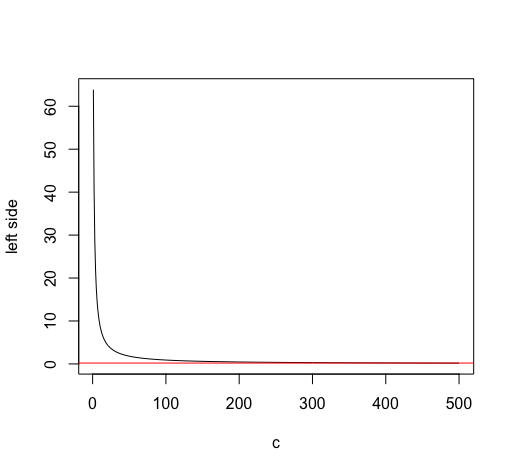}
        \caption{Trajectory of the left-hand side of~\eqref{eq:multiplier} when there are 197 knots.}
    \end{subfigure}
\hfill
    \begin{subfigure}[t]{0.32\textwidth}
        \includegraphics[width=1\linewidth]{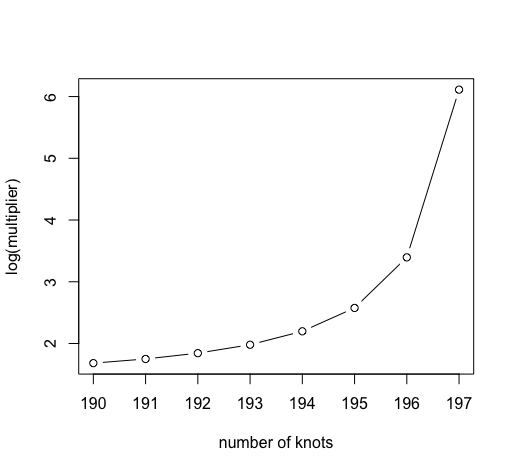}
        \caption{Log of the multiplier when the number of knots increases from 190 to 197.}
    \end{subfigure}
\hspace{1cm}
\caption{The multiplier $c(\alpha)$ increases dramatically when the number of knots is larger than 190.}
\label{fig:multiplier}
\end{figure}

\label{appendix:trendfilter_supplementa_uniforml}
\begin{figure}[H]
\centering
    \begin{subfigure}[t]{0.5\textwidth}
    \vskip 0pt
        \includegraphics[width=1\linewidth]{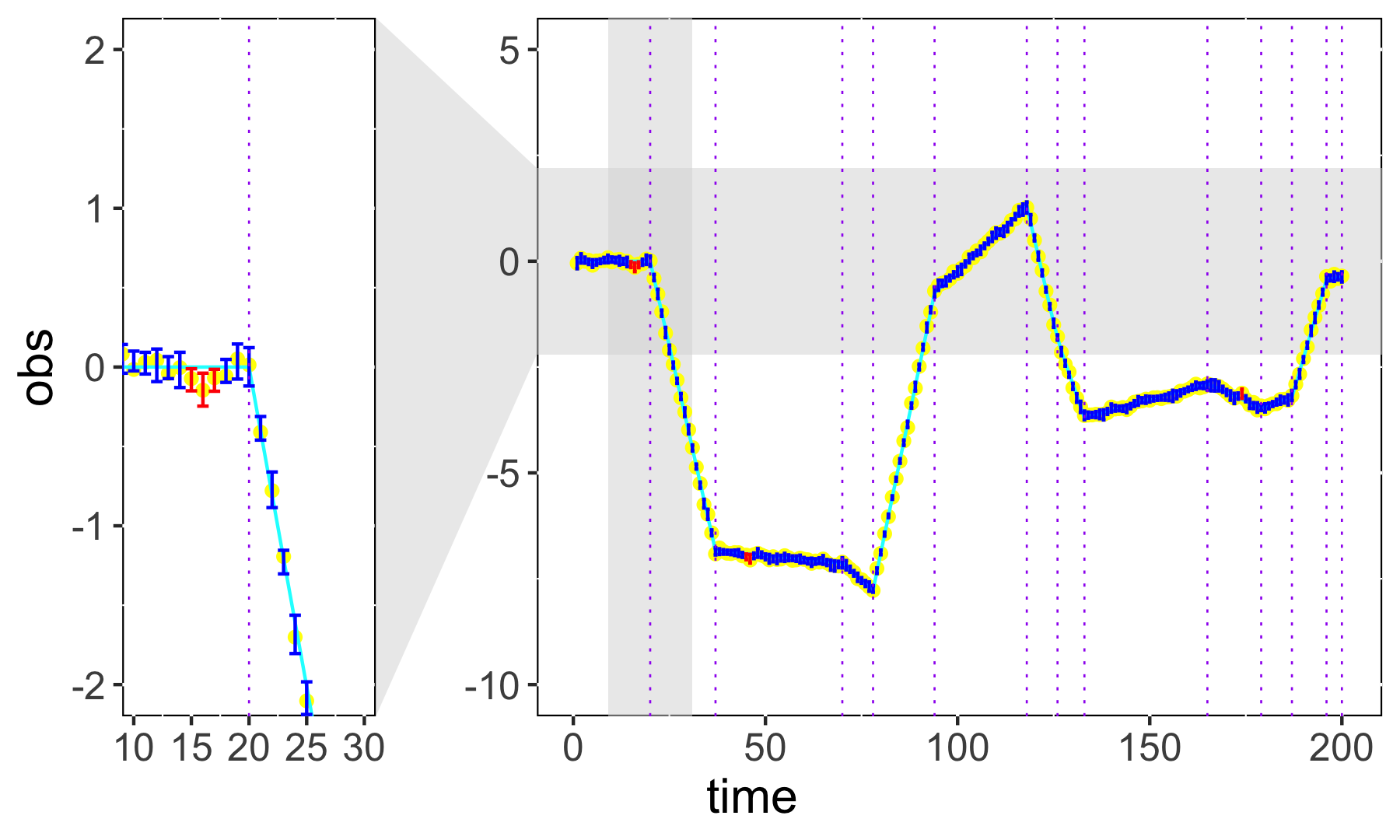}
    \end{subfigure}
    \hfill
    \begin{subfigure}[t]{0.35\textwidth}
    \vskip 0pt
        \includegraphics[ width=1\linewidth]{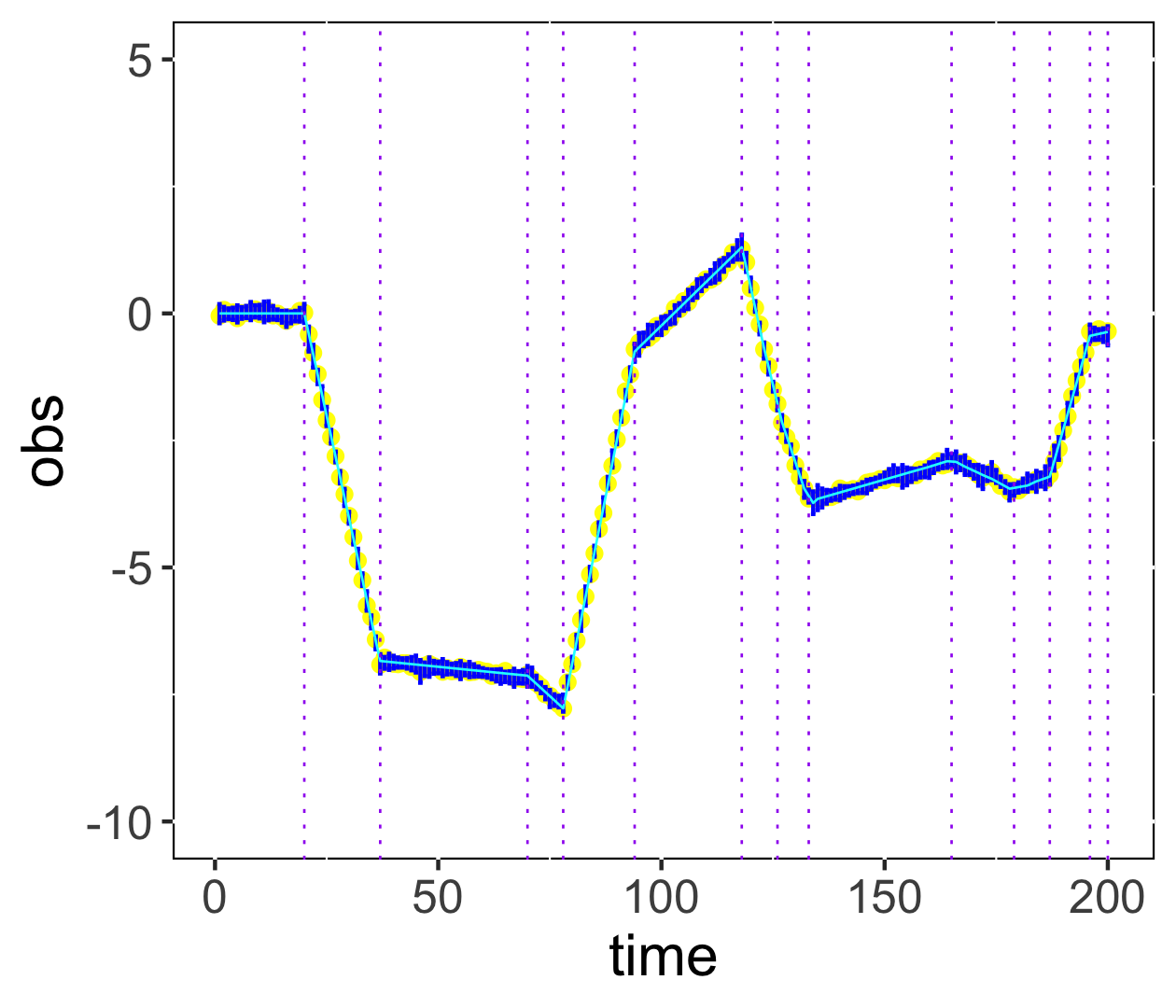}
    \end{subfigure}
\vfill
    \begin{subfigure}[t]{0.5\textwidth}
    \vskip 0pt
        \includegraphics[width=1\linewidth]{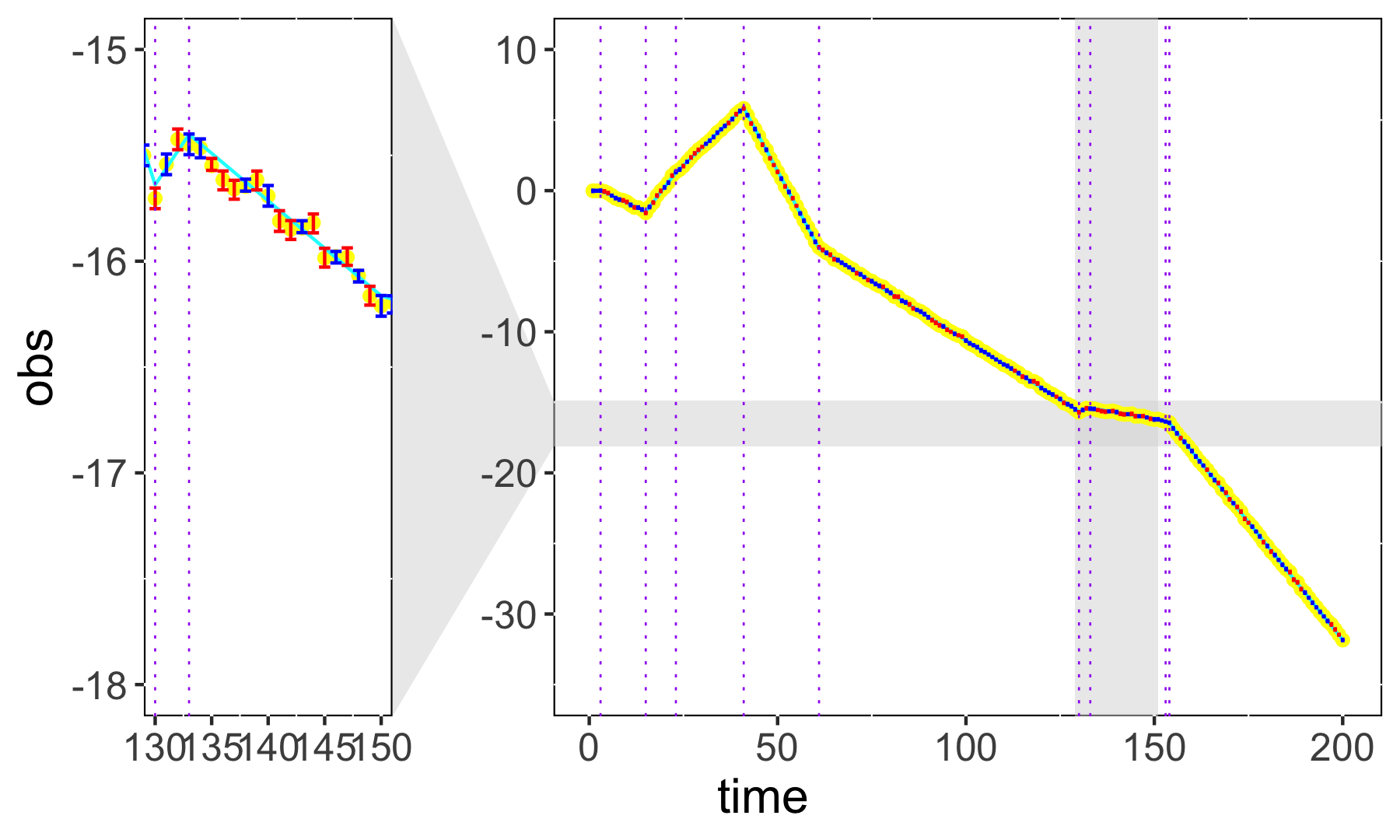}
    \end{subfigure}
    \hfill
    \begin{subfigure}[t]{0.35\textwidth}
    \vskip 0pt
        \includegraphics[ width=1\linewidth]{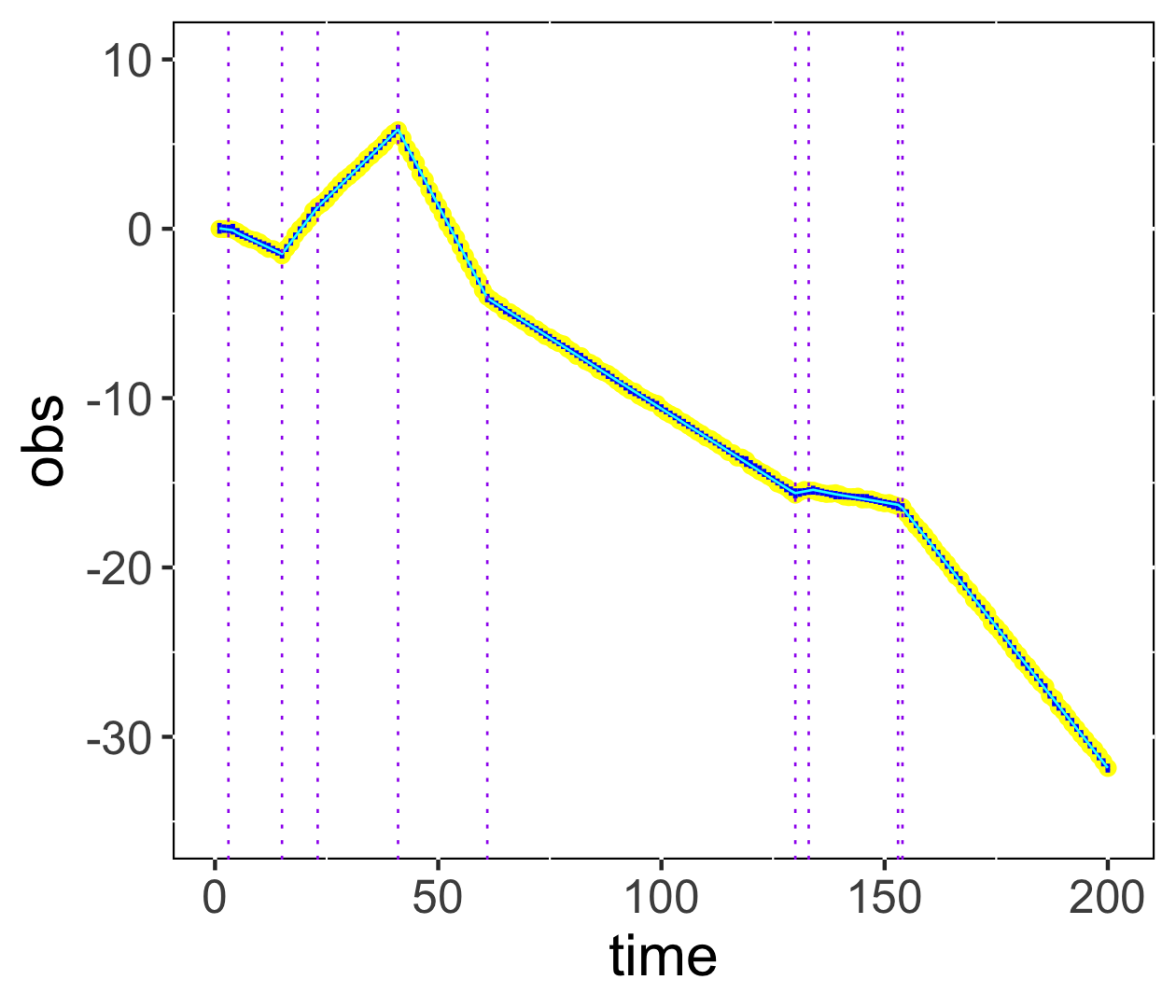}
    \end{subfigure}
\vfill
    \begin{subfigure}[t]{0.5\textwidth}
    \vskip 0pt
        \includegraphics[width=1\linewidth]{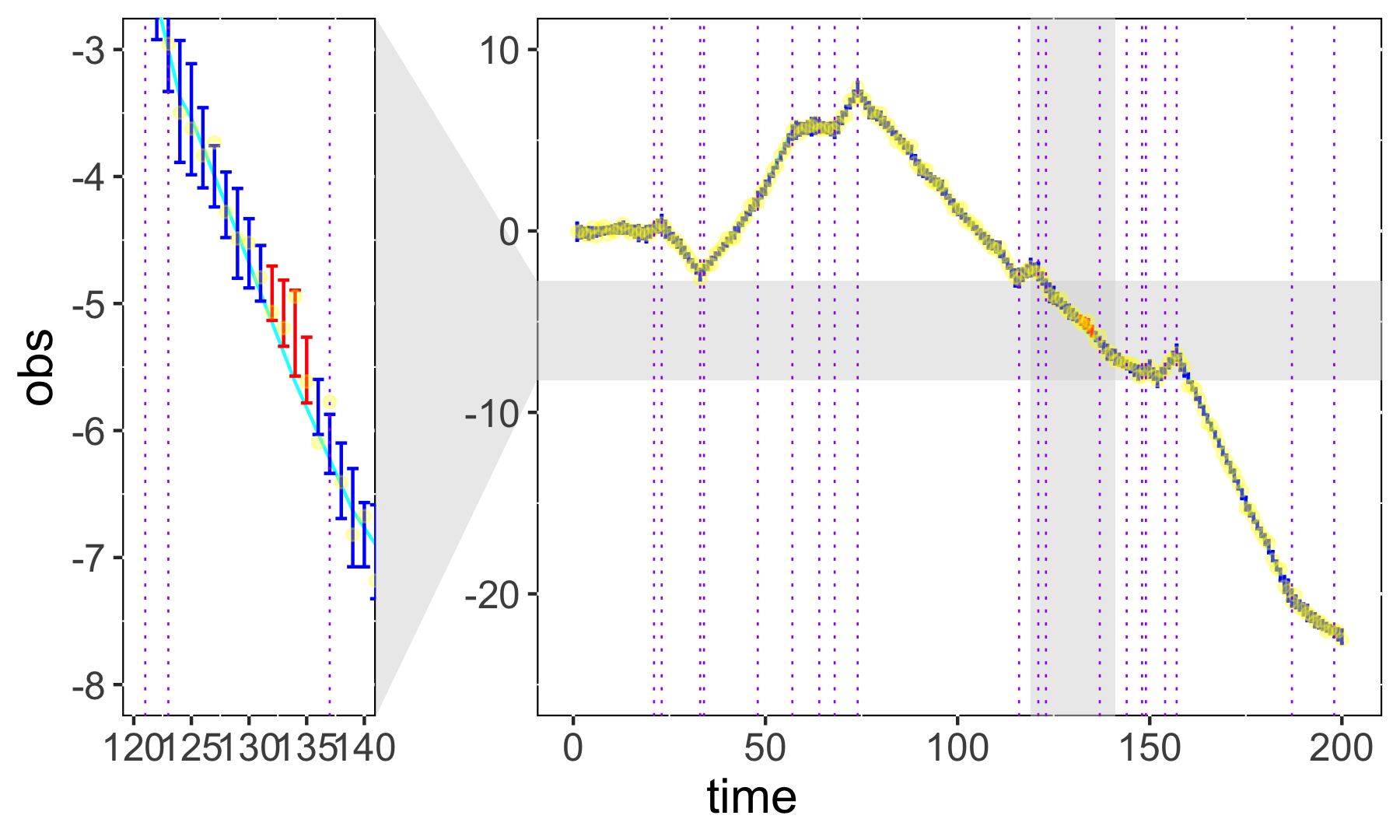}
    \end{subfigure}
    \hfill
    \begin{subfigure}[t]{0.35\textwidth}
    \vskip 0pt
        \includegraphics[ width=1\linewidth]{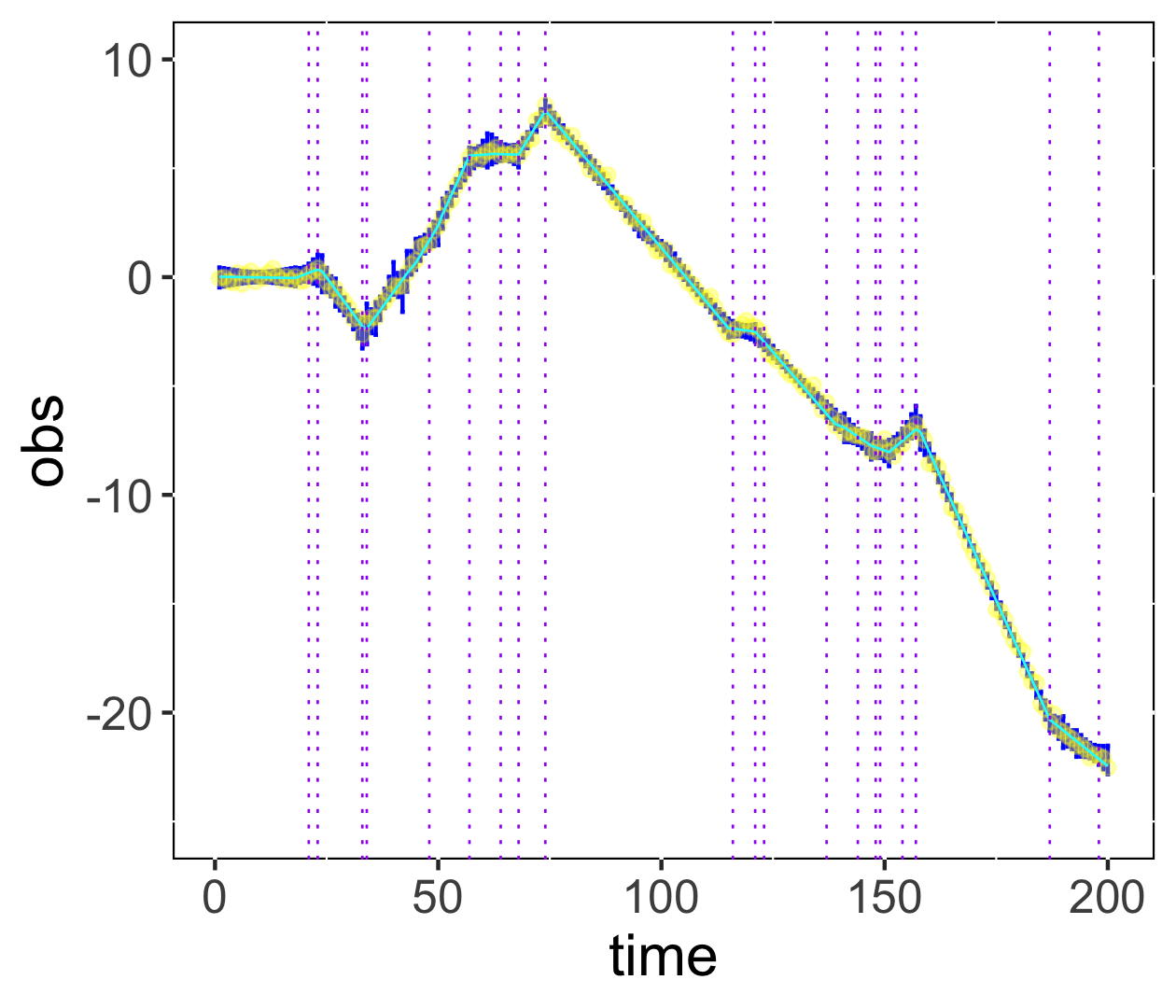}
    \end{subfigure}
\vfill
    \begin{subfigure}[t]{0.5\textwidth}
    \vskip 0pt
        \includegraphics[width=1\linewidth]{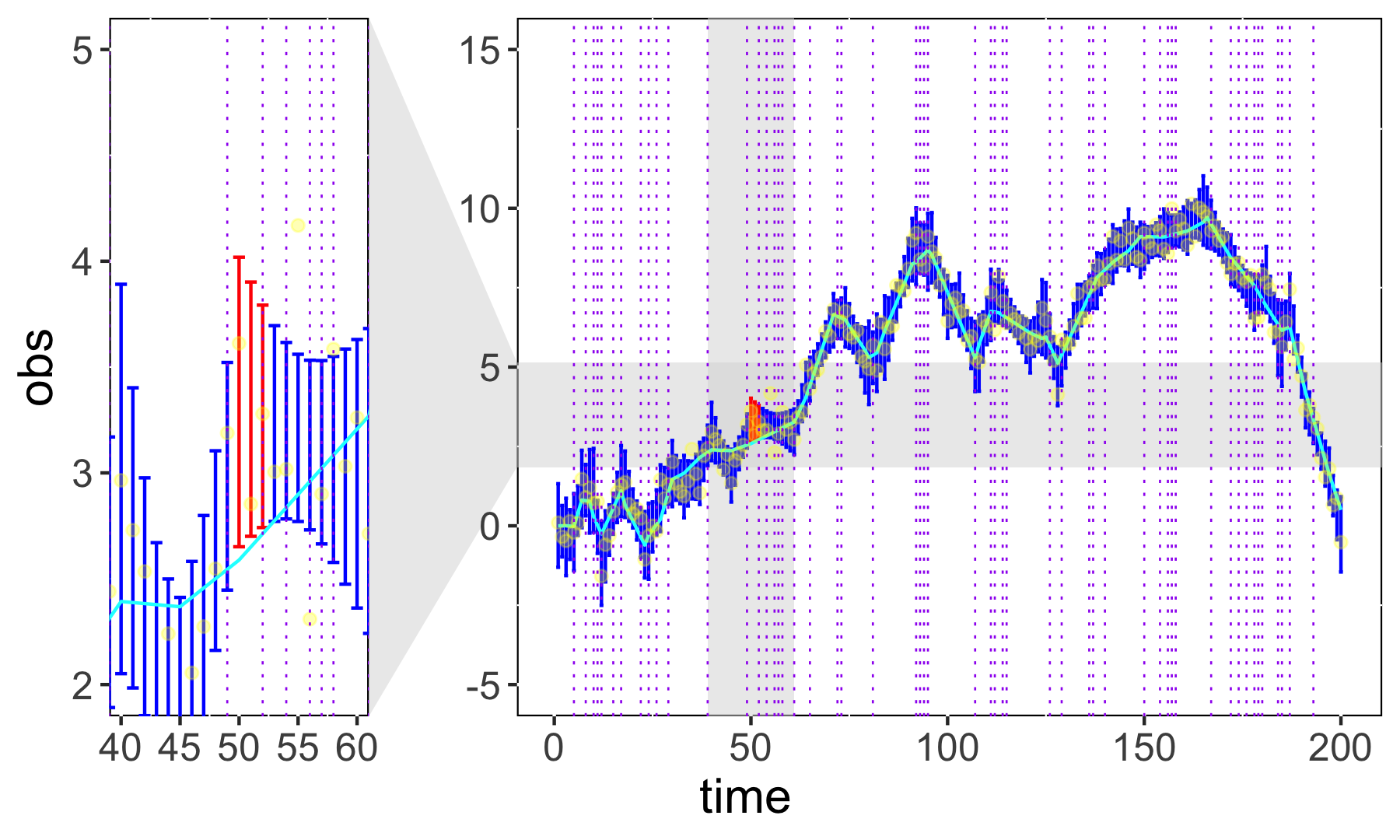}
    \end{subfigure}
    \hfill
    \begin{subfigure}[t]{0.35\textwidth}
    \vskip 0pt
        \includegraphics[ width=1\linewidth]{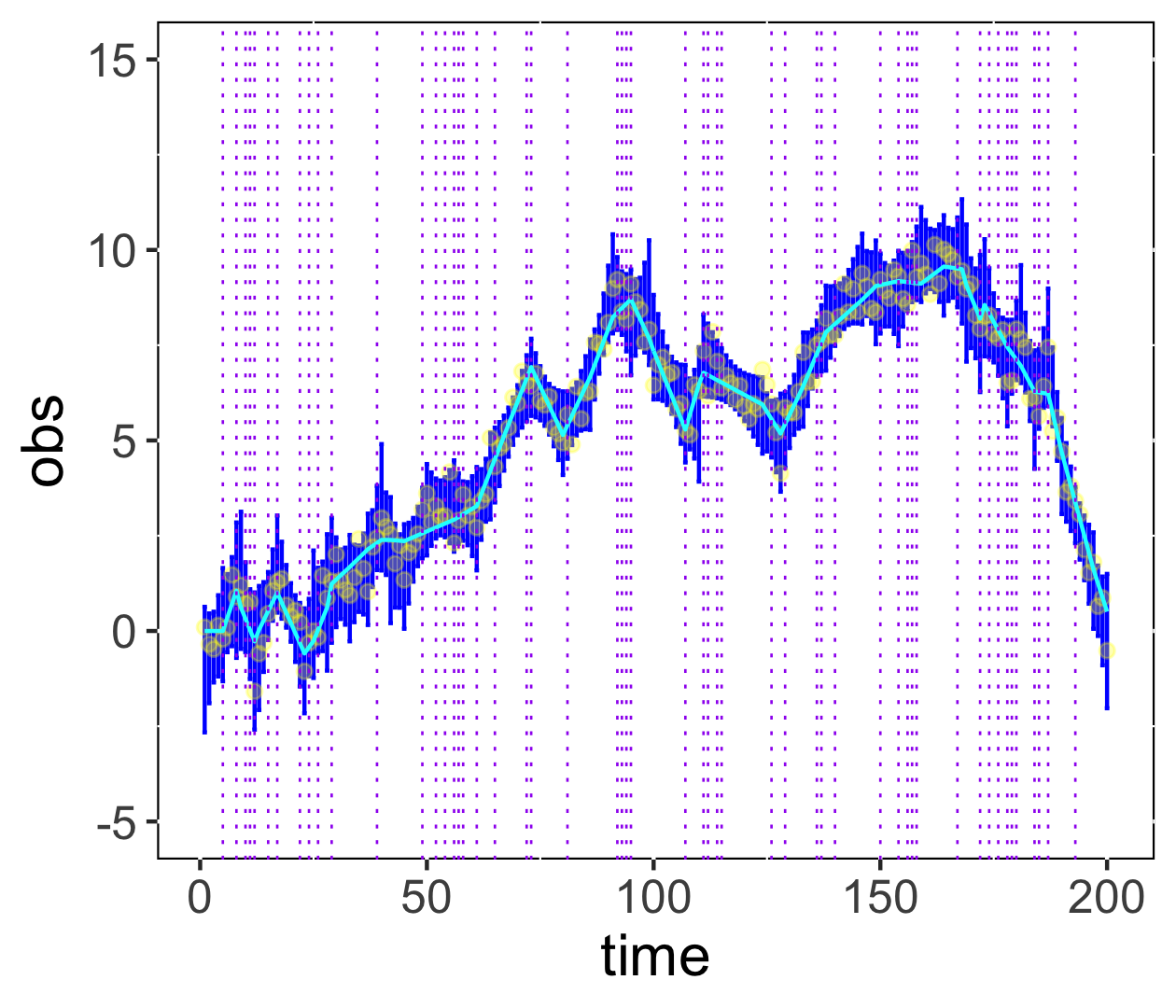}
    \end{subfigure}
    \caption{Four instances of the observed points (in yellow) and the \textit{uniform} CI (in blue if correctly cover the trend, in red if not) using two types of methods: full data twice (left), and data fission (right). The underlying projected mean is marked in cyan, which also mostly overlaps with the original true trend because both methods can find turning points easily under small noise.}
    \label{fig:instance_trendfilter2}
\end{figure}

\paragraph{Example trials for uniform confidence bands}
Similar to the examples shown in \cref{fig:instance_trendfilter}, we visualize the coverage of uniform confidence bands as well over four examples in \cref{fig:instance_trendfilter2}.  The CI using the full data fails to cover the projected mean at many time points because of double dipping, whereas data fission leads to a uniformly-valid CI with type I error control.

\subsubsection{Estimating variance before data fission} \label{sec:trendfilter_estimated_sigma}
The procedure discussed in Section~\ref{sec:trendfilter} uses the knowledge of variance $\sigma^2$, which is usually not known in practice. Alternatively, we can estimate the variance as $\widehat \sigma^2 = \tfrac{1}{2(n-1)}\sum_{i=1}^{n-1} (y_{t+1} - y_t)^2$ before step~1, and simulate $z_t \sim N(0, \widehat \sigma^2)$. We must also modify equation~\eqref{eq:multiplier} in step~4 slightly by replacing it with equation~(5.6) in \cite{koenker2011additive} and instead choose $c(\alpha)$ to be the solution of 
\begin{equation}
 \frac{|\gamma|}{2\pi} (1+c^{2}/v)^{-v/2} + P(t_{v} > c) = \alpha/2
\end{equation}
where $t_{v}$ denotes a $t$-distribution with $v=n-m-1$ degrees of freedom with $m$ being the number of knots. 

Notice that because we use the observed data to estimate variance, it could break the independence between $f(Y)$ and $g(Y)$. However, we notice in simulation that the error control still seems to hold in most cases. First, we evaluate how well the variance can be estimated across different settings in \cref{fig:variance_est}. The proposed methodology tends to overestimate the noise SD in general, with this overestimation increasing as the probability of new knots increases or the slope increases, because the formula to compute $\widehat \sigma^2$ given above treats all the change in adjacent time points as noise.

\begin{figure}[H]
\centering
    \begin{subfigure}[t]{0.3\textwidth}
        \centering
        \includegraphics[width=1\linewidth]{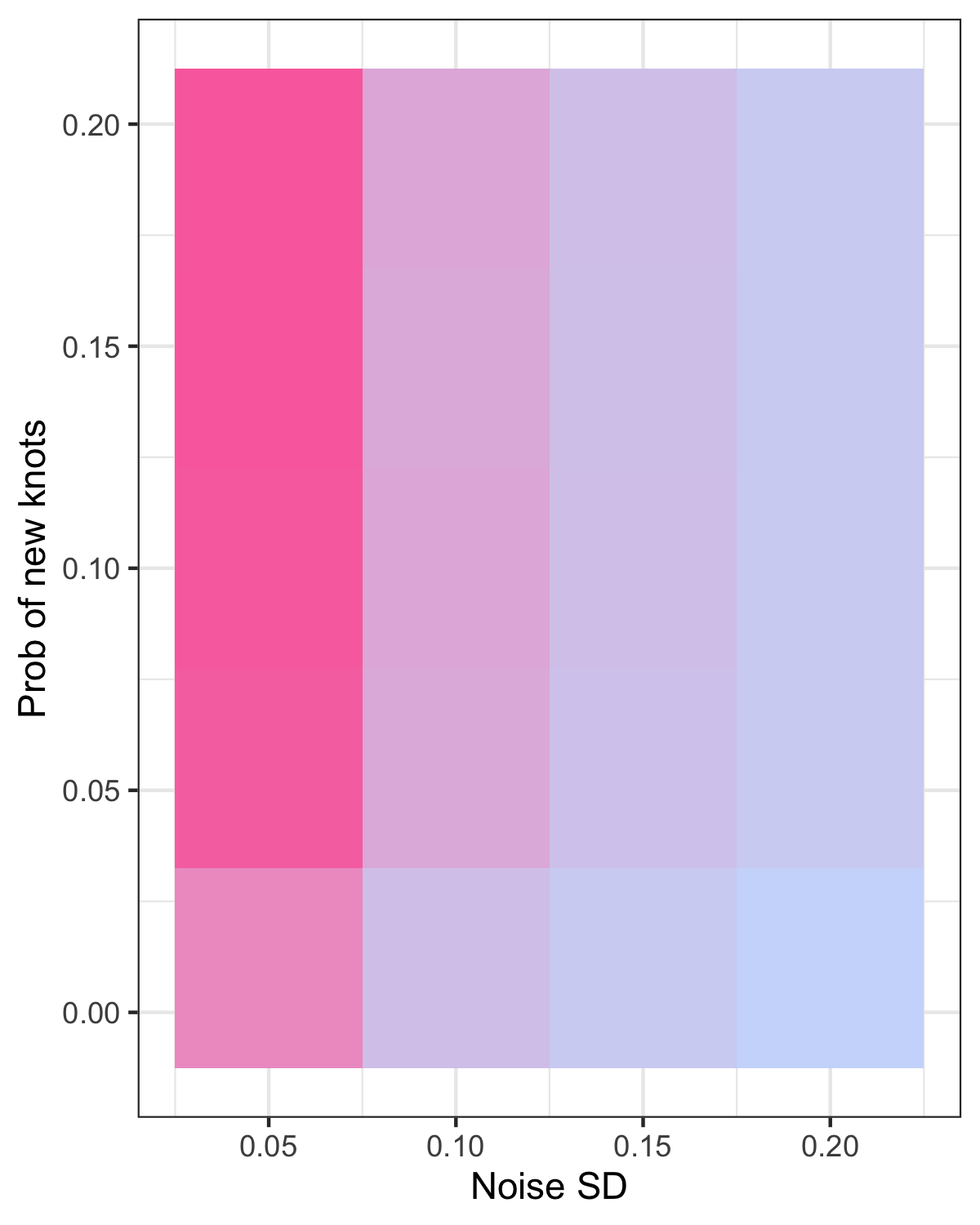}
        \caption{Vary the true variance and prob of new knots when fix the slope range as $0.5$.}
    \end{subfigure}
\hfill
    \begin{subfigure}[t]{0.3\textwidth}
        \includegraphics[width=1\linewidth]{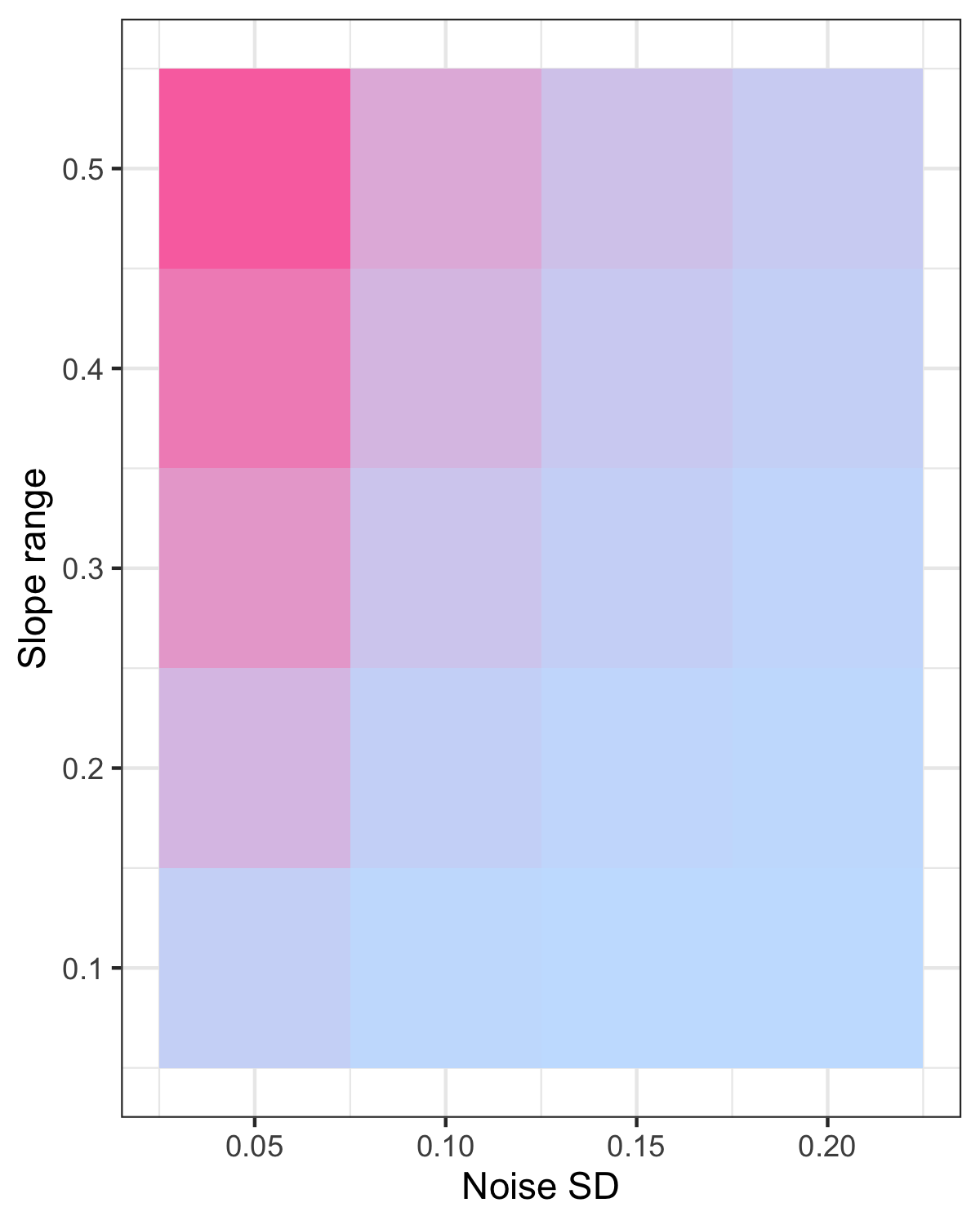}
        \caption{Vary the true variance and slope range when fix the prob of new knots as $0.1$.}
    \end{subfigure}
\hfill
    \begin{subfigure}[t]{0.2\textwidth}
    \hskip 0pt
        \includegraphics[width=1\linewidth]{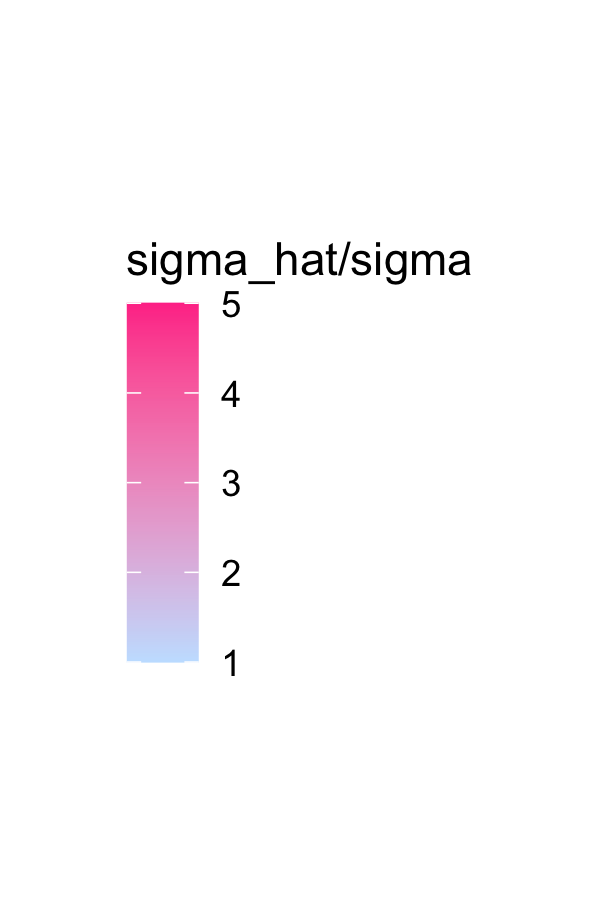}
    \end{subfigure}
\caption{The noise SD is often over estimated with a range in $(0.05, 0.3)$ and the true one varies in $(0.05, 0.2)$. The over-estimation fades when the noise SD increases and the slope decreases (and slightly fades as the probability of new knots decreases).}
\label{fig:variance_est}
\end{figure}

Luckily, a tendency to overestimate the variance leads to conservative CIs, meaning errors are still controlled at the appropriate level in most cases. We see in \cref{fig:var_est_error} that data fission still offers simultaneous type I error control empirically when using uniform confidence bands. This is achieved at the expense of having overly conservative CIs in cases where the underlying structural trend is variable --- either because of many knots points or because of the size of the slope. We investigate how the level of conservatism, as measured by the average widths of the constructed confidence bands, increases as the underlying trend varies more in \cref{fig:var_est_width_diff}.

\begin{figure}[H]
\centering
    \begin{subfigure}[t]{0.22\textwidth}
        \centering
        \includegraphics[width=1\linewidth]{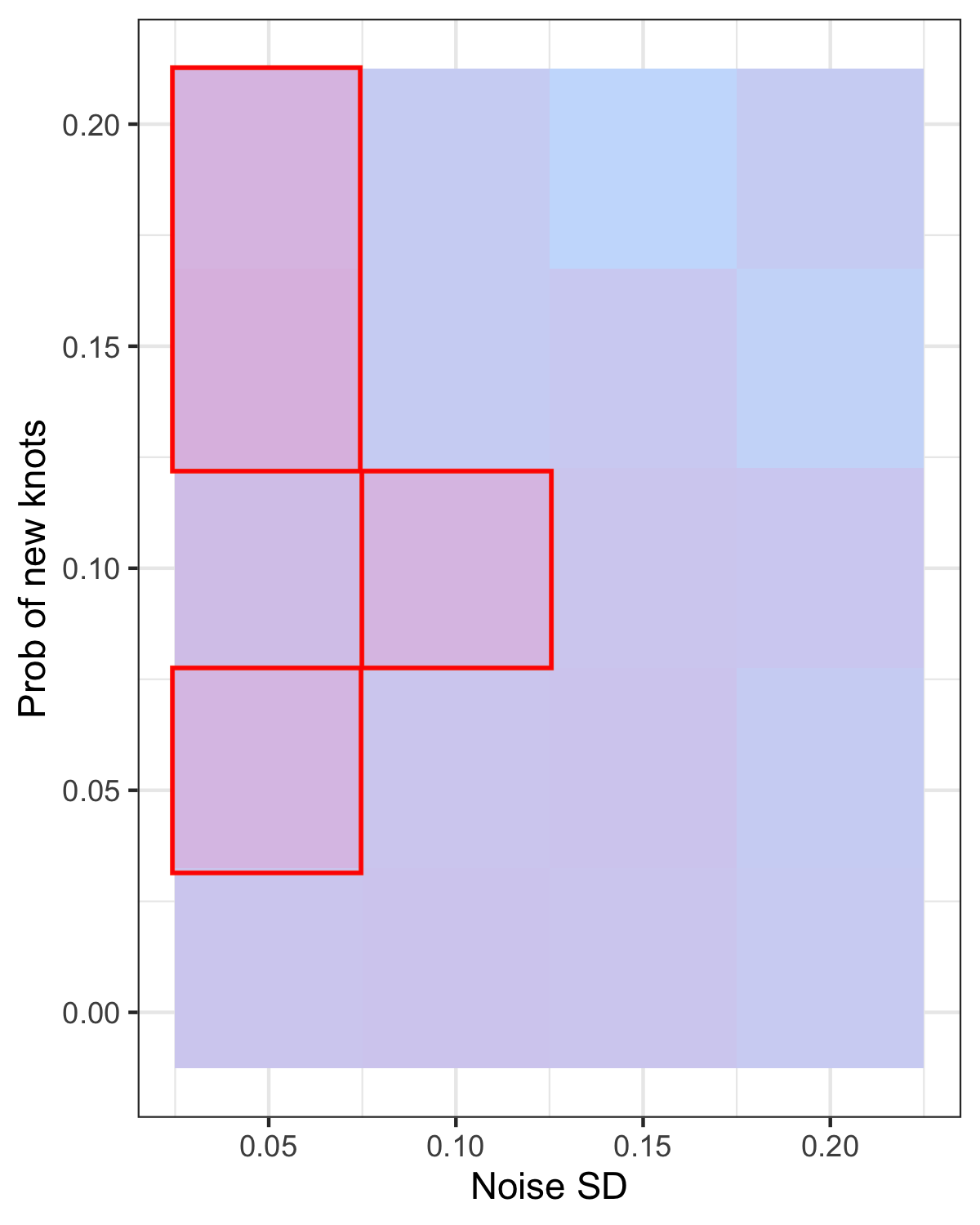}
        \caption{Data fission method when varying prob of new knots.}
    \end{subfigure}
\hfill
     \begin{subfigure}[t]{0.22\textwidth}
        \centering
        \includegraphics[width=1\linewidth]{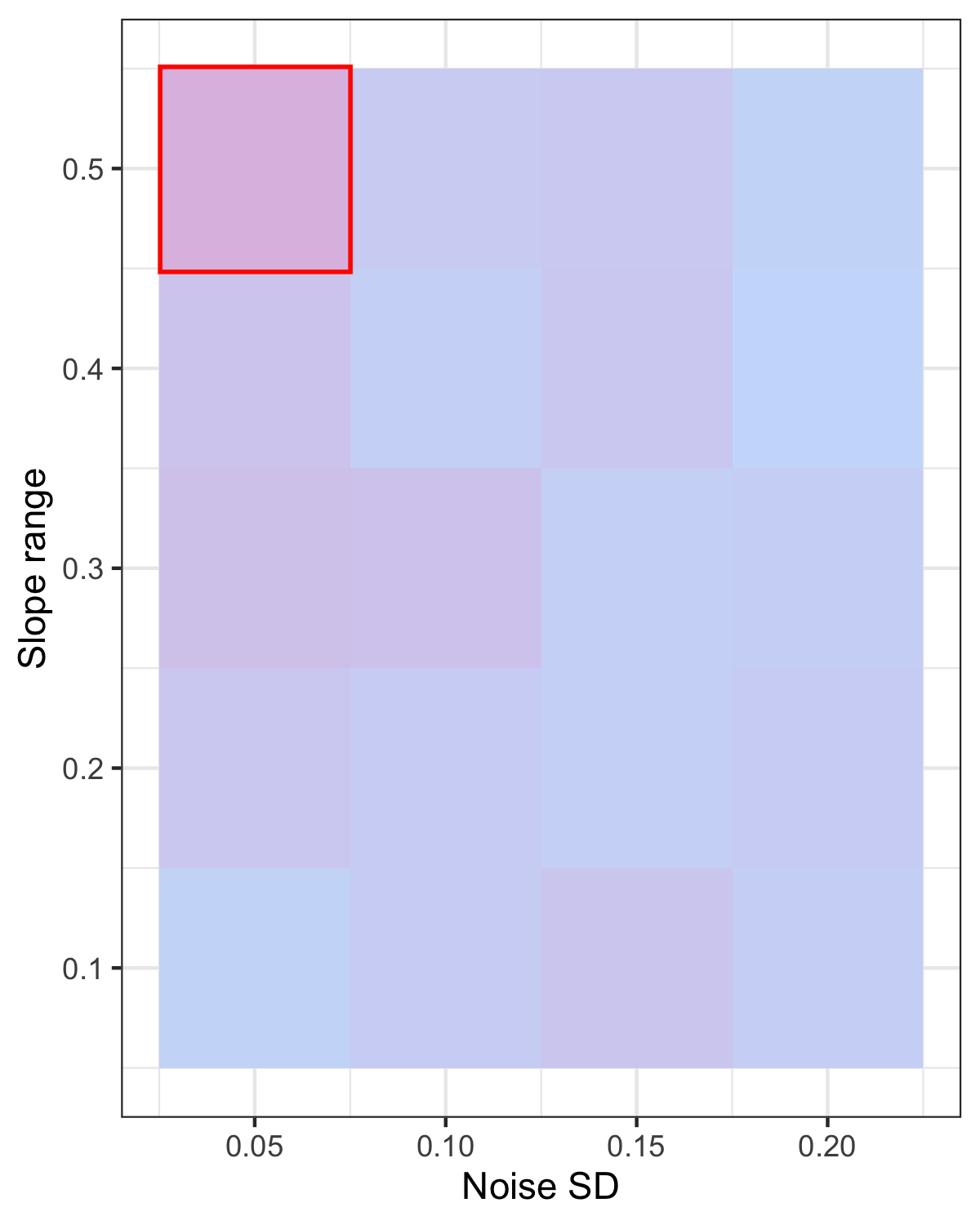}
        \caption{Data fission method when varying slope range.}
    \end{subfigure}
\hfill
    \begin{subfigure}[t]{0.08\textwidth}
    \hskip 0pt
        \includegraphics[width=1\linewidth]{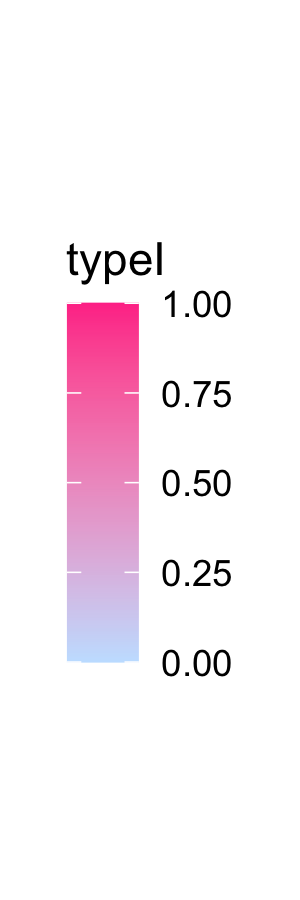}
    \end{subfigure}
\hfill
    \begin{subfigure}[t]{0.22\textwidth}
        \includegraphics[width=1\linewidth]{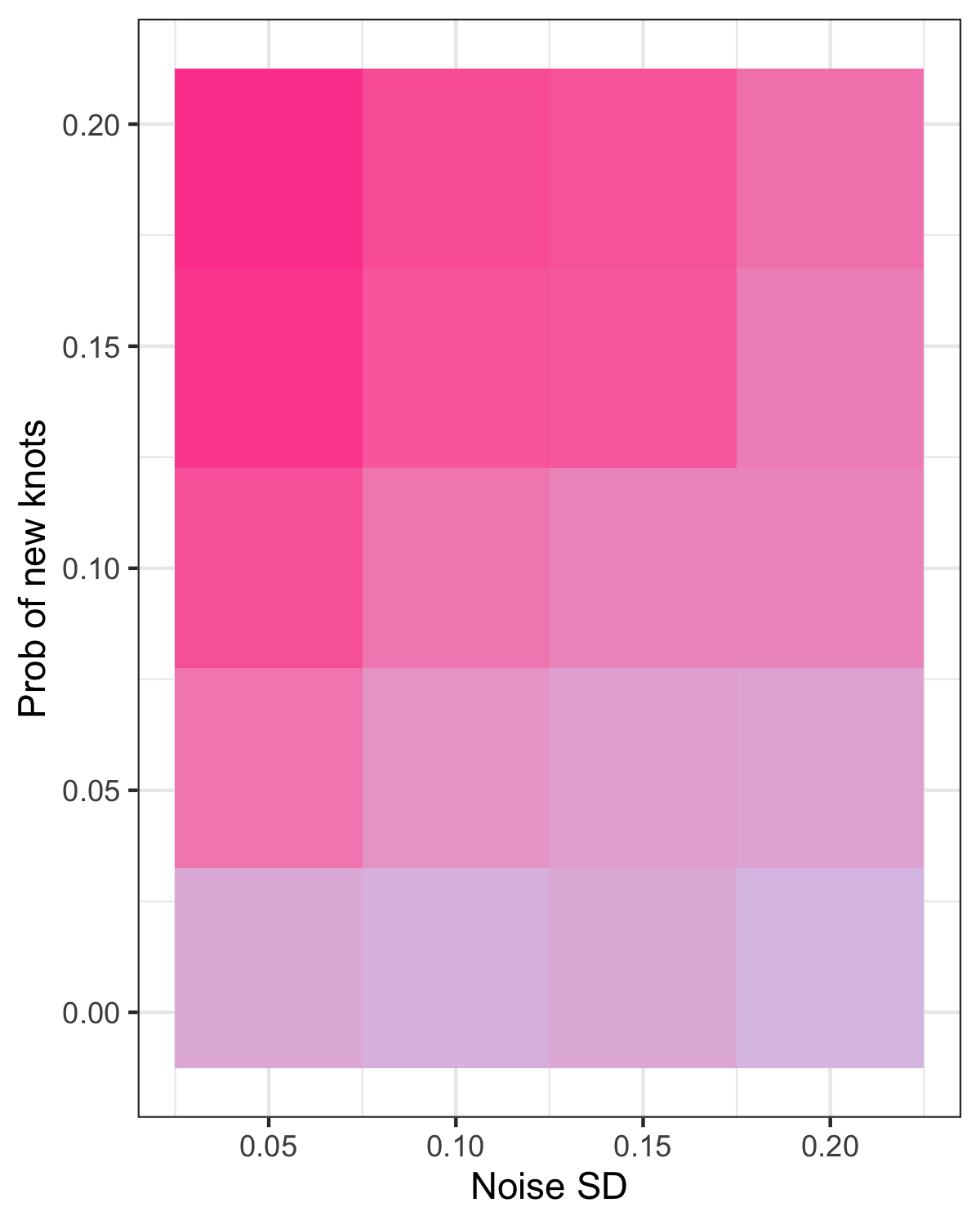}
        \caption{Full data twice when when varying prob of new knots.}
    \end{subfigure}
\hfill
    \begin{subfigure}[t]{0.22\textwidth}
        \includegraphics[width=1\linewidth]{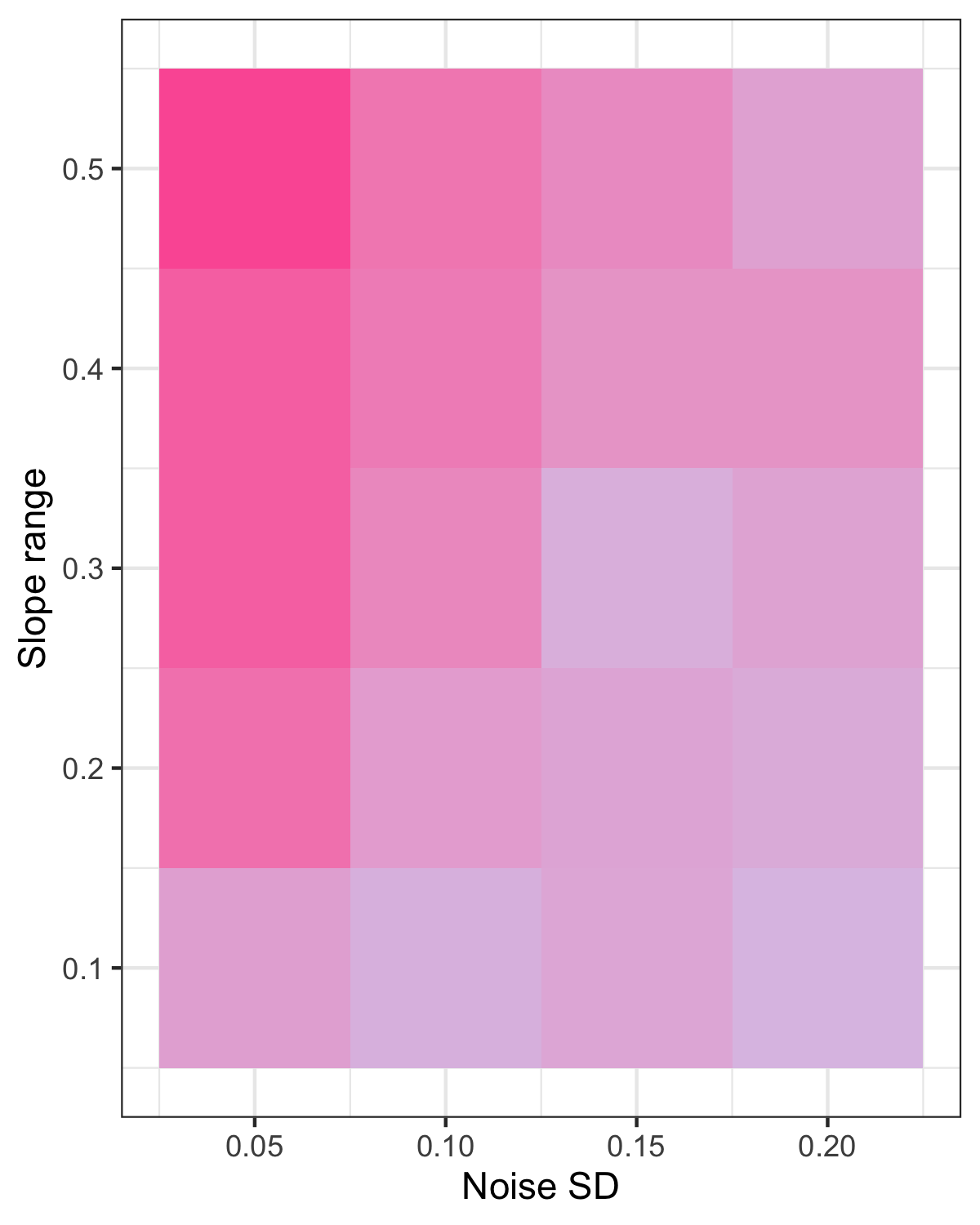}
        \caption{Full data twice when varying slope range.}
    \end{subfigure}
\caption{Simultaneous type I error for the uniform CI constructed using data fission , and full data twice, when varying the probability of new knots, the slope range, and the true noise variance. Data fission method seems to have lower simultaneous type I error than the target level (0.2) in most cases, except cases circled in red (with a max simultaneous type I error of 0.26).}
\label{fig:var_est_error}
\end{figure}

\begin{figure}[H]
\centering
     \begin{subfigure}[t]{0.32\textwidth}
        \centering
        \includegraphics[width=1\linewidth]{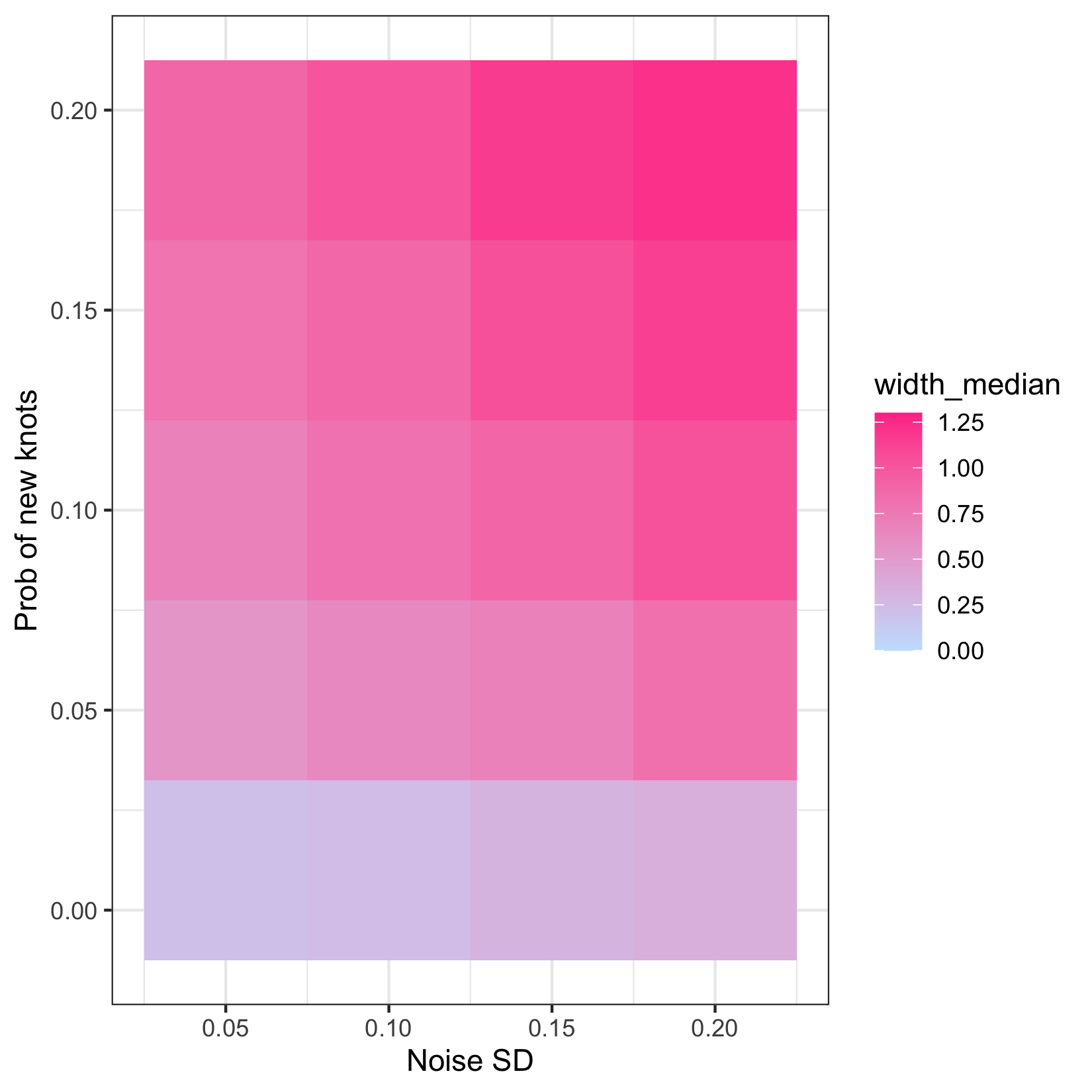}
    \end{subfigure}
\hfill
    \begin{subfigure}[t]{0.32\textwidth}
        \includegraphics[width=1\linewidth]{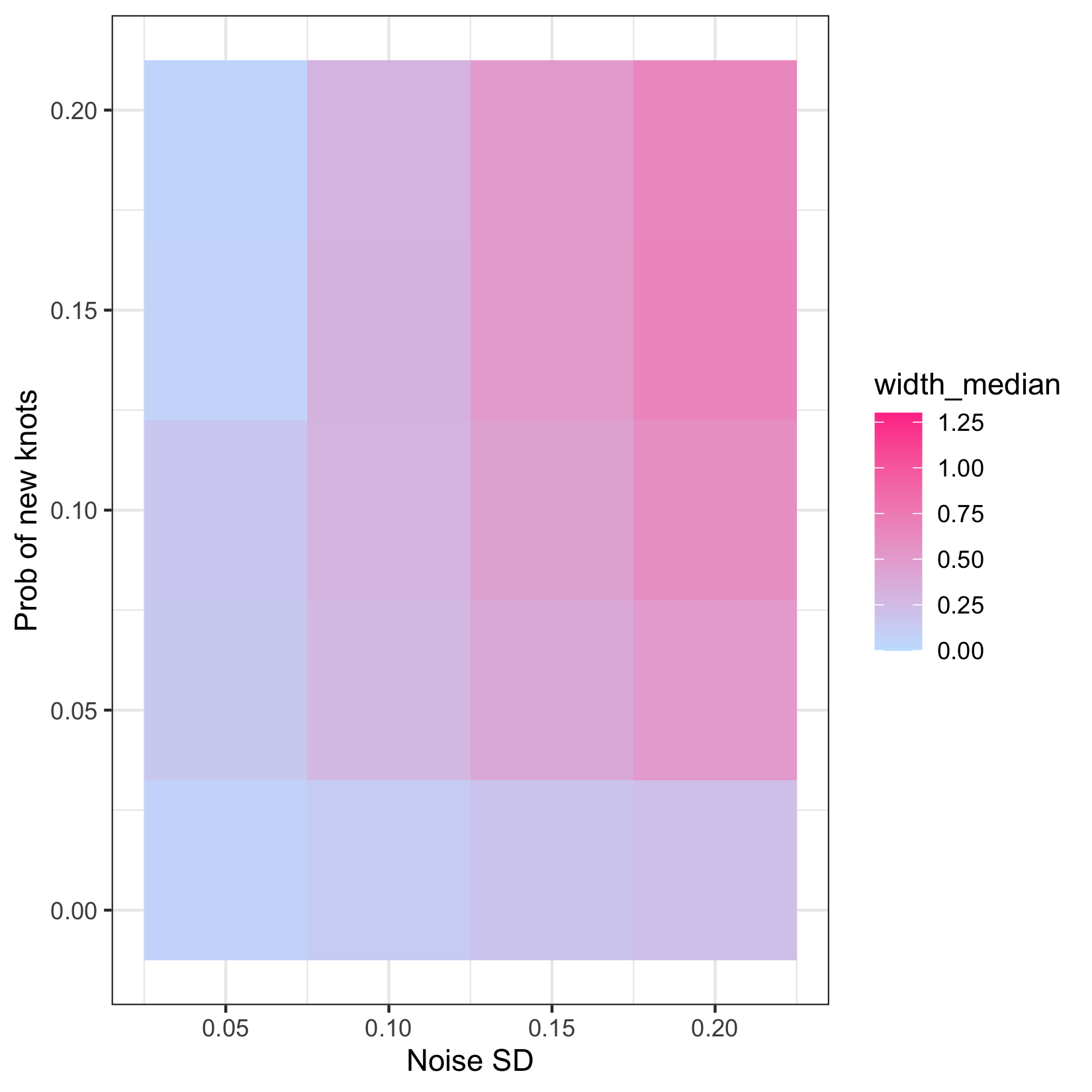}
    \end{subfigure}
\hfill
    \begin{subfigure}[t]{0.32\textwidth}
        \includegraphics[width=1\linewidth]{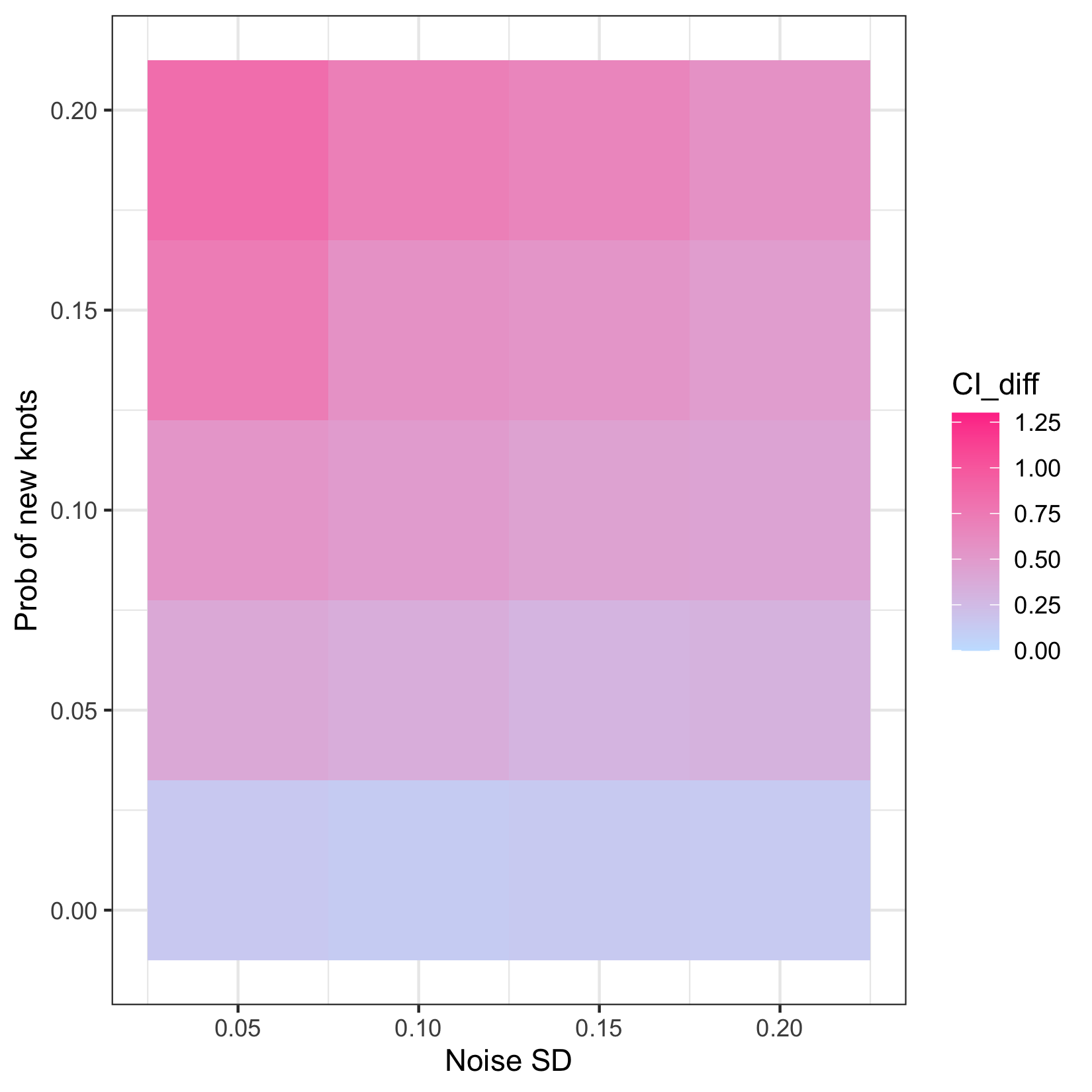}
    \end{subfigure}
\caption{The CI width for \textbf{uniform confidence bands} using either data fission (right) or full data twice (left) have similar trends with respect to noise SD and probability of new knots. The difference in CI length between the two methods decreases when the noise SD increases (such that the effect of double dipping is smaller) or the probability of new knots gets smaller (such that the CI from data fission is tighter).}
\label{fig:var_est_width_diff}
\end{figure}

\subsubsection{Alternative methods for selecting knots} \label{sec:appendix_alternative_knots}
In Section~\ref{sec:trendfilter_empirical}, we selected the knots by choosing the regularization parameter with the smallest cross-validation error, which often leads to selecting more knots than the underlying truth when the full dataset is used for both selection and inference. Although data fission allows for an analyst to guarantee error control under arbitrary selection rules, some selection rules tend to be more robust to the analyst reusing the data in terms of their empirical performance even if statistical guarantees are not available. 

\paragraph{Stein's unbiased risk estimate (SURE)}
An alternative methodology for selecting knots would be to minimize the SURE formula --- which provides an unbiased estimate of mean-squared percentage error in a fixed-design setting. In the context of trend filtering, this can be computed as 
$$\frac{1}{n} \sum_{i=1}^{n} (y_{i} - \hat{\mu}_{i})^{2} + 2 \sigma^{2}\frac{m}{n}$$
where $m$ is the number of knots chosen to fit $\hat{\mu}$. Interestingly, using this formula seems to result in type I error being controlled even when reusing the full data for both selection and inference. Please see \cite{10.1093/mnras/staa106} for further background on how SURE can be used to aid in knot selection. 

\begin{figure}[H]
\centering
    \begin{subfigure}{0.25\textwidth}
        \centering
        \includegraphics[width=1\linewidth]{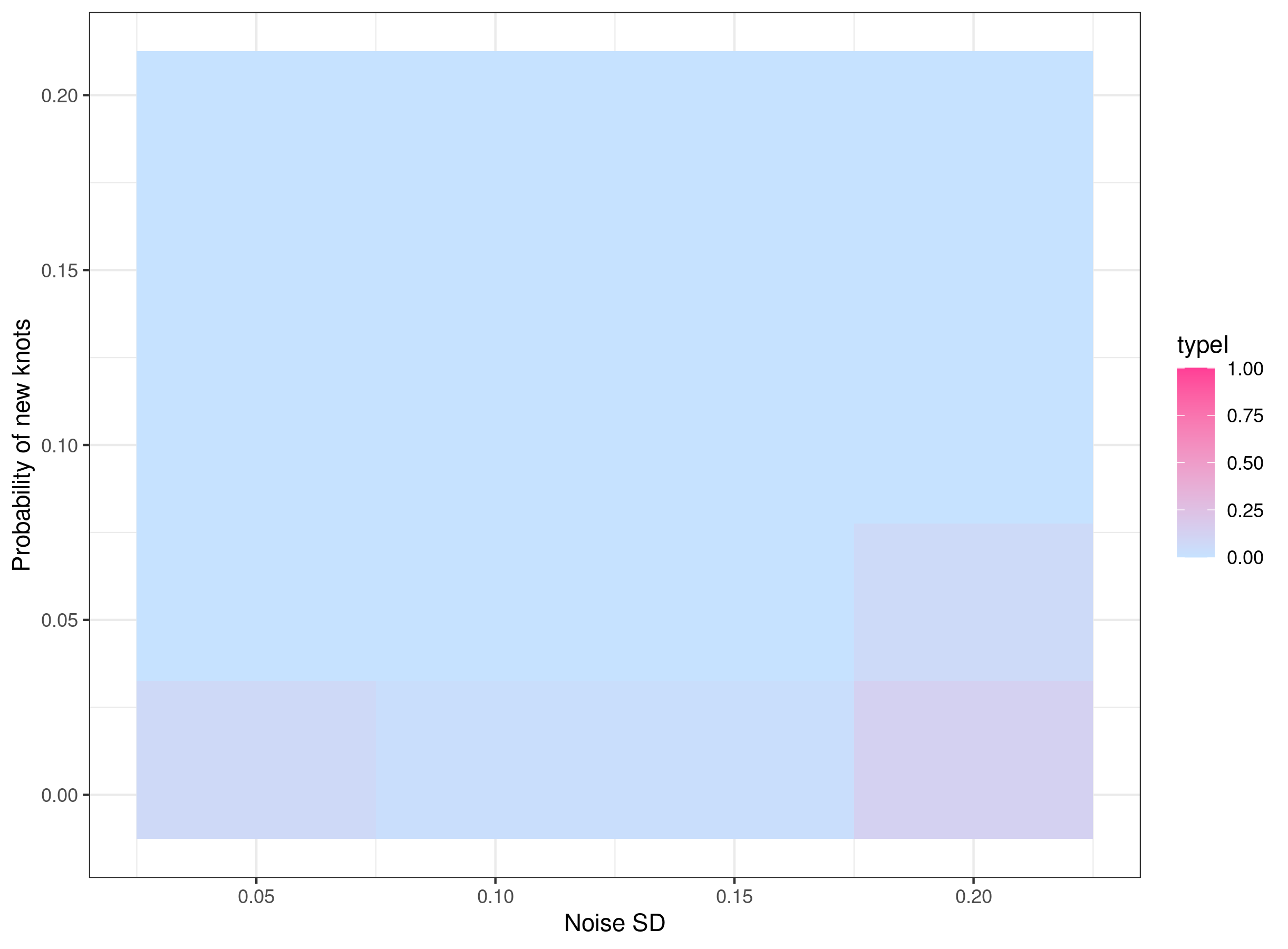}
        \caption{Simultaneous type I error using full data twice.}
    \end{subfigure}
\hfill
    \begin{subfigure}{0.25\textwidth}
        \includegraphics[width=1\linewidth]{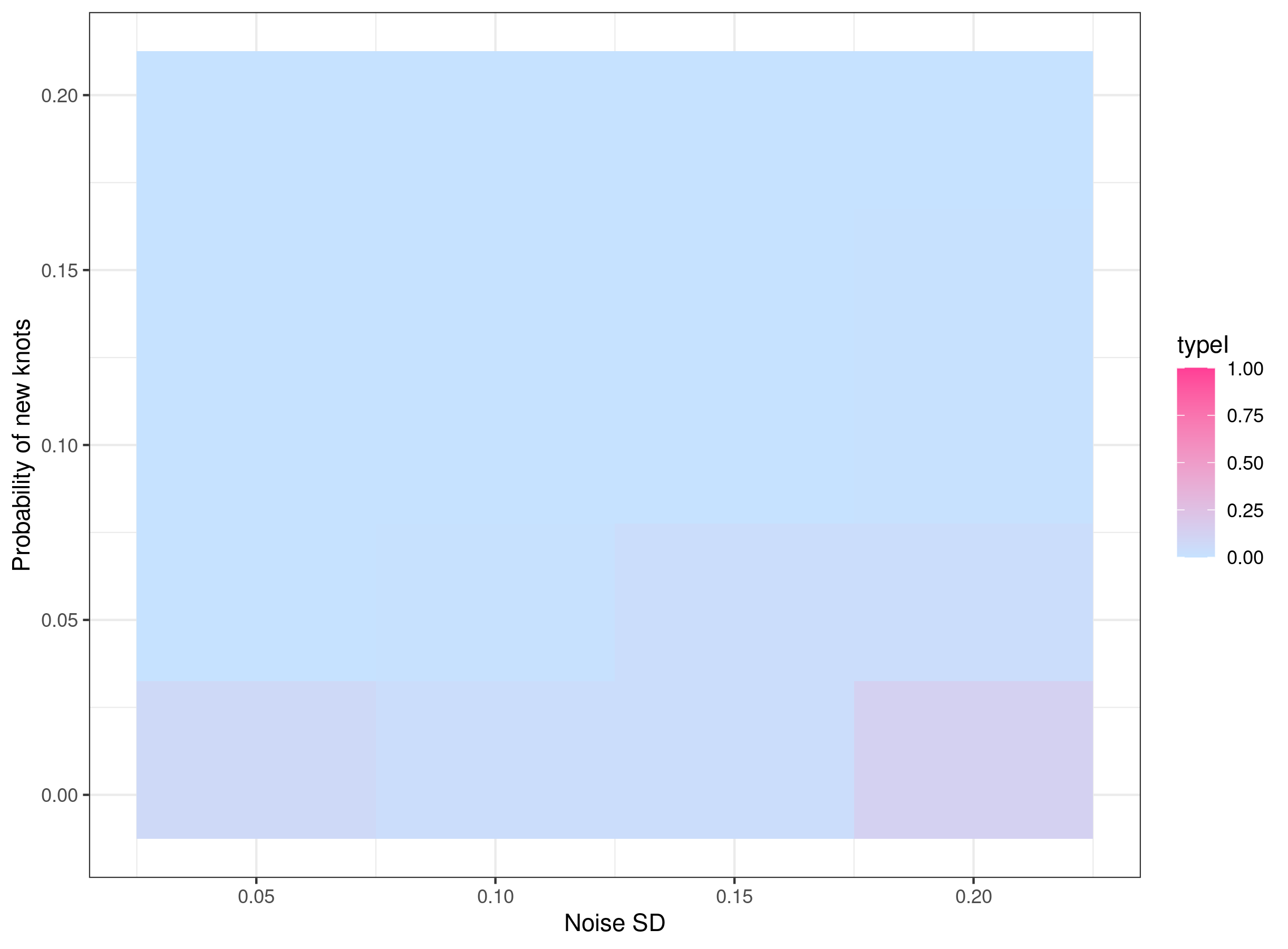}
        \caption{Simultaneous type I error using data fission.}
    \end{subfigure}
\hfill
    \begin{subfigure}{0.1\textwidth}
        \includegraphics[width=1\linewidth]{figures/legend_typeI.png}
    \end{subfigure}
    \caption{Simultaneous type I error for \textbf{uniform CIs} as we vary the  probability of having new knots $q \in \{0.01, 0.55, 0.1, 0.145, 0.19\}$ and the noise SD in $\{0.05, 0.1, 0.15, 0.2\}$. The error control violation when using the full data twice is no longer as stark in the simulation results when using SURE as the selection rule, with the highest simultaneous type I error at 0.2 given target level $\alpha=0.2$.
    }
    \label{fig:trendfilter3}
\end{figure}

\paragraph{1-SD Rule}

\label{sec:1_se}
 To mitigate the possibility of over selection of knots, an alternative approach is to choose the regularization parameter to be the the one with smallest error plus one standard deviation. As we observe in \cref{fig:reg_param}, adding a standard deviation to the selection rule does not change the error much. However, this rule appears to be more robust when the full data is reused for inference---in \cref{fig:trendfilter4}, we can see that simultaneous type I error control is not as seriously violated when using this selection rule. 

\begin{figure}[H]
\centering
\includegraphics[width=0.4\linewidth]{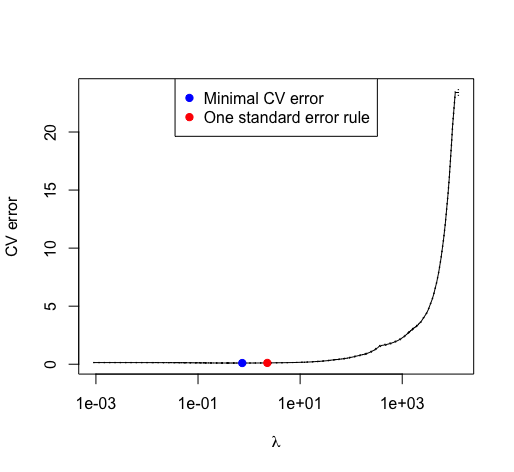}
\caption{In the path of regularization parameter, adding an extra standard deviation does not change the error much.}
\label{fig:reg_param}
\end{figure}

\begin{figure}[H]
\centering
    \begin{subfigure}{0.25\textwidth}
        \centering
        \includegraphics[width=1\linewidth]{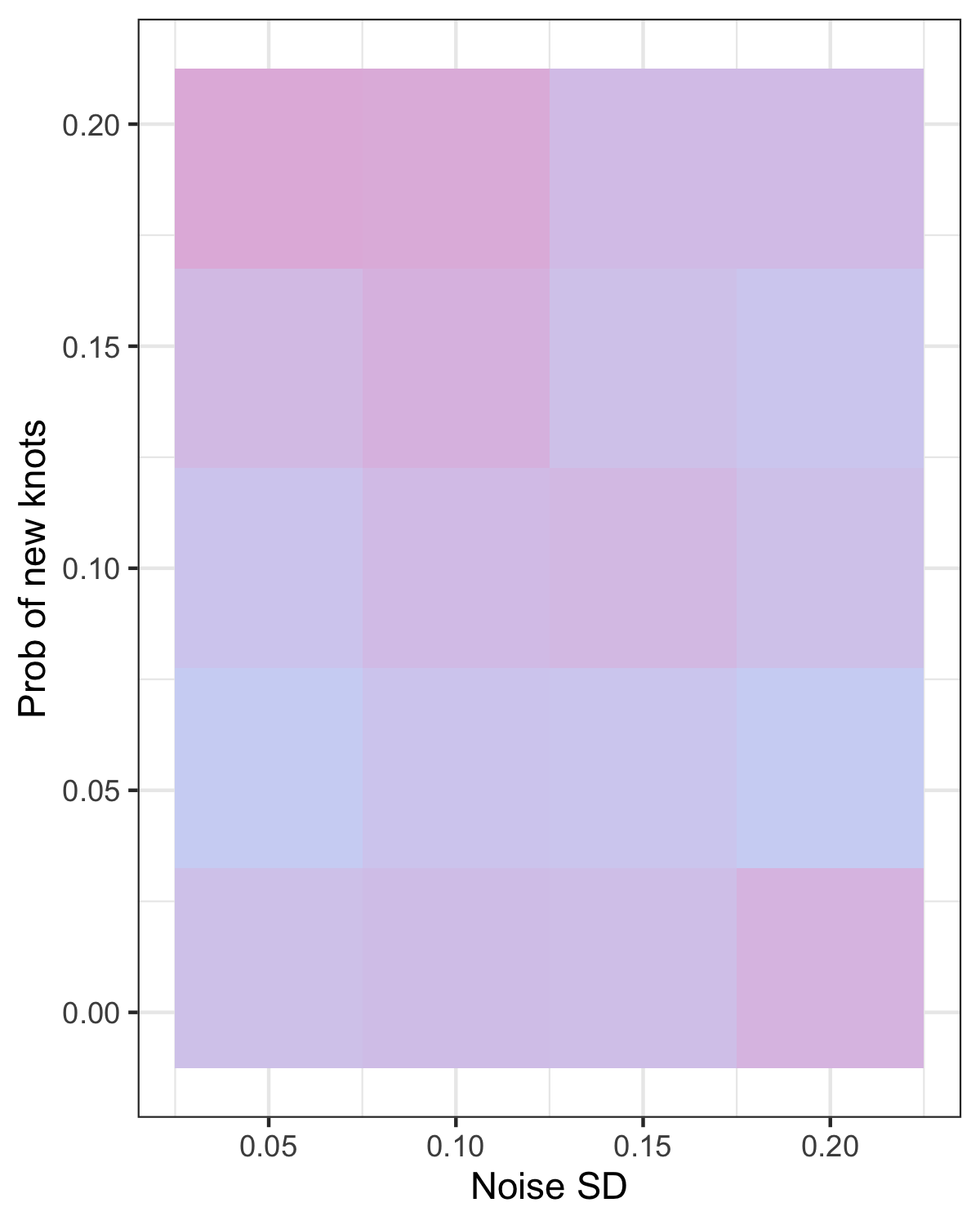}
        \caption{Simultaneous type I error using full data twice.}
    \end{subfigure}
\hfill
    \begin{subfigure}{0.25\textwidth}
        \includegraphics[width=1\linewidth]{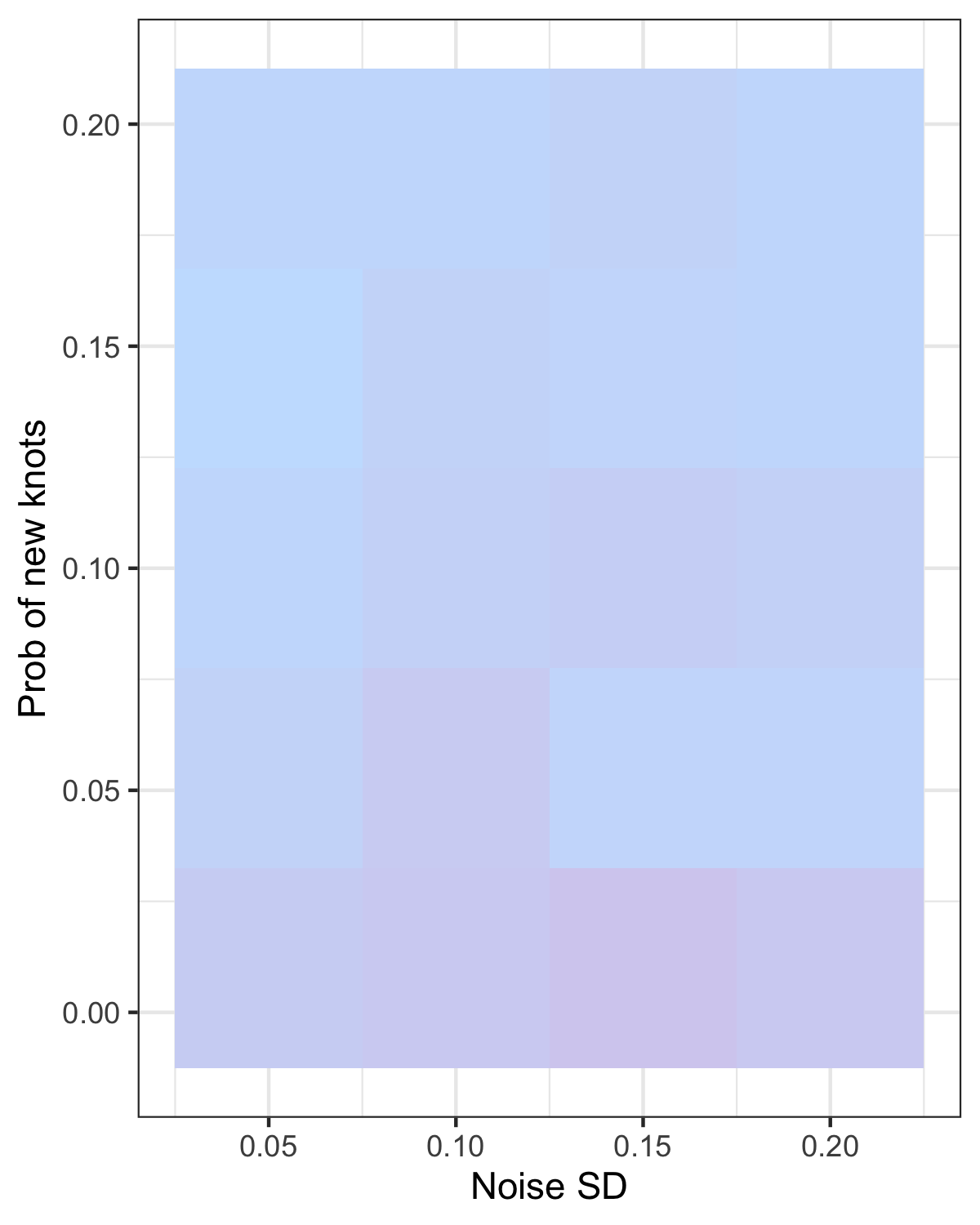}
        \caption{Simultaneous type I error using data fission.}
    \end{subfigure}
\hfill
    \begin{subfigure}{0.1\textwidth}
        \includegraphics[width=1\linewidth]{figures/legend_typeI.png}
    \end{subfigure}
    \caption{Simultaneous type I error for \textbf{uniform CIs} as we vary the  probability of having new knots $q \in \{0.01, 0.55, 0.1, 0.145, 0.19\}$ and the noise SD in $\{0.05, 0.1, 0.15, 0.2\}$. The error control violation when using the full data twice is no longer as stark in the simulation results when using this new selection rule, with the highest simultaneous type I error at 0.3 given target level $\alpha=0.2$.
    }
    \label{fig:trendfilter4}
\end{figure}

\section{Additional empirical results for spectroscopy datasets} \label{sec:spectroscopy_supplement}
We show results in this section for the remaining astronomical objects of interest described in \cref{sec:astro_examples}.
\begin{enumerate}
    \item A \textbf{galaxy}. DR12, Plate= 7140, MJD = 56569, Fiber=68. Located at (RA,Dec, z) \\ $\approx (349.374^{\circ}, 33.617^{\circ},0.138)$. Results shown in \cref{fig:astronomical_galaxy}.
    \item A \textbf{star}. DR12, Plate= 4055, MJD = 55359, Fiber=84. Located at (RA,Dec, z) \\ $\approx (236.834^{\circ}, 0.680^{\circ},0.000)$. Results shown in \cref{fig:astronomical_star}.
\end{enumerate}
As a point of comparison, we use quadratic  rather than linear trend filtering to model these spectra. We note an interesting tradeoff between the degree of the polynomial that is used and the smoothness of the estimated function. Larger degree polynomial result in fewer knots being chosen during the selection step which leads to a smoother-looking confidence band, but at the expense of not capturing some of the more volatile pieces of the data. Smaller degree polynomials results in more knots being chosen during the selection stage which leads to a less smooth band but also a function that tracks the overall volatility of the data more closely. 

\begin{figure} 
\centering
  \includegraphics[width=0.8\linewidth]{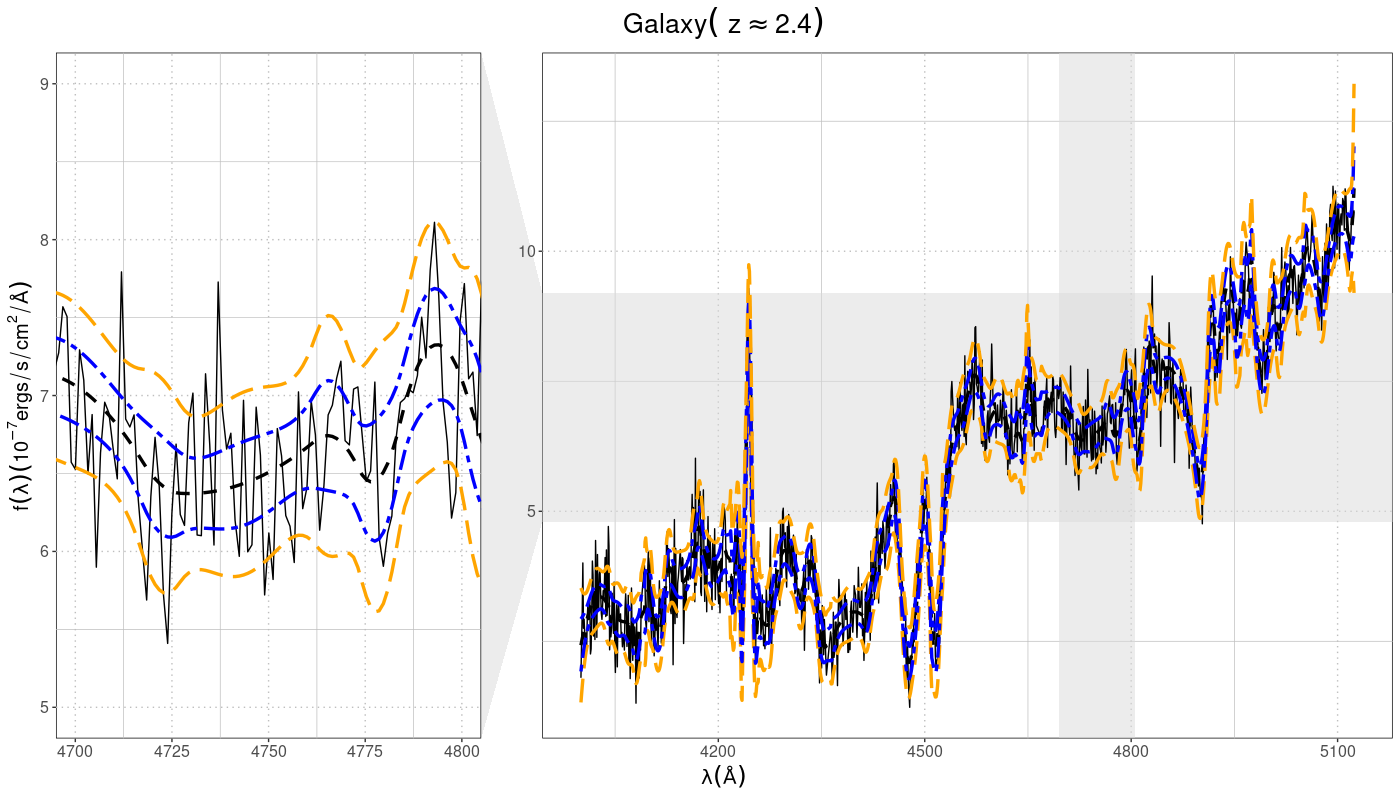}
  \newline
  \includegraphics[width=0.5\linewidth]{figures/legend_astronomical.PNG}
  \caption{Fitted values using quadratic trend filtering as well as uniform and pointwise CIs for a \textbf{galaxy}. DR12, Plate= 7140, MJD = 56569, Fiber=68. Located at (RA,Dec, z)  $\approx (349.374^{\circ}, 33.617^{\circ},0.138)$.} \label{fig:astronomical_galaxy}
\end{figure}

\begin{figure}
\centering
  \includegraphics[width=0.8\linewidth]{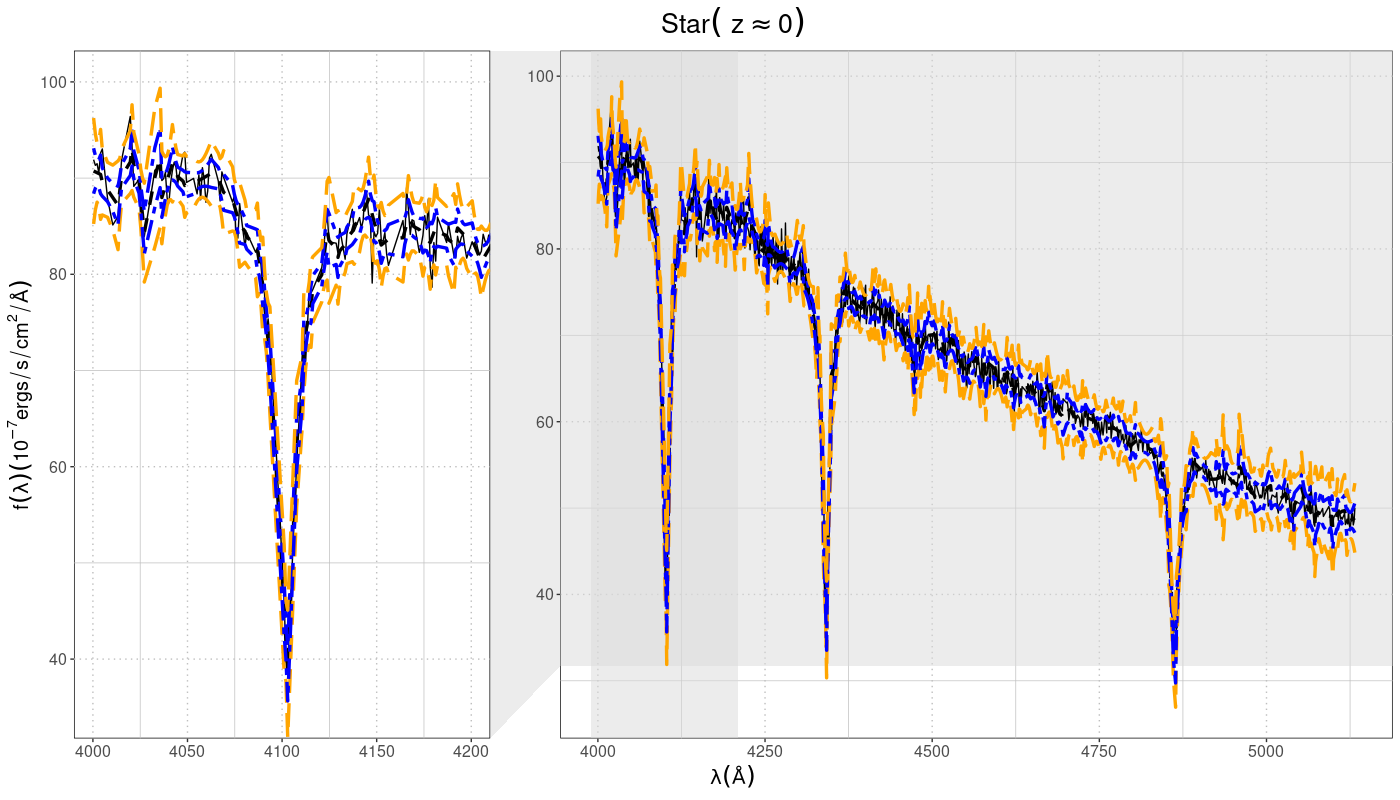}
  \centering
  \includegraphics[width=0.5\linewidth]{figures/legend_astronomical.PNG}
  \caption{Fitted values using quadratic trend filtering as well as uniform and pointwise CIs for a \textbf{star}. DR12, Plate= 4055, MJD = 55359, Fiber=84. Located at (RA,Dec, z)  $\approx (236.834^{\circ}, 0.680^{\circ},0.000)$. } \label{fig:astronomical_star}
\end{figure}

\end{appendices}
\end{document}